\documentclass[a4paper]{amsart}\usepackage[a4paper,margin=2.5cm,includefoot]{geometry}
\usepackage[utf8]{inputenc}
\usepackage[english]{babel}
\usepackage[T1]{fontenc}

\usepackage{amsmath,amssymb,mathtools,bm}
\usepackage{csquotes}
\usepackage{amstext}
\usepackage{amsthm}
\usepackage{thmtools}
\usepackage{bbm}
\usepackage{enumerate}
\usepackage[unicode,psdextra]{hyperref}
\hypersetup{pdfborder={0 0 0.06},citebordercolor=[rgb]{0.8196,0.2275,0.5098},linkbordercolor=[rgb]{0.1765,0.5490,0.8118},urlbordercolor=[rgb]{0.7059,0.5333,0.1137},pdfcreator = {},pdfproducer = {}}
\usepackage[longnamesfirst,numbers,sort&compress]{natbib}
\usepackage[nameinlink,capitalise,sort&compress]{cleveref}
\crefname{section}{\textsection}{\textsection}
\crefname{subsection}{\textsection}{\textsection}
\crefname{subsubsection}{\textsection}{\textsection}
\crefname{paragraph}{\textparagraph}{\textparagraph}
\crefname{thm}{Theorem}{Theorems}

\usepackage{braket}
\usepackage{dsfont}
\usepackage{color}
\usepackage{tikz}
\usepackage{tikz-cd}
\usepackage{pgfplots}
\makeatletter
\def\@seccntformat#1{%
	\protect\textup{\protect\@secnumfont
		\ifnum\pdfstrcmp{subsection}{#1}=0 \bfseries\fi
		\ifnum\pdfstrcmp{subsubsection}{#1}=0 \itshape\bfseries\fi
		\csname the#1\endcsname
		\protect\@secnumpunct
	}%
}
\renewcommand{\@upn}{}
\DeclareRobustCommand{\crefnosort}[1]{%
	\begingroup\@cref@sortfalse\cref{#1}\endgroup
      }      
\makeatother

\numberwithin{equation}{section}
\newtheorem{thm}{Theorem}[section]
\newtheorem{lem}[thm]{Lemma}
\newtheorem{prop}[thm]{Proposition}
\newtheorem{cor}[thm]{Corollary}

\theoremstyle{definition}
\newtheorem{defn}[thm]{Definition}

\renewcommand*{\thehyp}{\Alph{hyp}}

\theoremstyle{remark}
\newtheorem{rem}[thm]{Remark}

\newtheorem*{rem*}{Remark}
\newtheorem*{rems*}{Remarks}
\newtheorem*{ex*}{Example}
\newtheorem*{exs*}{Examples}

\crefname{hyp}{Hypothesis}{Hypotheses}
\Crefname{hyp}{Hypothesis}{Hypotheses}
\crefname{lem}{Lemma}{Lemmas}
\Crefname{lem}{Lemma}{Lemmas}
\crefname{thm}{Theorem}{Theorems}
\Crefname{thm}{Theorem}{Theorems}
\crefname{prop}{Proposition}{Propositions}
\Crefname{prop}{Proposition}{Propositions}
\crefname{enumi}{}{}
\Crefname{enumi}{}{}
\creflabelformat{enumi}{#2(#1)#3}
\crefname{equation}{}{}
\Crefname{equation}{}{}
\crefname{rem}{Remark}{Remarks}
\Crefname{rem}{Remark}{Remarks}

\makeatletter
\renewcommand{\@upn}{} 
\makeatother

\usepackage[inline]{enumitem}
\newlist{enumthm}{enumerate}{1} 
\setlist[enumthm]{label=\upshape(\roman*),ref=\thethm\,(\roman*)}  
\crefalias{enumthmi}{thm} 
\newlist{enumcor}{enumerate}{1}
\setlist[enumcor]{label=\upshape(\roman*),ref=\thecor\,(\roman*)}
\crefalias{enumcori}{cor}
\newlist{enumlem}{enumerate}{1}
\setlist[enumlem]{label=\upshape(\roman*),ref=\thelem\,(\roman*)}
\crefalias{enumlemi}{lem}
\newlist{enumprop}{enumerate}{1}
\setlist[enumprop]{label=\upshape(\roman*),ref=\theprop\,(\roman*)}
\crefalias{enumpropi}{prop}
\newlist{enumhyp}{enumerate}{1}
\setlist[enumhyp]{label=\upshape(\roman*),ref=\thehyp\,(\roman*)}
\crefalias{enumhypi}{hyp}
\newlist{enumproof}{enumerate*}{1}
\setlist[enumproof]{label=\upshape(\roman*)}
\newlist{enumdef}{enumerate}{1}
\setlist[enumdef]{label=\upshape(\roman*),ref=\thedefn\,(\roman*)}
\crefalias{enumdefi}{defn}

\makeatletter
\newcounter{subcreftmpcnt} %
\newcommand\romansubformat[1]{(\roman{#1})} 
\def\subcref{\@ifstar\@@subcref\@subcref}
\newcommand\@subcref[2][\romansubformat]{%
	\ifcsname r@#2@cref\endcsname
	\cref@getcounter {#2}{\mylabel}%
	\setcounter{subcreftmpcnt}{\mylabel}%
	\hyperref[#2]{\romansubformat{subcreftmpcnt}}%
	\else ?? \fi}   
\newcommand\@@subcref[2][\romansubformat]{%
	\ifcsname r@#2@cref\endcsname
	\cref@getcounter {#2}{\mylabel}%
	\setcounter{subcreftmpcnt}{\mylabel}%
	\romansubformat{subcreftmpcnt}%
	\else ?? \fi}   
\makeatother

\makeatletter
\DeclareRobustCommand{\crefnosort}[1]{%
	\begingroup\@cref@sortfalse\cref{#1}\endgroup
}
\makeatother

\def\endstepsymbol{$\lozenge$}
\def\endclaimsymbol{$\lozenge$}
\newcounter{proofstep}
\AtBeginEnvironment{proof}{\setcounter{proofstep}{0}}

\crefname{proofstep}{Step}{Steps}
\Crefname{proofstep}{Step}{Steps}
\newcounter{proofclaim}
\AtBeginEnvironment{proof}{\setcounter{proofclaim}{0}}

\crefname{proofclaim}{Claim}{Claims}
\Crefname{proofclaim}{Claim}{Claims}

\newcommand{\cB}{{\mathcal B}}
\newcommand{\cD}{{\mathcal D}}\newcommand{\cF}{{\mathcal F}}
\newcommand{\cG}{{\mathcal G}}\newcommand{\cH}{{\mathcal H}}
\newcommand{\cL}{{\mathcal L}}
\newcommand{\cM}{{\mathcal M}}\newcommand{\cN}{{\mathcal N}}

\newcommand{\cS}{{\mathcal S}}

\newcommand{\fB}{{\mathfrak B}}

\newcommand{\fa}{{\mathfrak a}}

\newcommand{\fh}{{\mathfrak h}}
\newcommand{\fk}{{\mathfrak k}}

\newcommand{\fq}{{\mathfrak q}}

\newcommand{\BC}{{\mathbb C}}

\newcommand{\BN}{{\mathbb N}}
\newcommand{\BR}{{\mathbb R}}

\usepackage{dsfont} 

\newcommand{\dsone}{{\mathds 1}}

\usepackage{mathrsfs}
\newcommand{\sC}{{\mathscr C}}
\newcommand{\sD}{{\mathscr D}}\newcommand{\sF}{{\mathscr F}}

\newcommand{\sM}{{\mathscr M}}

\newcommand{\sT}{{\mathscr T}}

\newcommand{\sfH}{{\mathsf H}}

\newcommand{\sfS}{{\mathsf S}}\newcommand{\sfT}{{\mathsf T}}
\newcommand{\sfV}{{\mathsf V}}

\newcommand{\sfd}{{\mathsf d}}\newcommand{\sff}{{\mathsf f}}
\newcommand{\sfi}{{\mathsf i}}

\newcommand{\sfn}{{\mathsf n}}

\newcommand{\rmd}{{\mathrm d}}\newcommand{\rme}{{\mathrm e}}

\newcommand{\IN}{\BN}\newcommand{\IR}{\BR}\newcommand{\IC}{\BC}

\newcommand{\hs}{\fh}\newcommand{\HS}{\cH}

\newcommand{\eps}{\varepsilon}\newcommand{\ph}{\varphi}

\newcommand{\e}{\rme}\newcommand{\Id}{\dsone} \renewcommand{\d}{\sfd}

\renewcommand{\Re}{\operatorname{Re}}\renewcommand{\Im}{\operatorname{Im}}

\newcommand{\ran}{\operatorname{ran}}

\DeclareFontFamily{U}{mathx}{\hyphenchar\font45}
\DeclareFontShape{U}{mathx}{m}{n}{
	<5> <6> <7> <8> <9> <10>
	<10.95> <12> <14.4> <17.28> <20.74> <24.88>
	mathx10
}{}
\DeclareSymbolFont{mathx}{U}{mathx}{m}{n}
\DeclareFontSubstitution{U}{mathx}{m}{n}
\DeclareMathAccent{\widecheck}{0}{mathx}{"71}
\DeclareMathAccent{\wideparen}{0}{mathx}{"75}

\DeclareFontFamily{OMX}{MnSymbolE}{}
\DeclareFontShape{OMX}{MnSymbolE}{m}{n}{
	<-6>  MnSymbolE5
	<6-7>  MnSymbolE6
	<7-8>  MnSymbolE7
	<8-9>  MnSymbolE8
	<9-10> MnSymbolE9
	<10-12> MnSymbolE10
	<12->   MnSymbolE12}{}
\DeclareSymbolFont{mnlargesymbols}{OMX}{MnSymbolE}{m}{n}
\SetSymbolFont{mnlargesymbols}{bold}{OMX}{MnSymbolE}{b}{n}
\DeclareMathDelimiter{\llangle}{\mathopen}{mnlargesymbols}{'164}{mnlargesymbols}{'164}
\DeclareMathDelimiter{\rrangle}{\mathclose}{mnlargesymbols}{'171}{mnlargesymbols}{'171}
\DeclareMathDelimiter{\lsem}{\mathopen}{mnlargesymbols}{'102}{mnlargesymbols}{'102}
\DeclareMathDelimiter{\rsem}{\mathclose}{mnlargesymbols}{'107}{mnlargesymbols}{'107}
\DeclareMathDelimiter{\langlebar}{\mathopen}{mnlargesymbols}{'152}{mnlargesymbols}{'152}
\DeclareMathDelimiter{\ranglebar}{\mathclose}{mnlargesymbols}{'157}{mnlargesymbols}{'157}
\DeclareMathDelimiter{\lWavy}{\mathopen}{mnlargesymbols}{'137}{mnlargesymbols}{'137}
\DeclareMathDelimiter{\rWavy}{\mathopen}{mnlargesymbols}{'137}{mnlargesymbols}{'137}

\newcommand{\chr}{\mathbf 1}
\newcommand{\abs}[1]{\lvert#1\lvert}
\newcommand{\norm}[1]{\lVert#1\lVert}

\newcommand{\FGamma}{\Gamma}
\newcommand{\FS}{\cF}\newcommand{\dG}{\sfd\FGamma}\newcommand{\ad}{a^\dagger}
\newcommand{\FSfin}{\cF_{\sff\sfi\sfn}}

\title[Wave Function Renormalization in the Hamiltonian Formalism]{Wave Function
  Renormalization for Particle-Field Interactions}
\author[M.\ Falconi]{Marco Falconi}
\address{M. Falconi, Politecnico di Milano\\Dipartimento di Matematica\\Campus Leonardo,
	P.zza Leonardo da Vinci 32, 20133 Milano\\Italy}
\email{marco.falconi@polimi.it}

\author[B. Hinrichs]{Benjamin Hinrichs}
\author[J. Valent\'in Mart\'in]{Javier Valent\'in Mart\'in}
\address{B. Hinrichs, J. Valent\'in Mart\'in, Universit\"at Paderborn, Institut f\"ur Mathematik, Institut f\"ur Photonische Quantensysteme, Warburger Str. 100, 33098 Paderborn, Germany}
\email{benjamin.hinrichs@math.upb.de}
\email{javiervm@math.uni-paderborn.de}

\usepackage{xcolor}\usepackage{cancel}

\begin{document}

\begin{abstract}
  \noindent
  In this paper, we develop a wave function renormalization scheme for models of
  non-relativistic quantum particles interacting with a quantized
  relativistic field, in the
  Hamiltonian formalism of quantum field theory. We construct the interacting Hamilton operator, in its ground
  state representation, for a large class of particle-field interactions;
  hereby addressing a number of open problems related to both ultraviolet and infrared singularities in the spin-boson and
  Nelson models, where the infinite wave function renormalization plays
  a fundamental role.
\end{abstract}

\maketitle

\section{Introduction}

Rigorous renormalization of relativistic quantum field theories in $3+1$
dimensions is one of the foremost and longest-standing open problems in
mathematical physics. After the groundbreaking results obtained in the
sixties and seventies of past century \citep[see, \emph{e.g.},][and
references
therein]{segal1956am+tams,wightman1956pr,segal1959mfmdvs,segal1960jmp,segal1961cjm,haag1964jmp,hepp1965cmp,wightman1965afksv,glimm1968pr1,hepp1969,GlimmJaffe.1970,glimm1972mcp,simon1972jfa,osterwalder1973cmp,glimm1974am,simon1974psp,guerra1975am,osterwalder1975cmp,ReedSimon.1975,reed1979III,GlimmJaffe.1985,glimm1985cp2,glimm1987qp,haag1992tmp,streater2000plp},
that go under the collective name of Constructive Quantum Field
  Theory (CQFT), the topic has been revived in the latest years with new inputs
from probability and stochastic analysis \citep[see, \emph{e.g.},][and
references
therein]{hairer2014im,jaffe2014arxiv,hairer2016cdm,hairer2016cmp,bruned2019im,aizenman2021am,gubinelli2021cmp,chandra2024im,chatterjee2024cmp},
and is now (again) a very active field of research in mathematics.

In between and in parallel with the development of CQFT for fully
relativistic theories, crucial advances have been made concerning the
mathematics of partially relativistic models, that describe the interaction
between non-relativistic particles and relativistic quantum fields. These
models have the advantage of being more manageable than fully relativistic
theories -- still retaining many of their most interesting features -- and have
proved to be of great interest for condensed matter physics and quantum
optics. While it would be nigh impossible to cite all the literature on the
various mathematical aspects of particle-field interactions, we invite the
interested reader to refer to
\citep{lieb1958pr,trubatch1968pr,Gross.1973,donsker1983cpam,Spohn.1987,lieb1997cmp,bach1998advm,AraiHirokawaHiroshima.1999,bach1999cmp,bach1999cmp2,Hiroshima.2000,derezinski2001jfa,frohlich2001am,griesemer2001im,hiroshima2001tams,ammari2004jfa,frohlich2004cmp252,frohlich2004cmp251,hirokawa2005am,Arai.2006,bach2006cmp,Moller.2006b,merkli2007prl,hasler2008rmp,HaslerHerbst.2008a,HaslerHerbst.2010,HaslerHerbst.2011b,abdesselam2012cmp,gerard2012jfa,ammari2014jsp,GubinelliHiroshimaLorinczi.2014,Falconi.2015,koenenberg2015cmp,HidakaHiroshima.2016,ammari2017sima,BachBallesterosKoenenbergMenrath.2017,Matte.2017,MatteMoller.2018,lieb2019arxiv,DybalskiSpohn.2020,MukherjeeVaradhan.2020,MukherjeeVaradhan.2020b,FrankSeiringer.2021,hidaka2021jst,LeopoldMitrouskasRademacherSchleinSeiringer.2021,bach2022ahp,brooks2022arxiv,BetzPolzer.2022,hiroshima2022rmp,bazaes2023arxiv,betz2023cmp,brooks2023cmp,correggi2022iqm22,correggi2023apde,correggi2023jems,Brooks.2024,correggi2024quantum,HinrichsMatte.2024,sellke2022duke}
for an overview on the more recent results, and to references therein for a
more complete literature. This is also, as of now, a very active field of
research in mathematics.

This paper is devoted to a systematic treatment of the renormalization of the
wave function in the so-called Hamiltonian approach. In fact, this type of
renormalization allows us to address some long-standing open problems in
non-relativistic QFT, caused both by small and large momentum singularities.

\subsection*{Wave function renormalization and dressing transformations in the
  Hamiltonian approach: a historical overview}
\label{sec:wave-funct-renorm-4}

The wave function renormalization (also called field renormalization) is
introduced in interacting field theories by means of a redundant
(\emph{i.e.},\ not observable) coupling parameter, that in its simplest form
amounts to a rescaling of the field in the action. Despite it not being
observable (contrarily to other renormalized parameters such as, \emph{e.g.},
energy and mass), the flow of such coupling constant has to be taken into
account in renormalization schemes, most prominently in the Renormalization
Group approach \citep[see,
\emph{e.g.},][]{wilson1975rmp,weinberg1996I,weinberg1996II,weinberg2000III,mastropietro2008npr,bauerschmidt2019lnm}.
The most singular case is that of an infinite wave function renormalization
\citep[see][]{benfatto2007cmp}, a signature of the interacting vacuum being
disjoint from the bare (free) vacuum \citep[cf.\ Haag's
theorem,][]{haag1992tmp,streater2000plp}. From a mathematical standpoint, the
most established way of constructing interacting (and fully relativistic)
quantum field theories is by means of Wightman's reconstruction theorem
\citep{wightman1956pr}, starting from the study of (euclidean) correlation
functions. Such correlation functions must be proved to satisfy a suitable
set of axioms
\citep{wightman1965afksv,osterwalder1973cmp,osterwalder1975cmp}: this is
achieved through functional integration, rigorous renormalization group
analysis, or stochastic calculus/quantization \citep[see, \emph{e.g.},][and
references
therein]{glimm1968pr1,GlimmJaffe.1970b,glimm1974am,guerra1975am,glimm1987qp,brydges1983cmp,gawedzki1985cmp,brydges1994arg,benfatto2007cmp,hairer2016cmp,hairer2016cdm,gubinelli2021cmp,chandra2024im}. The
wave function renormalization plays a crucial, albeit not always explicit,
role in all these schemes.

Another, perhaps more direct, approach to CQFT focuses on defining the
self-adjoint generator of time translations, either directly or through a
renormalization procedure. This so-called Hamiltonian approach is
particularly suitable for non-fully-relativistic theories, where other
symmetries, such as spatial translation invariance and Lorentz covariance,
might be broken; furthermore, the spectral properties of the Hamiltonian
operator, as well as its behavior in the InfraRed (IR) and UltraViolet (UV)
sectors are of foundational importance both from a mathematical and physical
perspective. In the Hamiltonian approach, however, renormalization has been
successfully carried out only in a few instances: E.\ Nelson renormalized a
model of Schrödinger particles in interaction with a scalar field
\citep{nelson1964jmp}, J.\ Glimm renormalized Euclidean $\mathsf{Yukawa}_2$
and $(\varphi^4)_3$ \citep{glimm1967cmpyuka1,glimm1967cmpyuka2,glimm1968cmp}, while
J.\ Ginibre and G.\ Velo renormalized $(\varphi^2)_d$ with arbitrary space
dimension \citep{ginibre1970cmpren}. Let us revisit these results in some
detail.

The common framework is that of a self-adjoint Hamilton operator
$H(\sigma_0,\sigma)$ on a Hilbert space $\mathscr{H}_0$ endowed with a representation of the Canonical Commutation Relations (CCR). The operator $H(\sigma_0,\sigma)$ is the
generator of the dynamics for a scalar field with a regularized interaction:
$\sigma_0\geq 0$ is the IR cutoff (if necessary), while $\sigma<+\infty$ is the UV cutoff;
thanks to the regular interaction, $\mathscr{H}_0$ is either the Fock space
representing the free field, or has it as a factor (the Nelson model couples
the field with non-relativistic particles, and the Yukawa model with a
fermionic field).

To remove the cutoffs, a renormalization procedure is needed: $H(0,\infty)$ is in
fact only a formal operator, and at most it gives rise to a densely defined,
but not closable, quadratic form $q_{0,\infty}(\cdot )= \langle\, \cdot \,, H(0,\infty)\,\cdot\,
\rangle_{\mathscr{H}_0}$. The first kind of divergences that shall be handled are ``additive
ones'': there are a finite number of counterterms $C_i(\sigma_0,\sigma)$ that must be
added to the Hamiltonian, \emph{i.e.},\ one defines the renormalized
Hamiltonian as
\begin{equation*}
  H_{\mathsf{ren}}(\sigma_0,\sigma)= H(\sigma_0,\sigma) + \sum_i^{}C_i(\sigma_0,\sigma)\; .
\end{equation*}
The physical intuition behind the counterterms is that they diverge as the
cutoffs are removed, canceling out the divergences of $H(0,\infty)$; typical
examples are the constant counterterm, that takes the name of
\emph{self-energy renormalization}, and the counterterm proportional to the
free field's mass term that takes the name of \emph{mass renormalization}.

In the least singular cases, such as the Nelson model, additive
renormalization is enough to define the Hamiltonian:
$H_{\mathsf{ren}}(\sigma_0,\sigma)$ converges (typically in resolvent sense) to a
self-adjoint operator $H_{\mathsf{ren}}(0,\infty)$ on $\mathscr{H}_0$, as $\sigma_0\to 0$ and $\sigma\to
\infty$. A crucial tool to prove convergence is a unitary map introduced by E.P.\
Gross \citep{gross1962ap} and known as \emph{dressing transformation}. The
dressing transformation $U(\sigma_0,\sigma)$ ``almost diagonalizes'' the Hamiltonian by
canceling the most singular parts of the interaction:
\begin{equation*}
  U(\sigma_0,\sigma)H(\sigma_0,\sigma)U^{*}(\sigma_0,\sigma)= \hat{H}_{\mathsf{ren}}(\sigma_0,\sigma)-\sum_i^{}U(\sigma_0,\sigma)C_i(\sigma_0,\sigma)U^{*}(\sigma_0,\sigma)\;,
\end{equation*}
and thus
\begin{equation*}
  U(\sigma_0,\sigma)H_{\mathsf{ren}}(\sigma_0,\sigma)U^{*}(\sigma_0,\sigma)= \hat{H}_{\mathsf{ren}}(\sigma_0,\sigma)\;.
\end{equation*}
The (self-adjoint) operator $\hat{H}_{\mathsf{ren}}(\sigma_0,\sigma)$ is called the
\emph{renormalized dressed Hamiltonian} (still with cutoffs). The dressed
Hamiltonian is typically easier to study than the undressed one (in the
Nelson model, it is the sum of a positive operator and a small form
perturbation uniformly w.r.t.\ $\sigma_0$ and $\sigma$), and its convergence to a
self-adjoint operator in the limit $\sigma_0\to 0$ and $\sigma\to \infty$ can be proved;
furthermore, the unitary dressing map $U(\sigma_0,\sigma)$ itself converges (in strong
sense) to a unitary map $U(0,\infty)$, thus leading to the convergence of the
undressed Hamiltonian (the limit point is itself defined this way):
\begin{equation*}
  H_{\mathsf{ren}}(\sigma_0,\sigma) \xrightarrow[\sigma\to \infty]{\sigma_0\to 0} H_{\mathsf{ren}}(0,\infty)\coloneqq  U^{*}(0,\infty) \hat{H}_{\mathsf{ren}}(0,\infty) U(0,\infty)\;.
\end{equation*}
If the interaction is more UV singular,\footnote{The models we are reviewing
	here have increasing levels of UV singularity, that ultimately require a
	wave function renormalization. Within our framework, the wave function
	renormalization cures both IR and UV singularities, \emph{at the same time}
	(if both are present).}
 wave function renormalization becomes necessary:
the Hilbert space $\mathscr{H}_0$ itself must be renormalized to a space
$\mathscr{H}_{\mathsf{ren}}$ carrying a new inner product. The new inner product is
defined through a non-unitary dressing map $T(\sigma_0,\sigma)$, such that on a dense
subspace $\mathscr{D}\subset \mathscr{H}_0$ containing the free vacuum $\Omega$ (that does not depend on
$\sigma_0$ and $\sigma$),
\begin{equation*}
  \langle \Psi  , \Phi \rangle_{\mathsf{ren}}\coloneqq  \lim_{\substack{\sigma_0\to 0\\\sigma\to \infty}}\frac{\langle T(\sigma_0,\sigma) \Psi , T(\sigma_0,\sigma)\Phi \rangle_{\mathscr{H}_0}}{\lVert T(\sigma_0,\sigma)\Omega  \rVert_{\mathscr{H}_0}^2}
\end{equation*}
is a well-defined sesquilinear form. Intuitively, the dressing $T(0,\infty)$ is
singular as an operator on $\mathscr{H}_0$, however if wave functions are renormalized
to take account of its divergent action on the vacuum, a regular inner
product emerges. The renormalized space $\mathscr{H}_{\mathsf{ren}}$ is then the
Hilbert space completion of $\mathscr{D}$ with respect to $\langle \cdot , \cdot \rangle_{\mathsf{ren}}$. The
wave function renormalization can be tied to scattering theory
\citep[see][for a possible precise formulation in the easy case of a finite
wave function renormalization]{derezinski2003ahp}, and $T(0,\infty)$ formally
captures the most singular parts of the wave operator $\Omega^+$ intertwining
between the free and interacting Hamiltonians, as it is computed
perturbatively through Feynman diagrams \citep{glimm1968cmp}.

The renormalized quadratic form $q_{\mathsf{ren}}$ associated to
$H_{\mathsf{ren}}(0,\infty)$ on $\mathscr{H}_{\mathsf{ren}}$ is thence obtained as the limit
\begin{equation*}
  q_{\mathsf{ren}}(\Psi)= \lim_{\substack{\sigma_0\to 0\\\sigma\to \infty}}\frac{\langle T(\sigma_0,\sigma) \Psi , H_{\mathsf{ren}}(\sigma_0,\sigma)T(\sigma_0,\sigma)\Phi \rangle_{\mathscr{H}_0}}{\lVert T(\sigma_0,\sigma)\Omega  \rVert_{\mathscr{H}_0}^2}\;.
\end{equation*}
The biggest drawback of these germinal attempts at Hamiltonian wave function
renormalization -- a drawback that we are able to overcome in this paper -- is
that it was not possible to prove that the obtained densely defined quadratic
form $q_{\mathsf{ren}}$ is closed/closable; in \citep{ginibre1970cmpren} the authors do that indirectly, resorting
	to an exact diagonalization by means of Bogoliubov transformations, but that
	is only possible in such precisely quadratic toy model.

\subsection*{A systematic approach to wave function renormalization and
  singular dressing}
\label{sec:wave-funct-renorm-4}
In this paper we revisit the ideas above, and construct a systematic and
mathematically rigorous wave function renormalization scheme in the
Hamiltonian formalism; then, we use it as the key tool to resolve both IR and
UV singularities in some of the most used models of interactions between
particles and fields, addressing some open questions and providing some
important missing links with respect to the existing literature. Our approach
highlights a previously hidden structure, allowing us to develop a robust
mathematical framework, that, we believe, shall be applicable to tackle
further open problems. Let us outline the wave function renormalization scheme here,
even though the adaptation to each single model requires some degree of fine
tuning (for which we defer, together with the complete constructions and
results, to the model-specific sections,
\crefrange{sec:van-hove-miyatake}{sec:wave-funct-renorm-2}).

Let us start with perhaps the most important observation: the renormalized
Hilbert space $\mathscr{H}_{\mathsf{ren}}$ is (on the field sector) a Fock space of
\emph{regular excitations} on top of a \emph{singular vacuum}. More
precisely, we construct a natural unitary map\footnote{Although this map is
  natural from the point of view of wave function renormalization, it is not
  natural when considering the algebra of field observables and their
  representations, see the ensuing discussion in the main body of the text.}
\begin{equation*}
  U_{\mathsf{ren}}: \mathscr{H}_{\mathsf{ren}}\longrightarrow \mathscr{H}_0
\end{equation*}
that maps the renormalized space to the bare space; we call such map a
\emph{singular dressing}, since it plays the role of the unitary dressing
$U(0,\infty)$, whenever the latter stops to be defined as an automorphism on
$\mathscr{H}_0$. If, in defining the renormalized space $\mathscr{H}_{\mathsf{ren}}$, an infinite
wave function renormalization is involved, then we prove that, as expected,
$\mathscr{H}_{\mathsf{ren}}$ and $\mathscr{H}_0$ carry \emph{inequivalent representations} of the
field's CCR, \emph{i.e.},\ the singular dressing $U_{\mathsf{ren}}$ \emph{does
  not map} the interacting representation to the free one, but solely the
renormalized to the bare Hilbert space.

The singular dressing is particularly convenient to study self-adjointness of
the renormalized Hamiltonian, overcoming the issues encountered in the
literature. In order to do that, consider the regularized model $H(\sigma_0,\sigma)$ on
the bare space $\mathscr{H}_0$; also in presence of the cutoffs it is possible to
define a ``renormalized space'' $\mathscr{H}_{\sigma_0,\sigma}$, as the Hilbert completion of $\mathscr{D}$
with respect to the inner product
\begin{equation*}
  \langle \Psi  , \Phi \rangle_{\sigma_0,\sigma}= \frac{\langle T(\sigma_0,\sigma) \Psi , T(\sigma_0,\sigma)\Phi \rangle_{\mathscr{H}_0}}{\lVert T(\sigma_0,\sigma)\Omega  \rVert_{\mathscr{H}_0}^2}\;,
\end{equation*}
and a dressing $U_{\sigma_0,\sigma}: \mathscr{H}_{\sigma_0,\sigma}\to \mathscr{H}_0$ (not to be confused with the
$\mathscr{H}_0$-automorphism $U(\sigma_0,\sigma)$ above). Also, it is possible to rewrite the
Hamiltonian $H_{\mathsf{ren}}(\sigma_0,\sigma)$ (where the additive divergences have
been subtracted) as a self-adjoint operator $\mathcal{H}_{\mathsf{ren}}(\sigma_0,\sigma)$ on the
renormalized space $\mathscr{H}_{\sigma_0,\sigma}$, as follows:
\begin{equation*}
  \langle \Psi  ,\mathcal{H}_{\mathsf{ren}}(\sigma_0,\sigma) \Psi \rangle_{\sigma_0,\sigma}= \frac{\langle T(\sigma_0,\sigma) \Psi , H_{\mathsf{ren}}(\sigma_0,\sigma)T(\sigma_0,\sigma)\Phi \rangle_{\mathscr{H}_0}}{\lVert T(\sigma_0,\sigma)\Omega  \rVert_{\mathscr{H}_0}^2}\;.
\end{equation*}
The operator $\mathcal{H}_{\mathsf{ren}}(\sigma_0,\sigma)$ \emph{converges} to the self-adjoint
renormalized Hamiltonian $\mathcal{H}_{\mathsf{ren}}(0,\infty)$, but in what sense? Thanks
to the dressing maps $U_{\sigma_0,\sigma}$ and $U_{\mathsf{ren}}$, we can map all the
operators to the same \emph{parent space}, the bare space $\mathscr{H}_0$: we define
the dressed Hamiltonians on $\mathscr{H}_0$ as
\begin{gather*}
  \hat{H}_{\mathsf{ren}}(\sigma_0,\sigma)= U_{\sigma_0,\sigma}\mathcal{H}_{\mathsf{ren}}(\sigma_0,\sigma)U_{\sigma_0,\sigma}^{-1}\;,\\
  \hat{H}_{\mathsf{ren}}(0,\infty)= U_{\mathsf{ren}}\mathcal{H}_{\mathsf{ren}}(0,\infty)U_{\mathsf{ren}}^{-1}\;.
\end{gather*}
We prove that $\hat{H}_{\mathsf{ren}}(\sigma_0,\sigma)$ converges, in resolvent sense,
to $\hat{H}_{\mathsf{ren}}(0,\infty)$ as $\sigma_0\to 0$ and $\sigma\to \infty$; such a convergence
yields the convergence of $\mathcal{H}_{\mathsf{ren}}(\sigma_0,\sigma)$ to
$\mathcal{H}_{\mathsf{ren}}(0,\infty)$ in \emph{generalized resolvent sense}, a notion of
convergence precisely developed to study limits of operators acting on
different Hilbert spaces, but with a common parent space, see
\cref{sec:textbfg-norm-resolv} for further details and references.

The use of this topology strengthens the notion of convergence in the existing
literature, where only convergence of the quadratic form on a common domain
can be established. Even more so, as it was the case for the Nelson model,
without cutoffs it is the \emph{dressed Hamiltonian}
$\hat{H}_{\mathsf{ren}}(0,\infty)$ that is naturally defined and amenable to
study: the undressed one, $\mathcal{H}_{\mathsf{ren}}(0,\infty)$, is obtained \emph{a
  posteriori} by unitary conjugation. As already remarked, this is precisely
what allows us to overcome the difficulties of proving self-adjointness
encountered in the previous works. Finally, the dressed Hamiltonian is often
even explicit, so its spectral properties can be (or have already been)
studied -- most prominently the existence of a ground state -- thus allowing to
pass those properties to the implicit $\mathcal{H}_{\mathsf{ren}}(0,\infty)$.

In the rest of the paper we develop and apply this renormalization scheme to
three models, in increasing order of difficulty: the van Hove-Miyatake model
(\cref{sec:van-hove-miyatake}), the spin-boson
(\cref{sec:wave-funct-renorm-1}), and the Nelson model
(\cref{sec:wave-funct-renorm-2}). A detailed description of the models themselves, their appearance in the literature as well as a discussion of the respective results, their novelty and relevance, is given in each section in full detail; let us nevertheless
summarize them here for quick reference:

\vspace{2mm}

\begin{itemize}
  \setlength{\itemsep}{2mm}
\item In \cref{sec:van-hove-miyatake} we study the van Hove-Miyatake model:
  we can renormalize its Hamiltonian for any distributional source of
  interaction, \emph{i.e.},\ for (essentially) all kinds of IR and UV
  singularities. The renormalized Hamiltonian always has a non-degenerate
  ground state, and the dressed Hamiltonian is diagonalized, being the free
  field operator. The ground state is a generalized coherent state, centered
  around a possibly singular one-particle configuration.
\item In \cref{sec:wave-funct-renorm-1} we study the spin-boson, that is
  probably the most relevant of the three models from a physical
  standpoint. We are again able to renormalize the Hamiltonian for any
  distributional source, however the interplay between the singular dressing
  (that contains the interaction spin matrix) and the free spin matrix poses
  some technical challenges, for which we have to rely on a double operator
  integral representation, and $\Gamma$--convergence: \cref{app:DOI} focuses on the
  former in a novel weak formulation, which could be of interest on its own.
\item \cref{sec:wave-funct-renorm-2} is finally devoted to the IR problem in
  the Nelson model. In fact, the UV singularity of the Nelson Hamiltonian is
  cured by an additive self-energy renormalization alone
  \citep{nelson1964jmp}. Through the wave function renormalization, however,
  we are able to renormalize in the IR as well, and thus obtain the massless
  Nelson model without cutoffs in its ground state representation. We prove
  indeed that this renormalized model has a ground state whenever there is a
  binding particle potential, or, in the translation-invariant case, if we
  restrict to the fiber with zero total momentum.
\end{itemize}

\subsection*{Acknowledgments}
\label{sec:acknowledgements}

M.F.\ acknowledges the support of PNRR Italia Domani and Next Generation EU
through the ICSC National Research Centre for High Performance Computing, Big
Data and Quantum Computing; he also acknowledges the support by the Italian
Ministry of University and Research (MUR) through the grant “Dipartimento di
Eccellenza 2023-2027” of Dipartimento di Matematica, Politecnico di Milano,
and the PRIN 2022 grant ``OpeN and Effective quantum Systems (ONES)''.  B.H.\
and J.V.M.\ acknowledge support by the Ministry of Culture and Science of the
State of North Rhine-Westphalia within the project `PhoQC' (Grant
Nr. PROFILNRW-2020-067).

\subsection*{Notation}
Let us fix here some notation and conventions that will be used throughout
the text.
\begin{itemize}
\item Complex inner products are always anti-linear in the first and linear
  in the second argument.
\item Given any sesquilinear form $q(\psi,\ph)$, we will as usual denote by
  $q(\psi) = q(\psi,\psi)$ the associated quadratic form.
\item We denote algebraic tensor products, \emph{i.e.}, linear combinations
  of pure tensors, by $\hat\otimes$ and Hilbert space tensor products by
  $\otimes$.
\item The operator domains of a linear operator in a Hilbert space is denoted
  by the symbol $\sD(\cdot)$.  Since we also work with more general
  topological vector spaces, we emphasize that $\sD(\cdot)$ is always
  considered as subspace of the Hilbert space.
 \item For a Hilbert space operator $\omega$, we write $\omega\ge0$ in the sense of quadratic forms, i.e., $\braket{\psi,\omega\psi}\ge 0$ for all $\psi\in\sD(\omega)$.
 	In this case, we define $\omega^{-1}:\ran\omega\smallsetminus\{0\}\to\sD(\omega)$ as the canonical inverse and use the notation $v/\omega\coloneqq \omega^{-1}v$ for $v\in\ran\omega\smallsetminus\{0\}$.
\end{itemize}

\section{Wave Function Renormalization of the van Hove--Miyatake Model}
\label{sec:van-hove-miyatake}

The van Hove--Miyatake model was introduced independently in
\citep{vanhove1952physica,miyatake1952osaka} to describe a quantum scalar
field interacting with an unmovable source (\emph{e.g.}, a particle of very
large mass with no internal degrees of freedom). Despite it being not fully
relativistic (due to the immovable source), it still retains the most salient
features of field theories; it is therefore the perfect laboratory to study
problems in QFT.

Both infrared and ultraviolet divergences of the vHM model have been studied
in the past \citep[see][]{derezinski2003ahp,arai2020mps}; however, it was
only very recently that two of the authors of this paper could deal with such
divergences in full generality \citep{falconi2025prims}. Here we use the vHM
model to introduce in the simplest way the wave function renormalization
scheme in the Hamiltonian formalism, and to illustrate its most salient
features: \cref{sec:wave-funct-renorm-1,sec:wave-funct-renorm-2} rely heavily
on the concepts and notations introduced in this section.

\subsection{The vHM model in the Fock representation}
\label{sec:van-hove-model}

The vHM field theory can be seen, for sources that are sufficiently regular
both in the infrared and ultraviolet sectors, as a small perturbation of the
free scalar field, and thus it is well defined in the free field (Fock)
representation.

To set up the Fock representation, let us consider a Gel'fand triple
$(\mathscr{T},\mathfrak{h},\mathscr{T}')$, where:
\begin{itemize}
\item $\mathscr{T}$ is a TVS, $\mathscr{T}'$ is its continuous anti-dual, and
  $\mathfrak{h}$ is a Hilbert space;
\item $\mathscr{T}$ is continuously embedded in $\mathfrak{h}$ with dense image.
\item This also implies that $\sT$ and thus $\mathfrak h$ are continuously embedded in $\mathscr{T}'$ with dense image w.r.t. the  weak-$*$ topology ($\sigma(\sT',\sT)$ topology).\footnote{
		{\em Proof:} $(\sT',\sigma(\sT',\sT))$ is locally convex, so by
                the Hahn--Banach theorem and the fact that the weak-$*$ dual
                of $\sT'$ is $\sT$, density follows if for all $f\in\sT$ we can
                prove the implication $\forall g\in\sT:\braket{f,g}_\fh=0\Rightarrow f=0$. This
                is however obvious, since $\sT$ is dense in $\fh$. \qed}
\end{itemize}
We interpret $\mathscr{T}$ as the space of \emph{test functions} for the
field, $\mathfrak{h}$ as its \emph{one-particle space}, and $\mathscr{T}'$ as
the space of (classical) \emph{field configurations}. In view of the above
embeddings, we denote, for any $\varphi\in \mathscr{T}'$ and $f\in
\mathscr{T}$,
\begin{equation*}
  \langle \varphi,f \rangle_{}\equiv \overline{\varphi(f)}
\end{equation*}
the duality bracket between $\mathscr{T}$ and $\mathscr{T}'$, observing that by definition
whenever $\varphi\in \mathfrak{h}$,
\begin{equation*}
  \langle \varphi , f  \rangle_{}= \langle \varphi,f \rangle_{\mathfrak{h}}\;. 
\end{equation*}
We might write, with a slight abuse of notation, $\langle \varphi , f
\rangle_{\mathfrak{h}}$ for the duality pairing also if $\varphi\in
\mathscr{T}'\smallsetminus \mathfrak{h}$.
Whereas $\sT$ is dense in $\sT'$, it is not necessarily {\em sequentially} dense in $\sT'$. When using approximation arguments, we will thus formulate statements for approximating nets and call a net $\{g_\alpha\}_{\alpha\in \mathcal{A}}\subset \sT$ and $\{g_\alpha\}_{\alpha\in \mathcal{A}}\subset \fh$ converging to $g$ in the $\sigma(\sT',\sT)$-topology a $\sT$-{\em mollification} and $\fh$-{\em mollification} of $g$, respectively.
\begin{rem*}
  A sufficient criterion for $\sT$ to be sequentially weak-$*$ dense in
  $\sT'$ is that $\sT$ is locally convex, separable and metrizable,
  \emph{e.g.}, when choosing $\sT$ to be the Schwartz space $\cS(\mathbb{R}^d)$
  embedded in $L^2(\IR^d)$, or in other cases where an explicit mollification
  procedure is known, \emph{e.g.}, for the space $\cD(\IR^d\setminus\{0\})$ of smooth
  compactly supported functions with support in $\IR^d\setminus\{0\}$.
\end{rem*}
The Fock representation of the free scalar field over the one-particle space $\fh$ has the symmetric Fock space
\begin{equation*}
  \mathcal{F}(\mathfrak{h})= \bigoplus_{n\in \mathbb{N}} \mathfrak{h}_n
\end{equation*}
as ambient space, where $\mathfrak{h}_0=\mathbb{C}$, and
\begin{equation*}
  \mathfrak{h}_n= \underbrace{\mathfrak{h}\otimes_{\mathsf{s}}\dotsm\otimes_{\mathsf{s}}\mathfrak{h}  }_n\;.
\end{equation*}
A vector $\Psi\in \mathcal{F}(\mathfrak{h})$ will be denoted, fiber-wise, by
$\Psi=(\Psi_0,\Psi_1,\dotsc,\Psi_n,\dotsc)$, with inner product
\begin{equation*}
  \langle \Psi  , \Phi \rangle_{\mathcal{F}(\mathfrak{h})}= \sum_{n\in \mathbb{N}}^{}\langle \Psi_n  , \Phi_n \rangle_{\mathfrak{h}_n}\;.
\end{equation*}
The Fock space is the GNS representation of the C*-algebra of CCR
corresponding to the \emph{Fock vacuum state}, whose cyclic vector is
$\Omega=(1,0,\dotsc,0,\dotsc)$. A very useful dense subset of $\mathcal{F}(\mathfrak{h})$ is given by the
\emph{regular finite particle vectors}: given any dense subspace $\mathscr{U}\subseteq \mathfrak{h}$, the
finite particle vectors with $\mathscr{U}$ regularity are the dense subspace
\begin{multline*}
  \FSfin(\mathscr{U})= \Bigl\{\Psi\in \mathcal{F}(\mathfrak{h})\;,\; \exists n\in \mathbb{N}\;,\; \Psi=(\Psi_0,\Psi_1,\dotsc,\Psi_n,0,\dotsc,0,\dotsc)\\\text{ and } \forall j\in \{1,\dotsc,n\}\;,\;\Psi_j\in \underbrace{\mathscr{U}\hat{\otimes}_{\mathsf{s}} \dotsm\hat{\otimes}_{\mathsf{s}} \mathscr{U}}_j\Bigr\}\;.
\end{multline*}
The CCR-algebra itself is implemented through the \emph{creation and
  annihilation operators $a^{\dagger}(f)$ and $a(f)$}: for any $f\in
\mathfrak{h}$, the closed operators $a(f)$ and $a^{\dagger}(f)$ are defined
on their common core $\FSfin(\mathfrak{h})$ by their action on fiber-wise
symmetric tensor products
$\Psi_n=\psi_1^{(n)}\otimes_{\mathsf{s}}\dotsm\otimes_{\mathsf{s}}\psi_n^{(n)},
$
\begin{equation}
  \label{eq:1}
  \bigl(a(f)\Psi\bigr)_n = \frac{\sqrt{n+1}}{(n+1)!} \sum_{\sigma\in \Sigma_{n+1}}^{}\langle f  , \psi^{(n+1)}_{\sigma(1)} \rangle_{\mathfrak{h}}\:\psi^{(n+1)}_{\sigma(2)}\otimes \dotsm\otimes \psi^{(n+1)}_{\sigma(n+1)}\;;  
\end{equation}
and
\begin{equation}
  \label{eq:2}
  \bigl(a^{\dagger}(f)\Psi\bigr)_n = \sqrt{n} \bigl(f \otimes_{\mathsf{s}}\psi_1^{(n-1)}\otimes_{\mathsf{s}} \dotsm\otimes_{\mathsf{s}} \psi^{(n-1)}_{n-1}\bigr)\;. 
\end{equation}
The creation and annihilation operators are adjoint to one-another,
$a(f)^{*}=a^{\dagger}(f)$, and $a(f)+a^{\dagger}(f)$ is the Segal field operator,
essentially self-adjoint on $\FSfin(\mathfrak{h})$; the Fock representation of the
CCR-algebra now being generated by the unitary Weyl operators $\{W(f)\;,\; f\in
\mathscr{T}\}$, with
\begin{equation*}
  W(f)=\e^{i\pi \bigl(a^{\dagger}(f)+a(f)\bigr)}\;.
\end{equation*}
Free field Hamiltonians in the Fock representation take the form of
\emph{second quantization operators}: given a single-boson self-adjoint
operator $T$ with domain $\mathscr{D}(T)\subseteq \mathfrak{h}$, its \emph{second quantization} $\dG(T)$
is the self-adjoint operator with core $\FSfin\bigl(\mathscr{D}(T)\bigr)$ and
fiber-wise action
\begin{equation*}
  \bigl(\dG(T)\Psi\bigr)_n= \Bigl(\sum_{j=1}^n \mathds{1}\otimes \dotsm\otimes \underbrace{T}_j\otimes \dotsm\otimes \mathds{1}\Bigr) \Psi_n\;.
\end{equation*}
In particular, the second quantization $\dG(\dsone)$ of $\mathds{1}$ gives rise to the so-called \emph{number
  operator}, that counts how many Fock excitations are on each fiber, and the free scalar field Hamiltonian $H_0$ is given by
\begin{equation*}
  H_0=\dG(\omega)\;,
\end{equation*}
where $\omega\geq 0$ is the \emph{field's dispersion relation}.

\begin{defn}[Fock vHM Hamiltonian]
  \label[definition]{def:4}
  For any (regular) form factor $v\in \mathfrak{h}$, and dispersion relation $\omega\geq 0$
  (self-adjoint on $\mathscr{D}(\omega)\subseteq \mathfrak{h}$), the \emph{Fock vHM Hamiltonian} $H(v)$ is
  defined as
  \begin{equation*}
    H(v)= \dG(\omega)+ a^{\dagger}(v)+a(v)\;.
  \end{equation*}
\end{defn}
\begin{ex*}[Concrete vHM model]
  The concrete vHM model describes a scalar field in $\mathbb{R}^d$, created and
  annihilated by a charged particle with charge distribution
  $\check{v}(x)$. It is then natural to set $\mathfrak{h}=L^2(\mathbb{R}^d)$, whence the field's
  dispersion relation is $\omega(k)=\sqrt{k^2+\mu^2}$, $\mu\geq 0$ being the field's
  mass -- seen as a multiplication operator on $L^2(\mathbb{R}^d)$. Observe that
  $\omega^{-\alpha}$, $\alpha>0$, is bounded if $\mu>0$, while it is unbounded if $\mu=0$.

  Finally, let us remark that if $\mu>0$, $v\in \mathfrak{h}$ implies $v/\omega^{\alpha}\in \mathfrak{h}$ for any
  $\alpha>0$, while for $\mu=0$ there are $v\in \mathfrak{h}$ such that $v/\omega^{\alpha}\notin \mathfrak{h}$ for some
  $\alpha$, and conversely $v\notin \mathfrak{h}$ such that $v/\omega^{\alpha}\in \mathfrak{h}$ for some $\alpha>0$. This is
  expression of the well-known fact that massless fields can have a singular
  behavior in the IR, as well as in the UV.
\end{ex*}

The spectral properties of the vHM Hamiltonian depend on the regularity of
$v$, as illustrated in the following proposition.
\begin{prop}
  \label[proposition]{prop:5}
  Let $v\in \mathfrak{h}$, and $\omega\geq 0$. Then:
  \begin{itemize}
    \setlength{\itemsep}{3mm}
  \item $H(v)$ is essentially self-adjoint on $\FSfin\bigl(\mathscr{D}(\omega)\bigr)$ and
    \begin{equation*}
      \inf \sigma\bigl(H(v)\bigr) = -\lVert v/\sqrt{\omega}  \rVert_{\mathfrak{h}}^2
    \end{equation*}
    (the latter being $-\infty$ whenever $v/\sqrt{\omega}\notin \mathfrak{h}$);
  \item If $v/\sqrt{\omega}\in \mathfrak{h}$, then $H(v)$ is self-adjoint on $\mathscr{D}(\dG(\omega))$ and
    bounded from below, and it has no Fock ground state whenever $v/\omega \notin \mathfrak{h}$;
  \item If $g\coloneqq -v/\omega \in \mathfrak{h}$, then
    \begin{equation*}
      \e^{a(g)-a^{\dagger}(g)}H(v)\e^{a^{\dagger}(g)-a(g)}= \mathsf{d}\Gamma(\omega) - \lVert v/\sqrt{\omega}  \rVert_{\mathfrak{h}}^2\;,
    \end{equation*}
    and thus the Fock coherent state $c_g= \e^{a^{\dagger}(g)-a(g)}\Omega$ is the unique
    ground state of $H(v)$.
  \end{itemize}
\end{prop}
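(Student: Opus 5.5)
The plan is to handle the three items in reverse order. The explicit diagonalization in the regular case $g=-v/\omega\in\fh$ is the template, and the other two items then follow by approximation and by a direct quadratic-form argument.

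\emph{Third item.} For $g\in\fh$, the Weyl operator $W_g\coloneqq\e^{a(g)-a^{\dagger}(g)}$ acts as a shift:
\[
W_g\, a(f)\, W_g^{*}=a(f)+\langle f,g\rangle_{\fh}.
\]
This is verified on analytic vectors in $\FSfin(\fh)$ via the Baker--Campbell--Hausdorff formula, since the commutator $[a(g)-a^{\dagger}(g),a(f)]=\langle f,g\rangle$ is a scalar. Suppose in addition $g\in\sD(\omega)$, which holds here because $\omega g=-v\in\fh$. Then
\[
W_g\,\dG(\omega)\,W_g^{*}=\dG(\omega)+a^{\dagger}(\omega g)+a(\omega g)+\langle g,\omega g\rangle .
\]
This identity is first checked on $\FSfin(\sD(\omega))$. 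It extends because $W_g$ maps this core into $\sD(\dG(\omega))$ (coherent-state shifts preserve $\sD(\dG(\omega))$ when $g\in\sD(\omega)$), together with Kato--Rellich: $a^\sharp(\omega g)$ is infinitesimally $\dG(\omega)$-bounded on the relevant vectors once $\omega g/\sqrt\omega\in\fh$.

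Substituting $\omega g=-v$ cancels the linear terms against $a^{\dagger}(v)+a(v)$. The leftover constant is
\[
\langle g,\omega g\rangle-2\Re\langle v,g\rangle=\|v/\sqrt\omega\|^2-2\|v/\sqrt\omega\|^2=-\|v/\sqrt\omega\|^2 .
\]
Since $\dG(\omega)$ has $\Omega$ as kernel, the coherent state $c_g=W_g^{*}\Omega$ is a ground state. It is unique if $\ker\omega=\{0\}$, which is implicit because $v/\omega$ is defined only on $\ran\omega$.

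\emph{Second item.} Assume $v/\sqrt\omega\in\fh$. The standard estimate
\[
\|a(v)\Psi\|\le\|v/\sqrt\omega\|\,\|\dG(\omega)^{1/2}\Psi\|
\]
gives, via $\|a^\dagger(v)\Psi\|^2=\|a(v)\Psi\|^2+\|v\|^2\|\Psi\|^2$, infinitesimal $\dG(\omega)$-boundedness of $a^{\dagger}(v)+a(v)$ in operator sense. Kato--Rellich then yields self-adjointness on $\sD(\dG(\omega))$ and semiboundedness.

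For the infimum, I would use a pull-through formula to rule out a ground state when $v/\omega\notin\fh$. If $H\Psi=E\Psi$, then
\[
a(k)\Psi=-\frac{v(k)}{H-E+\omega(k)}\Psi,
\]
and the norm of $a(\cdot)\Psi$ is bounded below by $|v(k)|/\omega(k)$ times $\|\Psi\|$ near small $\omega$. In abstract form this reads $\|a(f)\Psi\|$ controlled by a spectral-measure argument: for $\Psi\in\ker(H-E)$ one gets $(\omega+H-E)a(f)\Psi=-\langle f,v\rangle\Psi$, so $\sum\|a(f_j)\Psi\|^2\ge\sum|\langle f_j,v/\omega\rangle|^2\|\Psi\|^2/C$. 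This forces $v/\omega\in\fh$. I would phrase it abstractly using the spectral theorem for $\omega$ rather than pointwise $k$; I expect this to be the main obstacle, since the paper's setting is an abstract $\fh$.

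\emph{First item and infimum.} Essential self-adjointness on $\FSfin(\sD(\omega))$ for general $v\in\fh$ follows from Nelson's commutator theorem with comparison operator $N=\dG(\omega)+\dG(\dsone)+1$, or alternatively from analytic vectors combined with the essential self-adjointness of the Segal field.

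For the infimum I would approximate. Let $\omega_\eps=\omega+\eps$ and
\[
g_\eps=-\chr_{\{\omega\ge\eps\}}\,v/\omega .
\]
Then $g_\eps\in\sD(\omega)$, and the trial states $c_{g_\eps}$ give
\[
\langle c_{g_\eps},H(v)c_{g_\eps}\rangle=\langle g_\eps,\omega g_\eps\rangle+2\Re\langle v,g_\eps\rangle\longrightarrow-\|v/\sqrt\omega\|^2 ,
\]
which may be $-\infty$; this is the upper bound. The lower bound comes from completing the square in the form sense on $\FSfin$, cut off spectrally at $\omega\ge\eps$. The low-$\omega$ part is controlled by splitting $v=v_{<}+v_{>}$, applying the third-item identity to $v_{>}$, and using $\pm(a^\dagger(v_<)+a(v_<))\ge-\delta\dG(\omega)-\delta^{-1}\|v_</\sqrt\omega\|^2$ (infinitesimally small in the limit). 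Letting $\eps\to0$ then gives equality.
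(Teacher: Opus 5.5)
\textbf{Main gap: absence of a Fock ground state (second item).} The genuine gap is in showing there is no Fock ground state when $v/\sqrt\omega\in\fh$ but $v/\omega\notin\fh$.

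By the spectral theorem you may take $\fh=L^2(\sM,\mu)$ with $\omega$ a multiplication operator. This disposes of the ``abstract $\fh$'' obstacle you anticipate. For an eigenvector $H(v)\Psi=E\Psi$ the pull-through formula is then exact in the vHM model: $a_k\Psi=-\frac{v(k)}{\omega(k)}\Psi$ for a.e.\ $k$.

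Taking norms gives $\int\|a_k\Psi\|^2\,\mathrm{d}\mu(k)=\|\dG(\dsone)^{1/2}\Psi\|^2=\|v/\omega\|_{\fh}^2\|\Psi\|^2$, and this is no contradiction, because both sides may be $+\infty$. A vector in $\sD(\dG(\omega))$ need not lie in $\sD(\dG(\dsone)^{1/2})$ when $\omega$ is not bounded away from $0$. Indeed, the identity itself shows that a hypothetical ground state would have infinite expected particle number. So your step ``$\sum_j\|a(f_j)\Psi\|^2\ge\dots$ forces $v/\omega\in\fh$'' silently assumes the left side is finite, which begs the question. As written, it only excludes ground states with finite particle number, which is the Arai--Hirokawa--Hiroshima setting.

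The missing idea is to pair with a fixed regular vector instead of taking norms:
\begin{itemize}
\item Pick $\eta\in\FSfin(\fh)$ with $\langle\eta,\Psi\rangle\neq0$.
\item Since $\langle\eta,a(f)\Psi\rangle=\langle a^\dagger(f)\eta,\Psi\rangle$, the function $k\mapsto\langle\eta,a_k\Psi\rangle$ lies in $L^2(\sM,\mu)$ with norm at most $\|(\dG(\dsone)+1)^{1/2}\eta\|\,\|\Psi\|$.
\item This function equals $-\frac{v}{\omega}\langle\eta,\Psi\rangle$, hence $v/\omega\in\fh$.
\end{itemize}
This is exactly the $\eta_\eps$ device the paper uses in the proof of \cref{prop:8}. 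Here no $\eps$ is needed, because the identity is exact.

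\textbf{Smaller gap: essential self-adjointness (first item).} With $N=\dG(\omega)+\dG(\dsone)+1$, condition (ii) of Nelson's commutator theorem requires $\lvert\langle\Psi,(a(\omega v)-a^\dagger(\omega v))\Psi\rangle\rvert\lesssim\|N^{1/2}\Psi\|^2$. That bound needs $\sqrt\omega\,v\in\fh$ and fails for a general $v\in\fh$.

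You can repair it by moving the ultraviolet part $v_>=\chr_{\{\omega\ge1\}}v$, which satisfies $v_>/\sqrt\omega\in\fh$, into the comparison operator: $N=\dG(\omega+1)+a^\dagger(v_>)+a(v_>)+C$. This $N$ is self-adjoint on $\sD(\dG(\omega+1))$ by Kato--Rellich, and $N\ge\dG(\dsone)+1$ for $C$ large. The $\dG(\omega)$--$a^\sharp(v_>)$ commutators then cancel between $H(v)$ and $N$. Only $v_<=\chr_{\{\omega<1\}}v\in\sD(\omega)$ enters the estimate. Alternatively, cite Falconi's criterion, as the paper does.

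\textbf{What is fine, and two cosmetic points.} The rest of your argument is sound. It is more explicit than the paper's proof, which cites Derezi\'nski for the second and third items. Your coherent trial states $c_{g_\eps}$ for the upper bound on $\inf\sigma(H(v))$ are the paper's own argument.
\begin{itemize}
\item The constant in the third item is $\langle g,\omega g\rangle+2\Re\langle v,g\rangle$; your sign is off, though the value you write is correct.
\item No splitting is needed for the lower bound. Your form inequality with $\delta=1$, applied to $v$ itself, gives $H(v)\ge-\|v/\sqrt\omega\|_\fh^2$ directly.
\end{itemize}
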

\begin{proof}
  The first result is an application of Nelson's commutator theorem
  \citep[][Thm.\ X.36]{ReedSimon.1975} (see also \citep{Falconi.2015}) for the
  self-adjointness part; the infimum of the spectrum is proved to be $-\infty$
  whenever $v/\sqrt{\omega}\notin \mathfrak{h}$ by observing that the coherent state $c_{g_n}$ is
  such that
  \begin{equation*}
    \lim_{n\to \infty}\langle c_{g_n}  , H(v)c_{g_n} \rangle_{\mathcal{F}(\mathfrak{h})} = -\infty
  \end{equation*}
  whenever $g_n=-v_n/\omega$, with $g_n\in \mathfrak{h}$ and $v_n\to v$ in $\mathfrak{h}$. If $v/\sqrt{\omega}\in
  \mathfrak{h}$, the fact that
  \begin{equation*}
    \inf_{}\sigma\bigl(H(v)\bigr)= -\lVert v/\sqrt{\omega}  \rVert_{\mathfrak{h}}^2
  \end{equation*}
  is proved in \citep{derezinski2003ahp}, as well as the second and third
  points in the proposition.
\end{proof}
The case $v/\sqrt{\omega}\in \mathfrak{h}$ with $v/\omega\notin \mathfrak{h}$ is called ``the infrared
catastrophe'': the ground state does not belong to the Fock representation
since it has an \emph{infinite} number of excitations of very low energy,
called ``soft photons'', and Fock vectors have smaller and smaller
probability of having a larger and larger number of excitations. More
precisely, the following lemma is proved in \citep{arai2020mps}.
\begin{lem}
  \label[lemma]{lemma:7}
  Let $v\in \mathfrak{h}$ such that $v/\sqrt{\omega}\in \mathfrak{h}$ and $v/\omega\notin \mathfrak{h}$, and let $v_n\to v$ be an
  $\mathfrak{h}$-convergent sequence such that $v_n/\sqrt{\omega},v_n/\omega\in \mathfrak{h}$. Then the
  sequence of ground states $c_{g_n}$ of $H(v_n)$ satisfies:
  \begin{itemize}
    \setlength{\itemsep}{2mm}
  \item In the weak topology of $\mathcal{F}(\mathfrak{h})$,
    \begin{equation*}
      c_{g_n}\rightharpoonup 0\;;
    \end{equation*}
  \item
    \begin{equation*}
      \lim_{n\to \infty}\langle c_{g_n}  , \dG(\mathds{1})c_{g_n} \rangle_{\mathcal{F}(\mathfrak{h})}=+\infty\;.
    \end{equation*}
  \end{itemize}
\end{lem}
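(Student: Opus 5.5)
The plan is to compute everything explicitly using coherent-state algebra. The starting point is that $g_n=-v_n/\omega\in\fh$ and $c_{g_n}=\e^{a^\dagger(g_n)-a(g_n)}\Omega=\e^{-\norm{g_n}^2/2}\e^{a^\dagger(g_n)}\Omega$. Fiber-wise this gives
\begin{equation*}
  (c_{g_n})_k=\e^{-\norm{g_n}_{\fh}^2/2}\,\frac{g_n^{\otimes k}}{\sqrt{k!}}\;.
\end{equation*}
The key preliminary fact is that $\norm{g_n}_\fh\to+\infty$. I would prove this by contradiction. If $\norm{v_n/\omega}$ stayed bounded along a subsequence, we could extract a weakly convergent subsequence $g_n\rightharpoonup h\in\fh$. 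On the other hand, for any $f\in\sD(\omega)$ we have $\langle \omega f, g_n\rangle=-\langle f,v_n\rangle\to-\langle f,v\rangle$. Hence $\langle\omega f,h\rangle=-\langle f,v\rangle$ for all $f\in\sD(\omega)$, so $h\in\sD(\omega^*)=\sD(\omega)$ with $\omega h=-v$. If $\ker\omega=\{0\}$, this gives $h=-v/\omega\in\fh$, a contradiction. The main (minor) obstacle is the case where $\omega$ has a kernel. There one has to use the convention $v/\omega=\omega^{-1}v$ with $v\in\ran\omega$: the contradiction is that $v\in\ran\omega$ would make $v/\omega$ defined and in $\fh$, which is excluded by hypothesis.

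Second bullet. The number operator expectation in a coherent state is $\langle c_{g_n},\dG(\Id)c_{g_n}\rangle=\norm{g_n}^2$. This follows from $a(f)c_g=\langle f,g\rangle c_g$, or directly from the Poisson weights $\e^{-\norm{g}^2}\norm{g}^{2k}/k!$ of the fiber norms. By the preliminary fact this diverges.

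First bullet. Since the $c_{g_n}$ are normalized, it suffices to show $\langle\Psi,c_{g_n}\rangle\to0$ for $\Psi$ in a total set. I take $\Psi$ to be finite particle vectors, and by linearity pure fiber vectors $\Psi=\psi^{\otimes k}$ (these span $\fh_k$ by polarization). Then
\begin{equation*}
  |\langle\psi^{\otimes k},c_{g_n}\rangle|=\e^{-\norm{g_n}^2/2}\frac{|\langle\psi,g_n\rangle|^k}{\sqrt{k!}}\le \e^{-\norm{g_n}^2/2}\frac{\norm{\psi}^k\norm{g_n}^k}{\sqrt{k!}}\;.
\end{equation*}
The right-hand side tends to $0$ because $x^k\e^{-x^2/2}\to0$ as $x\to\infty$. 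Uniform boundedness together with convergence on a total set then yields $c_{g_n}\rightharpoonup0$.

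The only nontrivial step is the divergence $\norm{v_n/\omega}\to\infty$, handled by the weak-compactness and closedness argument above. Everything else is standard coherent-state computation.
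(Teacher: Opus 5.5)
Your proof is correct and complete. The paper does not prove this lemma itself; it only refers to \citep{arai2020mps}. So there is no competing route to compare with, and your argument serves as a self-contained substitute.

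The only non-computational input is the divergence $\lVert v_n/\omega\rVert_{\fh}\to\infty$. Your weak-compactness argument for it amounts to proving lower semicontinuity of the closed form $u\mapsto\lVert\omega^{-1}u\rVert_{\fh}^2$ (set to $+\infty$ off $\ran\omega$). It handles the kernel case correctly under the paper's convention for $v/\omega$. In the concrete multiplication-operator setting there is a shortcut: every subsequence has a further a.e.-convergent subsequence, and Fatou's lemma along it gives $\liminf\lVert v_n/\omega\rVert^2\ge\lVert v/\omega\rVert^2=+\infty$.

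The remaining steps are right as written: the Poisson weights give $\langle c_{g_n},\dG(\mathds{1})c_{g_n}\rangle=\lVert g_n\rVert^2$, and the weak limit follows by testing against the total family $\psi^{\otimes k}$ with $\lVert c_{g_n}\rVert=1$. Note also that you never use $v/\sqrt{\omega}\in\fh$. That hypothesis only places the lemma in the infrared-catastrophe regime and plays no role in the proof.
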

The unitary operator $\e^{a^{\dagger}(g)-a(g)}$ that diagonalizes the vHM
Hamiltonian $H(v)$ is the \emph{unitary dressing transformation} originally
introduced by Gross in the polaron model \citep{gross1962ap,nelson1964jmp}.

\subsection{The vHM model with singular interactions}
\label{sec:vhm-model-with}

Recalling the Gel'fand triple $(\mathscr{T}, \mathfrak{h}, \mathscr{T}')$ that we used as a starting point
to construct the Fock representation, we would now like to extend the van
Hove model to any distributional form factor $v\in \mathscr{T}'$.

The first key observation is that, exploiting the $\mathscr{T},\mathscr{T}'$ duality bracket and
the continuous embeddings $\mathscr{T} \hookrightarrow \mathfrak{h} \hookrightarrow \mathscr{T}'$, annihilation makes sense also for
singular one-particle vectors.
\begin{defn}[Singular annihilation]
  \label[definition]{def:5}
  The annihilation operator extends uniquely to generic field configurations
  in $\mathscr{T}'$: for any $g\in \mathscr{T}'$, $a(g):\FSfin(\mathscr{T})\to \FSfin(\mathscr{T})$ is defined by its
  action on fiber-wise products $\Psi_n=
  \psi_1^{(n)}\otimes_{\mathsf{s}}\dotsm\otimes_{\mathsf{s}}\psi_n^{(n)}$,
  \begin{equation*}
    \bigl(a(g)\Psi\bigr)_n= \frac{\sqrt{n+1}}{(n+1)!} \sum_{\sigma\in \Sigma_{n+1}}^{}\langle g  , \psi_{\sigma(1)}^{(n+1)} \rangle_{} \: \psi_{\sigma(2)}^{(n+1)}\otimes\dotsm\otimes\psi^{(n+1)}_{\sigma(n+1)}\;.  
  \end{equation*}
  For any $g\in \mathfrak{h}$, this definition coincides with the standard one \cref{eq:1}.
\end{defn}
Let us stress the (perhaps obvious) fact that such a generalization is never
possible for the creation operator, cf.\ \cref{eq:2}: we cannot create an excitation outside of
$\mathfrak{h}$ without leaving the Fock space $\mathcal{F}(\mathfrak{h})$ altogether.
This reflects in the fact that $\sD(a(g)^*)=\{0\}$ if $g\in\sT'\smallsetminus\fh$.

Using generalized
annihilation, we can nevertheless define the vHM model for singular sources, at least as a
densely defined quadratic form in the Fock representation.

\begin{defn}[vHM form]
  \label[definition]{def:6}
  Let $(\mathscr{T},\mathfrak{h},\mathscr{T}')$ be the Gel'fand triple of the field, with the dispersion
  $\omega:\mathscr{T}\to \mathscr{T}$ being a linear homeomorphism that extends to a positive
  operator\footnote{We denote the extension of $\omega\bigr\rvert_{\mathscr{T}}$ to a
    self-adjoint operator on $\mathfrak{h}$ with the same symbol: it is understood that
    it is the very same extension that appears in the vHM
    Hamiltonian/quadratic form. Also, we assume implicitly that $\mathscr{T}$
    is a core for the operator $\omega$ on $\mathfrak{h}$, even though this
    is mostly unnecessary.} $\omega\geq 0$ on $\mathfrak{h}$, and $v\in
  \mathscr{T}'$. Then the \emph{vHM sesquilinear form} $q_v:
  \FSfin(\mathscr{T})\times \FSfin(\mathscr{T})\to \mathbb{C}$ is defined by
  \begin{equation*}
    q_v(\Psi,\Phi)= \langle \Psi  , \mathsf{d}\Gamma(\omega)\Phi \rangle_{\mathcal{F}(\mathfrak{h})} + \langle a(v)\Psi  , \Phi \rangle_{\mathcal{F}(\mathfrak{h})}+ \langle \Psi  , a(v)\Phi \rangle_{\mathcal{F}(\mathfrak{h})}\;.
  \end{equation*}
  For all $v\in \mathfrak{h}$, $q_v(\Psi,\Phi)= \langle \Psi , H(v)\Phi
  \rangle_{\mathcal{F}(\mathfrak{h})}$.
\end{defn}
The quadratic form associated to $q_v$ is, for generic $v\in \mathscr{T}'$,
neither closed nor bounded from below: it requires \emph{renormalization}. Two
types of divergences might plague $q_v$:
\begin{itemize}
  \setlength{\itemsep}{3mm}
\item \emph{Energy divergences}. Divergent terms in the Hamiltonian (in vHM,
  a constant) make the form unbounded from below; this is typically a UV
  behavior in particle-field models, but might happen in the IR sector as
  well.
\item \emph{Wave function divergences}. The characteristic states of the
  system, starting from the ground state itself, are singular: either they
  have too many excitations (like in the IR catastrophe), or they have a
  number of singular excitations (this happens typically for strong UV
  divergences).
\end{itemize}
Before discussing such renormalization procedures, let us refine
\cref{prop:5}. Observe that for $v\notin \mathfrak{h}$ such that
$v/\sqrt{\omega}\in \mathfrak{h}$, the quadratic form $q_v(\Psi,\Psi)$ is the
free form $\langle\Psi, \mathsf{d}\Gamma(\omega)\Psi \rangle_{}$ plus a small
form perturbation of it -- $2\mathsf{Re}\langle \Psi , a(f)\Psi \rangle_{}$
-- and thus by KLMN theorem \citep[][Thm.\ X.17]{ReedSimon.1975}, it is the form
of a unique self-adjoint operator $H(v)= \dG(\omega)\overset{\cdot}{+}
\bigl(a^{*}(v)+a(v)\bigr)$, where $\overset{\cdot}{+}$ denotes the form sum.
\begin{prop}
  \label[proposition]{prop:6}
  Let $v/\sqrt{\omega}\in \mathfrak{h}$, and $\omega\geq 0$. Then $H(v)=
  \dG(\omega)\overset{\cdot}{+} \bigl(a^{*}(v)+a(v)\bigr)$ is self-adjoint on
  $\mathscr{D}\bigl(H(v)\bigr)\subseteq
  \mathscr{D}\bigl(\dG(\omega)^{1/2}\bigr)$ and bounded from below.
\end{prop}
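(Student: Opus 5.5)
The plan is to apply the KLMN theorem \citep[][Thm.\ X.17]{ReedSimon.1975} to the closed, nonnegative form $\Psi\mapsto\lVert\dG(\omega)^{1/2}\Psi\rVert^2$ with form domain $\mathscr{D}(\dG(\omega)^{1/2})$. The perturbation is the form $2\Re\langle\Psi,a(v)\Psi\rangle$. The key step is a standard relative form bound for the annihilation operator,
\begin{equation*}
  \lVert a(v)\Psi\rVert_{\mathcal{F}(\mathfrak{h})}\le \lVert v/\sqrt{\omega}\rVert_{\mathfrak{h}}\,\lVert \dG(\omega)^{1/2}\Psi\rVert_{\mathcal{F}(\mathfrak{h})}\;,
\end{equation*}
which I will first prove on $\FSfin(\mathscr{T})$, or on $\FSfin(\mathscr{D}(\omega))$ when $v\in\mathfrak{h}$. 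Fiber-wise, $(a(v)\Psi)_{n}=\sqrt{n+1}\,\langle v,\cdot\rangle_1\Psi_{n+1}$, where the pairing acts in the first variable. Writing $\langle v,\psi\rangle=\langle \omega^{-1/2}v,\omega^{1/2}\psi\rangle$ and applying Cauchy--Schwarz in the first variable gives $\lVert(a(v)\Psi)_n\rVert^2\le (n+1)\lVert v/\sqrt\omega\rVert^2\langle\Psi_{n+1},(\omega\otimes\mathds 1)\Psi_{n+1}\rangle$. By symmetry of $\Psi_{n+1}$, the factor $(n+1)(\omega\otimes\mathds 1)$ averages to $\sum_j\omega_j$, so the sum over $n$ yields $\lVert v/\sqrt\omega\rVert^2\lVert\dG(\omega)^{1/2}\Psi\rVert^2$.

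Some care is needed with the meaning of $v/\sqrt\omega$ when $v\notin\mathfrak{h}$. I will read the hypothesis $v/\sqrt{\omega}\in\mathfrak{h}$ as saying that the functional $f\mapsto\langle v,f\rangle$ on $\mathscr{T}$ is represented as $\langle v/\sqrt\omega,\omega^{1/2}f\rangle_\mathfrak{h}$. The Cauchy--Schwarz step above uses exactly this representation. From this bound, $a(v)$ extends by continuity (since $\FSfin(\mathscr{T})$ is a core for $\dG(\omega)^{1/2}$) to a bounded operator from $\mathscr{D}(\dG(\omega)^{1/2})$, equipped with the graph norm, into $\mathcal{F}(\mathfrak{h})$.

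With the bound in hand, for every $\varepsilon>0$ we get
\begin{equation*}
  2\lvert\langle\Psi,a(v)\Psi\rangle\rvert\le 2\lVert v/\sqrt\omega\rVert\,\lVert\Psi\rVert\,\lVert\dG(\omega)^{1/2}\Psi\rVert\le \varepsilon\lVert\dG(\omega)^{1/2}\Psi\rVert^2+\varepsilon^{-1}\lVert v/\sqrt\omega\rVert^2\lVert\Psi\rVert^2\;.
\end{equation*}
So the perturbation is infinitesimally form bounded with respect to $\dG(\omega)$, and in particular has relative bound $<1$. KLMN then provides a unique self-adjoint operator $H(v)$, bounded below by $-\varepsilon^{-1}\lVert v/\sqrt\omega\rVert^2$ for any fixed $\varepsilon<1$. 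Its form domain is $\mathscr{D}(\dG(\omega)^{1/2})$, hence $\mathscr{D}(H(v))\subseteq\mathscr{D}(\dG(\omega)^{1/2})$, and its form coincides with the closure of $q_v$ from \cref{def:6}. Taking $\varepsilon\to1$ gives the lower bound $-\lVert v/\sqrt\omega\rVert^2$, consistent with \cref{prop:5}.

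The main obstacle is not KLMN itself but the justification of the density and core arguments for singular $v\in\mathscr{T}'\smallsetminus\mathfrak{h}$. Specifically, one must check that $\FSfin(\mathscr{T})$ is a form core for $\dG(\omega)$; this uses the assumption that $\mathscr{T}$ is a core for $\omega$, together with the standard fact that $\FSfin$ of a core is a core for the second quantization. One must also check that the extension of $a(v)$ is independent of approximations. Both points follow from the uniform bound above.
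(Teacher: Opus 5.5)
Your proposal is correct and is essentially the paper's argument: the paper simply observes that $2\Re\langle\Psi,a(v)\Psi\rangle$ is a small form perturbation of $\langle\Psi,\dG(\omega)\Psi\rangle$ and invokes KLMN. Your fiber-wise bound $\lVert a(v)\Psi\rVert\le\lVert v/\sqrt{\omega}\rVert_{\mathfrak{h}}\lVert\dG(\omega)^{1/2}\Psi\rVert$ is the standard estimate that makes this precise, and it also gives the lower bound $-\lVert v/\sqrt{\omega}\rVert_{\mathfrak{h}}^2$. One small correction: the fact that $\FSfin(\mathscr{T})$ is a form core for $\dG(\omega)$ comes from the (implicit) assumption that $\mathscr{T}$ is a core for $\omega$, not from the uniform bound; the bound only supplies the well-defined continuous extension of $a(v)$ to $\mathscr{D}(\dG(\omega)^{1/2})$.
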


\subsection{Energy renormalization}
\label{sec:energy-renorm}
Energy divergences are renormalized by subtraction, and they are implemented
through the dressing transformation, that in this solvable model
\emph{diagonalizes the Hamiltonian exactly} (we will see that it is not
always the case for the non-diagonalizable models). Consider the case in which
$v\in \mathscr{T}'\smallsetminus \mathfrak{h}$ be such that
$v/\sqrt{\omega}\notin \mathfrak{h}$ and $v/\omega\in \mathfrak{h}$; and let
$\{v_n\}_{n\in \mathbb{N}}\subset \mathscr{T}$ be a mollification of $v$:
$v_n\to v$ in the $\sigma(\mathscr{T}',\mathscr{T})$-topology of
$\mathscr{T}'$. We suppose that $\sqrt{\omega}\bigr\rvert_{\mathscr{T}}$ is
also an homeomorphism, so that $v_n/\sqrt{\omega}\in \mathscr{T}$.

Then we have the following result of \emph{energy renormalization}
\citep[see, \emph{e.g.},][]{derezinski2003ahp}.
\begin{prop}
  \label[proposition]{prop:7}
  Let $v_\alpha\xrightarrow{\alpha} v$ be an $\fh$-mollification of $v\in\sT'$ such that $v_\alpha/\sqrt\omega\in\fh$ for all $\alpha$ and $\fh\ni v_\alpha/\omega\xrightarrow[\alpha]{\fh}v/\omega\in\fh$. Furthermore, define
  \begin{equation*}
    H_{\mathsf{ren}}(v_\alpha)= H(v_\alpha) + \lVert v_\alpha/\sqrt{\omega}  \rVert_{\mathfrak{h}}^2\;.
  \end{equation*}
  Then, as self-adjoint operators on $\mathcal{F}(\mathfrak{h})$,
  \begin{equation*}
    H_{\mathsf{ren}}(v_\alpha) \xrightarrow[\alpha]{\lVert \,\cdot \,  \rVert_{}^{}\mathsf{-res}} H_{\mathsf{ren}}(v)\;,
  \end{equation*}
  the latter being defined by
  \begin{equation*}
    H_{\mathsf{ren}}(v)= \e^{a^{\dagger}(g)-a(g)}\,\dG(\omega)\, \e^{a(g)-a^{\dagger}(g)}\;, 
  \end{equation*}
  with $g=-\frac{v}{\omega}$.
\end{prop}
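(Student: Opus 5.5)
The idea is to use the dressing identity of \cref{prop:5} to write every regularized operator as a unitary conjugate of the same free operator $\dG(\omega)$. After that, norm resolvent convergence reduces to strong convergence of Weyl operators combined with a regularity argument.

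Set $g_\alpha=-v_\alpha/\omega\in\fh$ and $g=-v/\omega\in\fh$; both belong to $\fh$ by hypothesis. By the third point of \cref{prop:5}, applied to $v_\alpha\in\fh$ with $g_\alpha\in\fh$, we have
\begin{equation*}
  H_{\mathsf{ren}}(v_\alpha)=\e^{a^{\dagger}(g_\alpha)-a(g_\alpha)}\,\dG(\omega)\,\e^{a(g_\alpha)-a^{\dagger}(g_\alpha)}\;.
\end{equation*}
Writing $W_\alpha=\e^{a^{\dagger}(g_\alpha)-a(g_\alpha)}$ and $W=\e^{a^{\dagger}(g)-a(g)}$, it follows that for $z\in\BC\smallsetminus[0,\infty)$
\begin{equation*}
  (H_{\mathsf{ren}}(v_\alpha)-z)^{-1}-(H_{\mathsf{ren}}(v)-z)^{-1}=W_\alpha R_z W_\alpha^{*}-WR_zW^{*},
\end{equation*}
where $R_z=(\dG(\omega)-z)^{-1}$. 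Both operators are self-adjoint, being unitary conjugates of $\dG(\omega)$, so it suffices to show that this difference tends to zero in norm.

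Next, split the difference as
\begin{equation*}
  W_\alpha R_zW_\alpha^{*}-WR_zW^{*}=W_\alpha\bigl(R_z W_\alpha^{*}W-W_\alpha^{*}WR_z\bigr)W^{*}.
\end{equation*}
By the Weyl relations, $W_\alpha^{*}W=\e^{i\theta_\alpha}W(h_\alpha)$ with $h_\alpha=g-g_\alpha\to0$ in $\fh$ and a real phase $\theta_\alpha=\Im\langle g_\alpha,g\rangle_\fh$. The phase commutes with $R_z$, so we are reduced to proving
\begin{equation*}
  \bigl\|[R_z,W(h_\alpha)]\bigr\|\to0 .
\end{equation*}
Write $[R_z,W(h)]=R_z\bigl(W(h)\dG(\omega)-\dG(\omega)W(h)\bigr)R_z$ on a suitable domain. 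Using $W(h)^{*}\dG(\omega)W(h)=\dG(\omega)+a^{\dagger}(\omega h)+a(\omega h)+\|\sqrt\omega h\|^2$, up to sign conventions, this would require $\omega h_\alpha\to0$, i.e.\ $v_\alpha-v\to0$ in $\fh$. That fails in general, since $v\notin\fh$. This step is the main obstacle.

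To get around it, I would instead use the first-order formula
\begin{equation*}
  [R_z,W(h)]=R_zW(h)-W(h)R_z,
\end{equation*}
together with the estimate $\|(W(h)-1)(\dG(\omega)+1)^{-1/2}\|\lesssim\|h\|+\|h/\sqrt\omega\|$. This bound follows from $\|a^\sharp(h)(\dG(\omega)+1)^{-1/2}\|\le\|h/\sqrt\omega\|+\|h\|$ and Duhamel's formula $W(h)-1=\int_0^1\frac{d}{ds}W(sh)\,ds$, where $W(sh)$ maps the form domain to itself with bound uniform in $s$ for bounded $\|h\|+\|h/\sqrt\omega\|$. Then
\begin{equation*}
  \|R_zW(h)-W(h)R_z\|\le\|R_z(W(h)-1)\|+\|(W(h)-1)R_z\|\lesssim\|h\|+\|h/\sqrt\omega\| .
\end{equation*}
So I need $h_\alpha/\sqrt\omega=(v_\alpha-v)/\omega^{3/2}\to0$, which is not assumed as stated. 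With only $h_\alpha\to0$ in $\fh$, I would fall back on an $\varepsilon/3$ argument. First, split $h=h^{<}+h^{>}$ via spectral projections of $\omega$ at a small level $\delta$. The high part satisfies $\|h^{>}/\sqrt\omega\|\le\delta^{-1/2}\|h\|$. For the low part, use that $R_z$ restricted to low-energy modes is compact in an appropriate sense.

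If that proves too delicate, I would settle for strong resolvent convergence, which follows directly from $W_\alpha\to W$ strongly, and then upgrade it to norm convergence under the (presumably intended) hypothesis $g_\alpha\to g$ in the form norm of $1+\omega^{-1/2}$.
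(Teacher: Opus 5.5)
Your reduction is correct: by \cref{prop:5}, $H_{\mathsf{ren}}(v_\alpha)=W_\alpha\dG(\omega)W_\alpha^{*}$. With your $W(h)=\e^{a^{\dagger}(h)-a(h)}$, the claim is therefore equivalent to $\lVert W(h_\alpha)R_zW(h_\alpha)^{*}-R_z\rVert\to0$, where $h_\alpha\to0$ in $\fh$. Your second estimate, based on $\lVert (W(h)-1)(\dG(\omega)+1)^{-1/2}\rVert\lesssim\lVert h\rVert+\lVert h/\sqrt\omega\rVert$, is essentially the paper's proof. The paper asserts norm convergence of $(\dG(\omega)+1)^{-1/2}\e^{a(g_\alpha)-a^{\dagger}(g_\alpha)}$ and then applies \citep[Thm.~A.1]{GriesemerWuensch.2018}. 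Your caveat about that route is justified. The relative norm convergence follows if $h_\alpha/\sqrt\omega\to0$, which is automatic when $\inf\sigma(\omega)>0$. For massless $\omega$ it fails whenever $\lVert h_\alpha/\sqrt\omega\rVert\not\to0$, which the stated hypotheses do not exclude: test on coherent states $W(f)\Omega$ with $\lVert\sqrt\omega f\rVert$ bounded and $\Im\langle f,h_\alpha\rangle$ of order one.

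Still, your proposal does not prove the statement. The low-frequency half of your $\varepsilon/3$ plan rests on the claim that $R_z$ is compact on low-energy modes, and this is false in general. On the one-boson sector over $\chr_{[0,\delta]}(\omega)\fh$, $R_z$ acts as $(\omega-z)^{-1}$, which is bounded below there. So it is not compact once that subspace is infinite-dimensional, as in the massless case. Your fallbacks either weaken the conclusion or add a hypothesis.

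The missing observation is that your two routes fail in complementary regimes: the commutator route fails only in the UV, the relative-bound route only in the IR. Fix any $\Lambda>0$ and set $h^{<}=\chr_{[0,\Lambda]}(\omega)h$, $h^{>}=h-h^{<}$. Since $h^{<}\perp h^{>}$, $W(h)=W(h^{<})W(h^{>})$ with no phase, hence
\[
W(h)R_zW(h)^{*}-R_z=W(h^{<})\bigl(W(h^{>})R_zW(h^{>})^{*}-R_z\bigr)W(h^{<})^{*}+\bigl(W(h^{<})R_zW(h^{<})^{*}-R_z\bigr).
\]
The first bracket is controlled by your relative bound, because $\lVert h^{>}/\sqrt\omega\rVert\le\Lambda^{-1/2}\lVert h\rVert$.

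For the second bracket, \cref{prop:5} gives $W(h^{<})\dG(\omega)W(h^{<})^{*}=\dG(\omega)-a^{\dagger}(\omega h^{<})-a(\omega h^{<})+\lVert\sqrt\omega h^{<}\rVert^{2}$, self-adjoint on $\sD(\dG(\omega))$. Since $\lVert\omega h^{<}\rVert\le\Lambda\lVert h\rVert$ and $\lVert\sqrt\omega h^{<}\rVert\le\Lambda^{1/2}\lVert h\rVert$, the second resolvent identity bounds this bracket by $C_{z,\Lambda}(\lVert h\rVert+\lVert h\rVert^{2})$.

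Thus $\lVert W(h_\alpha)R_zW(h_\alpha)^{*}-R_z\rVert\le C_{z,\Lambda}(\lVert h_\alpha\rVert+\lVert h_\alpha\rVert^{2})\to0$. This holds under exactly the stated hypotheses and for every $\omega\ge0$. Neither a small cutoff level nor any compactness is needed.
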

\begin{proof}
  Firstly, remark that $g,g_\alpha\in \mathfrak{h}$, so $\e^{a(g)-a^{\dagger}(g)}$
  and $\e^{a(g_\alpha)-a^{\dagger}(g_\alpha)}$ are unitary operators on
  $\mathcal{F}(\mathfrak{h})$. Now, let $g_\alpha=-\frac{v_\alpha}{\omega}$, and define
  \begin{equation*}
    \hat{H}_{\mathsf{ren}}(v_\alpha)= \e^{a(g_\alpha)-a^{\dagger}(g_\alpha)}H_{\mathsf{ren}}(v_\alpha)\e^{a^{\dagger}(g_\alpha)-a(g_\alpha)}\;,
  \end{equation*}
  and
  \begin{equation*}
    \hat{H}_{\mathsf{ren}}(v)= \e^{a(g)-a^{\dagger}(g)}H_{\mathsf{ren}}(v)\e^{a^{\dagger}(g)-a(g)}\;.
  \end{equation*}
  It follows that for every $\alpha$,
  \begin{equation*}
    \hat{H}_{\mathsf{ren}}(v_\alpha)= \hat{H}_{\mathsf{ren}}(v)=\dG(\omega)\;.
  \end{equation*}
  Therefore, $\hat{H}_{\mathsf{ren}}(v_\alpha)\xrightarrow[\alpha]{\lVert
    \,\cdot \, \rVert_{}^{} \mathsf{-res}}\hat{H}_{\mathsf{ren}}(v)$
  trivially. On the other hand,
  $(\dG(\omega)+1)^{-1/2}\e^{a(g_\alpha)-a^{\dagger}(g_\alpha)}\to
  (\dG(\omega)+1)^{-1/2}\e^{a(g)-a^{\dagger}(g)}$ in norm operator topology;
  the result then follows from \citep[][Thm.\ A.1]{GriesemerWuensch.2018}.
\end{proof}

\subsection{Wave function renormalization I: renormalized Hilbert space}
\label{sec:wave-funct-renorm}

If energy renormalization can be considered a classical problem for
particle-field Hamiltonians, and it has been carried out successfully in all
models excluding the Pauli-Fierz model of quantum electrodynamics, wave
function renormalization has been somewhat overlooked so far: in the
literature, it is always remarked that, whenever $g\notin \mathfrak{h}$ (see
\cref{prop:5}), the dressing transformation cannot be defined as a unitary
operator and thus renormalization cannot be implemented (in the Fock
representation); perhaps surprisingly, until our recent paper
\citep{falconi2025prims} it was not realized that it is possible to resort to
a \emph{non-unitary dressing}, through which the wave function
renormalization is implemented by constructing a new representation,
inequivalent to the Fock one.

Since wave function renormalization must get rid of the singularity of the
ground state (and, in a cascade, of all other characteristic states of the
system), a new inner product -- and by completion, a new Hilbert space --
must be constructed. The wave function singularity is reflected in the fact
that the ``new'' Hilbert space carries a representation of the CCR that is
\emph{inequivalent to the Fock representation}.

The starting point is to observe that the singular annihilation defined in
\cref{def:5} satisfies the following properties.
\begin{lem}
  \label[lemma]{lemma:2}
  Let $g\in \mathscr{T}'$. Then:
  \begin{itemize}
  \item For any $k\in \mathbb{N}$,
    $\bigl(a(g)\bigr)^k\bigl(\FSfin(\mathscr{T})\bigr)\subset
    \FSfin(\mathscr{T})$;
  \item For any $\Psi\in \FSfin(\mathscr{T})$, there exists $k_{\Psi}\in
    \mathbb{N}$ such that $\bigl(a(g)\bigr)^k\Psi=0$ for any $k> k_{\Psi}$;
  \item For all $\Psi\in\FSfin(\sT)$ and $k\in\IN$, the map $\sT'\to\FS(\fh),\,h\mapsto a(h)^k\Psi$ is continuous w.r.t. $\FS(\fh)$--$\sigma(\sT',\sT)$-topology.
  \end{itemize}
\end{lem}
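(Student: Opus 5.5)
All three properties follow from the explicit fiber-wise formula in \cref{def:5}, which lowers the particle number by one and keeps $\mathscr{T}$-regularity. I will treat the points in order, working on pure symmetric tensors $\Psi_n=\psi_1\otimes_{\mathsf s}\dotsm\otimes_{\mathsf s}\psi_n$ with $\psi_j\in\mathscr{T}$ and extending by linearity. Every element of $\FSfin(\mathscr{T})$ is a finite sum of such tensors over finitely many fibers.

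\textbf{Invariance.} For $g\in\mathscr{T}'$, the formula sends the $(n+1)$-th fiber of $\Psi$ to a finite linear combination of tensors $\psi_{\sigma(2)}\otimes\dotsm\otimes\psi_{\sigma(n+1)}$. These carry the complex coefficients $\langle g,\psi_{\sigma(1)}\rangle$, which are finite because $\psi_{\sigma(1)}\in\mathscr{T}$. After symmetrization the result lies in $\mathscr{T}\hat\otimes_{\mathsf s}\dotsm\hat\otimes_{\mathsf s}\mathscr{T}$. If $\Psi$ has components only up to fiber $N$, then $a(g)\Psi$ has components only up to fiber $N-1$. Hence $a(g)$ maps $\FSfin(\mathscr{T})$ into itself, and induction on $k$ gives the first point.

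\textbf{Nilpotency.} This uses the same degree-lowering. If $\Psi_m=0$ for $m>N$, then $\bigl(a(g)^k\Psi\bigr)_m$ involves only $\Psi_{m+k}$, which vanishes for $m+k>N$. So $a(g)^k\Psi=0$ for every $m\ge0$ once $k>N$, and we may take $k_\Psi=N$.

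\textbf{Continuity.} Fix $\Psi$ and $k$, and write $\Psi$ as a finite sum of pure tensors of vectors in $\mathscr{T}$. Iterating the formula, $a(h)^k\Psi$ is a finite sum of terms of the form
\begin{equation*}
  c\,\prod_{i=1}^k\overline{h(\psi_{j_i})}\;\chi\;,
\end{equation*}
where $c$ is a combinatorial constant, the $\psi_{j_i}$ are among the finitely many fixed vectors of $\mathscr{T}$ entering $\Psi$, and $\chi$ is a fixed vector of $\FS(\fh)$ independent of $h$. Each map $h\mapsto\langle h,\psi\rangle=\overline{h(\psi)}$ is continuous for the $\sigma(\mathscr{T}',\mathscr{T})$ topology by definition of the weak-$*$ topology. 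Finite products of continuous scalar functions are continuous, and so is scalar multiplication of a fixed vector, so $h\mapsto a(h)^k\Psi$ is a finite sum of continuous maps into $\FS(\fh)$ with its norm topology. This also works for nets, which matters because the weak-$*$ topology need not be sequential.

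The only step needing care is the bookkeeping in the continuity claim, namely checking that $a(h)^k\Psi$ really is a polynomial of degree $k$ in the finitely many scalars $\langle h,\psi_j\rangle$ with fixed vector coefficients. This can be done cleanly by induction on $k$, using multilinearity of the formula in \cref{def:5}. Equivalently, one can note that $a(h)$ acts on a fixed finite-dimensional subspace spanned by tensors of the $\psi_j$, through a matrix whose entries are linear in $\langle h,\psi_j\rangle$.
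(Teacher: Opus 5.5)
Your proposal is correct and follows the same route as the paper. Invariance and nilpotency both come from the fact that the formula in \cref{def:5} lowers the fiber index by one while keeping $\mathscr{T}$-regularity. Continuity comes from the $\sigma(\mathscr{T}',\mathscr{T})$-continuity of $h\mapsto\langle h,f\rangle$ for each fixed $f\in\mathscr{T}$. Writing $a(h)^k\Psi$ as a polynomial in finitely many such pairings with fixed vector coefficients simply makes explicit the bookkeeping the paper leaves implicit.
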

\begin{proof}
  The first result is clear from the definition of $a(g)$ given in
  \eqref{eq:3}: on any fiber, $\bigl(a(g)\Psi\bigr)_n\in
  \mathscr{T}\hat{\otimes}_{\mathsf{s}}\dotsm\hat{\otimes}_{\mathsf{s}}\mathscr{T}
  $; the second result follows from the fact that for any $\Psi\in
  \FSfin(\mathscr{T})$, there exists $n_{\psi}\in \mathbb{N}$ such that
  $\Psi_n=0$ for any $n> n_{\psi}$, and each action of $a(g)$ makes the
  component on each fiber to ``jump one fiber down'' -- in other words,
  $\bigl(a(g)\Psi\bigr)_n$ depends only on $\Psi_{n+1}$; finally, the
  continuity result follows from the continuity of $h\mapsto \braket{h,f}$ for any $f\in\sT$.
\end{proof}
Thanks to \cref{lemma:2}, we can define for any $g\in \mathscr{T}'$ the
\emph{singular dressing transformation} $\e^{a(g)}$ on $\FSfin(\mathscr{T})$:
\begin{equation*}
  \e^{a(g)}\Psi = \sum_{k\in \mathbb{N}}^{} \frac{\bigl(a(g)\bigr)^k}{k!} \Psi\;,
\end{equation*}
and the series converges (it is actually truncated) since
$\bigl(a(g)\bigr)^{k}\Psi=0$ for any $k>k_{\Psi}$.
\begin{prop}
  \label[proposition]{prop:1}
  For any $g\in \mathscr{T}'$, the map $\e^{a(g)}:\FSfin(\mathscr{T})\to
  \FSfin(\mathscr{T})$ is injective.
\end{prop}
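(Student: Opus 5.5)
The plan is to use the fact that $a(g)$ is a nilpotent lowering operator with respect to the particle-number grading, so that $\e^{a(g)}$ is \emph{unipotent}: the identity plus a locally nilpotent part. Concretely, I will write $\e^{a(g)}=\Id+N_g$ with
\begin{equation*}
  N_g=\sum_{k\geq 1}\frac{\bigl(a(g)\bigr)^k}{k!}\;,
\end{equation*}
which is well defined on $\FSfin(\mathscr{T})$ by \cref{lemma:2}. Since $a(g)$ commutes with itself, $N_g$ commutes with $a(g)$ and maps $\FSfin(\mathscr{T})$ into itself.

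The most direct route is to exhibit an explicit left inverse. By \cref{lemma:2}, the series
\begin{equation*}
  \e^{-a(g)}\Psi=\sum_{k\in\mathbb{N}}\frac{(-1)^k\bigl(a(g)\bigr)^k}{k!}\Psi
\end{equation*}
is also well defined, being truncated on $\FSfin(\mathscr{T})$, and it maps $\FSfin(\mathscr{T})$ to itself. I will then show that $\e^{-a(g)}\e^{a(g)}\Psi=\Psi$ for all $\Psi\in\FSfin(\mathscr{T})$. Both sums are finite: if $\Psi_n=0$ for $n>n_\Psi$, then $a(g)^k\Psi=0$ for $k>n_\Psi$, and the same holds for $a(g)^j\Psi$. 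The product therefore reduces to a finite Cauchy product of powers of the single operator $a(g)$, namely
\begin{equation*}
  \sum_{m\geq 0}\Bigl(\sum_{j+k=m}\frac{(-1)^j}{j!\,k!}\Bigr)\bigl(a(g)\bigr)^m\Psi\;,
\end{equation*}
and the inner coefficient equals $(1-1)^m/m!=\delta_{m,0}$. Hence $\e^{-a(g)}\e^{a(g)}=\Id$ on $\FSfin(\mathscr{T})$, and injectivity follows immediately.

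As a sanity check, and as an alternative argument, I will also give a grading proof. Suppose $\e^{a(g)}\Psi=0$ with $\Psi\neq 0$, and let $n$ be the largest index with $\Psi_n\neq 0$. Since each $a(g)^k$ with $k\geq1$ only produces components in fibers below $n$, the $n$-th component of $\e^{a(g)}\Psi$ is exactly $\Psi_n\neq0$, which is a contradiction.

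I do not expect a genuine obstacle. The only point needing care is justifying the rearrangement in the Cauchy product, and this is trivial here because all sums are finite on $\FSfin(\mathscr{T})$ and all terms are powers of one operator, so no commutation issues arise.
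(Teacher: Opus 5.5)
Your proof is correct, and your main route differs from the paper's. The paper argues exactly as in your ``sanity check'': it takes the top nonvanishing fiber index of $\Psi$, observes that $\e^{a(g)}\Psi$ agrees with $\Psi$ on that fiber because $\bigl(a(g)\Psi\bigr)_n$ depends only on $\Psi_{n+1}$, and concludes via a minimality/contradiction argument that your direct version states more cleanly.

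Your primary argument instead builds the explicit inverse $\e^{-a(g)}=\e^{a(-g)}$ through the finite Cauchy product, and this gives more than injectivity:
\begin{itemize}
\item The same computation with the roles swapped gives $\e^{a(g)}\e^{-a(g)}=\Id$, so $\e^{a(g)}$ is a bijection of $\FSfin(\mathscr{T})$.
\item Replacing $-g$ by an arbitrary $h\in\mathscr{T}'$, and using that $a(g)$ and $a(h)$ commute on $\FSfin(\mathscr{T})$ and $a(g+h)=a(g)+a(h)$, gives $\e^{a(h)}\e^{a(g)}=\e^{a(g+h)}$.
\end{itemize}
The paper later uses facts of exactly this kind without separate proof: the group law of the dressings in \cref{prop:unitarygroup}, and the identity $u_{B,g,0}u_{B,0,g}=\Id$ behind the surjectivity claim in \cref{lemma:99}.

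The triangularity argument is more economical. It uses only that $\e^{a(g)}-\Id$ strictly lowers the particle number, not the exponential structure.
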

\begin{proof}
  Suppose that $\Psi\in \mathsf{ker}(\e^{a(g)})$. Then, for all $n\in
  \mathbb{N}$,
  \begin{equation*}
    \bigl(\e^{a(g)}\Psi\bigr)_n=0\;.
  \end{equation*}
  Suppose that $k^{\mathsf{min}}_{\Psi}\neq 0$ ($k^{\mathsf{min}}_{\Psi}$
  being the smallest number such that $\Psi_k=0$ for all $k>
  k^{\mathsf{min}}_{\Psi}$). Then it follows that
  \begin{equation*}
    \bigl(\e^{a(g)}\Psi\bigr)_{k^{\mathsf{min}}_{\Psi}}= \Psi_{k^{\mathsf{min}}_{\Psi}}\;,
  \end{equation*}
  since the left hand side can depend only on the fibers $\Psi_{k}$, $k\geq
  k^{\mathsf{min}}_{\Psi}$. Therefore,
  \begin{equation}
    \label{eq:4}
    \Psi_{k^{\mathsf{min}}_{\Psi}}= \bigl(\e^{a(g)}\Psi\bigr)_{k^{\mathsf{min}}_{\Psi}}=0\;.
  \end{equation}
  Hence, $k^{\mathsf{min}}_{\Psi}-1$ also satisfies the condition $\Psi_k=0$
  for all $k> k^{\mathsf{min}}_{\Psi}-1$, contradicting the minimality of
  $k_{\Psi}^{\mathsf{min}}$. Thus, $k_{\Psi}^{\mathsf{min}}=0$, and
  \eqref{eq:4} yields that, indeed, $\Psi=0$.
\end{proof}

The renormalized inner product on $\FSfin(\mathscr{T})$ stems from the
singular dressing $\e^{a(g)}$: for any $\Psi,\Phi\in \FSfin(\mathscr{T})$,
\begin{equation*}
  \langle \Psi   , \Phi  \rangle_g \coloneqq  \langle \e^{a(g)}\Psi  , \e^{a(g)}\Phi \rangle_{\mathcal{F}(\mathfrak{h})}\;.
\end{equation*}
The sesquilinearity and positive-definiteness are straightforwardly inherited
by the sesquilinearity and positive-definiteness of the Fock space inner
product, and the non-degeneracy comes from the injectivity of
$\e^{a(g)}$. For any $g\in \mathscr{T}'$, the inner product space
\begin{equation*}
  \bigl(\FSfin(\mathscr{T}), \langle \cdot   , \cdot  \rangle_g\bigr)
\end{equation*}
is not complete, and is thus a pre-Hilbert space.
\begin{defn}[Renormalized Hilbert space]
  \label[definition]{def:2}
  For any $g\in \mathscr{T}'$, the $g$-\emph{renormalized Hilbert space} is
  defined as
  \begin{equation*}
    \mathcal{F}_g =\overline{\FSfin(\mathscr{T})}^{\lVert\, \cdot \,  \rVert_g}\;, 
  \end{equation*}
  \emph{i.e.}\ the closure of the pre-Hilbert space
  $\bigl(\FSfin(\mathscr{T}), \langle \cdot , \cdot \rangle_g\bigr)$ with
  respect to the topology induced by the norm $\lVert \cdot \rVert_g^{}=
  \sqrt{\langle \cdot , \cdot \rangle_g}$.
\end{defn}
The renormalized Hilbert space can be seen as a space of ``singular Fock
vectors'' as follows. Let us define, for any \emph{regular} $g\in \mathfrak{h}$, the
operator $\e^{a^{\dagger}(g)}$ on $\FSfin(\mathfrak{h})$, again by series expansion. The
convergence in $\lVert \cdot \rVert_{\mathcal{F}(\mathfrak{h})}^{}$-norm of the series $\sum_{k\in
  \mathbb{N}}^{}\frac{\bigl(a^{\dagger}(g)\bigr)^k}{k!}\Psi$ can be easily proved by induction
from the definition \eqref{eq:2} of creation operators, observing that
\begin{gather*}
  \e^{a^{\dagger}(g)}\Omega = \sum_{n\in \mathbb{N}}^{}\frac{1}{\sqrt{n!}} g^{\otimes n}\;;\\
  \bigl\lVert \e^{a^{\dagger}(g)}\Omega  \bigr\rVert_{\mathcal{F}(\mathfrak{h})}^2=  \e^{\lVert g  \rVert_{\mathfrak{h}}^2}\;.
\end{gather*}
From the definition of $\e^{a^{\dagger}(g)}$, canonical commutation relations, and
the Baker-Campbell-Hausdorff formula for unbounded operators that commute
with their commutator, we obtain the following result.
\begin{lem}
  \label[lemma]{lemma:3}
  For any $g\in \mathfrak{h}$, and $\Psi,\Phi\in \FSfin(\mathfrak{h})$, we have that
  \begin{equation*}
    \langle \Psi  , \Phi \rangle_g= \frac{\langle \e^{a^{\dagger}(g)}\Psi  , \e^{a^{\dagger}(g)}\Phi \rangle_{\mathcal{F}(\mathfrak{h})}}{\langle \e^{a^{\dagger}(g)}\Omega  , \e^{a^{\dagger}(g)}\Omega \rangle_{\mathcal{F}(\mathfrak{h})}}\;.
  \end{equation*}
\end{lem}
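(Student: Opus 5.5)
The plan is to compute $\langle \e^{a^{\dagger}(g)}\Psi , \e^{a^{\dagger}(g)}\Phi \rangle$ by moving the left factor to the right. There it becomes $\e^{a(g)}$ acting on $\e^{a^{\dagger}(g)}\Phi$. We then commute $\e^{a(g)}$ past $\e^{a^{\dagger}(g)}$ using the CCR. First, for $g\in\fh$ and $\Psi\in\FSfin(\fh)$, both $\e^{a(g)}\Psi$ (truncated series) and $\e^{a^\dagger(g)}\Psi$ (norm-convergent series) are well defined.

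\emph{Step 1: adjointness.} We claim that
\begin{equation*}
\langle \e^{a^{\dagger}(g)}\Psi , \Xi\rangle_{\FS(\fh)}=\langle \Psi , \e^{a(g)}\Xi\rangle_{\FS(\fh)}
\end{equation*}
for $\Xi$ in a suitable class. This class should include vectors of the form $\e^{a^\dagger(g)}\Phi$, on which $\e^{a(g)}$ is defined by a convergent series. Since $a(g)^*=a^\dagger(g)$, each term satisfies $\langle a^\dagger(g)^k\Psi,\Xi\rangle=\langle\Psi,a(g)^k\Xi\rangle$. Summing in $k$ is harmless on the left because only finitely many terms of $\e^{a(g)}$ act nontrivially on the finite-particle vector $\Psi$. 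Equivalently, we use that $\e^{a^\dagger(g)}\Psi=\sum_k a^\dagger(g)^k\Psi/k!$ converges in norm and pair term by term.

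\emph{Step 2: commutation.} On $\FSfin(\fh)$ we show the Weyl-type relation
\begin{equation*}
\e^{a(g)}\e^{a^\dagger(g)}\Phi=\e^{\|g\|_\fh^2}\,\e^{a^\dagger(g)}\e^{a(g)}\Phi .
\end{equation*}
The route is the identity $a(g)\,a^\dagger(g)^k=a^\dagger(g)^k a(g)+k\|g\|^2a^\dagger(g)^{k-1}$, which gives $a(g)\e^{a^\dagger(g)}\Phi=\e^{a^\dagger(g)}(a(g)+\|g\|^2)\Phi$. Iterating yields $a(g)^m\e^{a^\dagger(g)}\Phi=\e^{a^\dagger(g)}(a(g)+\|g\|^2)^m\Phi$, using that $a(g)$ and $\|g\|^2$ commute. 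Summing over $m$ with $1/m!$ then gives the relation. This is the BCH formula the paper alludes to.

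The main obstacle is justifying the interchange of the infinite sums in $k$ and $m$, since $a(g)$ is unbounded and $\e^{a^\dagger(g)}\Phi$ has infinitely many fibers. I would handle it with explicit norm bounds of the form
\begin{equation*}
\|a(g)^m a^\dagger(g)^k\Phi\|\le \|g\|^{m+k}\sqrt{(n+k)!/(n+k-m)!}\,\sqrt{(n+k)!/n!}\,\|\Phi\|
\end{equation*}
on the $n$-particle sector, from the standard estimates $\|a^\sharp(g)\Psi_n\|\le\sqrt{n+1}\|g\|\|\Psi_n\|$. These make the double series $\sum_{k,m}\|\cdot\|/(k!\,m!)$ absolutely convergent, so Fubini for series applies. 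Alternatively, one may work with the matrix element $\langle\Psi,\cdot\rangle$ directly, where it suffices to check convergence fiber by fiber.

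\emph{Step 3: conclusion.} Combining Steps 1 and 2, and then applying Step 1 again with roles swapped, gives
\begin{equation*}
\langle \e^{a^\dagger(g)}\Psi,\e^{a^\dagger(g)}\Phi\rangle=\e^{\|g\|^2}\langle \e^{a(g)}\Psi,\e^{a(g)}\Phi\rangle=\e^{\|g\|^2}\langle\Psi,\Phi\rangle_g .
\end{equation*}
The second application moves the $\e^{a^\dagger(g)}$ now acting on $\e^{a(g)}\Phi\in\FSfin(\fh)$ back to the left, where it becomes $\e^{a(g)}\Psi$. Finally, $\|\e^{a^\dagger(g)}\Omega\|^2=\e^{\|g\|^2}$ was computed above, which completes the proof. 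For $\Psi,\Phi\in\FSfin(\fh)$ rather than $\FSfin(\sT)$, $\langle\cdot,\cdot\rangle_g$ is read with the same formula, since $\e^{a(g)}$ is defined there for $g\in\fh$.
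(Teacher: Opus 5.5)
Your proposal is correct and follows the paper's route: the paper's justification consists of the canonical commutation relations and the Baker--Campbell--Hausdorff identity $\e^{a(g)}\e^{a^{\dagger}(g)}\Phi=\e^{\lVert g\rVert_{\mathfrak{h}}^2}\,\e^{a^{\dagger}(g)}\e^{a(g)}\Phi$ on finite-particle vectors, combined with adjointness and $\lVert \e^{a^{\dagger}(g)}\Omega\rVert^2=\e^{\lVert g\rVert_{\mathfrak{h}}^2}$, and your Steps 1--3 carry out exactly this in detail. Your justifications of the domain and summation issues are sound; in fact the double-series bound is optional, since $a(g)$ is nilpotent on $\Phi$ and your iteration together with the closedness of $a(g)$ already settles the interchange.
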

Combining \cref{lemma:3} with the third property given in \cref{lemma:2}, we
finally prove the proposition below.
\begin{prop}
  \label[proposition]{prop:2}
  Let $g\in \mathscr{T}'$, $\Psi,\Phi\in \FSfin(\mathscr{T})$. Then for any $\fh$-mollification $g_\alpha\xrightarrow{\alpha} g$, we have that
  \begin{equation}
    \label{eq:5}
    \langle \Psi  , \Phi \rangle_g = \lim_{\alpha} \frac{\langle \e^{a^{\dagger}(g_\alpha)}\Psi  , \e^{a^{\dagger}(g_\alpha)}\Phi \rangle_{\mathcal{F}(\mathfrak{h})}}{\langle \e^{a^{\dagger}(g_\alpha)}\Omega  , \e^{a^{\dagger}(g_\alpha)}\Omega \rangle_{\mathcal{F}(\mathfrak{h})}}\;.
  \end{equation}
\end{prop}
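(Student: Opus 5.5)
The plan is to combine \cref{lemma:3}, applied at each regular approximant $g_\alpha$, with the continuity in the third item of \cref{lemma:2}.

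First, for fixed $\alpha$ we have $g_\alpha\in\fh$, and $\Psi,\Phi\in\FSfin(\sT)\subset\FSfin(\fh)$. \Cref{lemma:3} therefore says that the quotient on the right-hand side of \eqref{eq:5} equals
\begin{equation*}
  \langle \Psi,\Phi\rangle_{g_\alpha}= \bigl\langle \e^{a(g_\alpha)}\Psi , \e^{a(g_\alpha)}\Phi\bigr\rangle_{\FS(\fh)}\;.
\end{equation*}
Here $\e^{a(g_\alpha)}$ is the truncated series from \cref{lemma:2}, with $a(g_\alpha)$ given by \cref{def:5}. This agrees with the standard annihilation operator because $g_\alpha\in\fh$.

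Second, I make the series expansion uniform in $\alpha$. Write $\Psi$ with $\Psi_n=0$ for $n>N$. Since each application of $a(h)$ lowers the fiber by one, $a(h)^k\Psi=0$ for every $k>N$ and every $h\in\sT'$. The threshold $N$ does not depend on $h$. Hence
\begin{equation*}
  \e^{a(h)}\Psi=\sum_{k=0}^{N}\frac{a(h)^k}{k!}\Psi
\end{equation*}
is a finite sum with a fixed number of terms, for all $h$, and the same holds for $\Phi$.

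Third, I pass to the limit. By the third item of \cref{lemma:2}, each map $h\mapsto a(h)^k\Psi$ is continuous from $(\sT',\sigma(\sT',\sT))$ to $\FS(\fh)$ with its norm topology. A finite sum of continuous maps is continuous, so $\e^{a(g_\alpha)}\Psi\to\e^{a(g)}\Psi$ in $\FS(\fh)$ along the net, and likewise $\e^{a(g_\alpha)}\Phi\to\e^{a(g)}\Phi$. The inner product is jointly continuous in norm, so $\langle\Psi,\Phi\rangle_{g_\alpha}\to\langle\e^{a(g)}\Psi,\e^{a(g)}\Phi\rangle_{\FS(\fh)}$. By definition this limit is $\langle\Psi,\Phi\rangle_g$.

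The only point that needs care is the continuity step, specifically confirming that the third item of \cref{lemma:2} gives norm convergence and not merely weak convergence. Norm convergence is what allows taking the limit in both arguments of the inner product simultaneously. It holds here because, on each fiber, $a(h)^k\Psi$ is a finite linear combination of fixed vectors in $\sT^{\hat\otimes k}$. The coefficients are products of the pairings $\langle h,\psi_j\rangle$, and each such pairing converges by the definition of the weak-$*$ topology. Everything else reduces to bookkeeping.
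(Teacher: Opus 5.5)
Your proposal is correct and follows the paper's own argument: apply \cref{lemma:3} at each regular $g_\alpha\in\fh$ to identify the quotient with $\langle \e^{a(g_\alpha)}\Psi,\e^{a(g_\alpha)}\Phi\rangle_{\FS(\fh)}$, then pass to the limit using the continuity in the third item of \cref{lemma:2}. Your additional points, namely that the truncation order of $\e^{a(h)}\Psi$ does not depend on $h$ and that the convergence is in norm, spell out details the paper leaves implicit.
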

\begin{rem*}
  The vectors $\e^{a^{\dagger}(g_\alpha)}\Psi$, $\Psi\in \FSfin(\mathscr{T})$ have divergent Fock-norms when taking the limit; thus making the elements of the renormalized space ``singular''
  if they were to be viewed as Fock vectors. However, since Fock vectors are
  generated from the vacuum by the action of an arbitrary number of creation
  operators, it might not be surprising that it is enough to divide by the
  Fock vacuum contribution to obtain a well-defined Hilbert space.
\end{rem*}

\subsection*{The dressing maps}
\label{sec:textbfthe-dress-maps}

Let us now study the singular dressings $\e^{a(g)}$ as maps between Hilbert
spaces. Let us denote by $u_g: \FSfin(\mathscr{T})_g\to \FSfin(\mathscr{T})_0$ the action of the
dressing
\begin{equation*}
  u_g\Psi = \e^{a(g)}\Psi\;,
\end{equation*}
where for any $g\in \mathscr{T}'$, $\FSfin(\mathscr{T})_g$ is $\FSfin(\mathscr{T})$ identified as a $\lVert \cdot
\rVert_g^{}$-dense subspace of $\mathcal{F}_g$ (and therefore $u_g$ is a map from $\mathcal{F}_g$ to
$\mathcal{F}_0$).
\begin{lem}
  \label[lemma]{lemma:4}
  For any $g\in \mathscr{T}'$, the map $u_g: \mathcal{F}_g\to \mathcal{F}_0$ is a densely defined linear
  isometry.
\end{lem}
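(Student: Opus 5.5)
The claim is almost tautological once the definitions are unwound, so the plan is to verify three things in turn: $u_g$ is well-defined and linear on the dense subspace $\FSfin(\mathscr{T})_g\subset\mathcal{F}_g$, it takes values in $\mathcal{F}_0$, and it preserves norms.

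First, note what $\mathcal{F}_0$ is. For $g=0$ we have $\e^{a(0)}=\Id$, so $\langle\cdot,\cdot\rangle_0$ is the Fock inner product restricted to $\FSfin(\mathscr{T})$. Since $\mathscr{T}$ is dense in $\mathfrak{h}$, the algebraic symmetric tensors over $\mathscr{T}$ are dense in each $\mathfrak{h}_n$. Hence $\FSfin(\mathscr{T})$ is dense in $\mathcal{F}(\mathfrak{h})$, and $\mathcal{F}_0$ is canonically identified with $\mathcal{F}(\mathfrak{h})$.

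Second, the domain and linearity. The domain of $u_g$ is $\FSfin(\mathscr{T})_g$, which is $\lVert\cdot\rVert_g$-dense in $\mathcal{F}_g$ by Definition~\ref{def:2}. On it, $u_g\Psi=\e^{a(g)}\Psi$ is a finite sum $\sum_{k\le k_\Psi}a(g)^k\Psi/k!$, by Lemma~\ref{lemma:2}. Each $a(g)$ is linear, and for finitely many vectors one may take a common truncation index. So $u_g$ is linear, and it maps into $\FSfin(\mathscr{T})\subset\mathcal{F}_0$, again by Lemma~\ref{lemma:2}.

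Third, the isometry property. This holds by construction:
\begin{equation*}
\lVert u_g\Psi\rVert_{0}^2=\langle \e^{a(g)}\Psi,\e^{a(g)}\Psi\rangle_{\mathcal{F}(\mathfrak{h})}=\lVert\Psi\rVert_g^2 .
\end{equation*}
By polarization the inner products are preserved as well.

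The only point needing care is well-definedness: an element of $\FSfin(\mathscr{T})_g$ must have a single representative, so that $u_g$ is a genuine map on a subspace of $\mathcal{F}_g$ and not merely on a set of formal symbols. The pre-Hilbert space embeds injectively into its completion because $\langle\cdot,\cdot\rangle_g$ is non-degenerate, and this non-degeneracy is exactly the injectivity of $\e^{a(g)}$ from Proposition~\ref{prop:1}. I expect this to be the main step to spell out, and it is already settled by that proposition.

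One may add that $u_g$ then extends by continuity to an isometry defined on all of $\mathcal{F}_g$.
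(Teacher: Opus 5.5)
Your proof is correct and is essentially the paper's argument: density of $\FSfin(\mathscr{T})$ in $\mathcal{F}_g$ holds by construction, and the isometry property is just the defining identity $\lVert\Psi\rVert_g=\lVert\e^{a(g)}\Psi\rVert_{\mathcal{F}(\mathfrak{h})}=\lVert u_g\Psi\rVert_{\mathcal{F}_0}$. The extra checks you include are routine details the paper leaves implicit: linearity via the truncation in \cref{lemma:2}, the identification $\mathcal{F}_0\cong\mathcal{F}(\mathfrak{h})$, and well-definedness via \cref{prop:1}. The last of these is not really an obstacle, since the null vectors of $\lVert\cdot\rVert_g$ are exactly $\ker\e^{a(g)}$, so $u_g$ would descend to the quotient even without injectivity.
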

\begin{proof}
  By construction of $\mathcal{F}_g$, $\FSfin(\mathscr{T})$ is dense, and for all $\Psi\in \FSfin(\mathscr{T})$,
  \begin{equation*}
    \lVert \Psi  \rVert_g^{}= \lVert \e^{a(g)}\Psi  \rVert_{\mathcal{F}(\mathfrak{h})}^{}= \lVert u_g\Psi  \rVert_{\mathcal{F}_0}^{}\;.
    \qedhere
  \end{equation*}
\end{proof}
We can actually see the $u_g=u_{g,0}$ as members of a two-parameter group of
densely defined linear isometries: for any $g,g'\in \mathscr{T}'$, define
$u_{g,g'}: \FSfin(\mathscr{T})_g\to \FSfin(\mathscr{T})_{g'}$ by
\begin{equation*}
  u_{g,g'}\Psi = \e^{a(g-g')}\Psi\;.
\end{equation*}
Now, extend them uniquely, by density, to everywhere defined isometries
$U_{g,g'}: \mathcal{F}_g\to \mathcal{F}_{g'}$. The collection of these
isometries is actually a \emph{two-parameter unitary group}.
\begin{prop}
  \label[proposition]{prop:unitarygroup}
  The collection $(U_{g,g'})_{g,g'\in \mathscr{T}'}$ forms a two-parameter
  unitary group, namely the following properties hold true: for any
  $g,g',g''\in \mathscr{T}'$,
  \begin{itemize}
  \item $U_{g,g'}: \mathcal{F}_g\to \mathcal{F}_{g'}$ is unitary;
  \item $U_{g,g}= \mathds{1}$;
  \item $U_{g',g}= U^{-1}_{g,g'}$;
  \item $U_{g',g''}U_{g,g'}= U_{g,g''}$
  \end{itemize}
\end{prop}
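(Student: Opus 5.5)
The whole argument rests on an algebraic identity on the common dense core $\FSfin(\mathscr{T})$: for $g_1,g_2\in\mathscr{T}'$,
\begin{equation*}
  \e^{a(g_1)}\e^{a(g_2)}\Psi=\e^{a(g_1+g_2)}\Psi\;,\qquad \Psi\in\FSfin(\mathscr{T})\;.
\end{equation*}
To prove it, I first check that singular annihilations commute on $\FSfin(\mathscr{T})$, i.e.\ $a(g_1)a(g_2)=a(g_2)a(g_1)$. This follows from \cref{def:5}: on pure symmetric tensors, both sides contract two distinct factors against $g_1$ and $g_2$ and symmetrize over the same set of terms. I also need the linearity $a(g_1+g_2)=a(g_1)+a(g_2)$, which is immediate from the antilinearity of the duality bracket in its first argument. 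By \cref{lemma:2}, every power series involved is a finite sum on $\FSfin(\mathscr{T})$. The identity then reduces to the binomial theorem for commuting operators acting on vectors annihilated by high powers, so there is no convergence issue.

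For the group properties, I start from the definition of $u_{g,g'}$. The isometry property $\lVert u_{g,g'}\Psi\rVert_{g'}=\lVert\Psi\rVert_g$ is the computation
\begin{equation*}
  \lVert\e^{a(g-g')}\Psi\rVert_{g'}=\lVert\e^{a(g')}\e^{a(g-g')}\Psi\rVert_{\mathcal{F}(\mathfrak{h})}=\lVert\e^{a(g)}\Psi\rVert_{\mathcal{F}(\mathfrak{h})}=\lVert\Psi\rVert_g\;,
\end{equation*}
which uses that $\e^{a(g-g')}$ maps $\FSfin(\mathscr{T})$ into itself (\cref{lemma:2}). This is what makes the extension $U_{g,g'}$ well defined. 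The composition law holds on the core because $\e^{a(g'-g'')}\e^{a(g-g')}=\e^{a(g-g'')}$. Since both $U_{g',g''}U_{g,g'}$ and $U_{g,g''}$ are bounded and agree on the dense subspace $\FSfin(\mathscr{T})_g$, they coincide on $\mathcal{F}_g$. The identity $U_{g,g}=\mathds{1}$ is trivial because $\e^{a(0)}=\mathds{1}$.

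Unitarity is the only step with content: an isometry needs dense range to be unitary. The composition law gives $U_{g',g}U_{g,g'}=U_{g,g}=\mathds{1}_{\mathcal{F}_g}$ and likewise $U_{g,g'}U_{g',g}=\mathds{1}_{\mathcal{F}_{g'}}$. Hence $U_{g,g'}$ is a bijective isometry, so it is unitary with $U_{g,g'}^{-1}=U_{g',g}$. Equivalently, on the core, $\e^{a(g-g')}$ is surjective onto $\FSfin(\mathscr{T})$ with inverse $\e^{a(g'-g)}$, which makes the range dense.

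The main obstacle is therefore only the careful verification of the commutation $[a(g_1),a(g_2)]=0$ for distributional $g_i$ on $\FSfin(\mathscr{T})$. I would handle it directly on pure symmetric tensors, writing the double contraction explicitly. Alternatively, I could invoke the continuity property in \cref{lemma:2}: approximate $g_i$ by $\mathfrak{h}$-mollifications, for which commutation of regular annihilators is standard, and pass to the limit in the $\sigma(\mathscr{T}',\mathscr{T})$ topology.
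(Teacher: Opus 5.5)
Your proof is correct, but it reaches unitarity by a different route than the paper. The paper takes the isometry of $u_{g,g'}$ for granted and proves dense range with exponential vectors. By \cref{lemma:6}, each $\varepsilon(f)$ is an eigenvector of every singular dressing, so $U_{g,g'}\varepsilon_g(f)=\e^{\langle g-g',f\rangle}\varepsilon_{g'}(f)$. The paper then chains three steps: totality of $\{\varepsilon_0(f)\}$ in Fock space (a cited theorem) gives unitarity of $U_{g,0}$; that gives totality of $\{\varepsilon_g(f)\}$ in every $\mathcal{F}_g$; and that gives unitarity of $U_{g,g'}$.

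You instead prove the group law $\e^{a(g_1)}\e^{a(g_2)}=\e^{a(g_1+g_2)}$ on $\FSfin(\mathscr{T})$. It follows that $u_{g,g'}$ is a bijection of the core $\FSfin(\mathscr{T})$ onto itself, with inverse $u_{g',g}$. Since the core is dense in $\mathcal{F}_{g'}$ by construction, the isometric extension has dense range, and hence full range.

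Your argument has several advantages:
\begin{itemize}
\item It is more elementary and self-contained, with no exponential vectors and no external totality theorem.
\item It makes explicit the identity $\e^{a(g')}\e^{a(g-g')}=\e^{a(g)}$, which the paper uses tacitly when it calls the extensions isometries.
\item It reproves the injectivity in \cref{prop:1} for free, since $\e^{a(-g)}$ is a left inverse.
\item It shows directly that $U_{g,g'}$ maps the core onto itself. This is exactly the mechanism the paper later uses for the core statement in \cref{lemma:99}.
\end{itemize}
The paper's route, in exchange, yields the explicit action in \cref{cor:2} and the totality of $\{\varepsilon_g(f)\}$ in each $\mathcal{F}_g$. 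It reuses both in the spin-boson and Nelson analogues.

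One small caveat concerns your fallback via mollifications. For singular $g_1$, $a(g_1)$ is not continuous on $\mathcal{F}(\mathfrak{h})$, so passing to the limit needs the observation that all vectors involved stay in a fixed finite-dimensional span of tensors of the test functions composing $\Psi$. Your direct check on symmetric tensors avoids this issue entirely.
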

To prove the result, we need to introduce exponential vectors on
$\mathcal{F}_g$. Let us fix $f\in \mathscr{T}$, and for any $n\in
\mathbb{N}$, define the vector $\varepsilon_n(f)\in \FSfin(\mathscr{T})$ by
its fiber value on each $\mathfrak{h}_m$:
\begin{equation}
  \label{eq:7}
  \bigl(\varepsilon_n(f)\bigr)_m=
  \begin{cases}
    \; 0&\text{if }m\neq n,\\
    \; \tfrac{1}{\sqrt{n!}} f^{\otimes n}&\text{if }m=n.
  \end{cases}
\end{equation}
\begin{lem}
  \label[lemma]{lemma:6}
  For any $g\in \mathscr{T}'$ and $f\in \mathcal{T}$, the exponential vector
  \begin{equation*}
    \varepsilon(f)= \sum_{n\in \mathbb{N}}^{} \varepsilon_n(f) 
  \end{equation*}
  is an absolutely convergent series in $\mathcal{F}_g$. Furthermore, in
  $\mathcal{F}_0$,
  \begin{equation}
    \label{eq:6}
    \e^{a(g)}\varepsilon(f)= \e^{\langle g  , f \rangle_{}}\varepsilon(f)\;.
  \end{equation}
\end{lem}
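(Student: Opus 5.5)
The plan is to compute the $\lVert\cdot\rVert_g$-norms of the finite pieces $\varepsilon_n(f)$ directly, and then read off \eqref{eq:6} from the action of $a(g)$ on the pieces.

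\textbf{Action of $a(g)$ on the pieces.} Since $f\in\mathscr{T}$, \cref{def:5} applied to the product vector $f^{\otimes n}$ gives
\begin{equation*}
  a(g)\,\tfrac{1}{\sqrt{n!}}f^{\otimes n}=\tfrac{\sqrt{n}}{\sqrt{n!}}\langle g,f\rangle\, f^{\otimes(n-1)}\;.
\end{equation*}
This is the same as $a(g)\varepsilon_n(f)=\langle g,f\rangle\,\varepsilon_{n-1}(f)$, and $a(g)\varepsilon_0(f)=0$. Iterating, $a(g)^k\varepsilon_n(f)=\langle g,f\rangle^k\varepsilon_{n-k}(f)$ for $k\le n$ and $0$ otherwise. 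Hence
\begin{equation*}
  \e^{a(g)}\varepsilon_n(f)=\sum_{k=0}^n\frac{\langle g,f\rangle^k}{k!}\,\varepsilon_{n-k}(f)\;.
\end{equation*}

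\textbf{Absolute convergence in $\mathcal{F}_g$.} By \cref{lemma:4}, $\lVert\varepsilon_n(f)\rVert_g=\lVert\e^{a(g)}\varepsilon_n(f)\rVert_{\mathcal{F}(\mathfrak{h})}$. The vectors $\varepsilon_j(f)$ are mutually orthogonal in $\mathcal{F}(\mathfrak{h})$ and satisfy $\lVert\varepsilon_j(f)\rVert^2=\lVert f\rVert_{\mathfrak{h}}^{2j}/j!$. Therefore
\begin{equation*}
  \lVert\varepsilon_n(f)\rVert_g^2=\sum_{k=0}^n\frac{\lvert\langle g,f\rangle\rvert^{2k}}{(k!)^2}\,\frac{\lVert f\rVert^{2(n-k)}}{(n-k)!}\;.
\end{equation*}
The crude bound $\lVert\varepsilon_n(f)\rVert_g\le\sum_{k=0}^n\frac{c^k}{k!}\frac{b^{n-k}}{\sqrt{(n-k)!}}$, with $c=\lvert\langle g,f\rangle\rvert$ and $b=\lVert f\rVert$, is a Cauchy product. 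Summing it over $n$ gives at most $\e^{c}\sum_j b^j/\sqrt{j!}<\infty$. This settles absolute convergence, so $\varepsilon(f)\in\mathcal{F}_g$ is well defined as a $\lVert\cdot\rVert_g$-limit.

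\textbf{Identity \eqref{eq:6}.} Interpret $\e^{a(g)}\varepsilon(f)$ as $U_{g,0}\varepsilon(f)$, where $U_{g,0}$ is the isometric extension of $u_g$. By continuity,
\begin{equation*}
  U_{g,0}\varepsilon(f)=\sum_n\e^{a(g)}\varepsilon_n(f)=\sum_n\sum_{k\le n}\frac{\langle g,f\rangle^k}{k!}\varepsilon_{n-k}(f)\;,
\end{equation*}
and this double series converges absolutely in $\mathcal{F}(\mathfrak{h})$ by the bound above. Rearranging with $j=n-k$ gives $\sum_k\frac{\langle g,f\rangle^k}{k!}\sum_j\varepsilon_j(f)=\e^{\langle g,f\rangle}\varepsilon(f)$, where the inner sum is the usual Fock exponential vector.

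The only delicate point I expect is the justification of this rearrangement and of interchanging $U_{g,0}$ with the infinite sum. Both are handled by the absolute summability estimate above together with the isometry property from \cref{lemma:4}.
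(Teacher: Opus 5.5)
Your proof is correct and follows the same route as the paper: compute $\e^{a(g)}\varepsilon_n(f)=\sum_{k\le n}\frac{\langle g,f\rangle^k}{k!}\varepsilon_{n-k}(f)$, transfer norms through the isometry $u_g$, and resum to get $\e^{\langle g,f\rangle}\varepsilon(f)$ in $\mathcal{F}_0$. Your Cauchy-product bound on $\sum_n\lVert\varepsilon_n(f)\rVert_g$ and your absolute-summability justification of the rearrangement are cleaner than the paper's partial-sum identity $\e^{a(g)}\sum_{n\le N}\varepsilon_n(f)=E_N(g,f)\sum_{n\le N}\varepsilon_n(f)$, which is only correct in the limit $N\to\infty$, since the coefficient of $\varepsilon_j(f)$ there is really $E_{N-j}(g,f)$.
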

\begin{proof}
  The action $\e^{a(g)}\varepsilon_n(f)$ can be computed explicitly:
  \begin{equation*}
    \e^{a(g)}\varepsilon_n(f)= \sum_{k=0}^n \tfrac{1}{k!}\langle g  , f \rangle_{}^k\varepsilon_{n-k}(f)\;.
  \end{equation*}
  A resummation shows that
  \begin{equation*}
    \e^{a(g)}\sum_{n=0}^N \varepsilon_n(f) = E_N(g,f) \sum_{n=0}^N\varepsilon_n(f)\;,
  \end{equation*}
  where
  \begin{equation*}
    \mathbb{C}\ni E_N(g,f)= \sum_{k=0}^N \frac{\bigl(\langle g  , f \rangle_{}\bigr)^k}{k!}
  \end{equation*}
  is a truncated exponential, that converges absolutely as $N\to
  \infty$. Therefore, $\sum_{n=0}^N\varepsilon_n(f)$ converges absolutely in
  $\mathcal{F}_g$ if and only if it converges absolutely in
  $\mathcal{F}_0$. In $\mathcal{F}_0$, $\langle \varepsilon_n(f) ,
  \varepsilon_m(f) \rangle_{\mathcal{F}(\mathfrak{h})}=
  \frac{\delta_{nm}}{n!} \lVert f \rVert_{\mathfrak{h}}^{2n}$, and therefore
  $\varepsilon(f)$ is an absolutely convergent series and
  \begin{equation*}
    \lVert \varepsilon(f)  \rVert_{\mathcal{F}(\mathfrak{h})}^2= \e^{\lVert f  \rVert_{\mathfrak{h}}^2}\;.
    \qedhere
  \end{equation*}
\end{proof}
To avoid confusion, we might refer to the vector $\varepsilon(f)$ on
$\mathcal{F}_g$ as $\varepsilon_g(f)$.
\begin{proof}[Proof of \cref{prop:unitarygroup}]
  Once unitarity is proved, the group properties follow straightforwardly
  from the fact that on $\FSfin(\mathscr{T})$, $u_{g,g'}\Psi=
  \e^{a(g-g')}\Psi$.

  To prove unitarity, it suffices to prove that the maps $U_{g,g'}$ have
  dense range; let us start with the maps $U_{g,0}$. By \cref{lemma:6}, we
  have that
  \begin{equation*}
    U_{g,0}\Bigl(\mathsf{span} \bigl\{\varepsilon_g(f)\;,\; f\in \mathscr{T}\bigr\}\Bigr)= \mathsf{span}\bigl\{\varepsilon_0(f)\;,\; f\in \mathscr{T}\bigr\}\;.
  \end{equation*}
  Since $\bigl\{\varepsilon_0(f)\;,\; f\in \mathscr{T}\bigr\}$ is total
  \citep[see, \emph{e.g.},][Thm.\ 5.37]{Arai.2018}, $U_{g,0}$ is unitary, and
  $\bigl\{\varepsilon_g(f)\;,\; f\in \mathscr{T}\bigr\}$ is total in
  $\mathcal{F}_g$.

  Now, observe that since $\varepsilon(f)$ is defined on each $\mathcal{F}_g$
  by absolute convergence of the truncated series -- and the latter belongs
  to $\FSfin(\mathscr{T})$ -- the relation \eqref{eq:6} can be extended to
  yield
  \begin{equation*}
    U_{g,g'}\varepsilon_g(f)= \e^{\langle g-g'  , f \rangle_{}}\varepsilon_{g'}(f)\;.
  \end{equation*}
  Since as proved above, $\bigl\{\varepsilon_{g'}(f)\;,\; f\in
  \mathscr{T}\bigr\}$ is total in $\mathcal{F}_{g'}$, it follows that
  $U_{g,g'}$ is unitary for any $g,g'\in \mathscr{T}'$.
\end{proof}
\begin{cor}
  \label[corollary]{cor:2}
  For any $g,g'\in \mathscr{T}'$ and $f\in \mathscr{T}$,
  \begin{equation*}
    U_{g,g'}\varepsilon_g(f)= \e^{\langle g-g'  , f \rangle_{}}\varepsilon_{g'}(f)\;.
  \end{equation*}
\end{cor}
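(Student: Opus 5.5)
The identity is essentially the extension to $\mathcal{F}_g$ of the explicit computation in \cref{lemma:6}. I would first treat the truncated vectors and then pass to the limit using the isometric nature of $U_{g,g'}$.

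\textbf{Finite sums.} For $f\in\mathscr{T}$ set $S_N(f)=\sum_{n=0}^N\varepsilon_n(f)\in\FSfin(\mathscr{T})$. On this dense subspace $U_{g,g'}$ acts as $u_{g,g'}=\e^{a(g-g')}$. The computation in the proof of \cref{lemma:6}, with $g$ replaced by $g-g'\in\mathscr{T}'$, gives the identity in $\FSfin(\mathscr{T})$
\begin{equation*}
  \e^{a(g-g')}\varepsilon_n(f)=\sum_{k=0}^n\tfrac{1}{k!}\langle g-g',f\rangle^k\varepsilon_{n-k}(f).
\end{equation*}
Summing over $n\le N$ and exchanging the order of summation gives
\begin{equation*}
  u_{g,g'}S_N(f)=\sum_{m=0}^N E_{N-m}(g-g',f)\,\varepsilon_m(f).
\end{equation*}
This is slightly more precise than the identity written in \cref{lemma:6}: the truncation index of the exponential depends on $m$.

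\textbf{Limit $N\to\infty$.} By \cref{lemma:6}, $S_N(f)\to\varepsilon_g(f)$ in $\mathcal{F}_g$. Since $U_{g,g'}$ is a bounded isometry (\cref{prop:unitarygroup}), it follows that $U_{g,g'}S_N(f)\to U_{g,g'}\varepsilon_g(f)$ in $\mathcal{F}_{g'}$. It therefore remains to show that the right-hand side above converges in $\mathcal{F}_{g'}$ to $\e^{\langle g-g',f\rangle}\varepsilon_{g'}(f)$. Write $c=\langle g-g',f\rangle$. The difference from $\e^{c}S_N(f)$ is
\begin{equation*}
  \sum_{m\le N}\bigl(E_{N-m}(c)-\e^{c}\bigr)\varepsilon_m(f).
\end{equation*}

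\textbf{Main obstacle.} The difficulty is controlling this remainder in the $\mathcal{F}_{g'}$-norm, since the $\varepsilon_m(f)$ are not orthogonal there. I would bound its norm by
\begin{equation*}
  \sum_{m\le N}\bigl\lvert E_{N-m}(c)-\e^{c}\bigr\rvert\,\lVert\varepsilon_m(f)\rVert_{g'},
\end{equation*}
and apply dominated convergence for series. Each coefficient tends to $0$ as $N\to\infty$ for fixed $m$, and each coefficient is bounded by $2\e^{\lvert c\rvert}$. The terms are therefore dominated by $2\e^{\lvert c\rvert}\lVert\varepsilon_m(f)\rVert_{g'}$, which is summable by the absolute convergence established in \cref{lemma:6} for the space $\mathcal{F}_{g'}$. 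Hence the remainder tends to $0$. Together with $\e^{c}S_N(f)\to\e^{c}\varepsilon_{g'}(f)$, this yields the claim.
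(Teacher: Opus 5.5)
Your proof is correct and follows the paper's own route. The paper likewise computes $\e^{a(g-g')}$ on the truncations $\sum_{n\le N}\varepsilon_n(f)\in\FSfin(\mathscr{T})$ and passes to the limit using that $U_{g,g'}$ is an isometry together with the absolute convergence from \cref{lemma:6}; the isometry holds by construction, so there is no circularity with the unitarity part of \cref{prop:unitarygroup}, whose proof uses this identity. Your resummation $\sum_{m\le N}E_{N-m}(c)\,\varepsilon_m(f)$ is in fact the correct one (the paper's $E_N(g,f)\sum_{n\le N}\varepsilon_n(f)$ already fails for $N=1$), so your dominated-convergence step is exactly what makes the paper's one-line limit argument rigorous. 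The absolute convergence you invoke from \cref{lemma:6} does hold despite that slip, e.g.\ $\sum_n\lVert\varepsilon_n(f)\rVert_{g'}\le\e^{\lvert\langle g',f\rangle\rvert}\sum_j\lVert f\rVert_{\mathfrak{h}}^j/\sqrt{j!}$ by a Cauchy product.
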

We can use the unitary maps $U_{g,g'}$ defined above to study the equivalence
or inequivalence of renormalized Fock spaces as representations of the Weyl
C*-algebra of canonical commutation relations. In fact, while all separable
Hilbert spaces are canonically isomorphic, it is well-known that they might
carry inequivalent representations of the Weyl C*-algebra describing quantum
fields, and that such inequivalent representations play a crucial role in
physics, for example distinguishing between free and interacting theories in
high-energy physics, or between the representation of observables before and
after the thermodynamic limit in statistical mechanics \citep[see,
\emph{e.g.},][and references
therein]{haag1992tmp,bratteli1997tmp2,streater2000plp}.

Let us recall that the (smeared) Segal field operator $\phi_0(f)$ is the self-adjoint
operator on $\mathcal{F}_0$ defined for all $f\in \mathscr{T}$ by
\begin{equation*}
  \phi_0(f)= \tfrac{1}{\sqrt{2}}\bigl(a^{\dagger}(f)+a(f)\bigr)\;.
\end{equation*}
Its conjugate momentum $\pi_0(f)$ is the self-adjoint operator
\begin{equation*}
  \pi_0(f)=\tfrac{1}{i \sqrt{2}} \bigl(a^{\dagger}(f)-a(f)\bigr)\;.
\end{equation*}
For regular $g\in\fh$, we obtain \emph{interacting} fields and conjugate
momenta: we define the self-adjoint operators $\phi_g(f)$ and $\pi_g(f)$ on $\mathcal{F}_g$
as
\begin{gather*}
  \Braket{\Psi,\phi_g(f)\Phi}_g = \frac{\langle \e^{\ad(g)}\Psi  , \phi_0(f)\, \e^{\ad(g)}\phi \rangle_{\mathcal{F}(\mathfrak{h})}}{\Braket{\e^{\ad(g)}\Omega,\e^{\ad(g)}\Omega}_{\mathcal{F}(\mathfrak{h})}}= \langle \Psi  , U_{0,g}\bigl(\phi_0(f)+ \sqrt{2}\Re \langle g  , f \rangle_{\mathfrak{h}}\bigr)U_{g,0}\Phi  \rangle_g\;,\\
  \Braket{\Psi,\pi_g(f)\Phi}_g = \frac{\langle \e^{\ad(g)}\Psi  , \pi_0(f)\, \e^{\ad(g)}\phi \rangle_{\mathcal{F}(\mathfrak{h})}}{\Braket{\e^{\ad(g)}\Omega,\e^{\ad(g)}\Omega}_{\mathcal{F}(\mathfrak{h})}}= \langle \Psi  , U_{0,g}\bigl(\pi_0(f)+ \sqrt{2}\Im \langle g  , f \rangle_{\mathfrak{h}}\bigr)U_{g,0}\Phi  \rangle_g\;.
\end{gather*}
By introducing an $\mathfrak{h}$-mollification $g_{\alpha}$ of $g\in \mathscr{T}'$ and taking limits as
in \cref{prop:2}, we see that $\mathcal{F}_g$ carries a representation of the
interacting fields for all $g\in \mathscr{T}'$, given by
\begin{gather*}
  \phi_g(f)= U_{0,g}\Bigl(\phi_0(f)+ \sqrt{2}\Re \langle g  , f \rangle\Bigr)U_{g,0}\;,\\
  \pi_g(f)=U_{0,g}\Bigl(\pi_0(f)+ \sqrt{2}\Im \langle g  , f \rangle\Bigr)U_{g,0}\;.
\end{gather*}
The equivalence of ``translated regular representations'' of the CCR on the
bare space $\mathcal{F}_0$ has been systematically studied in \citep[][Thm.\
8.12]{arai2020mps}, leading to the following result.

\begin{prop}
  \label[proposition]{prop:4}
  Let $g,g'\in \mathscr{T}'$; then the interacting fields and conjugate momenta on $\mathcal{F}_g$
  and $\mathcal{F}_{g'}$ are related by:
  \begin{gather*}
    \phi_g(f)= U_{g',g}\Bigl(\phi_{g'}(f)+ \sqrt{2}\Re \langle g-g'  , f \rangle\Bigr)U_{g,g'}\;,\\
    \pi_g(f)=U_{g',g}\Bigl(\pi_{g'}(f)+ \sqrt{2}\Im \langle g-g'  , f \rangle\Bigr)U_{g,g'}\;.
  \end{gather*}

  It follows that the CCR representations on $\mathcal{F}_g$ and $\mathcal{F}_{g'}$ are unitarily
  equivalent if and only if $g-g'\in \mathfrak{h}$ (with $\mathfrak{h}$ seen as a subspace of
  $\mathscr{T}'$). In particular, the representation of interacting fields carried by
  $\mathcal{F}_g$ is disjoint from the Fock representation of free fields in $\mathcal{F}_0$
  whenever $g\in \mathscr{T}'\smallsetminus \mathfrak{h}$.
\end{prop}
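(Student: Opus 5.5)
The plan has two parts: derive the intertwining relations from the definition of $\phi_g,\pi_g$ and the group law, then settle equivalence with \citep[Thm.\ 8.12]{arai2020mps}.

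\textbf{Intertwining relations.} By the displayed definition,
\begin{equation*}
  \phi_g(f)=U_{0,g}\bigl(\phi_0(f)+\sqrt2\Re\langle g,f\rangle\bigr)U_{g,0}\;,
\end{equation*}
and similarly
\begin{equation*}
  \phi_{g'}(f)=U_{0,g'}\bigl(\phi_0(f)+\sqrt2\Re\langle g',f\rangle\bigr)U_{g',0}\;.
\end{equation*}
The group law of \cref{prop:unitarygroup} gives $U_{g,0}=U_{g',0}U_{g,g'}$ and $U_{0,g}=U_{g',g}U_{0,g'}$. Substituting, the scalar $\sqrt2\Re\langle g,f\rangle$ commutes with the unitaries, so
\begin{equation*}
  \phi_g(f)=U_{g',g}\bigl(\phi_{g'}(f)+\sqrt2\Re\langle g,f\rangle-\sqrt2\Re\langle g',f\rangle\bigr)U_{g,g'}\;.
\end{equation*}
Linearity of the duality bracket then yields the first identity. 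The argument for $\pi_g$ is identical with $\Im$ in place of $\Re$. Domains cause no trouble: these are unitary conjugations of self-adjoint operators plus real constants.

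It may be reassuring to check consistency on exponential vectors using \cref{cor:2}. This identifies the Weyl operators: $\e^{i\phi_g(f)}$ acts on $\mathcal{F}_g$, after transport, as a free Weyl operator times a phase.

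\textbf{Equivalence and inequivalence.} Transport everything to $\mathcal{F}_0$ via $U_{g',0}$. The representation on $\mathcal{F}_g$ becomes unitarily equivalent to the representation on $\mathcal{F}_0$ given by
\begin{equation*}
  f\mapsto \phi_0(f)+\sqrt2\Re\langle h,f\rangle,\qquad f\mapsto\pi_0(f)+\sqrt2\Im\langle h,f\rangle,\qquad h=g-g'\in\mathscr{T}'\;.
\end{equation*}
The one on $\mathcal{F}_{g'}$ becomes the Fock representation. This is exactly the translated Fock representation by a linear functional on $\mathscr{T}$.

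The criterion of \citep[Thm.\ 8.12]{arai2020mps} states that this is equivalent to the Fock representation if and only if the functional $f\mapsto\langle h,f\rangle$ is $\mathfrak{h}$-bounded, i.e.\ $h\in\mathfrak{h}$.

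For the sufficiency direction I can also argue directly. If $h\in\mathfrak{h}$, the Weyl operator $W$ built from $h$ satisfies $W^*\phi_0(f)W=\phi_0(f)+\sqrt2\Re\langle h,f\rangle$ (up to sign conventions).

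For the necessity direction I would argue as follows. Suppose $h\notin\mathfrak{h}$. Then the translated representation has no vector in the domain of the number operator: a cyclic normalized vector $\Psi$ would have $\langle\Psi, a(f)\Psi\rangle$ close to $\langle h,f\rangle$ for all $f$. One can choose $f_n$ with $\lVert f_n\rVert=1$ and $\langle h,f_n\rangle\to\infty$, which contradicts boundedness of $\lVert a(f)\Psi\rVert\le\lVert \dG(\mathds{1})^{1/2}\Psi\rVert$. By irreducibility of the Fock representation, equivalence would force a dense set of such vectors.

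Rather than redo this, I would cite Arai's theorem, making sure its hypotheses (an antilinear/linear functional on the dense subspace $\mathscr{T}$) match ours. The \enquote{in particular} statement is then the case $g'=0$.

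\textbf{Main obstacle.} I expect the difficulty to be the inequivalence direction, specifically matching our setup with Arai's framework. In particular $\mathscr{T}$ need not be sequentially dense, and the functional $\langle h,\cdot\rangle$ is only continuous on $\mathscr{T}$. I would handle this by noting that Arai's result requires only a linear functional on a dense subspace. Unboundedness in $\mathfrak{h}$-norm is then exactly $h\notin\mathfrak{h}$, by Riesz representation.
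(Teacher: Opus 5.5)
This is essentially the paper's route. The paper gives no argument beyond combining the defining formula $\phi_g(f)=U_{0,g}\bigl(\phi_0(f)+\sqrt{2}\Re\langle g,f\rangle\bigr)U_{g,0}$, the group law of \cref{prop:unitarygroup}, and the criterion for translated Fock representations in \citep[Thm.\ 8.12]{arai2020mps}; your derivation of the two intertwining identities from these is correct.

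There is one slip in the equivalence part, which does not affect the conclusion. Transporting via $U_{g',0}$ (after $U_{g,g'}$) gives $\phi_0+\sqrt{2}\Re\langle g,\cdot\rangle$ and $\phi_0+\sqrt{2}\Re\langle g',\cdot\rangle$, not $\phi_0+\sqrt{2}\Re\langle g-g',\cdot\rangle$ and the Fock representation (unless $g'=0$). Since the common scalar shift $\sqrt{2}\Re\langle g',f\rangle$ (and its $\Im$ counterpart for $\pi$) passes through any intertwiner, a unitary intertwines the former pair if and only if it intertwines the latter, so your reduction to Arai's criterion stands. Your optional necessity sketch is loose, but you do not rely on it; it is better replaced by noting that the transported vacuum would be an exact joint eigenvector of all $a(f)$, and pairing it with $a^{\dagger}(f)\Omega$ already forces $\langle g-g',\cdot\rangle$ to be $\mathfrak{h}$-bounded.
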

\begin{rem*}[Physical interpretation of the dressing map]
  \label{rem:1}
  In view of the above, we provide the following interpretation: the dressing
  $\e^{a(g)}$ always maps the physical renormalized space $\mathcal{F}_g$ to the
  standard Fock space $\mathcal{F}_0$ unitarily. This is mathematically very
  convenient, since, as we will see below, it allows to write physical
  observables (belonging to $\mathcal{F}_g$) as the unitary conjugation of their
  dressed counterparts, that will usually be rather explicit. On the other
  hand, the dressing map \emph{does not preserve} the canonical commutation
  relations, whenever $g\in \mathscr{T}'\smallsetminus \mathfrak{h}$ is singular; in other words, for singular
  sources \emph{the interacting field theory is disjoint from the free
    theory}, even though its dressed observables can be written in a Fock
  space. Although this toy model is not fully relativistic and Haag's theorem
  does not hold, it is remarkable that we can see nonetheless an expression
  of the fact that free and interacting vacua are disjoint; and this will
  also be the case for the spin-boson and Nelson models studied below.
\end{rem*}

Let us conclude this section by addressing the arbitrariness in the choice of
test functions $\mathscr{T}$ in the single-boson Gel'fand triple construction. Let us
suppose that we have two topological vector spaces $\mathscr{T}_1,\mathscr{T}_2\subset \mathfrak{h}$ continuously
embedded in the same Hilbert space, and such that $\mathscr{T}_1\cap \mathscr{T}_2$ is dense in
$\mathfrak{h}$. We can then identify the renormalized Fock spaces
corresponding to $g_1\in \mathscr{T}_1'$ and $g_2\in \mathscr{T}_2'$,
whenever $g_1$ and $g_2$ agree on $\mathscr{T}_1\cap \mathscr{T}_2$.
\begin{lem}
  \label[lemma]{lemma:5}
  Let $g_1\in \mathscr{T}_1'$ and $g_2\in \mathscr{T}_2'$ be such that for
  all $f\in \mathscr{T}_1\cap \mathscr{T}_2$, $g_1(f)=g_2(f)$. Then, there
  exists a natural unitary map $\Omega_{g_1,g_2}: \mathcal{F}_{g_1}\to
  \mathcal{F}_{g_2}$.
\end{lem}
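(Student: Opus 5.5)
The plan is to build $\Omega_{g_1,g_2}$ through the common parent space $\mathcal{F}_0=\mathcal{F}(\mathfrak{h})$. We use the unitary dressings $U^{(1)}_{g_1,0}:\mathcal{F}_{g_1}\to\mathcal{F}(\mathfrak{h})$ and $U^{(2)}_{g_2,0}:\mathcal{F}_{g_2}\to\mathcal{F}(\mathfrak{h})$ supplied by \cref{prop:unitarygroup}, applied once with the triple $(\mathscr{T}_1,\mathfrak{h},\mathscr{T}_1')$ and once with $(\mathscr{T}_2,\mathfrak{h},\mathscr{T}_2')$. Note that $\mathcal{F}_0$ is the same Fock space $\mathcal{F}(\mathfrak{h})$ for both triples, since $\FSfin(\mathscr{T}_i)$ is dense in $\mathcal{F}(\mathfrak{h})$. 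We then set
\begin{equation*}
  \Omega_{g_1,g_2}\coloneqq \bigl(U^{(2)}_{g_2,0}\bigr)^{-1}U^{(1)}_{g_1,0}\;,
\end{equation*}
which is unitary as a composition of unitaries. What remains is to justify calling it natural: it should not depend on arbitrary choices, and it should act as the identity on the vectors the two spaces share.

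The second step identifies this action explicitly on the common subspace $\FSfin(\mathscr{T}_1\cap\mathscr{T}_2)$, which sits inside both pre-Hilbert spaces. For $\Psi$ in this subspace, $a(g_1)\Psi$ and $a(g_2)\Psi$ depend only on the pairings $\langle g_i,f\rangle$ with $f\in\mathscr{T}_1\cap\mathscr{T}_2$. Since $g_1=g_2$ there, we get $\e^{a(g_1)}\Psi=\e^{a(g_2)}\Psi$, and iterated applications stay in $\FSfin(\mathscr{T}_1\cap\mathscr{T}_2)$ by the argument of \cref{lemma:2}. Consequently $\lVert\Psi\rVert_{g_1}=\lVert\Psi\rVert_{g_2}$, and $\Omega_{g_1,g_2}\Psi=\Psi$. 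So $\Omega_{g_1,g_2}$ is the unique continuous extension of the identity on $\FSfin(\mathscr{T}_1\cap\mathscr{T}_2)$, provided this subspace is dense in $\mathcal{F}_{g_1}$ and in $\mathcal{F}_{g_2}$. This density is the real content of naturality.

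The main obstacle is this density, since $\mathscr{T}_1\cap\mathscr{T}_2$ is only assumed to be dense in $\mathfrak{h}$, not in $\mathscr{T}_i$. We transport the question to $\mathcal{F}(\mathfrak{h})$ with the unitary $U^{(1)}_{g_1,0}$, which restricts to $\e^{a(g_1)}$ on $\FSfin(\mathscr{T}_1\cap\mathscr{T}_2)$. It then suffices to show that $\e^{a(g_1)}\FSfin(\mathscr{T}_1\cap\mathscr{T}_2)$ is dense in $\mathcal{F}(\mathfrak{h})$. The exponential vectors from \cref{lemma:6} are well suited to this. For $f\in\mathscr{T}_1\cap\mathscr{T}_2$, the vector $\varepsilon(f)$ is the limit, both in $\mathcal{F}_{g_1}$ and in $\mathcal{F}(\mathfrak{h})$, of its truncations, which lie in $\FSfin(\mathscr{T}_1\cap\mathscr{T}_2)$. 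By \cref{cor:2}, its image is $\e^{\langle g_1,f\rangle}\varepsilon_0(f)$. It therefore remains to show that $\{\varepsilon_0(f): f\in\mathscr{T}_1\cap\mathscr{T}_2\}$ is total in $\mathcal{F}(\mathfrak{h})$. The standard totality theorem for exponential vectors indexed by a dense subspace of $\mathfrak{h}$ gives this, since the map $f\mapsto\varepsilon_0(f)$ is continuous from $\mathfrak{h}$ into $\mathcal{F}(\mathfrak{h})$ and the family is total when $f$ ranges over all of $\mathfrak{h}$.

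The same argument applies to $g_2$, and this yields both density claims. It also shows that $\Omega_{g_1,g_2}\varepsilon_{g_1}(f)=\varepsilon_{g_2}(f)$ for every $f\in\mathscr{T}_1\cap\mathscr{T}_2$, which gives an independent characterization of the map.
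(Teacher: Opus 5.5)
Your proof is correct and follows essentially the paper's route: the paper also observes that $\langle\cdot,\cdot\rangle_{g_1}$ and $\langle\cdot,\cdot\rangle_{g_2}$ coincide on $\FSfin(\mathscr{T}_1\cap\mathscr{T}_2)$, and it extends the identity map there to a unitary $\mathcal{F}_{g_1}\to\mathcal{F}_{g_2}$. The paper only asserts that $\FSfin(\mathscr{T}_1\cap\mathscr{T}_2)$ is dense in each $\mathcal{F}_{g_i}$; you prove this via exponential vectors and \cref{cor:2} (a shorter alternative is that $\e^{a(g_i)}$ maps $\FSfin(\mathscr{T}_1\cap\mathscr{T}_2)$ bijectively onto itself with inverse $\e^{-a(g_i)}$, so its image is Fock-dense), and your composition through $\mathcal{F}_0$ yields unitarity without needing that density.
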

\begin{proof}
  Since $\mathscr{T}_1\cap \mathscr{T}_2$ is dense, it follows that $\FSfin(\mathscr{T}_1\cap \mathscr{T}_2)$ is dense in
  $\mathcal{F}_{g_1}$ and $\mathcal{F}_{g_2}$ for any $g_1\in \mathscr{T}_1'$ and $g_2\in \mathscr{T}_2'$. Furthermore,
  since $g_1$ and $g_2$ agree on $\mathscr{T}_1\cap \mathscr{T}_2$, we have that for all $\Psi,\Phi\in
  \FSfin(\mathscr{T}_1\cap \mathscr{T}_2)$,
  \begin{equation*}
    \langle \Psi  , \Phi \rangle_{g_1}=\langle \Psi  , \Phi \rangle_{g_2}\;.
  \end{equation*}
  The identity map $\omega_{g_1,g_2}: \FSfin(\mathscr{T}_1\cap
  \mathscr{T}_2)_{g_1}\to \FSfin(\mathscr{T}_1\cap \mathscr{T}_2)_{g_2}$ on
  $\FSfin(\mathscr{T}_1\cap \mathscr{T}_2)$ (seen respectively as a subspace
  of $\mathcal{F}_{g_1}$ and $\mathcal{F}_{g_2}$) thus extends uniquely to
  the unitary map $\Omega_{g_1,g_2}$.
\end{proof}

\subsection{Wave function renormalization II: renormalized Hamiltonian}
\label{sec:wave-funct-renorm}

Thanks to the singular dressing map, we can diagonalize and define in the
renormalized Hilbert space the vHM model with any singular source $g\in
\mathscr{T}'$; this also has  the advantage of taking care at once of both UV
and IR singularities: the renormalized Hamiltonian always exhibits a unique ground
state in the renormalized Hilbert space.

Beforehand, let us quickly recall the notion of \emph{generalized norm
  resolvent convergence} originally due to Weidmann
\citep[][\textsection9.3]{weidmann2000Ide} (see also
\citep{bogli2017ieot,bogli2018jst}) and then refined by Post \citep[see][and
references therein]{post2022jst}, that is perfectly suited to study
convergence of the vHM Hamiltonian as the cutoffs are removed.

\subsection*{Generalized norm resolvent convergence}
\label{sec:textbfg-norm-resolv}

Norm and strong resolvent convergence of operators acting on the same Hilbert
space are very natural notions introduced to extend, to the case of unbounded
but self-adjoint operators, the definition given for bounded operators of
uniform and strong operator topology \citep[see,
\emph{e.g.},][\textsection{VIII.7}]{reed1972I}. In the wave function
renormalization procedure, however, we must study the convergence of a
sequence of self-adjoint operators each defined on a different Hilbert space
(the renormalized spaces). Nonetheless, thanks to the dressing maps $U_{g,0}:
\mathcal{F}_g\to \mathcal{F}_0$, all operators can be embedded on the same
Hilbert space. The generalized norm resolvent convergence addresses exactly
this situation.

\begin{defn}[Generalized norm/strong resolvent convergence]
  \label[definition]{def:4}
  Let $\{\mathscr{H}_\alpha:\alpha\in \mathcal{A}\}$ be a net of Hilbert
  spaces, as well as $\mathscr{H}_{\infty}$. We say that a family of
  self-adjoint operators $T_\alpha$ on $\mathscr{H}_\alpha$ converges to $T_{\infty}$
  on $\mathscr{H}_{\infty}$ in \emph{generalized norm(strong) resolvent
    sense} if and only if:
  \begin{itemize}
  \item There exists a parent Hilbert space $\mathscr{H}$ and isometries
    $\imath_\alpha: \mathscr{H}_\alpha\to \mathscr{H}$ for any $\alpha\in \mathcal{A}\cup
    \{\infty\}$;
  \item for any $z\in \mathbb{C}$, $\mathsf{Im}z\neq0$,
    \begin{equation*}
      \imath_\alpha (T_\alpha+z)^{-1} \imath^{*}_\alpha \xrightarrow[]{\alpha} \imath_{\infty}(T_{\infty}+z)^{-1}\imath^{*}_{\infty}\;,
    \end{equation*}
    where the convergence is intended in the norm(strong) operator topology
    on $\mathscr{H}$.
  \end{itemize}
\end{defn}
The spectral properties of generalized norm and strong resolvent convergence
are studied in detail in the aforementioned papers by Bögli
\citep{bogli2017ieot,bogli2018jst}. Let us also mention that if the maps
$\imath_\alpha$ are unitary, then the definition above has a more operative
version.
\begin{lem}
  \label[lemma]{lemma:7}
  Let the maps $\imath_\alpha$ in \cref{def:4} above be unitary. Then $T_\alpha$
  converges to $T_{\infty}$ in generalized norm/strong resolvent sense if and
  only if $\hat{T}_\alpha$ converges to $\hat{T}_{\infty}$ in norm/strong
  resolvent sense on $\mathscr{H}$, where for any $\alpha\in \mathcal{A}\cup
  \{\infty\}$,
  \begin{equation*}
    \hat{T}_\alpha= \imath_\alpha T_\alpha \imath_\alpha^{*}
  \end{equation*}
  is the unitary conjugation of $T_\alpha$.
\end{lem}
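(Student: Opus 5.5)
The statement reduces to a direct computation of resolvents under unitary conjugation, so I will prove both directions at once by showing that the two conditions in \cref{def:4} (second item) and the ordinary norm/strong resolvent convergence of $\hat T_\alpha$ are literally the same family of operator statements on $\mathscr{H}$.

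The key step is the identity, valid for every $\alpha\in\mathcal{A}\cup\{\infty\}$ and every $z\in\mathbb{C}$ with $\Im z\neq 0$,
\begin{equation*}
  \imath_\alpha (T_\alpha+z)^{-1}\imath_\alpha^{*} = (\hat T_\alpha+z)^{-1}\;.
\end{equation*}
To establish it, I first note that since $\imath_\alpha:\mathscr{H}_\alpha\to\mathscr{H}$ is unitary, $\hat T_\alpha=\imath_\alpha T_\alpha\imath_\alpha^{*}$ with domain $\imath_\alpha\mathscr{D}(T_\alpha)$ is self-adjoint on $\mathscr{H}$. This holds because unitary conjugation preserves adjoints: $(\imath_\alpha T_\alpha\imath_\alpha^{*})^{*}=\imath_\alpha T_\alpha^{*}\imath_\alpha^{*}$, which one checks directly from the definition of the adjoint using $\imath_\alpha^{*}=\imath_\alpha^{-1}$. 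Hence $\hat T_\alpha+z$ is boundedly invertible. Then for $\Phi\in\mathscr{H}$ I compute
\begin{equation*}
  (\hat T_\alpha+z)\,\imath_\alpha(T_\alpha+z)^{-1}\imath_\alpha^{*}\Phi
  =\imath_\alpha(T_\alpha+z)(T_\alpha+z)^{-1}\imath_\alpha^{*}\Phi=\Phi\;,
\end{equation*}
using $\imath_\alpha^{*}\imath_\alpha=\mathds 1_{\mathscr{H}_\alpha}$ and that $\imath_\alpha(T_\alpha+z)^{-1}\imath_\alpha^{*}\Phi\in\imath_\alpha\mathscr{D}(T_\alpha)=\mathscr{D}(\hat T_\alpha)$. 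By uniqueness of the inverse this gives the identity.

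With the identity in hand, the convergence required in \cref{def:4}, namely $\imath_\alpha(T_\alpha+z)^{-1}\imath_\alpha^{*}\to\imath_\infty(T_\infty+z)^{-1}\imath_\infty^{*}$ in norm (resp.\ strong) operator topology for all nonreal $z$, becomes exactly $(\hat T_\alpha+z)^{-1}\to(\hat T_\infty+z)^{-1}$ in the same topology for all nonreal $z$. This is the standard definition of norm (resp.\ strong) resolvent convergence on $\mathscr{H}$, which proves both implications. If one prefers the definition of resolvent convergence that requires only a single $z$, I would add the standard remark that convergence at one nonreal $z$ implies convergence at all nonreal $z$ \citep[][Thm.\ VIII.19]{reed1972I}. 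The same remark applies equally to the generalized notion through the identity above.

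There is no real obstacle. The only point needing care is the domain bookkeeping that shows $\hat T_\alpha$ is self-adjoint, so that its resolvent is defined, and this is routine.
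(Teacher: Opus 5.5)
Your proof is correct. The paper states this lemma without proof and treats it as immediate from the identity $\imath_\alpha(T_\alpha+z)^{-1}\imath_\alpha^{*}=(\hat T_\alpha+z)^{-1}$, which is exactly the identity you verify, so your argument is the intended one with the domain bookkeeping written out.
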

\begin{cor}
  \label[corollary]{cor:1}
  Suppose that $\hat{T}_\alpha\equiv \hat{T}_\infty$ on $\mathscr{H}$ for any $\alpha\in
  \mathcal{A}\cup \{\infty\}$; then $T_\alpha \to T_{\infty}$ in generalized norm
  resolvent sense.
\end{cor}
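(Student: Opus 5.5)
The corollary reduces to the operative characterization of \cref{lemma:7}, so the plan is to verify its hypotheses and then apply it. In the setting of \cref{def:4} the isometries $\imath_\alpha:\mathscr{H}_\alpha\to\mathscr{H}$ are unitary, and the hypothesis of the corollary is that all the conjugated operators $\hat T_\alpha=\imath_\alpha T_\alpha\imath_\alpha^{*}$ coincide with $\hat T_\infty$ on $\mathscr{H}$. Unitarity of $\imath_\alpha$ gives $\imath_\alpha^{*}=\imath_\alpha^{-1}$. It also gives that $\hat T_\alpha$ is self-adjoint on $\imath_\alpha\mathscr{D}(T_\alpha)$, so each $\hat T_\alpha$ is a legitimate self-adjoint operator on $\mathscr{H}$ and norm resolvent convergence makes sense.

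The key computation is that, for $\operatorname{Im}z\neq0$, unitary conjugation commutes with taking resolvents:
\begin{equation*}
  \imath_\alpha (T_\alpha+z)^{-1}\imath_\alpha^{*} = \bigl(\imath_\alpha (T_\alpha+z)\imath_\alpha^{*}\bigr)^{-1} = (\hat T_\alpha+z)^{-1}\;.
\end{equation*}
This holds because $\imath_\alpha^{*}\imath_\alpha=\dsone_{\mathscr{H}_\alpha}$ and $\imath_\alpha\imath_\alpha^{*}=\dsone_{\mathscr{H}}$; these identities also show that $\imath_\alpha^{*}$ maps $\imath_\alpha\mathscr{D}(T_\alpha)$ onto $\mathscr{D}(T_\alpha)$.

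By hypothesis $(\hat T_\alpha+z)^{-1}=(\hat T_\infty+z)^{-1}$ for every $\alpha$. Hence
\begin{equation*}
  \bigl\lVert \imath_\alpha (T_\alpha+z)^{-1}\imath_\alpha^{*}-\imath_\infty (T_\infty+z)^{-1}\imath_\infty^{*} \bigr\rVert = 0
\end{equation*}
for all $\alpha$, so the net is constant and trivially converges in norm. This verifies \cref{def:4} directly, or equivalently the criterion of \cref{lemma:7}.

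There is no real obstacle. The only point needing care is the domain bookkeeping in the identity relating $\imath_\alpha(T_\alpha+z)^{-1}\imath_\alpha^{*}$ to $(\hat T_\alpha+z)^{-1}$, namely that $\imath_\alpha^{*}$ maps $\mathscr{D}(\hat T_\alpha)$ onto $\mathscr{D}(T_\alpha)$. This follows immediately from $\imath_\alpha$ being unitary.
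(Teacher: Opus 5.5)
Your argument is correct and is the intended one. The paper gives no separate proof: it treats the corollary as immediate from the preceding lemma on unitary conjugation, because a constant net $(\hat T_\alpha+z)^{-1}=(\hat T_\infty+z)^{-1}$ trivially converges in norm, and your identity $\imath_\alpha(T_\alpha+z)^{-1}\imath_\alpha^{*}=(\hat T_\alpha+z)^{-1}$ is just that lemma made explicit in this special case. One small attribution slip: the definition of generalized resolvent convergence only asks for isometries, and unitarity of the $\imath_\alpha$ is the standing hypothesis of that lemma (where $\hat T_\alpha$ is defined), not of the definition itself --- but that is exactly the setting of the corollary, so nothing is lost.
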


\subsection*{Renormalizing the vHM Hamiltonian}
\label{sec:textbfr-vhm-hamilt}

Let us now consider a field with dispersion relation $\omega$, and take a
Gel'fand triple $(\mathscr{T},\mathfrak{h},\mathscr{T}')$ for such field so
that $\omega: \mathscr{T}\to \mathscr{T}$ is a linear homeomorphism that
extends to a positive operator $\omega\geq 0$ on $\mathfrak{h}$, with
$\sqrt{\omega}\bigr\rvert_{\mathscr{T}}$ also an homeomorphism. Let us then
fix the vHM source $v\in \mathscr{T}'$ and an $\fh$-mollifying net $v_\alpha$ such that $v_\alpha/\sqrt\omega,v_\alpha/\omega\in\fh$ for all $\alpha$. Denote by $E(v_\alpha)\coloneqq  -\lVert
v_\alpha/\sqrt{\omega} \rVert_{\mathfrak{h}}^2$ the ground state energy of the vHM
form, and by $g=-v/\omega \in \mathscr{T}'$ and $g_\alpha=-v_\alpha/\omega\in
\fh$ the singular and mollified one-particle vHM ground
state configurations, respectively.

The vHM quadratic form $q_v$ is renormalized as follows. Define
\begin{equation*}
  H_{\mathsf{ren}}(v_\alpha) \coloneqq H(v_\alpha) - E(v_\alpha)\;;
\end{equation*}
and its associated quadratic form on the renormalized Hilbert space
$\mathcal{F}_{g_\alpha}$ as
\begin{equation*}
  q^{\mathsf{ren}}_{v_\alpha}(\Psi,\Phi)\coloneqq  \frac{\langle \e^{a^{\dagger}(g_\alpha)}\Psi  , H_{\mathsf{ren}}(v_\alpha)\e^{a^{\dagger}(g_\alpha)}\Phi \rangle_{\mathcal{F}(\mathfrak{h})}}{\langle \e^{a^{\dagger}(g_\alpha)}\Omega  , \e^{a^{\dagger}(g_\alpha)}\Omega \rangle_{\mathcal{F}(\mathfrak{h})}}\;,
\end{equation*}
densely defined on $\FSfin(\mathscr{T})_{g_\alpha}$ (seen as a subspace of
$\mathcal{F}_{g_\alpha}$). Let us remark that for regularized sources the action
of $H_{\mathsf{ren}}(v_\alpha)$ on the renormalized Hilbert space amounts to a
``rotation'' of its action in the Fock representation: it is the same as
applying the unitary dressing transformation.
\begin{lem}
  \label[lemma]{lemma:8}
  For any $\Psi,\Phi\in \FSfin(\mathscr{T})_{g_\alpha}$, we have that
  \begin{equation*}
    q^{\mathsf{ren}}_{v_\alpha}(\Psi,\Phi)= \langle \e^{a(g_\alpha)}\Psi  , \mathsf{d}\Gamma(\omega) \e^{a(g_\alpha)}\Phi  \rangle_{\mathcal{F}(\mathfrak{h})}\;,
  \end{equation*}
  and therefore it is the quadratic form associated to the self-adjoint
  operator $H_{\mathsf{ren}}(v_\alpha)_{g_\alpha}$ on $\mathcal{F}_{g_\alpha}$ defined by
  the unitary CCR-equivalent conjugation
  \begin{equation*}
    H_{\mathsf{ren}}(v_\alpha)_{g_\alpha} \coloneqq  U_{0,g_\alpha} \mathsf{d}\Gamma(\omega) U_{g_\alpha,0}\;.
  \end{equation*}
\end{lem}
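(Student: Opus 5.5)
The plan is to reduce everything to the regular case, where $g_\alpha\in\fh$ and the unitary Gross dressing of \cref{prop:5} is available, and then pass between creation and annihilation exponentials by Baker–Campbell–Hausdorff.

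First I write $H_{\mathsf{ren}}(v_\alpha)$ in dressed form. Since $g_\alpha=-v_\alpha/\omega\in\fh$, \cref{prop:5} gives
\begin{equation*}
H_{\mathsf{ren}}(v_\alpha)=W_\alpha\,\dG(\omega)\,W_\alpha^{*},
\qquad
W_\alpha\coloneqq \e^{a^{\dagger}(g_\alpha)-a(g_\alpha)}.
\end{equation*}
Next I use the Weyl factorization $W_\alpha^{*}=\e^{a(g_\alpha)-a^{\dagger}(g_\alpha)}=\e^{\lVert g_\alpha\rVert^2/2}\,\e^{-a^{\dagger}(g_\alpha)}\e^{a(g_\alpha)}$, valid on finite-particle (analytic) vectors. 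Applying it to $\e^{a^{\dagger}(g_\alpha)}\Phi$ with $\Phi\in\FSfin(\sT)$, and using $\e^{a(g)}\e^{a^{\dagger}(g)}=\e^{\lVert g\rVert^2}\e^{a^{\dagger}(g)}\e^{a(g)}$, gives
\begin{equation*}
W_\alpha^{*}\,\e^{a^{\dagger}(g_\alpha)}\Phi=\e^{\lVert g_\alpha\rVert_{\fh}^2/2}\,\e^{a(g_\alpha)}\Phi .
\end{equation*}
Hence
\begin{equation*}
\langle \e^{a^{\dagger}(g_\alpha)}\Psi,H_{\mathsf{ren}}(v_\alpha)\e^{a^{\dagger}(g_\alpha)}\Phi\rangle=\e^{\lVert g_\alpha\rVert^2}\,\langle \e^{a(g_\alpha)}\Psi,\dG(\omega)\,\e^{a(g_\alpha)}\Phi\rangle .
\end{equation*}
Dividing by $\lVert\e^{a^{\dagger}(g_\alpha)}\Omega\rVert^2=\e^{\lVert g_\alpha\rVert^2}$ yields the claimed formula. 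This is the same computation that underlies \cref{lemma:3}, now with $\dG(\omega)$ inserted. Note that $\e^{a(g_\alpha)}\Phi\in\FSfin(\sT)\subset\sD(\dG(\omega))$, since $\omega$ maps $\sT$ to itself.

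For the operator statement, I observe that $\e^{a(g_\alpha)}\Phi=U_{g_\alpha,0}\Phi$ on $\FSfin(\sT)_{g_\alpha}$ by definition. Then
\begin{equation*}
q^{\mathsf{ren}}_{v_\alpha}(\Psi,\Phi)=\langle\Psi,U_{0,g_\alpha}\dG(\omega)U_{g_\alpha,0}\Phi\rangle_{g_\alpha}
\end{equation*}
for $\Phi$ in the dense set $U_{0,g_\alpha}\sD(\dG(\omega))\supseteq\FSfin(\sT)_{g_\alpha}$. Unitarity of $U_{g_\alpha,0}$ (\cref{prop:unitarygroup}) makes the conjugated operator self-adjoint on $\mathcal F_{g_\alpha}$. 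The form $q^{\mathsf{ren}}_{v_\alpha}$ is therefore the restriction of its form to $\FSfin(\sT)_{g_\alpha}$, and its closure is that form, because $\FSfin(\sT)$ is a form core for $\dG(\omega)$ (using that $\sT$ is a core for $\omega$).

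The main obstacle is rigor in the exponential identities with unbounded operators. The vectors $\e^{a^{\dagger}(g_\alpha)}\Phi$ are infinite series, so I must justify applying $W_\alpha^*$ termwise and commuting $\dG(\omega)$ through. I would handle this as follows:
\begin{itemize}
\item Work on $\FSfin$ and use the $\lVert\cdot\rVert$-convergent expansions already established before \cref{lemma:3}.
\item Alternatively, check the identity first on exponential vectors $\varepsilon(f)$. On these, $W_\alpha$, $\e^{a^{\dagger}(g)}$ and $\e^{a(g)}$ act explicitly, as in \cref{lemma:6}.
\item Conclude by totality of the exponential vectors and by polarization in $f$, taking derivatives in $f$ to recover finite-particle vectors.
\end{itemize}
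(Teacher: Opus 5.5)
Your argument is correct in substance, but it follows a different route from the paper's. The paper gives no separate proof of this lemma. The computation it has in mind is the one written out for the spin-boson analogue, \cref{lemma:99}, specialized to $B=\Id$. There one pulls $\dG(\omega)$ and $\ad(v_\alpha)+a(v_\alpha)$ through the exponentials using $\dG(\omega)\e^{\ad(g_\alpha)}=\e^{\ad(g_\alpha)}\bigl(\dG(\omega)+\ad(\omega g_\alpha)\bigr)$, $\e^{a(g_\alpha)}\dG(\omega)=\bigl(\dG(\omega)+a(\omega g_\alpha)\bigr)\e^{a(g_\alpha)}$ and $\e^{a(g_\alpha)}\e^{\ad(g_\alpha)}=\e^{\lVert g_\alpha\rVert_\fh^2}\e^{\ad(g_\alpha)}\e^{a(g_\alpha)}$. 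The choice $\omega g_\alpha=-v_\alpha$ then cancels the linear terms, and $\langle v_\alpha,g_\alpha\rangle_\fh=E(v_\alpha)$ cancels the counterterm.

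You instead import the exact diagonalization $H_{\mathsf{ren}}(v_\alpha)=W_\alpha\dG(\omega)W_\alpha^*$ from \cref{prop:5}. Everything then reduces to the single vector identity $W_\alpha^*\e^{\ad(g_\alpha)}\Phi=\e^{\lVert g_\alpha\rVert_\fh^2/2}\e^{a(g_\alpha)}\Phi$. What your route buys:
\begin{itemize}
\item It is shorter, and apart from one normal-ordering identity it only involves the bounded operator $W_\alpha^*$.
\item It gives $\e^{\ad(g_\alpha)}\Phi\in\sD(H_{\mathsf{ren}}(v_\alpha))$ for free.
\item It makes literal the remark preceding the lemma, since it shows $U_{g_\alpha,0}=\e^{-\lVert g_\alpha\rVert_\fh^2/2}\,W_\alpha^*\,\e^{\ad(g_\alpha)}$ on $\FSfin(\sT)$.
\end{itemize}
The pull-through route, by contrast, does not presuppose the cited diagonalization. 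It also exhibits the cancellation mechanism that fixes $g_\alpha$ and $E(v_\alpha)$, and this term-by-term computation is what the paper reuses in \cref{lemma:99,lemma:14}.

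For the step you flag as the main obstacle, the cleanest option is to test against exponential vectors $\varepsilon(h)$, $h\in\sT$. Use $W_\alpha\varepsilon(h)=\e^{-\lVert g_\alpha\rVert^2/2-\langle g_\alpha,h\rangle}\varepsilon(h+g_\alpha)$ and $\langle\varepsilon(k),\e^{\ad(g_\alpha)}\Phi\rangle=\e^{\langle k,g_\alpha\rangle}\langle\varepsilon(k),\Phi\rangle$. Then both sides of your identity, paired with $\varepsilon(h)$, equal $\e^{\lVert g_\alpha\rVert^2/2}\langle\varepsilon(h+g_\alpha),\Phi\rangle$. Totality finishes the argument without differentiating in $f$.

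Two small corrections are needed.
\begin{itemize}
\item The normal-ordered Weyl operator is $\e^{a(g_\alpha)-\ad(g_\alpha)}=\e^{-\lVert g_\alpha\rVert_\fh^2/2}\,\e^{-\ad(g_\alpha)}\e^{a(g_\alpha)}$; check this via $\lVert W_\alpha^*\Omega\rVert=1$. With your prefactor $\e^{+\lVert g_\alpha\rVert^2/2}$, combining with $\e^{a(g_\alpha)}\e^{\ad(g_\alpha)}=\e^{\lVert g_\alpha\rVert^2}\e^{\ad(g_\alpha)}\e^{a(g_\alpha)}$ would give $\e^{3\lVert g_\alpha\rVert^2/2}$, not the correct $\e^{\lVert g_\alpha\rVert^2/2}$ you state. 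Only the displayed factorization needs fixing.
\item To conclude that $\FSfin(\sT)_{g_\alpha}$ is a form core for $U_{0,g_\alpha}\dG(\omega)U_{g_\alpha,0}$, it is not enough that $\FSfin(\sT)$ is a form core for $\dG(\omega)$. You also need $U_{g_\alpha,0}\FSfin(\sT)_{g_\alpha}=\e^{a(g_\alpha)}\FSfin(\sT)=\FSfin(\sT)$. This holds because $\e^{a(-g_\alpha)}$ is a two-sided inverse of $\e^{a(g_\alpha)}$ on $\FSfin(\sT)$, as in the proof of \cref{lemma:99}.
\end{itemize}
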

Remarkably, the same holds true also in the limit $\alpha\in \mathcal{A}$ in which the
cutoff is removed, even though now the unitary conjugation does not preserve
the canonical commutation relations: the wave function renormalization
defines (and, at the same time diagonalizes through the singular dressing)
the renormalized, non-Fock, vHM Hamiltonian to which the regularized one
converges in generalized norm resolvent sense.
\begin{thm}[Renormalization of the vHM model]
  \label[theorem]{thm:3}
  For any $v\in \mathscr{T}'$, the renormalized vHM Hamiltonian
  $H_{\mathsf{ren}}(v)_g$ has the following properties:

  \vspace{1mm}
  
  \begin{itemize}
    \setlength{\itemsep}{2mm}
  \item For any net $\{v_\alpha\}\subset\sT'$ such that $v_\alpha\xrightarrow[\alpha]{\sigma(\sT',\sT)}v$, the corresponding Hamiltonians $H_{\mathsf{ren}}(v_\alpha)_{g_\alpha}$ converge to
    $H_{\mathsf{ren}}(v)_g$ in generalized norm resolvent sense.
  \item $H_{\mathsf{ren}}(v)_g\coloneqq  U_{0,g} \mathsf{d}\Gamma(\omega)U_{g,0}$
    with $g=-v/\omega$, and therefore:
    \begin{itemize}
      \setlength{\itemsep}{1mm}
    \item $H_{\mathsf{ren}}(v)_g$ is self-adjoint on $\mathcal{F}_g$ with
      domain $\mathscr{D}\bigl(H_{\mathsf{ren}}(v)_g\bigr)=
      U_{0,g}\mathscr{D}\bigl(\mathsf{d}\Gamma(\omega)\bigr)$;
    \item $\inf \sigma\bigl(H_{\mathsf{ren}}(v)_g\bigr)=0$;
    \item $U_{g,0}\Omega$ is its unique ground state.
    \end{itemize}
  \end{itemize}
\end{thm}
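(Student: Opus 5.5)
The plan is to work entirely through the singular dressings $U_{g,0}\colon\mathcal{F}_g\to\mathcal{F}_0$, which are unitary by \cref{prop:unitarygroup}, and to use $\mathcal{F}_0=\mathcal{F}(\mathfrak{h})$ as the common parent space. The second bullet is treated first, since the convergence statement is derived from it.

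\emph{The limiting operator.} For $v\in\mathscr{T}'$ set $g=-v/\omega\in\mathscr{T}'$. This is well defined because $\omega\colon\mathscr{T}\to\mathscr{T}$ is a homeomorphism, so $\omega^{-1}$ acts on $\mathscr{T}'$ by duality. Define $H_{\mathsf{ren}}(v)_g\coloneqq U_{0,g}\,\dG(\omega)\,U_{g,0}$.

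Since $U_{g,0}$ is unitary with inverse $U_{0,g}$, this operator is unitarily equivalent to $\dG(\omega)$. Consequently:
\begin{itemize}
\item it is self-adjoint with domain $U_{0,g}\mathscr{D}(\dG(\omega))$;
\item its spectrum coincides with $\sigma(\dG(\omega))$, whose infimum is $0$ because $\dG(\omega)\Omega=0$ and $\omega\ge0$;
\item its ground states are the images under $U_{0,g}$ of the ground states of $\dG(\omega)$.
\end{itemize}
I read the ground state in the statement as $U_{0,g}\Omega=U_{g,0}^{-1}\Omega\in\mathcal{F}_g$.

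For uniqueness I would show that $\ker\dG(\omega)=\mathbb{C}\Omega$. This requires $\ker\omega=\{0\}$, which holds because $\omega$ is injective on the dense core $\mathscr{T}$ and $\omega\ge0$; implicitly I am assuming $\omega$ has no kernel on $\mathfrak{h}$. On the $n$-particle fiber, $\dG(\omega)$ acts as a sum of nonnegative commuting operators. It therefore vanishes on $\Psi_n$ only if $(\omega\otimes\mathds{1}\otimes\dotsm)\Psi_n=0$, and this forces $\Psi_n=0$ when $n\ge1$.

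I would also connect this with the cutoff situation. By \cref{lemma:8}, for each $\alpha$ the regularised renormalised operator has exactly the same form $U_{0,g_\alpha}\dG(\omega)U_{g_\alpha,0}$. This shows that the definition of $H_{\mathsf{ren}}(v)_g$ is the natural extension of the regularised one.

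\emph{Convergence.} Take isometries $\imath_\alpha=U_{g_\alpha,0}\colon\mathcal{F}_{g_\alpha}\to\mathcal{F}_0$ and $\imath_\infty=U_{g,0}$, all of which are unitary. By \cref{lemma:7} it suffices to show that the conjugated operators $\hat T_\alpha=U_{g_\alpha,0}H_{\mathsf{ren}}(v_\alpha)_{g_\alpha}U_{0,g_\alpha}$ converge in norm resolvent sense to $\hat T_\infty=U_{g,0}H_{\mathsf{ren}}(v)_gU_{0,g}$. Both sides are identically equal to $\dG(\omega)$, so \cref{cor:1} gives the claim immediately.

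\emph{Main obstacle.} The substantive point is to make sense of $H_{\mathsf{ren}}(v_\alpha)_{g_\alpha}$ for an arbitrary net $v_\alpha\to v$ in $\sigma(\mathscr{T}',\mathscr{T})$. Such $v_\alpha$ need not lie in $\mathfrak{h}$, and $v_\alpha/\sqrt\omega$ need not lie in $\mathfrak{h}$, so \cref{lemma:8} does not apply directly. I would handle this by taking the dressed formula $U_{0,g_\alpha}\dG(\omega)U_{g_\alpha,0}$ as the definition for singular $v_\alpha$, consistently with the regular case of \cref{lemma:8}. The convergence then holds trivially, because every conjugated operator equals $\dG(\omega)$.

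The nontrivial content lies in the consistency check at the level of forms. For $\Psi,\Phi\in\FSfin(\mathscr{T})$ and an $\mathfrak{h}$-mollification $g_\beta\to g$, I would verify that
\begin{equation*}
  \lim_\beta \frac{\langle \e^{a^{\dagger}(g_\beta)}\Psi, H_{\mathsf{ren}}(v_\beta)\e^{a^{\dagger}(g_\beta)}\Phi\rangle_{\mathcal{F}(\mathfrak{h})}}{\lVert \e^{a^{\dagger}(g_\beta)}\Omega\rVert_{\mathcal{F}(\mathfrak{h})}^2}
  =\langle \e^{a(g)}\Psi,\dG(\omega)\e^{a(g)}\Phi\rangle_{\mathcal{F}(\mathfrak{h})}.
\end{equation*}
The argument combines three ingredients: \cref{lemma:8} applied at level $\beta$; the $\sigma(\mathscr{T}',\mathscr{T})$-continuity of $h\mapsto a(h)^k\Psi$ from \cref{lemma:2}; and the fact that $\dG(\omega)$ preserves $\FSfin(\mathscr{T})$ and commutes suitably with $a(g)$, namely $[\dG(\omega),a(g)]=-a(\omega g)$ on $\FSfin(\mathscr{T})$. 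This last identity keeps every term finite-particle, so the limit can be taken term by term.
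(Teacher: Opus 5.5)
Your proposal is correct and is essentially the paper's proof. The spectral statements follow from unitary conjugation of $\dG(\omega)$. The form-level identification uses \cref{lemma:8} together with the $\sigma(\mathscr{T}',\mathscr{T})$-continuity of \cref{lemma:2}; your commutator $[\dG(\omega),a(g)]=-a(\omega g)$ usefully makes explicit how $\dG(\omega)$ passes through the limit, a step the paper does not spell out. Generalized norm resolvent convergence then follows from \cref{cor:1}, because every dressed operator equals $\dG(\omega)$. You are also right that the ground state must be read as $U_{0,g}\Omega$, which is just $\Omega$ regarded as a vector of $\mathcal{F}_g$.

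One justification needs fixing, in the uniqueness step. Injectivity of $\omega$ on a core does not in general rule out a kernel. Surjectivity does: $\omega(\mathscr{T})=\mathscr{T}$ is dense, so $\ran\omega$ is dense and $\ker\omega=(\ran\omega)^{\perp}=\{0\}$. Hence no extra assumption on $\omega$ is needed.
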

\begin{proof}
  By \cref{lemma:8,prop:2} it follows that for any $\Psi,\Phi\in
  \FSfin(\mathscr{T})$,
  \begin{equation*}
    \lim_{\alpha}q^{\mathsf{ren}}_{v_\alpha}(\Psi,\Phi)= q^{\mathsf{ren}}_v(\Psi,\Phi)= \langle \Psi  , U_{0,g}\mathsf{d}\Gamma(\omega)U_{g,0}\Phi \rangle_g\;.
  \end{equation*}
  The convergence is extended to generalized norm resolvent sense by
  \cref{cor:1}, observing that uniformly with respect to $\alpha$,
  \begin{equation*}
    U_{g_\alpha,0}H_{\mathsf{ren}}(v_\alpha)_{g_\alpha}U_{0,g_\alpha} = \mathsf{d}\Gamma(\omega)\;.
  \end{equation*}
  The remaining properties are a straightforward consequence of the fact that
  $H_{\mathsf{ren}}(v)_g= U_{0,g}\mathsf{d}\Gamma(\omega)U_{g,0}$ is the
  unitary conjugation of $\mathsf{d}\Gamma(\omega)$.
\end{proof}

\section{Wave Function Renormalization of the Spin-Boson Model}
\label{sec:wave-funct-renorm-1}

\subsection*{The spin-boson in physics}
\label{sec:spin-boson-physics}

The spin-boson model builds on the vHM model by giving the unmovable source
some internal degrees of freedom, that interact with the scalar field. This
apparently simple modification gives rise to a much richer physical
structure, as well as a much wider range of applicability: the spin-boson
model is considered ``ubiquitous'' in physics. In particular, it is used
either to describe ``intrinsic'' two-level systems (such as polarization in
photons, spin $\frac{1}{2}$ projection in atoms, strangeness of a neutral
meson), effective two-level systems (such as in some types of chemical
reactions, amorphous materials, or the motion of defects in crystals) or
interactions of two-level systems with a surrounding bath (as an explicit
model for an open quantum system) \citep[see, \emph{e.g.},][and references
therein]{Leggettetal.1987,jaksic1995aihppt,jaksic1996cmp,jaksic1996cmp2,jaksic2002cmp,frohlich2004cmp251,merkli2007prl}.

In the aforementioned classical review article \citep{Leggettetal.1987}, the
emergence of form factors $v$ (there called spectral functions $J(\omega)$),
and their regularity -- or lack thereof -- is discussed in great detail, and
some attention is already devoted to the complicated physical behavior that
arises when renormalization is needed. Whereas imposing an ultraviolet cutoff
seems to be physically reasonable -- and it is the choice ultimately made in
many concrete applications -- neglecting high frequency photons makes the
theory inconsistent at large coupling. In recent years, both the
investigation of large coupling regimes and of the role of the ultraviolet
cutoff have garnered considerable interest in the physical community, both
from a theoretical \citep[see, \emph{e.g.},][and references
therein]{OtterpohlNalbachThorwart.2022,YanShao.2023,goulko2025prl,guo2025pla}
and an experimental point of view \citep[see,
\emph{e.g.},][]{magazzu2018nc,sun2025nc}.

The most striking physical application of a spin-boson Hamiltonian with
strong singularities -- and whose renormalization is a long-standing open
problem we solve hereafter -- is in the theory of spontaneous emission:
almost a century ago, Weisskopf and Wigner \citep{WeisskopfWigner.1930}
proposed a simple theory of spontaneous emission within the framework of
quantum electrodynamics; despite its simplicity, such model is still
considered a golden standard in describing spontaneous emission of atoms and
molecules \cite[see, \emph{e.g.}][and references
therein]{Sharafievetal.2021}. Coming from quantum electrodynamics, the field
form factor has both IR and UV divergences, that can be cured with wave
function renormalization: most notably, the simplest renormalized spin-boson
model for spontaneous emission can be \emph{diagonalized exactly} (see the
discussion below).

To conclude this brief overview, it is now establshed that the spin-boson
model possesses an inherent duality to the Ising model from statistical
mechanics \cite{EmeryLuther.1974,SpohnDuemcke.1985,FannesNachtergaele.1988}.
In the presence of an ultraviolet cutoff, the links between the infrared
behavior of the spin-boson model and long-range order in the corresponding
Ising model have been studied; in particular, the existence or absence of an
Ising phase transition is established, in dependence of the singularity of
the form factor \cite{Spohn.1989,BetzHinrichsKraftPolzer.2025,HinrichsPolzer.2025}.
However,
all of these works crucially rely on the presence of an ultraviolet cutoff in
the interaction. The renormalization of the spin-boson model can thus, in
this perspective, be considered as a problem dual to the renormalization
problem for critical Ising models.

\subsection*{Renormalization of the spin-boson: an overview of known results}
\label{sec:renorm-spin-boson}

The renormalization of the spin-boson model has garnered a lot of attention
in recent years; in the existing literature the problem of renormalization
draws heavy inspiration from the Nelson model, and the techniques used
insofar are essentially limited to a unitary dressing transformation of Gross
type, allowing to cure energy but not wave function divergences. Using the
same notation as in the vHM case, three regimes of source regularity have been
studied for the UV regime:
\begin{itemize}
\item \emph{Subcritical case}: $\boxed{v/\sqrt{1+\omega}\in \mathfrak{h}}$. The
  spin-boson Hamiltonian is self-adjoint with form domain coinciding with the
  free form domain if $v\notin \mathfrak{h}$, and domain $D(H_0)$ if $v\in
  \mathfrak{h}$ (by KLMN and Kato-Rellich's theorems, respectively
  \citep[see, \emph{e.g.}][]{ReedSimon.1975}; or see \cite{Lonigro.2022} for
  an operator based technique); $H$ is also \emph{bounded from below}.
\item \emph{Critical case}: $\boxed{v/(1+\omega)\in L^2,v/\sqrt{1+\omega}\notin L^2}$.
	The Hamiltonian is a self-adjoint and bounded from below operator
  only after a suitable \emph{self-energy renormalization}, implemented by a
  \emph{unitary dressing transformation}
  \citep{AlvarezLillLonigroValentinMartin.2025,HinrichsLampartValentinMartin.2024},
  and removing the cutoff yields norm resolvent convergence.
\item \emph{Supercritical case}: $\boxed{v,v/\sqrt{1+\omega},v/(1+\omega)\notin\mathfrak{h}}$.
	It was proved in \citep[][Corollary 4.4]{DamMoller.2018b}
  that a self energy renormalization \emph{cannot converge} in strong
  resolvent sense (see also
  \citep{AlvarezLillLonigroValentinMartin.2025,HinrichsLampartValentinMartin.2024}).
\end{itemize}
Similar regimes of source regularity give rise to analogous problems in the infrared sector:
\begin{itemize}
	\item \emph{Subcritical case:} $\boxed{v,v/\sqrt{\omega},v/\omega\in\mathfrak{h}}$.
	The spin-boson Hamiltonian is self-adjoint on $D(H_0)$, bounded from below and possesses a unique ground state \cite{AraiHirokawa.1997}.
	\item \emph{Critical case}: $\boxed{v,v/\sqrt\omega\in	\mathfrak{h},v/\omega\notin \mathfrak{h}}$.
		The spin-boson Hamiltonian $H$ is self-adjoint on $D(H_0)$ and bounded from below, but
		the model only exhibits a ground state at weak coupling \cite{Spohn.1989,HaslerHerbst.2010,BetzHinrichsKraftPolzer.2025}.
		In fact, any symmetry-breaking term, will cause absence of the ground state immediately.
	\item \emph{Supercritical case}: $\boxed{v\in \mathfrak{h},v/\sqrt{\omega}\notin \mathfrak{h}}$.
	The spin-boson Hamiltonian $H$ is essentially self-adjoint on $D(H_0)\cap
	D_{\mathrm{fin}}$, with $H_0= A\otimes \mathds{1} + \mathds{1}\otimes
	\dG(\varpi)$ and $D_{\mathrm{fin}}$ the set of vectors with a finite number
	of Fock excitations \citep{Falconi.2015}. However, $H$ is \emph{unbounded
		from below}. This instability is infrared in nature, and typical of
	massless fields.
\end{itemize}

\subsection{The spin-boson model in the Fock representation}
\label{sec:spin-boson-model}

Let us use the same setting and notations of \cref{sec:van-hove-model},
concerning the ``boson subsystem'': most notably, let us consider $(\mathscr{T}, \mathfrak{h},
\mathscr{T}')$ to be the test/one-particle/field configurations Gel'fand triple, $\omega$
the field's dispersion, $\mathcal{F}(\mathfrak{h})$ the Fock space and $a^{\sharp}(f)$,
$\mathsf{d}\Gamma(T)$ the creation, annihilation and second quantization
operators. The ``spin subsystem'' is set in a Hilbert space $\mathcal{S}$, that is
typically finite dimensional, but we here do not require to have any particular properties.
The Fock representation is thus set on the Hilbert space
\begin{equation*}
  \mathcal{B}_0 \coloneqq  \mathcal{S}\otimes \mathcal{F}(\mathfrak{h})\;.
\end{equation*}
The free spin-boson Hamiltonian is
\begin{equation*}
  H_0= A\otimes \mathds{1} + \mathds{1}\otimes \mathsf{d}\Gamma(\omega)\;,
\end{equation*}
with $A$ self-adjoint and bounded on $\mathcal{S}$. Therefore, $H_0$ is always bounded
from below, with $\inf \sigma(H_0)= \inf \sigma(A)$; furthermore, given any ground
state $\psi\in \mathcal{S}$ of $A$, $\psi\otimes \Omega\in \mathcal{B}_0$ is the associated ground state of $H_0$: the
degeneracy of the ground state of $H_0$ coincides with the degeneracy of the
ground state of $A$. The spin-boson (SB) Hamiltonian is obtained adding a
linear interaction between the field and the spin degrees of freedom:
\begin{defn}[Fock SB Hamiltonian]
  \label[definition]{def:7}
  Let $\omega\geq 0$ on $\mathfrak{h}$, $A$ bounded and self-adjoint on $\mathcal{S}$, $v\in \mathfrak{h}$, and $B$
  normal and bounded on $\mathcal{S}$. Then, the Fock SB Hamiltonian $H(B,v)$ is
  defined as
  \begin{equation*}
    H(B,v)= H_0+ B\otimes a^{\dagger}(v)+B^{*}\otimes a(v)= A\otimes\mathds{1}+\mathds{1}\otimes \mathsf{d}\Gamma(\omega)+ B\otimes a^{\dagger}(v)+B^{*}\otimes a(v)\;. 
  \end{equation*}
\end{defn}
The operator $B$ takes the name of \emph{interaction spin-matrix}.
\begin{rem*}
  \label{rem:2}
  The condition on $B$ being normal is used heavily throughout the paper,
  however it excludes the so-called \emph{rotating wave approximation}, in
  which $B$ is 2-nilpotent. We are convinced that the wave function
  renormalization scheme given here can be adapted with no major
  complications to a nilpotent $B$, and probably to any generic bounded
  operator as well. We refrain from doing that here because it would add an
  unnecessary layer of technicality.
\end{rem*}
\begin{exs*}[Standard, Weisskopf-Wigner, Energy-preserving SB]
  $\phantom{i}$
  \begin{itemize}
    \setlength{\itemsep}{3mm}
  \item \emph{Standard SB}. The standard spin-boson is the most widespread
    from a physical perspective \citep[see,
    \emph{e.g.},][]{Leggettetal.1987}. In the standard spin-boson, $\mathcal{S}=\mathbb{C}^2$,
    $A=\sigma_z=
    \begin{pmatrix}
      1&0\\0&-1
    \end{pmatrix}
    $, $\mathfrak{h}=L^2(\mathbb{R}^3)$, $\omega(k)= \sqrt{k^2+\mu^2}$ with $\mu\geq 0$, $B=\sigma_x=
    \begin{pmatrix}
      0&1\\1&0
    \end{pmatrix}
    $, and $v(k)\in L^2(\mathbb{R}^3)$ with compact support (ultraviolet cutoff). Let us
    remark that while $H_0$ has in this case a non-degenerate ground state,
    $H(B,v)$ might have no ground state at all for $\mu=0$ (in the Fock
    representation).
  \item \emph{Weisskopf-Wigner SB}. The Weisskopf-Wigner SB is a standard SB
    in which, however, $v(k)= 1/\sqrt{\omega(k)}\notin L^2(\mathbb{R}^3)$. Therefore, it cannot
    be defined nor renormalized as a self-adjoint operator in the Fock
    representation: a non-trivial wave function renormalization will be
    necessary, \citep[see, {\em e.g.},][]{RomanRocheSanchezZueco}.
  \item \emph{Energy-preserving SB}. An energy preserving SB has $B=A$; thus
    it can be diagonalized in the spin degrees of freedom: on each spin fiber
    it acts as a Fock vHM Hamiltonian.
  \end{itemize}
\end{exs*}

The spectral properties of the SB Hamiltonian depend on both the regularity
of the form factor $v$ and the type of interaction spin matrix $B$: their
interplay generates a more intricate behavior, compared to the vHM case. In
particular, the existence of a Fock ground state extends beyond the vHM
threshold, but only for suitable interaction spin matrices and coupling
strengths. Furthermore, contrarily to the vHM case, the SB Hamiltonian is not
exactly diagonalizable.

\begin{prop}
  \label[proposition]{prop:8}
  Let $v\in \mathfrak{h}$, $B$ bounded and normal and $A$ bounded and self-adjoint on $\mathcal{S}$,
  and $\omega\geq 0$. Then:
  \begin{itemize}
    \setlength{\itemsep}{3mm}
  \item $H(B,v)$ is essentially self-adjoint on $\mathcal{S}\hat{\otimes}
    \FSfin\bigl(\mathscr{D}(\omega)\bigr)$ and $ \inf \sigma\bigl(H(v)\bigr) = - \infty $, whenever
    $v/\sqrt{\omega}\notin \mathfrak{h}$;
  \item If $v/\sqrt{\omega}\in \mathfrak{h}$, then $H(B,v)$ is self-adjoint on $\mathcal{S}\otimes \mathscr{D}(\dG(\omega))$
    and bounded from below;
  \item If, in addition, $v/\omega\in \mathfrak{h}$, $\dim\cS<\infty$ and $\braket{\ph, A\psi}\le 0$ for
    eigenvectors $\ph\perp\psi$ of $B$, then $H(B,v)$ also has a Fock ground state.
  \item If $H(B,v)$ has a Fock ground state $\Psi_{\mathsf{gs}}$ such that $\langle
    \Psi_{\mathsf{gs}} , B\otimes \mathds{1} \Psi_{\mathsf{gs}} \rangle_{\mathcal{B}_0}\neq0$, then $v/\omega\in \mathfrak{h}$.
  \end{itemize}
\end{prop}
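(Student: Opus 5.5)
The plan is to treat the four items separately. The first two follow the vHM template of \cref{prop:5}. The third rests on a Gross-type dressing diagonal in the spectral decomposition of $B$. The fourth is a contradiction argument built on the pull-through formula and the regularity of Fock states. I assume $B\neq0$, since otherwise everything reduces to $H_0$.

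\emph{Items one and two.} For essential self-adjointness I will apply Nelson's commutator theorem with test operator $\mathds{1}\otimes(\dG(\omega)+\dG(\mathds{1}))+1$. The commutator estimates are the same as in the vHM case, because $A$ and $B$ are bounded.

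For unboundedness from below, I use normality of $B$: pick $b\in\sigma(B)\smallsetminus\{0\}$ and a unit $\varphi\in\mathcal{S}$ with $\lVert(B-b)\varphi\rVert\le\eta$. Take regular $v_n\to v$ with $v_n/\omega\in\mathfrak{h}$, and trial vectors $\varphi\otimes c_{g_n}$ with $g_n=-\bar b\,v_n/\omega$. Their energy is
\begin{equation*}
\langle\varphi,A\varphi\rangle-\lvert b\rvert^2\lVert v_n/\sqrt\omega\rVert^2+O\bigl(\eta\lVert v\rVert\lVert g_n\rVert\bigr)+o(1)\;.
\end{equation*}
Choosing $\eta=\eta_n\to0$ fast enough relative to $\lVert g_n\rVert$ makes this tend to $-\infty$.

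If $v/\sqrt\omega\in\mathfrak{h}$, then $\lVert a(v)\Psi\rVert\le\lVert v/\sqrt\omega\rVert\,\lVert\dG(\omega)^{1/2}\Psi\rVert$ and $\lVert a^\dagger(v)\Psi\rVert^2\le\lVert a(v)\Psi\rVert^2+\lVert v\rVert^2\lVert\Psi\rVert^2$. Hence $B\otimes a^\dagger(v)+B^*\otimes a(v)$ is infinitesimally $H_0$-bounded, and Kato--Rellich gives the second item.

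\emph{Item three.} Write $B=\sum_j b_jP_j$; this is finite since $\dim\mathcal{S}<\infty$. Set $g=v/\omega\in\mathfrak{h}$ and dress with the unitary
\begin{equation*}
W=\sum_jP_j\otimes\e^{a(\bar b_jg)-a^\dagger(\bar b_jg)}\;.
\end{equation*}
By \cref{prop:5} applied on each block, $WH(B,v)W^*$ equals
\begin{equation*}
\sum_jP_j\otimes\bigl(\dG(\omega)-\lvert b_j\rvert^2\lVert v/\sqrt\omega\rVert^2\bigr)+\sum_{i,j}P_iAP_j\otimes W_i^*W_j\;,
\end{equation*}
where the $W_i^*W_j$ are Weyl operators up to phases. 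I will then pass to a massive approximation $\omega+m$. The condition $\langle\varphi,A\psi\rangle\le0$ for eigenvectors $\varphi\perp\psi$ of $B$ makes the off-diagonal part ``non-positive'' in a product Q-space representation where the Weyl translations are positivity preserving. A Perron--Frobenius argument then gives massive ground states with a uniformly positive overlap with $\sum_jP_j\chi\otimes\Omega$, for a fixed positive $\chi$. The massive ground states are bounded in $\dG(\mathds1)$ uniformly in $m$ by the pull-through bound $\lVert a(k)\Psi_m\rVert\le\lVert B\rVert\,\lvert v(k)\rvert/\omega(k)$. Their weak limit is therefore nonzero and, by lower semicontinuity, is a ground state. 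I expect this item to be the main obstacle: making the positivity structure precise for the interplay of $A$ and the Weyl operators, and controlling the $m\to0$ limit.

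\emph{Item four.} Let $\Psi$ be a ground state with energy $E$ and $\beta=\langle\Psi,B\Psi\rangle\neq0$, and suppose $v/\omega\notin\mathfrak{h}$. Take sets $S_n\subset\{\omega>\varepsilon_n\}$ with $\lVert\chr_{S_n}v/\omega\rVert\to\infty$, and put
\begin{equation*}
f_n=\chr_{S_n}(v/\omega)\,\lVert\chr_{S_n}v/\omega\rVert^{-2}\;,
\end{equation*}
so that $\langle f_n,v/\omega\rangle=1$ and $\lVert f_n\rVert\to0$. On $\{\omega>\varepsilon\}$ we have $\dG(\chr_{\omega>\varepsilon})\le\varepsilon^{-1}\dG(\omega)$, and $\Psi\in\mathscr{D}(H_0)$, so $a(f_n)\Psi$ is well defined. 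The pull-through identity $a(k)\Psi=-v(k)(H-E+\omega(k))^{-1}B^*\Psi$ then gives
\begin{equation*}
\langle\Psi,a(f_n)\Psi\rangle=-\bar\beta\quad\text{(up to conjugation)}\;,\qquad \lVert a(f_n)\Psi\rVert\le\lVert B\rVert\int\lvert f_n\rvert\,\lvert v\rvert/\omega=\lVert B\rVert\;.
\end{equation*}
Hence the field $\phi_0(\theta f_n)$, with a phase $\theta$ chosen suitably, has constant nonzero mean and uniformly bounded second moment in the state $\Psi$. On the other hand $\lVert f_n\rVert\to0$ implies $W(tf_n)\Psi\to\Psi$, so the distribution of $\phi_0(\theta f_n)$ in $\Psi$ converges to $\delta_0$. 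By uniform integrability its mean must tend to $0$, which is a contradiction. The one delicate point here is justifying the pull-through formula on $\Psi\in\mathscr{D}(H_0)$ for test functions supported away from $\{\omega=0\}$.
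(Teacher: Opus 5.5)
\emph{Item three} is where the proposal has a genuine gap. Your second item coincides with the paper's argument (Kato--Rellich plus the standard relative bound). The trial-state half of the first item is the paper's argument too. The fourth item is correct but takes a different route (see the last paragraph).

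\emph{Item three.} The paper does not prove this item itself. It invokes \cite[Cor.~5.5]{HinrichsPolzer.2025}, with \cite{AraiHirokawa.1997} for small coupling, and notes that the argument for self-adjoint $B$ carries over to normal $B$. Your sketch does not replace that input.
\begin{itemize}
\item For each fixed $m>0$, Perron--Frobenius gives a ground state $\Psi_m$ with strictly positive overlap with $\sum_jP_j\chi\otimes\Omega$. Nothing in your argument bounds this overlap from below uniformly as $m\to0$, and that uniformity is exactly the non-perturbative content of the claim.
\item What your pull-through bound does give is $\lVert(\Id\otimes P_\Omega)\Psi_m\rVert^2\ge1-\langle\Psi_m,\Id\otimes\dG(\Id)\Psi_m\rangle\ge1-\lVert B\rVert^2\lVert v/\omega\rVert^2$. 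Since $\Id\otimes P_\Omega$ has finite rank, this controls the weak limit only when $\lVert B\rVert\lVert v/\omega\rVert<1$, i.e.\ essentially the Arai--Hirokawa regime.
\item The positivity structure you invoke is not available for general normal $B$. The off-diagonal blocks of $W(A\otimes\Id)W^*$ carry Weyl operators with relative shifts $(b_i-b_j)v/\omega$ and BCH phases $\e^{\pm i\Im(\bar b_ib_j)\lVert v/\omega\rVert^2}$. Nontrivial phases multiply $\langle e_i,Ae_j\rangle$ and spoil the sign condition on $A$. Shifts that cannot all be made real in one Q-space realization act partly as multiplication by unimodular functions $\e^{i\sqrt2\phi_0(\cdot)}$, which do not preserve positivity.
\end{itemize}

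\emph{Item one.} There is a smaller but concrete problem with your comparison operator $N=\Id\otimes(\dG(\omega)+\dG(\Id))+1$: the commutator hypothesis of Nelson's theorem fails for general $v\in\fh$. The commutator $[N,B\otimes a^\dagger(v)]$ contains $B\otimes a^\dagger(\omega v)$. The bound $\lvert\langle\Psi,[H,N]\Psi\rangle\rvert\le d\lVert N^{1/2}\Psi\rVert^2$ fails on vacuum-plus-one-boson vectors as soon as $\sqrt\omega\,v\notin\fh$; test with a suitably normalized and phased multiple of $\chr_{\{\omega\le R\}}v$ and let $R\to\infty$. So the vHM estimates are not simply ``the same''. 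Two fixes work:
\begin{itemize}
\item use the criterion of \cite{Falconi.2015}, which the paper cites for \cref{prop:5};
\item use a quadratic comparison operator such as $\Id\otimes(\dG(\omega)^2+\dG(\Id)^2)+1$ on $\cS\hat\otimes\FSfin(\sD(\omega^2))$. There the bound follows from $\lVert a(\omega v)\Psi\rVert\le\lVert v\rVert\lVert\dG(\omega^2)^{1/2}\Psi\rVert$ and $\dG(\omega^2)\le\dG(\omega)^2$.
\end{itemize}
Also, in the paper's conventions the trial configuration must be $g_n=-b\,v_n/\omega$, not $-\bar b\,v_n/\omega$; otherwise the cross term is $-2\Re(b^2\langle v_n/\omega,v\rangle)$. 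Choosing $v_n=\chr_{\{\omega>1/n\}}v$ makes your $o(1)$ term exactly zero.

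\emph{Item four.} Your argument is sound and differs from the paper's. Both rest on two facts:
\begin{itemize}
\item the identity $\langle\Psi,a_k\Psi\rangle=-\beta\,v(k)/\omega(k)$, which comes from $(H-E+\omega(k))^{-1}\Psi=\omega(k)^{-1}\Psi$;
\item the bound $\lVert a_k\Psi\rVert\le\lVert B\rVert\lvert v(k)\rvert/\omega(k)$.
\end{itemize}
With $H=H_0+B\otimes a^\dagger(v)+B^*\otimes a(v)$ the pull-through formula carries $B$, not $B^*$, so your mean is exactly $-\beta$.

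The paper approximates $\Psi$ by $\eta_\eps\in\sD(\dG(\Id)^{1/2})$ and derives $\lvert\langle\eta_\eps,a_k\Psi\rangle\rvert\ge(\lvert\beta\rvert-\eps\lVert B\rVert)\lvert v(k)\rvert/\omega(k)$. It then reads off $v/\omega\in\fh$ directly from the square integrability of $k\mapsto\langle\eta_\eps,a_k\Psi\rangle$. You instead argue by contradiction with normalized test functions $f_n$. The approximation step is replaced by strong continuity of $h\mapsto W(h)$ plus uniform integrability. This is somewhat longer but equally rigorous. It needs $\Psi\in\sD(H_0)$, which holds because a ground state forces $v/\sqrt\omega\in\fh$ by the first item.
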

\begin{proof}
  The first claim is proven similar to \cref{prop:5}, by Nelson's commutator
  theorem and inserting trial states of the form $x\otimes c_{g}$.  The second
  claim is a direct consequence of the Kato--Rellich theorem
  \citep[Thm.~X.12]{ReedSimon.1975} and a standard relative bound
  \citep[Thm.~5.16]{Arai.2018}.  The existence of a Fock ground state in the
  case that $v/\omega\in\fh$ has been first observed in \cite{AraiHirokawa.1997} for
  small coupling $\|v/\omega\|_\fh$ and later proven non-perturbatively under
  additional regularity assumptions and with different methods \citep[see,
  \emph{e.g.},][]{Gerard.2000,HirokawaHiroshimaLorinczi.2014,DamMoller.2018a,Hinrichs.2022b}.
  The precise statement given here is proven in
  \citep[Cor.~5.5]{HinrichsPolzer.2025} for the case that $B$ is self-adjoint, but the proof carries over {\em ad verbum} to normal $B$.
	
  It thus remains to consider the case $v/\omega\notin\fh$. Absence of ground states
  with $\langle \Psi_{\mathsf{gs}} , B\otimes \mathds{1} \Psi_{\mathsf{gs}} \rangle_{\mathcal{B}_0}\neq0$ and
  $\|(\dsone\otimes \dG(\dsone))^{1/2}\Psi_{\mathsf{gs}}\|<\infty$ was studied in
  \cite{AraiHirokawaHiroshima.1999}.  We here prove a mild generalization of
  their result. By the spectral theorem, we can assume without loss of
  generality that $\fh=L^2(\sM,\mu)$ for some measure space $(\sM,\mu)$ and that
  $\omega$ is a multiplication operator on this space. Then, for almost every
  $k\in\sM$, the so-called pull-through formula
  \begin{align*}
    a_k\Psi_{\mathsf{gs}} = - \frac{v(k)}{H(B,v)-\inf\sigma(H(B,v)) + \omega(k)}B\otimes\Id\Psi_{\mathsf{gs}}, \quad \mbox{$\mu$-a.e. $k\in\sM$}
  \end{align*}
  holds, \citep[see e.g.][Thm.~D.16]{DamMoller.2018a}, where $a_k$ denotes
  the pointwise annihilation operator on $\FS(L^2(\sM,\mu))$.  Given $\eps>0$,
  we now pick $\eta_\eps\in\sD(\dG(\Id)^{1/2})$ such that
  $\|\Psi_{\mathsf{gs}}-\eta_\eps\|<\eps$, which is possible by
  density. Then, we can estimate
  \begin{align*}
    \big|\!\Braket{\eta_\eps,a_k\Psi_{\mathsf{gs}}}\!\big|
    &\ge \big|\!\Braket{\Psi_{\mathsf{gs}},a_k\Psi_{\mathsf{gs}}}\!\big| - \big|\Braket{\eta_\eps-\Psi_{\mathsf{gs}},a_k\Psi_{\mathsf{gs}}}\big|
    \\& \ge \frac{|v(k)|}{\omega(k)}\big|\!\Braket{\Psi_{\mathsf{gs}},(B\otimes\Id)\Psi_{\mathsf{gs}}}\!\big| - \frac{|v(k)|}{\omega(k)}\|\eta_{\eps}-\Psi_{\mathsf{gs}}\|\|(B\otimes \Id)\Psi_{\mathsf{gs}}\|
    \\&\ge \Big(\big|\!\Braket{\Psi_{\mathsf{gs}},(B\otimes\Id)\Psi_{\mathsf{gs}}}\!\big|-\eps\|B\|\Big)\frac{|v(k)|}{\omega(k)}.
  \end{align*}
  On the opposite, we have
  \begin{align*}
    \int_{\sM}\big|\!\Braket{\eta_\eps,a_k\Psi_{\mathsf{gs}}}\!\big|^2 \sfd k \le \|(\dG(\Id)+1)^{1/2}\eta_\eps\|^2\|\Psi_{\mathsf{gs}}\|^2.
  \end{align*}
  Thus the function $k\mapsto
  \Big(\big|\!\Braket{\Psi_{\mathsf{gs}},(B\otimes\Id)\Psi_{\mathsf{gs}}}\!\big|-\eps\|B\|\Big)\frac{|v(k)|}{\omega(k)}$
  must be square-integrable for any $\eps>0$, which if
  $\Braket{\Psi_{\mathsf{gs}},(B\otimes\Id)\Psi_{\mathsf{gs}}}\ne 0$ is only possible if
  $v/\omega\in\fh$.
\end{proof}
\begin{rems*}
  $\phantom{i}$
  \begin{itemize}
  \item The existence result has also been shown in infinite dimensions for
    $A$ being an arbitrary operator with compact resolvent under additional
    regularity assumptions on $\omega$ and $v$ \citep[cf.][]{Gerard.2000}.  The
    compactness method from \citep{Hinrichs.2022b,HaslerHinrichsSiebert.2023}
    in principle also allows to remove the assumption of a compact resolvent.
    We assume that any restriction on $\dim\cS$, $A$, $B$ or $\omega$ is merely of
    technical nature, as long as $v/\omega\in\fh$.
  \item The condition $\langle \Psi_{\mathsf{gs}} , (B\otimes \mathds{1}) \Psi_{\mathsf{gs}} \rangle_{\mathcal{B}_0}\neq0$
    on the contrary is not technical, but actually of physical relevance: if
    $\langle \Psi_{\mathsf{gs}} , (B\otimes \mathds{1}) \Psi_{\mathsf{gs}} \rangle_{\mathcal{B}_0}=0$, then
    cancellations of infrared divergences can occur, which would otherwise
    lead to the absence of ground states as in \cref{prop:5}.  Note that $\langle
    \Psi_{\mathsf{gs}} , (B\otimes \mathds{1}) \Psi_{\mathsf{gs}} \rangle_{\mathcal{B}_0}=0$ follows whenever
    there exists a unitary $U$ on $\cS$, which commutes with $A$ and
    anti-commutes with $B$
    \citep[][Proof of Lemma~2.11]{BachBallesterosKoenenbergMenrath.2017}.
  \item Specifically, in the case of the standard spin-boson model $A=\sigma_z$ and
    $B=\sigma_x$, such a unitary is given by $U=\sigma_z$ and existence of a Fock
    ground state for $v/\omega\notin \mathfrak{h}$ and small coupling $\|v/\sqrt\omega\|$ has been proved
    in \citep{HaslerHerbst.2010,BachBallesterosKoenenbergMenrath.2017} using
    a perturbative method and in the article series
    \citep{HaslerHinrichsSiebert.2021a,HaslerHinrichsSiebert.2021b,HaslerHinrichsSiebert.2021c}
    using a non-perturbative approach, also see \cite{HinrichsPolzer.2025} for a simpler proof.  The absence of a Fock ground state
    for large coupling as indicated by algebraic results in
    \citep{Spohn.1989} has recently been proved in
    \citep{BetzHinrichsKraftPolzer.2025}.
  \item The interplay between $A$, $\omega$, $v$ and $B$ in the SB model is thus
    quite subtle: the massless SB Hamiltonian can still have a catastrophic
    behavior due to soft photons in the infrared, however the interaction
    with the spin degrees of freedom could be sufficient to take care of such
    divergence itself, at least if the coupling is sufficiently small.
  \end{itemize}
\end{rems*}

\subsection{The SB model with singular interactions and renormalization in
  the Fock representation}
\label{sec:sb-model-with}

Using the singular annihilation of \cref{def:5}, it is possible to define the
SB form on $\mathcal{S}\hat{\otimes} \FSfin(\mathscr{T})$ for any $v\in \mathscr{T}'$, analogously to the vHM
case. The types of divergences that might plague such form are also
analogous, and the Fock space energy renormalization for the SB model has
been carried out in
\citep{HinrichsLampartValentinMartin.2024,AlvarezLillLonigroValentinMartin.2025}. Let
us also briefly review the negative results of Dam and Møller, concerning the
Fock renormalization of singular sources \citep{DamMoller.2018b}.

\begin{defn}[SB form]
  \label[definition]{def:8}
  Let $(\mathscr{T},\mathfrak{h},\mathscr{T}')$ be the Gel'fand triple of the field, with the dispersion
  $\omega:\mathscr{T}\to \mathscr{T}$ being a linear homeomorphism that extends to a positive operator
  $\omega\geq 0$ on $\mathfrak{h}$, and $v\in \mathscr{T}'$. In addition, let $\mathcal{S}$ be the spin Hilbert space,
  with free spin matrix $A$ bounded and self-adjoint, and spin interaction
  matrix $B$ bounded and normal on $\mathcal{S}$. Then the \emph{SB sesquilinear form}
  $q_{B,v}: (\mathcal{S}\hat{\otimes} \FSfin(\mathscr{T}))\times(\mathcal{S}\hat{\otimes} \FSfin(\mathscr{T}))\to \mathbb{C}$ is defined by
  \begin{equation*}
    q_{B,v}(\Psi,\Phi) = \langle \Psi  , \bigl(A\otimes \mathds{1}\bigr)\Phi \rangle_{\mathcal{B}_0}+\langle \Psi  , \bigl(\mathds{1}\otimes \mathsf{d}\Gamma(\omega)\bigr)\Phi \rangle_{\mathcal{B}_0}+ \langle \bigl(B^{*}\otimes a(v)\bigr)\Psi  , \Phi \rangle_{\mathcal{B}_0}+ \langle \Psi  , \bigl(B^{*}\otimes a(v)\bigr)\Phi \rangle_{\mathcal{B}_0}\;.
  \end{equation*}
\end{defn}
It is clear that for all $v\in \mathfrak{h}$, the quadratic form $q_{B,v}(\Psi,\Phi)= \langle \Psi ,
H(B,v)\Phi \rangle_{\mathcal{B}_0}$.
\begin{prop}
  \label[proposition]{prop:9}
  Let $v/\sqrt{\omega}\in \mathfrak{h}$, and $\omega\geq 0$. Then
  \begin{equation*}
    H(B,v)= \bigl(A\otimes \mathds{1}+\mathds{1}\otimes
    \dG(\omega)\bigr)\overset{\cdot}{+} \bigl(B\otimes a^{\dagger}(v)+ B^{*}\otimes a(v)\bigr)
  \end{equation*}
  is self-adjoint on $\mathscr{D}\bigl(H(B,v)\bigr)\subseteq \mathscr{D}\bigl((1\otimes
  \mathsf{d}\Gamma(\omega))^{1/2}\bigr)$ and bounded from below.
\end{prop}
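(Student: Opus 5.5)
The plan is to apply the KLMN theorem, exactly as for \cref{prop:6} in the vHM case. I take the free operator $H_0=A\otimes\Id+\Id\otimes\dG(\omega)$, which is self-adjoint on $\sD(\Id\otimes\dG(\omega))$ and bounded below by $-\lVert A\rVert$, since $A$ is bounded. Its form domain is $\sD\bigl((\Id\otimes\dG(\omega))^{1/2}\bigr)$, and $\cS\hat\otimes\FSfin(\sT)$ is a form core for it. The perturbation is the symmetric form
\begin{equation*}
  w(\Psi)=2\Re\bigl\langle\Psi,(B^*\otimes a(v))\Psi\bigr\rangle_{\cB_0}.
\end{equation*}
It suffices to show that $w$ extends to the form domain of $H_0$ with relative form bound strictly less than one. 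KLMN then yields a unique self-adjoint, bounded-below operator whose form domain equals $\sD\bigl((\Id\otimes\dG(\omega))^{1/2}\bigr)$. In particular its operator domain is contained in that form domain.

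The key estimate is the standard bound for singular annihilation. For $v/\sqrt\omega\in\fh$ and $\Psi\in\cS\hat\otimes\FSfin(\sT)$,
\begin{equation*}
  \bigl\lVert(\Id\otimes a(v))\Psi\bigr\rVert\le \lVert v/\sqrt\omega\rVert_{\fh}\,\bigl\lVert(\Id\otimes\dG(\omega))^{1/2}\Psi\bigr\rVert .
\end{equation*}
This is checked fiberwise. On $\fh_{n}$, the component $(a(v)\Psi)_{n-1}$ is $\sqrt{n}$ times the contraction of $\Psi_n$ with $v$ in one variable. Writing $v=\sqrt\omega\,(v/\sqrt\omega)$ and applying Cauchy--Schwarz in that variable gives
\begin{equation*}
  \lVert (a(v)\Psi)_{n-1}\rVert^2\le \lVert v/\sqrt\omega\rVert^2\, n\,\bigl\langle\Psi_n,(\omega\otimes\Id\otimes\dotsm\otimes\Id)\Psi_n\bigr\rangle ,
\end{equation*}
and by symmetry the right-hand side equals $\lVert v/\sqrt\omega\rVert^2\langle\Psi_n,\dG(\omega)\Psi_n\rangle$. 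The spin factor is harmless because $\lVert B^*\otimes\Id\rVert=\lVert B\rVert$.

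Hence $\lvert w(\Psi)\rvert\le 2\lVert B\rVert\,\lVert v/\sqrt\omega\rVert\,\lVert\Psi\rVert\,\lVert(\Id\otimes\dG(\omega))^{1/2}\Psi\rVert$. Applying $2ab\le\eps a^2+\eps^{-1}b^2$ gives
\begin{equation*}
  \lvert w(\Psi)\rvert\le\eps\,\bigl\langle\Psi,(\Id\otimes\dG(\omega))\Psi\bigr\rangle+\eps^{-1}\lVert B\rVert^2\lVert v/\sqrt\omega\rVert^2\lVert\Psi\rVert^2
\end{equation*}
for every $\eps>0$. Since $A$ is bounded, this also gives form bound $\eps$ relative to $H_0+\lVert A\rVert+1$. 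Thus $w$ is infinitesimally form bounded and extends by continuity to the form domain.

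The main point of care is that the form perturbation is only defined a priori on $\cS\hat\otimes\FSfin(\sT)$, since $v\notin\fh$ is allowed. I will note that $\cS\hat\otimes\FSfin(\sT)$ is a form core for $H_0$. This holds because $\sT$ is a core for $\omega$, hence for $\sqrt\omega$, and $\FSfin$ over a core of $\omega$ is a core for $\dG(\omega)^{1/2}$. The bound above then extends $w$ uniquely to the full form domain. When $v\in\fh$, the resulting operator coincides with the Fock SB Hamiltonian of \cref{def:7}, by uniqueness in KLMN.
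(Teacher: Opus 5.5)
Your proposal is correct and follows the paper's route. The paper also applies the KLMN theorem, treating $q_{B,v}$ as the free form plus an arbitrarily small relative form perturbation when $v/\sqrt{\omega}\in\mathfrak{h}$; you make explicit what it leaves implicit, namely the bound $\lVert a(v)\Psi\rVert\le\lVert v/\sqrt{\omega}\rVert_{\mathfrak{h}}\lVert \dG(\omega)^{1/2}\Psi\rVert$ and the continuous extension of the interaction form from the core $\mathcal{S}\hat{\otimes}\FSfin(\mathscr{T})$ to the full form domain.
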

\begin{proof}
  Whenever $v/\sqrt{\omega}\in \mathfrak{h}$, the form $q_{B,v}$ is closed on the free form
  domain, and bounded from below by KLMN theorem \citep[][Thm.\
  X.17]{ReedSimon.1975}, since it is the sum of the free form $\langle \,\cdot \, , H_0\,\cdot
  \, \rangle_{}$ and an arbitrarily small relative perturbation.
\end{proof}
Explicit operator-theoretic constructions of $H(B,v)$ in this
form-perturbation case were discussed in
\citep{Lonigro.2022,LillLonigro.2024}.

Similarly to \cref{prop:7}, with an analogous proof that makes use of a
unitary dressing transformation, the SB model also allows for an energy
renormalization scheme, when the spin-field interaction ceases to be form-bounded
with respect to $H_0$.
\begin{prop}
  Let $v_\alpha\xrightarrow{\alpha} v$ be a $\fh$-mollification of $v\in\sT'$ such that $v_\alpha/\sqrt\omega\in\fh$ for all $\alpha$ and $\fh\ni v_\alpha/\omega\xrightarrow[\alpha]{\fh} v/\omega\in\fh$. Further, let $B$ be
  a bounded normal operator on $\cS$.  Then there exists a unique
  self-adjoint operator $H_{\mathsf{ren}}(B,v)$ on $\cS\otimes\FS(\fh)$ such that
  \[ H(B,v_\alpha) + B^*B\|v_\alpha/\sqrt\omega\|_{\mathfrak{h}}^2 \xrightarrow[\alpha]{\lVert \,\cdot \,
      \rVert_{}^{}\mathsf{-res}} H_{\mathsf{ren}}(B,v).\]
\end{prop}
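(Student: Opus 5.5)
Following the proof of \cref{prop:7}, I would use the Gross dressing $W_\alpha\coloneqq \e^{B^{*}\otimes a(g_\alpha)-B\otimes a^{\dagger}(g_\alpha)}$ with $g_\alpha=-v_\alpha/\omega\in\fh$, and the limiting dressing $W\coloneqq \e^{B^{*}\otimes a(g)-B\otimes a^{\dagger}(g)}$ with $g=-v/\omega\in\fh$. Since $B$ is normal, $B$ and $B^{*}$ commute, so $X_\alpha\coloneqq B\otimes a^{\dagger}(g_\alpha)-B^{*}\otimes a(g_\alpha)$ is anti-symmetric on $\cS\hat\otimes\FSfin(\fh)$. By the joint spectral theorem for $B$, it is a direct integral over $\sigma(B)$ of the anti-self-adjoint operators $b\,a^{\dagger}(g_\alpha)-\bar b\,a(g_\alpha)$. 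Hence $W_\alpha$ is unitary, and fibrewise it is the Weyl operator translating $a(f)$ by $\bar b\langle f,g_\alpha\rangle$.

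The first step is the algebraic computation of $\hat H_\alpha\coloneqq W_\alpha\bigl(H(B,v_\alpha)+B^{*}B\|v_\alpha/\sqrt\omega\|^2\bigr)W_\alpha^{*}$ on a common core, for instance $\cS\hat\otimes\FSfin(\sD(\omega))$ or finite linear combinations of exponential vectors. Working fibrewise in $b$ is the vHM computation of \cref{prop:5}. The field part $\dG(\omega)$ generates the terms $-B\otimes a^{\dagger}(v_\alpha)-B^{*}\otimes a(v_\alpha)+B^{*}B\|v_\alpha/\sqrt\omega\|^2$. The linear terms shift by $-2B^{*}B\|v_\alpha/\sqrt\omega\|^2$. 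These cancel the interaction, and the energy counterterm removes the constant. What survives is the conjugated spin term
\[
\hat H_\alpha = W_\alpha(A\otimes\Id)W_\alpha^{*}+\Id\otimes\dG(\omega)\;.
\]
The difficulty is that $A$ does not commute with $B$. However, $\|W_\alpha(A\otimes\Id)W_\alpha^{*}\|=\|A\|$, so $\hat H_\alpha$ is a bounded perturbation of the self-adjoint operator $H_0'\coloneqq\Id\otimes\dG(\omega)$. It is therefore self-adjoint on $\sD(H_0')$ and bounded below by $-\|A\|$ uniformly in $\alpha$. The same definition with $g$ gives $\hat H\coloneqq W(A\otimes\Id)W^{*}+H_0'$, and I set $H_{\mathsf{ren}}(B,v)\coloneqq W^{*}\hat HW$.

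Convergence then splits into two parts. First, $W_\alpha\to W$ strongly. This follows because $g_\alpha\to g$ in $\fh$ and the Weyl operators are strongly continuous in the test function, uniformly over the fibres $b$ in the compact set $\sigma(B)$; this gives strong convergence of the direct integral by dominated convergence. Consequently $W_\alpha(A\otimes\Id)W_\alpha^{*}\to W(A\otimes\Id)W^{*}$ strongly, and the resolvents of $\hat H_\alpha$ converge strongly to that of $\hat H$ by the second resolvent identity.

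Second, norm resolvent convergence is the main obstacle. The plan is to apply \citep[Thm.~A.1]{GriesemerWuensch.2018} as in \cref{prop:7}, which requires two norm-convergence statements. Both should reduce to showing $(H_0'+1)^{-1/2}(W_\alpha-W)\to0$ in norm. I would prove this by the Duhamel formula
\[
W_\alpha-W=\int_0^1 W_{s}\,\bigl(X_\alpha-X\bigr)\,W_{s}'\,\rmd s\;,
\]
where $W_s,W_s'$ are the interpolating dressings, estimated with the relative bounds $\|a^{\sharp}(h)(\dG(\omega)+1)^{-1/2}\|\lesssim\|h/\sqrt\omega\|+\|h\|$. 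Two points need care. The bounds must be commuted through $W_s$, which preserves the form domain of $H_0'$ uniformly because $g,g_\alpha\in\fh$ and $v/\sqrt\omega$-type quantities stay bounded. The hypothesis only gives $g_\alpha\to g$ in $\fh$, not $\sqrt\omega$-weighted convergence, so if needed I would interpolate, using the $\fh$-convergence of $g_\alpha$ together with the uniform form bound. Uniqueness of the limit is automatic from norm resolvent convergence.
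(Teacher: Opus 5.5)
Your dressing computation is right. By normality of $B$, $W_\alpha$ acts over $\sigma(B)$ as the vHM dressing with source $b\,v_\alpha$, and the dressed operator is $W_\alpha(A\otimes\Id)W_\alpha^*+\Id\otimes\dG(\omega)$. The paper gives no argument of its own here; it simply cites \citep[Thm.~2.7]{HinrichsLampartValentinMartin.2024}.

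The gap is in the convergence step. You claim that $W_s$ preserves the form domain of $H_0'$ uniformly because ``$v/\sqrt\omega$-type quantities stay bounded''. This is false. In the regime $v/\sqrt\omega\notin\fh$ that the proposition is about, $\|v_\alpha/\sqrt\omega\|_\fh\to\infty$: this is precisely the divergent counterterm. The operator $\e^{s(B^*\otimes a(g_\alpha)-B\otimes\ad(g_\alpha))}$ maps $\sD((H_0')^{1/2})$ into itself only with a bound of order $1+s\|B\|\,\|v_\alpha/\sqrt\omega\|_\fh$, and $W$ does not preserve $\sD((H_0')^{1/2})$ at all. So your Duhamel estimate gives nothing uniform.

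You can avoid commuting anything through a dressing. Set $h_\alpha\coloneqq g_\alpha-g$ and $\mathcal W_\alpha\coloneqq\e^{B^*\otimes a(h_\alpha)-B\otimes\ad(h_\alpha)}$. The Weyl relations give $W_\alpha=(\e^{i\theta_\alpha B^*B}\otimes\Id)\,\mathcal W_\alpha W$ with real $\theta_\alpha=\Im\langle h_\alpha,g\rangle_\fh\to0$. Hence $\|(H_0'+1)^{-1/2}(W_\alpha-W)\|=\|(\mathcal W_\alpha^*(\e^{-i\theta_\alpha B^*B}\otimes\Id)-\Id)(H_0'+1)^{-1/2}\|$. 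Duhamel, with the Weyl operator to the left of the field operator, bounds this by $\|B\|^2|\theta_\alpha|+\|B\|\bigl(2\|h_\alpha/\sqrt\omega\|_\fh+\|h_\alpha\|_\fh\bigr)$. This closes your argument when $\inf\sigma(\omega)>0$.

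When $\inf\sigma(\omega)=0$, however, $\|h_\alpha/\sqrt\omega\|_\fh\to0$ does not follow from the hypotheses. For instance, with $\omega(k)=\lvert k\rvert$ on $L^2(\IR^3)$, add $r_\alpha^{-1}\lvert k\rvert\chr_{\{\lvert k\rvert<r_\alpha\}}$ to $v_\alpha$ with $r_\alpha\to0$. Then $\|(H_0'+1)^{-1/2}(W_\alpha-W)\|\not\to0$. Test on $\psi\otimes\e^{\ad(f)-a(f)}\Omega$, with $\psi$ in a spectral subspace of $B$ away from $0$, $\|\sqrt\omega f\|_\fh\le1$, and $\Im\langle f,h_\alpha\rangle_\fh$ of order one. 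There $\mathcal W_\alpha$ produces a phase of order one at bounded energy. Your interpolation remark cannot repair this: there is no uniform form bound, and what is missing is infrared control.

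The way out is not to dress $A$ at all. Note that $H(B,v_\alpha)+B^*B\|v_\alpha/\sqrt\omega\|_\fh^2=A\otimes\Id+W_\alpha^*(\Id\otimes\dG(\omega))W_\alpha$. So $A\otimes\Id$ is an $\alpha$-independent bounded perturbation, which preserves norm resolvent convergence, and the non-commutativity of $A$ and $B$ plays no role in this proposition. Conjugating with $W$, it suffices that $\mathcal W_\alpha^*(\Id\otimes\dG(\omega))\mathcal W_\alpha\to\Id\otimes\dG(\omega)$ in norm resolvent sense. Fibrewise, with $D(f)\coloneqq\e^{\ad(f)-a(f)}$, this means $\|D(ch_\alpha)^*(\dG(\omega)-z)^{-1}D(ch_\alpha)-(\dG(\omega)-z)^{-1}\|\to0$ uniformly in $|c|\le\|B\|$.

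This holds as soon as $\|h_\alpha\|_\fh\to0$. Split $h_\alpha$ at frequency $\omega=\Lambda$.
\begin{itemize}
\item The low-frequency Weyl operator preserves the form domain and produces a form perturbation of $\dG(\omega)$ of relative size $O(\sqrt\Lambda\|h_\alpha\|_\fh)$.
\item On the high-frequency modes, the resolvent is $O(\Lambda^{-1})$-close to its compression onto their vacuum. The corresponding coherent state displaces that vacuum by only $O(\|h_\alpha\|_\fh)$.
\end{itemize}
Then let $\Lambda\to\infty$ with $\Lambda\|h_\alpha\|_\fh^2\to0$.
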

\begin{proof}
  A similar construction of $H_{\mathsf{ren}}(B,v)$ to that in \cref{prop:7}
  is given in \citep[][Thm.~2.7]{HinrichsLampartValentinMartin.2024}. A
  different construction was also achieved in
  \citep{AlvarezLillLonigroValentinMartin.2025} and in the special case of
  the standard SB model already in \citep{DamMoller.2018b}.
\end{proof}
In the super-critical case $v/\omega\notin\fh$, the existence of an energy-renormalized
self-adjoint and lower-bounded Hamiltonian on Fock space can be excluded
\citep{DamMoller.2018b,HinrichsLampartValentinMartin.2024}.
\begin{prop}
  If $\inf\sigma(\omega)>0$ and $v/\omega\notin\fh$, then for any $\sT$-mollification $v_\alpha$ of
  $v$ and any bounded normal operator $B$ on $\cS$, the sequence
  $H(B,v_\alpha)+B^*B\|v_\alpha/\sqrt\omega\|_\fh^2$ can not converge in strong resolvent
  sense.
\end{prop}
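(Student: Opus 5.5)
Since the statement is proved in \citep{DamMoller.2018b,HinrichsLampartValentinMartin.2024}, the plan is to reconstruct the mechanism behind it rather than invent a new one. We argue by contradiction: suppose $H_\alpha\coloneqq H(B,v_\alpha)+B^*B\|v_\alpha/\sqrt\omega\|_\fh^2$ converges in strong resolvent sense to some self-adjoint $H_\infty$. We choose a quantity that stays controlled under strong resolvent convergence but is forced to degenerate when $v/\omega\notin\fh$.

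The first step is the regularized unitary dressing. Since $\inf\sigma(\omega)>0$, each $g_\alpha=-v_\alpha/\omega$ lies in $\fh$, and $W_\alpha\coloneqq\e^{B\otimes a^\dagger(g_\alpha)-B^*\otimes a(g_\alpha)}$ is unitary because $B$ is normal. Conjugating as in \cref{prop:7}, $W_\alpha^*H_\alpha W_\alpha$ equals $A\otimes\Id+\Id\otimes\dG(\omega)$ plus an error coming from $[A,B]\neq0$. Call this error $R_\alpha$. It is a function of $A$ dressed by Weyl operators of $g_\alpha$, so it is uniformly bounded because $A$ is bounded.

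The second step is to extract the divergence. If $v/\omega\notin\fh$, then $\|g_\alpha\|_\fh\to\infty$ on a subnet, since weak-$*$ convergence together with a bounded $\fh$-norm would give $v/\omega\in\fh$. The dressing then pushes every state weakly to zero, as in \cref{lemma:7}. Concretely, for fixed $\Psi$ in the spectral subspace of $B$ with eigenvalue $b\neq0$, $W_\alpha\Psi$ behaves like a coherent-state vector with parameter $b g_\alpha$, and $\langle\Phi,W_\alpha\Psi\rangle\to0$. On $\ker B$ nothing is dressed.

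The third step uses boundedness to get weak convergence of resolvents. Since $H_\alpha\ge\inf\sigma(A)-C$ uniformly, the resolvents satisfy
\[
(H_\alpha+z)^{-1}=W_\alpha\bigl(H_0+R_\alpha+z\bigr)^{-1}W_\alpha^*,
\]
and $(H_0+R_\alpha+z)^{-1}$ is bounded uniformly in $\alpha$. Taking matrix elements between fixed vectors in the range of $B$, this gives $(H_\alpha+z)^{-1}\rightharpoonup0$ there, while on $\ker B$ the resolvent converges to the free one. A strong resolvent limit $(H_\infty+z)^{-1}$ would then vanish on $\ran(B\otimes\Id)$, contradicting the injectivity of a resolvent whenever $B\neq0$. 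The trivial case $B=0$ must be excluded, or read as implicit in the statement.

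The main obstacle is the second step together with its use in the third: making the weak decay of the dressed resolvent rigorous despite $R_\alpha$ mixing the spectral subspaces of $B$. We plan to handle this with a Dyson expansion in $R_\alpha$, using the uniform bound $\|R_\alpha\|\le 2\|A\|$ and the fact that each term still carries the factor $W_\alpha$ on the outside.
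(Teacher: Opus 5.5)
The paper only cites Theorem~2.8 of Hinrichs--Lampart--Valent\'in Mart\'in, so the question is whether your reconstruction closes. Your architecture is viable: dress with $W_\alpha$, show that any limit of $(H_\alpha+z)^{-1}$ annihilates $\ran(B\otimes\Id)$, and contradict injectivity of a resolvent. Your caveat $B\neq0$ is also correct. The gap is in the third step, which does not follow from what you invoke. Knowing that $W_\alpha\rightharpoonup0$ on $\ran(B\otimes\Id)$ and that the middle factor is uniformly bounded says nothing about $W_\alpha X_\alpha W_\alpha^*$; take $X_\alpha=\Id$. A Dyson expansion in $R_\alpha$ ``with $W_\alpha$ on the outside'' cannot repair this, because its zeroth-order term $W_\alpha(H_0+z)^{-1}W_\alpha^*$ is already of exactly that form. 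The symptom is that you use $\inf\sigma(\omega)>0$ only to get $g_\alpha\in\fh$, whereas the statement is false without a mass. Take $\omega(k)=\lvert k\rvert$ on $L^2(\mathbb{R}^3)$, $\cS=\mathbb{C}$, $A=0$, $B=1$ and $v=\chr_{\{\lvert k\rvert<1\}}/\sqrt{2\omega}$. Then $v/\sqrt\omega\in\fh$ and $v/\omega\notin\fh$, so $\lVert g_\alpha\rVert_\fh\to\infty$ and $W_\alpha\rightharpoonup0$. Yet, writing $W(f)=\e^{a^\dagger(f)-a(f)}$, the operators $H(v_\alpha)+\lVert v_\alpha/\sqrt\omega\rVert_\fh^2=W(g_\alpha)\dG(\omega)W(g_\alpha)^*$ converge in norm resolvent sense (by KLMN) to a self-adjoint operator whenever $v_\alpha/\sqrt\omega\to v/\sqrt\omega$ in $\fh$. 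All your steps go through unchanged in this example, so they would ``prove'' a false conclusion.

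What is missing is the way the mass acts. Your dressing drives states to infinite field energy, since $\lVert\sqrt\omega\,bg_\alpha\rVert_\fh^2\ge m\lvert b\rvert^2\lVert g_\alpha\rVert_\fh^2\to\infty$ with $m=\inf\sigma(\omega)$. The free resolvent suppresses high energies, so $(\dG(\omega)+z)^{-1}W(bg_\alpha)^*\to0$ \emph{strongly} for every $b\neq0$. On exponential vectors this is explicit:
\[
\langle W(f)^*\varepsilon(h),\e^{-t\dG(\omega)}W(f)^*\varepsilon(h)\rangle=\exp\bigl(-\langle f,(1-\e^{-t\omega})f\rangle+2\Re\langle f,(1-\e^{-t\omega})h\rangle+\langle h,\e^{-t\omega}h\rangle\bigr),
\]
with $1-\e^{-t\omega}\ge1-\e^{-tm}$. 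The identity $\lVert(\dG(\omega)+z)^{-1}x\rVert^2=\int_0^\infty t\e^{-tz}\langle x,\e^{-t\dG(\omega)}x\rangle\,\mathrm{d}t$ together with a density argument then gives the claim. Next, do not expand around $H_0$ with $R_\alpha$. Expand instead around $K_\alpha\coloneqq H_\alpha-A\otimes\Id=W_\alpha(\Id\otimes\dG(\omega))W_\alpha^*$, which on the spectral subspace of $B$ at $b$ acts as $W(bg_\alpha)\dG(\omega)W(bg_\alpha)^*$. Hence $(K_\alpha+z)^{-1}\to E_B(\{0\})\otimes(\dG(\omega)+z)^{-1}$ strongly. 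For $z>\lVert A\rVert$, the Neumann series of $(H_\alpha+z)^{-1}$ in the fixed bounded operator $A\otimes\Id$ converges strongly term by term, since products of uniformly bounded, strongly convergent nets converge strongly. Every term has $(K_\alpha+z)^{-1}$ as its rightmost factor, so the strong limit vanishes on $\ran(E_B(\mathbb{C}\smallsetminus\{0\})\otimes\Id)\neq\{0\}$, which is your contradiction. On $\ker B$ the limit is the resolvent of $E_B(\{0\})AE_B(\{0\})\otimes\Id+\Id\otimes\dG(\omega)$, not of the free operator, but this plays no role in the contradiction.
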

\begin{proof}
  A slightly more general version of this result is proven in
  \citep[][Thm.~2.8]{HinrichsLampartValentinMartin.2024}.
\end{proof}
For so-called permuting interactions, \emph{i.e.}, if for any eigenvector $\psi$
of $A$, then $B\psi\perp\psi$ is also an eigenvector of $A$, this result can be traced
back to the observation that the appropriately dressing-transformed model
converges to a trivial one without any spin-field interaction
\citep{DamMoller.2018b,AlvarezLillLonigroValentinMartin.2025}.

\subsection{SB wave function renormalization I: renormalized Hilbert space}
\label{sec:wave-funct-renorm-3}
The wave function renormalization of the SB model follows closely the one for the vHM
model, taking into account the additional degrees of freedom in a natural
fashion. To that extent, let us consider $\mathcal{S}\hat{\otimes}\FSfin(\mathscr{T})$, and define the
\emph{SB singular dressing} $\e^{B^{*}\otimes a(g)}:\mathcal{S}\hat{\otimes}\FSfin(\mathscr{T})\to
\mathcal{S}\hat{\otimes}\FSfin(\mathscr{T})$ by its action on pure tensors: given $\varphi\in \mathcal{S}$ and $\Xi\in
\FSfin(\mathscr{T})$, then
\begin{equation*}
  \e^{B^{*}\otimes a(g)}\bigl(\varphi\otimes \Xi\bigr)= \sum_{k\in \mathbb{N}}^{}\frac{1}{k!}\bigl((B^{*})^k\varphi\bigr)\otimes \bigl((a(g))^k\Xi\bigr)\;.
\end{equation*}
From this decomposition and \cref{prop:1}, the map $\e^{B^{*}\otimes a(g)}$ is also
injective.
\begin{prop}
  \label[proposition]{prop:10}
  For any $B$ normal and bounded on $\mathcal{S}$, and $g\in \mathscr{T}'$, the map $\e^{B^{*}\otimes
    a(g)}$ is injective on $\mathcal{S}\hat{\otimes}\FSfin(\mathscr{T})$. Therefore,
  \begin{equation*}
    \langle \Psi  , \Phi \rangle_{B,g}\coloneqq  \bigl\langle \e^{B^{*}\otimes a(g)}\Psi  , \e^{B^{*}\otimes a(g)}\Phi \rangle_{\mathcal{B}_0}
  \end{equation*}
  is a non-degenerate inner product on $\mathcal{S}\hat{\otimes}\FSfin(\mathscr{T})$.
\end{prop}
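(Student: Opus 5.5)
The plan is to imitate the proof of \cref{prop:1}, using the fact that $\e^{B^{*}\otimes a(g)}$ acts as the identity plus an operator that strictly lowers the particle number. First we check that the map is well defined on $\mathcal{S}\hat{\otimes}\FSfin(\mathscr{T})$. On a pure tensor $\varphi\otimes\Xi$ the series is finite by \cref{lemma:2}, since $(a(g))^k\Xi=0$ for $k>k_\Xi$. The formula is multilinear in $(\varphi,\Xi)$, so it extends uniquely and linearly to the algebraic tensor product. The image again lies in $\mathcal{S}\hat{\otimes}\FSfin(\mathscr{T})$ by the first point of \cref{lemma:2}.

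For injectivity we grade by particle number. Every $\Psi\in\mathcal{S}\hat{\otimes}\FSfin(\mathscr{T})$ decomposes as $\Psi=\sum_{n=0}^{N}\Psi_n$ with $\Psi_n\in\mathcal{S}\hat{\otimes}(\mathscr{T}^{\hat{\otimes}_{\mathsf{s}} n})$, i.e.\ $\Psi_n$ is the component in $\mathcal{S}\otimes\mathfrak{h}_n$. Each term $(B^{*})^k\otimes(a(g))^k$ maps the $n$-particle sector into the $(n-k)$-particle sector. Hence the $n$-particle component of $\e^{B^{*}\otimes a(g)}\Psi$ is
\begin{equation*}
  \bigl(\e^{B^{*}\otimes a(g)}\Psi\bigr)_n=\Psi_n+\sum_{k\ge1}\tfrac{1}{k!}\bigl((B^{*})^k\otimes (a(g))^k\bigr)\Psi_{n+k}\;.
\end{equation*}
Suppose $\e^{B^{*}\otimes a(g)}\Psi=0$ with $\Psi\neq0$, and let $N$ be the largest index with $\Psi_N\neq0$. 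Since the sum above involves only $\Psi_{N+k}$ with $k\ge1$, all of which vanish, the $N$-th component of the image equals $\Psi_N$. So $\Psi_N=0$, a contradiction, and injectivity follows exactly as in \cref{prop:1}. Note that this step needs neither the normality of $B$ nor any property of $B^{*}$ beyond boundedness, which is only used to make sense of $(B^{*})^k$ on $\mathcal{S}$.

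Finally, $\langle\cdot,\cdot\rangle_{B,g}$ is the pullback of the inner product of $\mathcal{B}_0$ under a linear map. Sesquilinearity and Hermitian symmetry are therefore inherited, and $\langle\Psi,\Psi\rangle_{B,g}=\lVert\e^{B^{*}\otimes a(g)}\Psi\rVert^2_{\mathcal{B}_0}\ge0$. Non-degeneracy holds because $\langle\Psi,\Psi\rangle_{B,g}=0$ forces $\e^{B^{*}\otimes a(g)}\Psi=0$, and hence $\Psi=0$ by injectivity.

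The only mildly delicate point is the first step: the extension from pure tensors to the algebraic tensor product must be well defined, and the fiber grading must be compatible with the tensor structure. Both are handled by noting that $(B^{*})^k\otimes(a(g))^k$ is a well-defined linear map on the algebraic tensor product, by the universal property, and that it preserves the grading up to the shift by $k$.
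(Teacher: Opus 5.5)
Your proposal is correct and is essentially the paper's argument: the paper obtains injectivity directly from the pure-tensor series expansion of $\e^{B^{*}\otimes a(g)}$ together with the top-fiber argument of \cref{prop:1}, and you simply spell this out on $\mathcal{S}\hat{\otimes}\FSfin(\mathscr{T})$. Your observation that normality of $B$ plays no role in this step is also correct.
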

Thus, similar to \cref{def:2}, we can define:
\begin{defn}[Renormalized Spin-Boson Space]
  \label[definition]{def:3}
  For any $B$ normal and bounded on $\mathcal{S}$ and $g\in \mathscr{T}'$, the \emph{renormalized
    SB space} $\mathcal{B}_{B,g}$ is defined as
  \begin{equation*}
    \mathcal{B}_{B,g}= \overline{\mathcal{S}\hat{\otimes} \FSfin(\mathscr{T})}^{\lVert\, \cdot \,  \rVert_{B,g}^{}}\;.
  \end{equation*}
\end{defn}
As for the vHM model, the inner product $\langle \cdot , \cdot \rangle_{B,g}$ can be seen as the
outcome of a wave function renormalization procedure, in which, due to the
presence of the spin degrees of freedom, the divergent Fock vacuum partial
expectation is operator-valued in $\mathcal{S}$.
\begin{prop}
  \label[proposition]{prop:11}
  Let $B$ be bounded and normal on $\mathcal{S}$, let $g\in \mathscr{T}'$ and let $g_\alpha$ be a $\fh$-mollification of $g$. Then, for all $\Psi,\Phi\in
  \mathcal{S}\hat{\otimes}\FSfin(\mathscr{T})$,
  \begin{equation*}
    \langle \Psi  , \Phi \rangle_{B,g}= \lim_{\alpha} \bigl\langle \e^{-\frac{1}{2}B^{*}B \lVert g_\alpha  \rVert_{\mathfrak{h}}^2}\,\e^{B\otimes a^{\dagger}(g_\alpha)}\Psi  , \,\e^{-\frac{1}{2}B^{*}B \lVert g_\alpha  \rVert_{\mathfrak{h}}^2}\, \e^{B\otimes a^{\dagger}(g_\alpha)}\Phi \bigr\rangle_{\mathcal{B}_0}\;.
  \end{equation*}
\end{prop}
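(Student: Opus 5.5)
Our plan is to reduce the statement to the vHM case (\cref{lemma:3} and \cref{prop:2}) by diagonalizing the normal operator $B$, and to establish, for regular $g\in\fh$, the exact identity
\begin{equation*}
  \bigl\langle \e^{-\frac12 B^*B\lVert g\rVert_\fh^2}\e^{B\otimes a^\dagger(g)}\Psi , \e^{-\frac12 B^*B\lVert g\rVert_\fh^2}\e^{B\otimes a^\dagger(g)}\Phi\bigr\rangle_{\cB_0} = \bigl\langle \e^{B^*\otimes a(g)}\Psi , \e^{B^*\otimes a(g)}\Phi\bigr\rangle_{\cB_0}
\end{equation*}
for $\Psi,\Phi\in\cS\hat\otimes\FSfin(\sT)$. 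The statement then follows by applying this with $g=g_\alpha$ and passing to the limit in the right-hand side. That limit is controlled by the third item of \cref{lemma:2}: on pure tensors $\varphi\otimes\Xi$ the dressing $\e^{B^*\otimes a(g_\alpha)}$ is a finite sum $\sum_{k\le k_\Xi}\frac1{k!}(B^*)^k\varphi\otimes a(g_\alpha)^k\Xi$, and each term converges in $\cB_0$ as $g_\alpha\to g$ in $\sigma(\sT',\sT)$. Hence the right-hand side converges to $\langle\Psi,\Phi\rangle_{B,g}$ by continuity of the inner product.

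For the identity with $g\in\fh$, we use the spectral theorem for the bounded normal $B$, writing $B=\int_{\sigma(B)} z\,\rmd E(z)$. Since $B^*B$, $B$ and $B^*$ all commute and are functions of $B$, all operators involved act fiberwise in the spectral variable $z$. We expect the operator $\e^{B\otimes a^\dagger(g)}$ on vectors $\varphi\otimes\Xi$ to be defined by its norm-convergent series, and each term $B^k\otimes a^\dagger(g)^k$ to be the spectral integral of $z^k a^\dagger(g)^k$. On the spectral fiber at $z$ the claim therefore reduces to the scalar vHM identity with $g$ replaced by $zg$:
\begin{equation*}
  \e^{-|z|^2\lVert g\rVert^2}\langle \e^{a^\dagger(zg)}\Xi,\e^{a^\dagger(zg)}\Xi'\rangle = \langle \e^{a(\bar z g)}\Xi,\e^{a(\bar z g)}\Xi'\rangle.
\end{equation*}
Here $\bar z\,a(g)=a(zg)$ by antilinearity, so the right side is $\langle\e^{a(zg)}\Xi,\e^{a(zg)}\Xi'\rangle$, and $\lVert \e^{a^\dagger(zg)}\Omega\rVert^2=\e^{|z|^2\lVert g\rVert^2}$. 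This is exactly \cref{lemma:3} applied to $zg\in\fh$.

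To make the fiberwise argument rigorous without direct integrals, we would instead verify the identity on vectors $\varphi\otimes\Xi$ and $\varphi'\otimes\Xi'$ by expanding both sides in double series $\sum_{j,k}$. The left side is $\sum_{j,k}\frac1{j!k!}\langle\varphi,(B^*)^jB^k\e^{-B^*B\lVert g\rVert^2}\varphi'\rangle\langle a^\dagger(g)^j\Xi,a^\dagger(g)^k\Xi'\rangle$. Each spin matrix element equals $\int \bar z^j z^k \e^{-|z|^2\lVert g\rVert^2}\rmd\langle\varphi,E(z)\varphi'\rangle$. Exchanging sum and integral, which is justified by absolute convergence since $B$ is bounded and the creation series converges absolutely in norm, turns the left side into $\int F(z)\,\rmd\langle\varphi,E\varphi'\rangle$. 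Here $F(z)$ is the scalar left side of the display above, and the right side of the identity likewise equals $\int G(z)\,\rmd\langle\varphi,E\varphi'\rangle$. Since $F=G$ pointwise, the identity follows, and it extends to finite sums by sesquilinearity.

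The main obstacle we anticipate is justifying this interchange of the series with the spectral integral, i.e.\ the uniform convergence in $z\in\sigma(B)$. We would handle it using the bound $\lVert a^\dagger(g)^k\Xi\rVert\le \lVert g\rVert^k\sqrt{(n+k)!/n!}\,\lVert\Xi\rVert$ for $\Xi$ in the first $n$ fibers, together with $|z|\le\lVert B\rVert$. These give a summable majorant independent of $z$, so dominated convergence applies.
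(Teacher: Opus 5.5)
Your proof is correct and follows the route the paper takes for the vHM analogue (\cref{prop:2}, obtained by combining the regular identity of \cref{lemma:3} with the continuity in \cref{lemma:2}): establish the exact identity for regular $g_\alpha\in\fh$, then pass to the limit in the finitely many nonzero terms of $\e^{B^*\otimes a(g_\alpha)}\Psi$. The only difference is how you get the regular identity: you reduce spectrally to \cref{lemma:3} with $zg$. The paper would instead read it off from the BCH relation $\e^{B^*\otimes a(g)}\e^{B\otimes a^\dagger(g)}=\e^{B^*B\lVert g\rVert_\fh^2}\e^{B\otimes a^\dagger(g)}\e^{B^*\otimes a(g)}$ (whose commutator is central because $B$ is normal), recorded in the proof of \cref{lemma:99}. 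One small slip in your display: the spectral fiber of $\e^{B^*\otimes a(g)}$ is $\e^{\bar z a(g)}=\e^{a(zg)}$, not $\e^{a(\bar z g)}$, although your next sentence already uses the correct form.
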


\subsection*{The SB dressing maps}
\label{sec:sb-dressing-maps}

Let us now focus on the SB singular dressings $\e^{B^{*}\otimes a(g)}$ as maps
between renormalized SB spaces. We can densely define, for a fixed normal $B\in
\mathcal{B}(\mathcal{S})$, the maps between SB spaces with different $g$s as follows: for any
$g,g'\in \mathscr{T}'$, define the isometries $U_{B,g,g'}: \mathcal{B}_{B,g}\to \mathcal{B}_{B,g'}$ as the
unique extension of
\begin{equation}
  \label{eq:ugB}
  u_{B,g,g'}\Psi= \e^{B^{*}\otimes a(g-g')}\Psi\;,
\end{equation}
with $\Psi\in \mathcal{S}\hat{\otimes} \FSfin(\mathscr{T})$.
\begin{prop}
  \label[proposition]{prop:unitarygroupspin}
  For any normal and bounded $B$ on $\mathcal{S}$, the collection $(U_{B,g,g'})_{g,g'\in
    \mathscr{T}'}$ forms a two-parameter unitary group, namely the following properties
  hold true: for any $g,g',g''\in \mathscr{T}'$,
  \begin{itemize}
  \item $U_{B,g,g'}: \mathcal{B}_{B,g} \to \mathcal{B}_{B,g'}$ is unitary;
  \item $U_{B,g,g}= \mathds{1}$;
  \item $U_{B,g',g}= U^{-1}_{B,g,g'}$;
  \item $U_{B,g',g''}U_{B,g,g'}= U_{B,g,g''}$.
  \end{itemize}
\end{prop}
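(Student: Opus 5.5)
The plan is to follow the proof of \cref{prop:unitarygroup} closely, with the spin factor carried along. I will first establish the algebraic group relations on the dense subspace $\mathcal{S}\hat{\otimes}\FSfin(\mathscr{T})$. Then I will show that each isometry $U_{B,g,g'}$ has dense range, using exponential vectors tensored with spin vectors.

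\emph{Step 1: group relations.} The operators $B^{*}\otimes a(h)$ and $B^{*}\otimes a(h')$ commute on $\mathcal{S}\hat{\otimes}\FSfin(\mathscr{T})$, because $B^{*}$ commutes with itself and singular annihilation operators commute with each other (on each fiber they act by contraction with $h$ and $h'$). By \cref{lemma:2} they are nilpotent on each vector, so the truncated exponential series can be multiplied out. This gives
\begin{equation*}
  \e^{B^{*}\otimes a(h)}\e^{B^{*}\otimes a(h')}=\e^{B^{*}\otimes a(h+h')}
\end{equation*}
on $\mathcal{S}\hat{\otimes}\FSfin(\mathscr{T})$, with $\e^{0}=\mathds{1}$. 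Applying this with $h=g'-g''$ and $h'=g-g'$ gives $u_{B,g',g''}u_{B,g,g'}=u_{B,g,g''}$ and $u_{B,g,g}=\mathds{1}$ on the dense subspace. The identities then extend by continuity, since all maps are isometries by construction of the norms $\lVert\cdot\rVert_{B,g}$ (as in \cref{lemma:4}). Once unitarity is known, $U_{B,g',g}=U_{B,g,g'}^{-1}$ follows from $U_{B,g',g}U_{B,g,g'}=U_{B,g,g}=\mathds{1}$.

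\emph{Step 2: exponential vectors in $\mathcal{B}_{B,g}$.} For $\varphi\in\mathcal{S}$ and $f\in\mathscr{T}$, I will show that $\varphi\otimes\varepsilon(f)=\sum_n\varphi\otimes\varepsilon_n(f)$ converges absolutely in $\mathcal{B}_{B,g}$ for every $g\in\mathscr{T}'$. As in \cref{lemma:6}, an explicit computation gives
\begin{equation*}
  \e^{B^{*}\otimes a(g)}\bigl(\varphi\otimes\varepsilon_n(f)\bigr)=\sum_{k=0}^n\tfrac{\langle g,f\rangle^k}{k!}\bigl((B^{*})^k\varphi\bigr)\otimes\varepsilon_{n-k}(f).
\end{equation*}
Hence
\begin{equation*}
  \lVert\varphi\otimes\varepsilon_n(f)\rVert_{B,g}\le\lVert\varphi\rVert\sum_{k=0}^n\tfrac{(\lVert B\rVert\lvert\langle g,f\rangle\rvert)^k}{k!}\tfrac{\lVert f\rVert_{\mathfrak{h}}^{n-k}}{\sqrt{(n-k)!}}.
\end{equation*}
This is the Cauchy product of two absolutely summable sequences, so it is summable in $n$.

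This step is the main difference from the vHM case and the point needing the most care. Unlike there, the dressing does not act by a scalar factor, so absolute convergence cannot simply be transferred from $\mathcal{F}_0$ via $\lvert\e^{\langle g,f\rangle}\rvert$; the convolution estimate above replaces that argument. Resumming, which is justified by the absolute convergence, yields
\begin{equation*}
  U_{B,g,g'}\bigl(\varphi\otimes\varepsilon_g(f)\bigr)=\bigl(\e^{\langle g-g',f\rangle B^{*}}\varphi\bigr)\otimes\varepsilon_{g'}(f).
\end{equation*}

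\emph{Step 3: dense range.} The operator $\e^{cB^{*}}$ is bounded and invertible on $\mathcal{S}$, with inverse $\e^{-cB^{*}}$. Therefore the range of $U_{B,g,0}$ contains $\mathsf{span}\{\varphi\otimes\varepsilon_0(f):\varphi\in\mathcal{S},\,f\in\mathscr{T}\}$. This set is total in $\mathcal{B}_0=\mathcal{S}\otimes\mathcal{F}(\mathfrak{h})$, because exponential vectors are total in $\mathcal{F}(\mathfrak{h})$ \citep[Thm.\ 5.37]{Arai.2018} and algebraic tensors are dense. Hence $U_{B,g,0}$ is unitary, and pulling back shows that $\{\varphi\otimes\varepsilon_g(f)\}$ is total in $\mathcal{B}_{B,g}$ for every $g$. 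Applying the formula of Step 2 with target $g'$ then shows that every $U_{B,g,g'}$ has dense range, hence is unitary, which completes the argument together with Step 1. Normality of $B$ is not needed in this argument, only its boundedness.
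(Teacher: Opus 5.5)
Your proposal is correct and follows the paper's route. You obtain the group relations from the exponential law on $\mathcal{S}\hat{\otimes}\FSfin(\mathscr{T})$, and unitarity from $U_{B,g,g'}\bigl(\varphi\otimes\varepsilon_g(f)\bigr)=\bigl(\e^{\langle g-g',f\rangle B^{*}}\varphi\bigr)\otimes\varepsilon_{g'}(f)$ together with the invertibility of $\e^{cB^{*}}$ and the totality of exponential vectors. Your Cauchy-product estimate supplies the absolute convergence in $\mathcal{B}_{B,g}$, a step the paper only describes as ``not difficult to see''.
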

\begin{proof}
  The proof is completely analogous to that of \cref{prop:unitarygroup} in
  the trivial spin case, let us only point out the modifications needed to
  prove unitarity. Considering vectors of the form $\{\psi\otimes \varepsilon_{g}(f)\;,\; \psi\in\cS\;,\; f\in \mathscr{T}\}$, it is not
  difficult to see that
  \begin{equation*}
    U_{B,g,g'}\bigl(\psi_k\otimes \varepsilon_g(f)\bigr) = \bigl(\e^{B^{*}\langle g-g'  , f \rangle_{}}\psi_k\bigr)\otimes \varepsilon_{g'}(f)\;.
  \end{equation*}
  Taking $g'=0$ one gets unitarity of $U_{B,g,0}$ since $\{\psi\otimes \varepsilon_0(f)\;,\; \psi\in\cS\;,\; f\in \mathscr{T}\}$ is total in $\mathcal{B}_0$, and $\e^{B^{*}\langle g , f \rangle_{}}$ is a bijection
  on $\mathcal{S}$. From this, it also follows that $\{\psi\otimes \varepsilon_{g'}(f)\;,\; \psi\in\cS\;,\; f\in \mathscr{T}\}$
  are total in $\mathcal{B}_{B,g'}$ for any $g'\in \mathscr{T}'$, and thus $U_{B,g,g'}$ is unitary
  as well.
\end{proof}
\cref{prop:4,lemma:5} are also provable for nontrivial spin-boson spaces
$\mathcal{B}_{B,g}$, let us write the statements explicitly for later
reference. Beforehand, observe that for any $g\neq0$, the interacting fields
$\phi_{B,g}(f)$ and $\pi_{B,g}(f)$ clearly depend on the interaction spin-matrix
$B$, however their form is perfectly analogous to the one obatined for the
vHM model.
\begin{prop}
  \label[proposition]{prop:4bis}
  Let $B$ be normal and bounded on $\mathcal{S}$, and let $g,g'\in \mathscr{T}$; then the
  interacting fields and conjugate momenta on $\mathcal{B}_{B,g}$ and $\mathcal{B}_{B,g'}$ are
  related by:
  \begin{gather*}
    \phi_{B,g}(f)= U_{B,g',g}\Bigl(\phi_{B,g'}(f)+ B^{*}B\sqrt{2}\Re \langle g-g'  , f \rangle\Bigr)U_{B,g,g'}\;,\\
    \pi_{B,g}(f)=U_{B,g',g}\Bigl(\pi_{B,g'}(f)+ B^{*}B\sqrt{2}\Im \langle g-g'  , f \rangle\Bigr)U_{B,g,g'}\;.
  \end{gather*}

  It follows that $\mathcal{B}_{B,g}$ and $\mathcal{B}_{B,g'}$ carry unitarily equivalent
  representations of the Weyl C*-algebra of canonical commutation relations
  if and only if $g-g'\in \mathfrak{h}$ (with $\mathfrak{h}$ seen as a subspace of $\mathscr{T}'$). In
  particular, the representation of interacting fields carried by $\mathcal{B}_{B,g}$
  is disjoint from the Fock representation of free fields on $\mathcal{B}_{0}$ whenever
  $g\in \mathscr{T}'\smallsetminus \mathfrak{h}$.
\end{prop}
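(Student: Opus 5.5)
The plan is to establish the two displayed intertwining relations first in the regular case and then pass to singular $g,g'$ by mollification. I would then deduce the (in)equivalence statement from the form of the shift, reducing to the translated-representation criterion of \citep[][Thm.\ 8.12]{arai2020mps}, exactly as for \cref{prop:4}.

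\emph{Step 1: the case $g'=0$, $g\in\fh$.} Here the interacting fields are defined, as in the vHM case, by
\begin{equation*}
\langle\Psi,\phi_{B,g}(f)\Phi\rangle_{B,g}=\bigl\langle \e^{-\frac12 B^*B\|g\|^2}\e^{B\otimes\ad(g)}\Psi,\,(\Id\otimes\phi_0(f))\,\e^{-\frac12 B^*B\|g\|^2}\e^{B\otimes\ad(g)}\Phi\bigr\rangle_{\cB_0},
\end{equation*}
and similarly for $\pi_{B,g}(f)$.
\begin{itemize}
\item I will test this on the total set $\{\psi\otimes\varepsilon_g(h)\,:\,\psi\in\cS,\ h\in\sT\}$ from the proof of \cref{prop:unitarygroupspin}.
\item The commutator $[\Id\otimes a(f),\e^{B\otimes\ad(g)}]=\langle f,g\rangle\,(B\otimes\Id)\,\e^{B\otimes\ad(g)}$ moves $\phi_0(f)$ through the dressing.
\item The rewriting of \cref{prop:11} turns the normalized creation dressing into $U_{B,g,0}$.
\item Since $B$ is normal, $B$, $B^*$ and $B^*B$ commute, so the spin factors $\e^{B^*\langle g,h\rangle}$ produced by $U_{B,g,0}$ (see the proof of \cref{prop:unitarygroupspin}) can be collected.
\end{itemize}
The result should be the identity $\phi_{B,g}(f)=U_{B,0,g}\bigl(\phi_{0}(f)+B^*B\sqrt2\Re\langle g,f\rangle\bigr)U_{B,g,0}$, and the analogous identity for $\pi$. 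Checking exactly which spin operator appears in the shift is the routine computation to carry out carefully here.

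\emph{Step 2: general $g\in\sT'$.} Take an $\fh$-mollification $g_\alpha\to g$. Matrix elements between vectors of $\cS\hat\otimes\FSfin(\sT)$ converge by the third point of \cref{lemma:2}, which gives continuity of $h\mapsto a(h)^k\Psi$. The scalar $\langle g_\alpha,f\rangle\to\langle g,f\rangle$ converges because $f\in\sT$. This defines $\phi_{B,g}(f)$ and gives the $g'=0$ formula for every $g\in\sT'$.

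\emph{Step 3: general pair $g,g'$.} Write $U_{B,g,g'}=U_{B,0,g'}U_{B,g,0}$ using the group law of \cref{prop:unitarygroupspin}. Combining the formulas for $g$ and for $g'$ gives the claim, since the shifts subtract: $\langle g,f\rangle-\langle g',f\rangle=\langle g-g',f\rangle$.

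\emph{Step 4: equivalence when $g-g'\in\fh$.} Set $h=g-g'$. The operator shift $B^*B\sqrt2\Re\langle h,f\rangle$ is implemented on $\cB_{B,g'}$ by the unitary
\begin{equation*}
U_{B,0,g'}\,\e^{B^*B\otimes(\ad(h)-a(h))}\,U_{B,g',0}.
\end{equation*}
This is a spin-fibered Weyl operator, well defined because $B^*B$ is bounded and self-adjoint and $h\in\fh$. Composing it with $U_{B,g,g'}$ intertwines the two representations.

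\emph{Step 5: inequivalence when $g-g'\notin\fh$ (main obstacle).} Conversely, assume $g-g'\notin\fh$ and $B\neq0$. The difficulty is that the translation is operator-valued. I would first spectrally decompose $B^*B$ and restrict to a reducing subspace $\cS_\varepsilon=\chr_{[\varepsilon,\infty)}(B^*B)\cS\neq\{0\}$. On $\cS_\varepsilon\otimes\FS(\fh)$ the representation is a direct integral of Fock representations translated by $\lambda(g-g')$ with $\lambda\ge\varepsilon$.

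The first thing to try is to use the Arai criterion, through the number operator: a representation quasi-equivalent to a Fock one must admit a densely defined number operator. Weak limits of $\langle\Phi,\sum_j a^\dagger(e_j)a(e_j)\Phi\rangle$ over an orthonormal basis $e_j\subset\sT$ diverge like $\lambda^2\sum_j|\langle g-g',e_j\rangle|^2=\infty$ on every nonzero vector of the $\cS_\varepsilon$-part. This excludes any unitary intertwiner, even one mixing spin fibers, and the disjointness from $\cB_0$ is the case $g'=0$.
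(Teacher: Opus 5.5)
Your route (compute in the regular case, mollify, use the group law, then apply the translated-representation criterion) is exactly the transfer of \cref{prop:4} that the paper has in mind; the paper gives no separate argument. However, two of your steps cannot be completed as planned.

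First, the computation you defer in Step 1 does not produce $B^*B$. Combine three ingredients:
\begin{itemize}
\item[(a)] your commutator $[\Id\otimes a(f),\e^{B\otimes\ad(g)}]=\langle f,g\rangle (B\otimes\Id)\e^{B\otimes\ad(g)}$;
\item[(b)] $\e^{B^*\otimes a(g)}(\Id\otimes\ad(f))=\bigl(\Id\otimes\ad(f)+\langle g,f\rangle B^*\otimes\Id\bigr)\e^{B^*\otimes a(g)}$;
\item[(c)] $\e^{B^*\otimes a(g)}\e^{-B^*B\|g\|_\fh^2}\e^{B\otimes\ad(g)}=\e^{B\otimes\ad(g)}\e^{B^*\otimes a(g)}$, which is the last formula in the proof of \cref{lemma:99}.
\end{itemize}
Together they give
\begin{gather*}
\phi_{B,g}(f)=U_{B,0,g}\Bigl(\phi_0(f)+\tfrac{1}{\sqrt2}\bigl(\langle g,f\rangle B^*+\overline{\langle g,f\rangle}\,B\bigr)\Bigr)U_{B,g,0}\;,\\
\pi_{B,g}(f)=U_{B,0,g}\Bigl(\pi_0(f)+\tfrac{1}{i\sqrt2}\bigl(\langle g,f\rangle B^*-\overline{\langle g,f\rangle}\,B\bigr)\Bigr)U_{B,g,0}\;.
\end{gather*}
For $B=B^*$ these are shifts by $\sqrt2\,\Re\langle g,f\rangle\,B$ and $\sqrt2\,\Im\langle g,f\rangle\,B$, which are linear in $B$.

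A scalar check settles it. Take $\cS=\IC$, $B=b$. Then $\e^{\bar b a(g)}=\e^{a(bg)}$, so $\cB_{b,g}=\mathcal{F}_{bg}$, and \cref{prop:4} gives the shift $\sqrt2\Re\langle bg,f\rangle$. For $b=2$ this is $2\sqrt2\Re\langle g,f\rangle$, not $4\sqrt2\Re\langle g,f\rangle$. So the $B^*B$ version is false and Step 1 cannot deliver it.

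Correspondingly, the implementer in Step 4 must be $\e^{B\otimes\ad(h)-B^*\otimes a(h)}$, not $\e^{B^*B\otimes(\ad(h)-a(h))}$. With these corrections Steps 1--4, and hence the ``if'' direction, go through, because the shifts are functions of the normal $B$ and commute with every dressing.

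Second, Step 5 fails, and the ``only if'' part is false in this generality. Take $\cS=\IC^2$, $B=\sigma_x$, $g'=-g$ with $g\in\sT'\smallsetminus\fh$.
\begin{itemize}
\item[(a)] Since $\sigma_z\sigma_x=-\sigma_x\sigma_z$ and $a(-g)=-a(g)$, we have $\e^{\sigma_x\otimes a(-g)}(\sigma_z\otimes\Id)=(\sigma_z\otimes\Id)\e^{\sigma_x\otimes a(g)}$ on $\cS\hat\otimes\FSfin(\sT)$. Hence $\sigma_z\otimes\Id$ extends to a unitary $\cB_{\sigma_x,g}\to\cB_{\sigma_x,-g}$.
\item[(b)] The same relation holds with $\ad(g_\alpha)$ in place of $a(g)$, and $\sigma_z\otimes\Id$ commutes with $\Id\otimes\phi_0(f)$ and $\Id\otimes\pi_0(f)$. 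So this unitary intertwines the interacting fields.
\item[(c)] Yet $g-g'=2g\notin\fh$.
\end{itemize}

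Your number-operator argument misses this because that criterion detects quasi-equivalence with a \emph{Fock} representation. But $\cB_{B,g'}$ is not Fock: it is a direct integral over $\lambda\in\sigma(B)$ of Fock representations translated by $\lambda g'$. Comparing equal fibers produces $\lambda(g-g')$. An intertwiner, however, may pair the $\lambda$-fiber of one space with the $\mu$-fiber of the other whenever $\lambda g-\mu g'\in\fh$.

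Even the case $g'=0$ needs $\ker B=\{0\}$. On $\ker B\otimes\FS(\fh)$ the dressing is the identity, so for $B=\mathrm{diag}(1,0)$ (or $B=0$) the space $\cB_{B,g}$ contains the Fock representation. Restricting to $\cS_\varepsilon$ does not address that part.

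Under $g'=0$ and $\ker B=\{0\}$, a fiberwise version of your number-operator argument is the right tool. For general $g'$ you need an additional hypothesis that rules out such cross-fiber pairings.
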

\begin{lem}
  \label[lemma]{lemma:5bis}
  Let $g_1\in \mathscr{T}_1'$ and $g_2\in \mathscr{T}_2'$ be such that for all $f\in \mathscr{T}_1\cap \mathscr{T}_2$,
  $g_1(f)=g_2(f)$. Then, for any normal bounded $B$ on $\mathcal{S}$ there exists a
  natural unitary map $\Omega_{B,g_1,g_2}: \mathcal{B}_{B,g_1}\to \mathcal{B}_{B,g_2}$.
\end{lem}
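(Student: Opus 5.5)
The strategy copies the proof of \cref{lemma:5}, with $\cS\hat\otimes\FSfin(\sT_1\cap\sT_2)$ in place of $\FSfin(\sT_1\cap\sT_2)$. Set $\sT_{12}\coloneqq\sT_1\cap\sT_2$. The renormalized space $\mathcal{B}_{B,g_j}$, $j=1,2$, is the completion of $\cS\hat\otimes\FSfin(\sT_j)$ with respect to $\lVert\cdot\rVert_{B,g_j}$, and the singular annihilation $a(g_j)$ is only defined on $\FSfin(\sT_j)$. The common subspace $\cS\hat\otimes\FSfin(\sT_{12})$ therefore sits in both pre-Hilbert spaces. The plan has three steps: (i) this subspace is dense in both $\mathcal{B}_{B,g_1}$ and $\mathcal{B}_{B,g_2}$; (ii) the two inner products coincide on it; (iii) the identity map extends by density to an isometry from $\mathcal{B}_{B,g_1}$ onto $\mathcal{B}_{B,g_2}$, which we take as $\Omega_{B,g_1,g_2}$.

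Step (ii) is straightforward. For a pure tensor $\varphi\otimes\Xi$ with $\Xi\in\FSfin(\sT_{12})$, the dressing $\e^{B^*\otimes a(g_j)}(\varphi\otimes\Xi)$ is a finite sum $\sum_k\frac1{k!}(B^*)^k\varphi\otimes a(g_j)^k\Xi$. The fiberwise formula in \cref{def:5} involves $g_j$ only through the pairings $\langle g_j,\psi\rangle$ with $\psi\in\sT_{12}$, and there $g_1$ and $g_2$ agree. Hence $a(g_1)^k\Xi=a(g_2)^k\Xi$ for every $k$, so the two dressings agree on the common subspace. Consequently $\langle\Psi,\Phi\rangle_{B,g_1}=\langle\Psi,\Phi\rangle_{B,g_2}$ for all $\Psi,\Phi$ in it.

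Step (i) is the main obstacle. Density of $\sT_{12}$ in $\fh$ does not immediately give density in the $\lVert\cdot\rVert_{B,g}$-norm, because this norm is not controlled by the Fock norm: $a(g)$ is unbounded when $g\in\sT'\smallsetminus\fh$. I would instead use the totality argument from the proof of \cref{prop:unitarygroupspin}. By \cref{lemma:6} and its spin analogue,
\[
U_{B,g_j,0}\bigl(\psi\otimes\varepsilon_{g_j}(f)\bigr)=\bigl(\e^{B^*\langle g_j,f\rangle}\psi\bigr)\otimes\varepsilon_0(f)\;,
\]
and $\varepsilon_{g_j}(f)$ is the $\lVert\cdot\rVert_{B,g_j}$-limit of its truncations, which lie in $\cS\hat\otimes\FSfin(\sT_{12})$ whenever $f\in\sT_{12}$. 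Since $\sT_{12}$ is dense in $\fh$ and $f\mapsto\varepsilon_0(f)$ is continuous from $\fh$ to Fock space, the family $\{\varepsilon_0(f)\;,\;f\in\sT_{12}\}$ is total in $\FS(\fh)$ (\citep[Thm.\ 5.37]{Arai.2018}). Because $\e^{B^*\langle g_j,f\rangle}$ is a bijection of $\cS$, the vectors $\bigl(\e^{B^*\langle g_j,f\rangle}\psi\bigr)\otimes\varepsilon_0(f)$ with $\psi\in\cS$ and $f\in\sT_{12}$ are total in $\mathcal{B}_0$. Unitarity of $U_{B,g_j,0}$ (\cref{prop:unitarygroupspin}) then shows that $\{\psi\otimes\varepsilon_{g_j}(f)\}$ with $f\in\sT_{12}$ is total in $\mathcal{B}_{B,g_j}$. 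Since each such vector is a limit of elements of the common subspace, the subspace is dense.

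Step (iii) then follows. The identity on $\cS\hat\otimes\FSfin(\sT_{12})$ is isometric from the $g_1$-norm to the $g_2$-norm by (ii), and by (i) it has dense domain and dense range. It therefore extends uniquely to a unitary $\Omega_{B,g_1,g_2}:\mathcal{B}_{B,g_1}\to\mathcal{B}_{B,g_2}$, which is natural in the sense that it fixes the common finite-particle vectors. As a consistency check, it satisfies $U_{B,g_2,0}\,\Omega_{B,g_1,g_2}=U_{B,g_1,0}$.
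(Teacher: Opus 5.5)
Your proof is correct and follows the paper's route. The paper gives no separate proof here; it says the proof of \cref{lemma:5} carries over, and that proof takes the identity on the common vectors $\mathcal{S}\hat{\otimes}\FSfin(\mathscr{T}_1\cap\mathscr{T}_2)$, notes that the two dressed inner products agree there, and extends by density. Where you differ is that the paper simply asserts density of this subspace in $\mathcal{B}_{B,g_1}$ and $\mathcal{B}_{B,g_2}$ from the density of $\mathscr{T}_1\cap\mathscr{T}_2$ in $\mathfrak{h}$; you rightly note that $\lVert\cdot\rVert_{B,g}$ is not controlled by the Fock norm and give an actual argument via truncated exponential vectors and the unitarity of $U_{B,g_j,0}$, as in the proof of \cref{prop:unitarygroupspin}.
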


\subsection{SB wave function renormalization II: renomalized Hamiltonian}
\label{sec:sb-wave-function}

In analogy with the vHM case, let us consider a field with dispersion $\omega$,
and take a Gel'fand triple $(\mathscr{T},\mathfrak{h},\mathscr{T}')$ such that $\omega:\mathscr{T}\to \mathscr{T}$ is a linear
homeomorphism that extends to a positive operator $\omega\geq 0$ on $\mathfrak{h}$, with
$\sqrt{\omega}\bigr\rvert_{\mathscr{T}}$ also being an homeomorphism. Let us then fix $v\in \mathscr{T}'$,
and an $\fh$-mollification $v_\alpha$ of $v$ such that $v_\alpha/\sqrt\omega,v_\alpha/\omega\in\fh$.
Also again denote by $E(v_\alpha)\coloneqq  -\lVert v_\alpha/\sqrt{\omega} \rVert_{\mathfrak{h}}^2$, and by $g=-v/\omega\in
\mathscr{T}'$, respectively $g_\alpha=-v_\alpha/\omega\in\fh$ the one-particle field configurations.

The SB quadratic form is renormalized as follows: define the renormalized
Hamiltonian as
\begin{equation*}
  H_{\mathsf{ren}}(B,v_\alpha)= H(B,v_\alpha)- E(v_\alpha)\bigl(B^{*}B\otimes \mathds{1}\bigr)\;,
\end{equation*}
and the renormalized quadratic form on $\mathcal{B}_{g_\alpha}$ as
\begin{equation*}
  q^{\mathsf{ren}}_{B,v_\alpha}(\Psi,\Phi)\coloneqq  \bigl\langle \e^{-\frac{1}{2}B^{*}B\lVert g_\alpha  \rVert_{\mathfrak{h}}^2}\e^{B\otimes a^{\dagger}(g_\alpha)}\Psi  , H_{\mathrm{ren}}(B,v_\alpha)\e^{-\frac{1}{2}B^{*}B\lVert g_\alpha  \rVert_{\mathfrak{h}}^2}\e^{B\otimes a^{\dagger}(g_\alpha)}\Phi \bigr\rangle_{\mathcal{B}_0}\;.
\end{equation*}
Let us first consider the SB Hamiltonian without the free spin matrix:
\begin{equation*}
  \tilde{H}_{\mathsf{ren}}(B,v_\alpha)\coloneqq   H_{\mathsf{ren}}(B,v_\alpha) - A\otimes \mathds{1}= H_{\mathsf{ren}}(B,v_\alpha)\bigr\rvert_{A=0}\;,
\end{equation*}
and its associated quadratic form
\begin{equation*}
  \tilde{q}^{\mathsf{ren}}_{B,v_\alpha}(\Psi,\Phi)\coloneqq  \bigl\langle \e^{-\frac{1}{2}B^{*}B\lVert g_\alpha  \rVert_{\mathfrak{h}}^2}\e^{B\otimes a^{\dagger}(g_\alpha)}\Psi  , \tilde{H}_{\mathrm{ren}}(B,v_\alpha)\e^{-\frac{1}{2}B^{*}B\lVert g_\alpha  \rVert_{\mathfrak{h}}^2}\e^{B\otimes a^{\dagger}(g_\alpha)}\Phi \bigr\rangle_{\mathcal{B}_0}\;.
\end{equation*}
Such an Hamiltonian is diagonalized exactly by the SB dressing $U_{B,g_\alpha,0}$,
in full analogy with the vHM Hamiltonian; the fact, that the
SB model is not exactly diagonalizable is due \emph{solely} to the dressing
of the free spin matrix.
\begin{lem}
  \label[lemma]{lemma:99}
  For any $\Psi,\Phi\in \FSfin(\mathscr{T})$, we have that
  \begin{equation*}
    \tilde{q}^{\mathsf{ren}}_{B,v_\alpha}(\Psi,\Phi)= \langle \e^{B^{*}\otimes a(g_\alpha)}\Psi  , \bigl(\mathds{1}\otimes \mathsf{d}\Gamma(\omega)\bigr) \e^{B^{*}\otimes a(g_\alpha)}\Phi \rangle_{\mathcal{B}_0}\;,
  \end{equation*}
  and therefore it is the quadratic form of the self-adjoint operator
  $\tilde{H}_{\mathsf{ren}}(B,v_\alpha)_{g_\alpha}$ on $\mathcal{B}_{B,g_\alpha}$ defined by the
  unitary CCR-equivalent conjugation
  \begin{equation*}
    \tilde{H}_{\mathsf{ren}}(B,v_\alpha)_{g_\alpha}\coloneqq  U_{B,0,g_\alpha}\bigl(\mathds{1}\otimes \mathsf{d}\Gamma(\omega)\bigr)U_{B,g_\alpha,0}\;. 
  \end{equation*}
  Its limit counterpart on $\mathcal{B}_{B,g}$ is the self-adjoint operator
  \begin{equation*}
    \tilde{H}_{\mathsf{ren}}(B,v)_g= U_{B,0,g}\bigl(\mathds{1}\otimes \mathsf{d}\Gamma(\omega)\bigr)U_{B,g,0}\;,
  \end{equation*}
  with domain of self-adjointness $U_{B,0,g}\mathscr{D}\bigl(\mathds{1}\otimes \mathsf{d}\Gamma(\omega)\bigr)$,
  and core $\mathcal{S}\hat{\otimes} \FSfin(\mathscr{T})$. Finally,
  $\tilde{H}_{\mathsf{ren}}(B,v_\alpha)_{g_\alpha}$ converges to
  $\tilde{H}_{\mathsf{ren}}(B,v)_g$ in generalized norm resolvent sense (see
  \cref{sec:wave-funct-renorm} for a definition of generalized resolvent
  convergence).
\end{lem}
\begin{proof}
  The first part of the lemma is proved using the following operator
  identities, proved by the use of canonical commutation relations, the
  normality of $B$, and the Baker--Campbell--Hausdorff formula for unbounded
  operators commuting with their commutator:
  \begin{align*}
    &(\Id\otimes \dG(\omega))\e^{B\otimes\ad(g_\alpha)}
    = \e^{B\otimes\ad(g_\alpha)} \big(\Id\otimes \dG(\omega) + B\otimes\ad(\omega g_\alpha)\big)\\
    &(B\otimes \ad(v_\alpha) + B^*\otimes a(v_\alpha))\e^{B\otimes\ad(g_\alpha)}
    = \e^{B\otimes\ad(g_\alpha)}\big( B\otimes \ad(v_\alpha) + B^*\otimes a(v_\alpha) + \braket{v_\alpha,g_\alpha}_\hs B^*B\otimes\Id \big)
    \\
    & \e^{B^*\otimes a(g_\alpha)}(\Id\otimes \dG(\omega))
    =  \big(\dG(\omega) + B^*\otimes a(\omega g_\alpha)\big) \e^{B^*\otimes a(g_\alpha)}
    \\
    &
    \e^{B^*\otimes a(g_\alpha)}\e^{B\otimes\ad(g_\alpha)}=\e^{B^*B\|g_\alpha\|^2_{\fh}}\e^{B\otimes \ad(g_\alpha)}\e^{B^*\otimes a(g_\alpha)}\,.
  \end{align*}
  The domain of self-adjointness of $\tilde{H}_{\mathsf{ren}}(B,v)_g$ is
  obvious, as well as the convergence in generalized norm resolvent sense.

  It remains to prove that $\mathcal{S}\hat{\otimes}\FSfin(\mathscr{T})$ is a core for
  $\tilde{H}_{\mathsf{ren}}(B,v)_g$. More precisely, since
  $\mathcal{S}\hat{\otimes}\FSfin(\mathscr{T})$ is a core\footnote{Here, we use the fact that $\mathscr{T}$ is a
    core for $\omega$, as an operator on $\mathfrak{h}$.} for $\mathds{1}\otimes \mathsf{d}\Gamma(\omega)$, it is
  sufficient to prove that
  $U_{B,0,g}\mathcal{S}\hat{\otimes}\FSfin(\mathscr{T})=\mathcal{S}\hat{\otimes}\FSfin(\mathscr{T})$. Firstly, recall that
  $U_{B,0,g}$ is the extension of $u_{B,0,g}:\mathcal{S}\hat{\otimes}\FSfin(\mathscr{T}) \to
  \mathcal{S}\hat{\otimes}\FSfin(\mathscr{T})$. The latter is surjective, since $u_{B,g,0}u_{B,0,g}=
  \mathds{1}_{\mathcal{S}\hat{\otimes}\FSfin(\mathscr{T})}$.
\end{proof}

\subsection*{Dressing of the free spin matrix}
\label{sec:dressing-free-spin}
The dressing of the free spin matrix $A\otimes \mathds{1}$ is, in general, non-trivial due
to the non-commutativity of $A$ and $B$, with the latter appearing as part of
the dressing map $\e^{B^{*}\otimes a(g)}$. We here identify explicit operator
expressions employing so-called double operator integrals, which thus appear
as the natural framework to construct the dressing of the operator $A$.
These types of integrals are of interest by themselves, whence we provide an
explicit treatment from the viewpoint of quadratic forms as well as a brief
literature review in \cref{app:DOI}. We stress that in the finite-dimensional
case $\dim\cS<\infty$ (or more general the case that $B$ has a compact resolvent),
integrals of this type are immediately well-defined as double sums over the
eigenspace projections of $B$.
From now on, we use the notation from \cref{def:DOI} and assume that $A$ is
$(E_B,E_B)$-decomposable, where $E_B$ as usual denotes the spectral measure
of $B$; keep in mind that any $A$ that is Hilbert--Schmidt, or such that it
commutes/anti-commutes with $B$, is $(E_B,E_B)$-decomposable.

For fixed $\alpha$, the renormalized free spin matrix quadratic form is then given by
\begin{align}
	\label{eq:renfree}
  q^{\mathsf{ren}}_{B,v_\alpha}(\Psi,\Phi)-\tilde{q}^{\mathsf{ren}}_{B,v_\alpha}(\Psi,\Phi)
  = \langle \e^{-\frac{1}{2}B^{*}B \lVert g_\alpha  \rVert_{\mathfrak{h}}^2} \e^{B\otimes a^{\dagger}(g_\alpha)}\Psi  , (A\otimes \mathds{1})\e^{-\frac{1}{2}B^{*}B \lVert g_\alpha  \rVert_{\mathfrak{h}}^2}\e^{B\otimes a^{\dagger}(g_\alpha)}\Phi \rangle_{\mathcal{B}_0}\;,
\end{align}
that, as we will see below, defines a bounded operator $A(B,v_\alpha)_{g_\alpha}$
acting on the renormalized space $\mathcal{B}_{B,g_\alpha}$. The analogous construction
shall hold when $v_\alpha$ and $g_\alpha$ are replaced by $v$ and $g$ respectively.  On
pure tensors $\psi\otimes \Xi$ and $\phi\otimes \Theta$ in $\mathcal{S}\hat{\otimes}\FSfin(\mathscr{T})$, we define the form
$\mathfrak{a}^{\mathsf{ren}}_{B,v}$ by the double operator integral
\begin{equation}
  \label{eq:formalspin}
  \mathfrak{a}^{\mathsf{ren}}_{B,v}(\psi\otimes \Xi,\phi\otimes \Theta)
  \coloneqq \iint_{\IC^2} \chi_g(\lambda,\mu)\Braket{\e^{\overline{\mu} a(g)}\Xi,\e^{\overline{\lambda} a(g)}\Theta}_{\FS(\fh)}\Braket{\d E_B(\lambda)\psi,A\, \d E_B(\mu)\varphi}_{\cS} \; ,
\end{equation}
where the integrand is defined
by
\begin{align}
  \chi_g(\lambda,\mu)
  =
  \begin{cases}
    \e^{(\overline\lambda\mu-\frac12(\abs\lambda^2+\abs\mu^2))\norm{g}_\fh^2} & \mbox{for}\ g\in\fh,\\
    \chr_{\{\lambda=\mu\}} & \mbox{for}\ g\in\sT'\smallsetminus\fh.
  \end{cases}
  \label{eq:defchi}
\end{align}
Extending to the algebraic tensor product $\cS\hat\otimes\FSfin(\sT)$ by sesquilinearity gives the correct quadratic form:
\begin{lem}
	\label[lemma]{lemma:10}
	Let $\Psi,\Phi\in\cS\hat\otimes\FSfin(\sT)$. Then, for all $\alpha\in \mathcal{A}$,
	\begin{equation*}
		\mathfrak{a}^{\mathsf{ren}}_{B,v_\alpha}(\Psi,\Phi) = q^{\mathsf{ren}}_{B,v_\alpha}(\Psi,\Phi)-\tilde{q}^{\mathsf{ren}}_{B,v_\alpha}(\Psi,\Phi).
	\end{equation*}
\end{lem}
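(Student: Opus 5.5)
By sesquilinearity of both sides it suffices to take pure tensors $\Psi=\psi\otimes\Xi$ and $\Phi=\phi\otimes\Theta$ with $\psi,\phi\in\cS$ and $\Xi,\Theta\in\FSfin(\sT)$. Since $g_\alpha\in\fh$, the right-hand side is given by \eqref{eq:renfree}. The plan is to use the spectral theorem for the normal operator $B$ to replace $B$ by a scalar $\lambda$. Then the Fock part becomes a vHM-type computation, and the spin part becomes the double spectral integral against $A$ appearing in \eqref{eq:formalspin}.

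\emph{Step 1.} For fixed $\Xi\in\FSfin(\sT)$, I would write
\[
\e^{-\frac12 B^*B\|g_\alpha\|^2}\e^{B\otimes\ad(g_\alpha)}(\psi\otimes\Xi)=\int_{\IC}\e^{-\frac12|\lambda|^2\|g_\alpha\|^2}\,\d E_B(\lambda)\psi\otimes\e^{\lambda\ad(g_\alpha)}\Xi .
\]
This is justified by expanding $\e^{B\otimes\ad(g_\alpha)}=\sum_k\frac1{k!}B^k\otimes\ad(g_\alpha)^k$, which converges in norm on $\FSfin$ with the Gaussian-type bounds used for $\e^{\ad(g)}\Omega$, and by applying functional calculus to each $B^k\e^{-\frac12 B^*B\|g_\alpha\|^2}$. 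The resulting vector-valued integral converges absolutely, because $\|\e^{\lambda\ad(g_\alpha)}\Xi\|$ grows at most like $\e^{c|\lambda|^2}$, the factor $\e^{-\frac12|\lambda|^2\|g_\alpha\|^2}$ compensates, and $B$ is bounded so $\lambda$ ranges over a compact set.

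\emph{Step 2.} Inserting Step 1 into \eqref{eq:renfree} and pairing with $A\otimes\Id$ gives
\[
\iint \e^{-\frac12(|\lambda|^2+|\mu|^2)\|g_\alpha\|^2}\Braket{\e^{\lambda\ad(g_\alpha)}\Xi,\e^{\mu\ad(g_\alpha)}\Theta}_{\FS}\Braket{\d E_B(\lambda)\psi,A\,\d E_B(\mu)\phi}_\cS .
\]
Making sense of this as a double operator integral is exactly where the $(E_B,E_B)$-decomposability of $A$ from \cref{def:DOI} enters.

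\emph{Step 3.} The Fock inner product is rewritten by the scalar commutation identity underlying \cref{lemma:3} and the last identity in the proof of \cref{lemma:99}:
\[
\Braket{\e^{\lambda\ad(g)}\Xi,\e^{\mu\ad(g)}\Theta}=\e^{\bar\lambda\mu\|g\|^2}\Braket{\e^{\bar\mu a(g)}\Xi,\e^{\bar\lambda a(g)}\Theta}.
\]
To obtain it, move $\e^{\bar\lambda a(g)}$ to the right across $\e^{\mu\ad(g)}$ using the Baker--Campbell--Hausdorff formula, which produces the factor $\e^{\bar\lambda\mu\|g\|^2}$. Then commute $\e^{\mu\ad(g)}$ back to the left as $\e^{\bar\mu a(g)}$, checking this on exponential vectors $\varepsilon(f)$ via \cref{lemma:6} and extending by density. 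Combining with Step 2 yields exactly $\chi_{g_\alpha}(\lambda,\mu)$ times the integrand of \eqref{eq:formalspin}, which proves the lemma.

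The main obstacle is the rigorous exchange in Steps 1–2: pulling the spectral integrals of $B$ through the Fock-space series and the pairing with $A$. I would handle it by proving the identity first for $B$ with finite spectrum, where all integrals are finite sums and Steps 1–3 are pure algebra. For general normal $B$, I would approximate $B$ in norm by $B_n=f_n(B)$ with $f_n$ simple functions. Both sides are then continuous in $n$: the left side through the weak continuity properties of the double operator integral from \cref{app:DOI}, since the integrand is jointly continuous and bounded on $\sigma(B)^2$; the right side since $\e^{B\otimes\ad(g_\alpha)}$ applied to finite-particle vectors depends norm-continuously on $B$.
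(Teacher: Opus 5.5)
Your proposal is correct. Its computational core is the paper's: Steps 2 and 3 reproduce the two displays in the paper's proof, namely the double spectral integral with weight $\e^{-\frac12(|\lambda|^2+|\mu|^2)\|g_\alpha\|_\fh^2}\langle \e^{\lambda\ad(g_\alpha)}\Xi,\e^{\mu\ad(g_\alpha)}\Theta\rangle$, followed by the CCR/BCH rewriting that produces $\chi_{g_\alpha}$. The two arguments differ only in how they justify the interchange you flag as the main obstacle.

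\emph{The paper's route.} It neither forms the vector-valued integral of your Step 1 nor approximates $B$. It expands both exponentials $\e^{B\otimes\ad(g_\alpha)}$ in power series and applies \cref{eq:PVMFubini} to each $(m,p)$-term, with $f_1(\lambda)=\lambda^m\e^{-\frac12|\lambda|^2\|g_\alpha\|_\fh^2}$ and $f_2(\mu)=\mu^p\e^{-\frac12|\mu|^2\|g_\alpha\|_\fh^2}$. It then resums under the integral by dominated convergence. This is immediate: $\sigma(B)$ is compact, the majorant $\sum_{m,p}\frac{|\lambda|^m|\mu|^p}{m!\,p!}\|\ad(g_\alpha)^m\Xi\|\,\|\ad(g_\alpha)^p\Theta\|$ is bounded there, and $\mu_A(\psi,\phi)$ has finite total variation.

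\emph{What your route needs.} The finite-spectrum approximation also works, but \cref{app:DOI} contains no continuity statement for the DOI under changes of $B$, so you must supply one.
\begin{itemize}
\item For $B_n=f_n(B)$ one has $E_{B_n}(M)=E_B(f_n^{-1}(M))$. By the uniqueness of $\mu_T$, the $(E_{B_n},E_{B_n})$-measure of $A$ is therefore the pushforward of the $(E_B,E_B)$-measure under $f_n\times f_n$.
\item The left side for $B_n$ is then the integral of $F\circ(f_n\times f_n)$ against the fixed measure $\mu_A(\psi,\phi)$, where $F$ is the continuous, $B$-independent integrand of \cref{eq:formalspin}.
\item It converges because $F\circ(f_n\times f_n)\to F$ uniformly on $\sigma(B)^2$, and $|\mu_A(\psi,\phi)|$ is concentrated there.
\end{itemize}
In Step 3 no density argument is needed, and extending an identity involving the unbounded $\e^{\mu\ad(g)}$ by density would not be automatic anyway. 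The identity holds directly for $\Xi,\Theta\in\FSfin(\sT)$: normal-order $a(g)^m\ad(g)^p$ termwise in the absolutely convergent double series and resum.

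With these points filled in, your route is sound. It is longer than the paper's, which gets the interchange directly from \cref{eq:PVMFubini}.
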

\begin{proof}
  It suffices to prove the statement for vectors of the form $\Psi = \psi\otimes\Xi$ and $\Phi
  = \varphi\otimes\Theta$.  Then expanding the exponentials $\e^{B^*\otimes \ad(g_\alpha)}$ inside the
  form $q^{\mathsf{ren}}_{B,v_\alpha}-\tilde{q}^{\mathsf{ren}}_{B,v_\alpha}$, we obtain
  terms of the form
	\begin{align*}
		&\frac{1}{m!\,p!}\Braket{\e^{-\frac12B^*B\norm{g_\alpha}_\fh^2}B^m\psi,A\,\e^{-\frac12B^*B\norm{g_\alpha}_\fh^2}B^p\varphi}_\cS\Braket{\ad(g_\alpha)^m\Xi,\ad(g_\alpha)^p\Theta}_{\FS(\fh)}
		\\&\qquad
		=
		\frac{1}{m!\,p!}\iint_{\IC^2}\e^{-\frac12(\abs\lambda^2+\abs\mu^2)\norm{g_\alpha}_\fh^2}\Braket{\lambda^m\ad(g_\alpha)^m\Xi,\mu^p\ad(g_\alpha)^p\Theta}_{\FS(\fh)}\Braket{\d E_B(\lambda)\psi, A \d E_B(\mu)\ph}_\cS,
	\end{align*}
	by \cref{eq:PVMFubini}.  Summing over $m$ and $p$ and applying
	the dominated convergence theorem, we obtain
	\begin{align*}
		&\Braket{\e^{-\frac12B^*B\norm{g_\alpha}_\fh^2}\e^{B\otimes \ad(g_\alpha)}\Psi,(A\otimes\Id)\e^{-\frac12B^*B\norm{g_\alpha}_\fh^2}\e^{B\otimes \ad(g_\alpha)}\Phi}_{\cS\otimes\FS(\fh)}
		\\&\qquad
		= \iint_{\IC^2}\e^{-\frac12(\abs\lambda^2+\abs\mu^2)\norm{g_\alpha}_\fh^2} \Braket{\e^{\lambda\ad(g_\alpha)}\Xi,\e^{\mu\ad(g_\alpha)}\Theta}_{\FS(\fh)}\Braket{\d E_B(\lambda)\psi, A \d E_B(\mu)\ph}_\cS
		\\&\qquad
		= \iint_{\IC^2}\chi_{g_\alpha}(\lambda,\mu) \Braket{\e^{\overline\mu a(g_\alpha)}\Xi,\e^{\overline\lambda a(g_\alpha)}\Theta}_{\FS(\fh)}\Braket{\d E_B(\lambda)\psi, A \d E_B(\mu)\ph}_\cS
		,
	\end{align*}
	where the last step was again an application of the canonical commutation
	relations and the Baker--Campbell--Hausdorff formula.  Comparing with
	\cref{eq:formalspin}, we have proved the statement.
\end{proof}
This construction yields a bounded quadratic form on all of $\cB_{B,g}$, even for singular $v\in\sT'\smallsetminus\fh$.
\begin{prop}
  \label[proposition]{prop:spinform}
  Equations \cref{eq:formalspin,eq:defchi} define a unique bounded symmetric
  quadratic form $\mathfrak{a}^{\mathsf{ren}}_{B,v}$ on $\cB_{B,g}$. It is bounded by
  the operator norm $\|A\|_{\cS\to\cS}$.
\end{prop}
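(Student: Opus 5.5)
The goal is to show that the double operator integral form equals $\langle U_{B,g,0}\Psi,\ \mathcal{A}_g\, U_{B,g,0}\Phi\rangle_{\mathcal{B}_0}$ for an operator $\mathcal{A}_g$ on $\mathcal{B}_0$ with $\|\mathcal{A}_g\|\le\|A\|$, written as a double operator integral. Boundedness on $\mathcal{B}_{B,g}$ by $\|A\|_{\cS\to\cS}$ then follows from unitarity of $U_{B,g,0}$ (\cref{prop:unitarygroupspin}). Symmetry is immediate from \eqref{eq:defchi}, since $\overline{\chi_g(\lambda,\mu)}=\chi_g(\mu,\lambda)$. It therefore suffices to prove the bound $|\mathfrak{a}^{\mathsf{ren}}_{B,v}(\Psi,\Phi)|\le\|A\|\,\|\Psi\|_{B,g}\|\Phi\|_{B,g}$ on the dense subspace $\cS\hat\otimes\FSfin(\sT)$ and extend by continuity.

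\emph{Singular case} $g\in\sT'\smallsetminus\fh$. Here $\chi_g=\chr_{\{\lambda=\mu\}}$, so the integral concentrates on the diagonal. The diagonal of a $(E_B,E_B)$-decomposable form is the Schur-type ``pinching'' of $A$ against the spectral measure of $B$, and by \cref{app:DOI} its diagonal part is a contraction in $A$. Concretely, on $\Psi=\sum_i\psi_i\otimes\Xi_i$ the form reads
\begin{equation*}
\iint\chr_{\{\lambda=\mu\}}\sum_{i,j}\Braket{\e^{\overline\lambda a(g)}\Xi_i,\e^{\overline\lambda a(g)}\Theta_j}\Braket{\d E_B(\lambda)\psi_i,A\,\d E_B(\mu)\varphi_j}.
\end{equation*}
Note that $\e^{B^*\otimes a(g)}\Psi=\int \d E_B(\lambda)\otimes\e^{\overline\lambda a(g)}$ applied to $\Psi$, by normality of $B$ and the functional calculus. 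The form is then the diagonal DOI with symbol $\chr_{\{\lambda=\mu\}}$ applied to $A\otimes\Id$, evaluated on the vectors $\e^{B^*\otimes a(g)}\Psi$ and $\e^{B^*\otimes a(g)}\Phi$. The diagonal DOI has norm at most $\|A\|$. This gives the bound with $\|\e^{B^*\otimes a(g)}\Psi\|_{\mathcal{B}_0}=\|\Psi\|_{B,g}$.

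\emph{Regular case} $g\in\fh$. I would use \cref{lemma:10} with $v_\alpha$ replaced by $v$ (the lemma is exactly this identity for regular $g$). This identifies the form with $\langle T\Psi,(A\otimes\Id)T\Phi\rangle_{\mathcal{B}_0}$, where $T=\e^{-\frac12 B^*B\|g\|^2}\e^{B\otimes\ad(g)}$. Since $\|T\Psi\|=\|\Psi\|_{B,g}$ (this is \cref{prop:11} without the limit, via the same BCH identities as in \cref{lemma:99}), Cauchy--Schwarz gives the bound directly.

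The main obstacle is the rigorous meaning of the singular DOI with symbol $\chr_{\{\lambda=\mu\}}$ when $B$ has continuous spectrum. In that case the diagonal has measure zero for product measures, so the form cannot be read naively, and the claim must be that the DOI machinery of \cref{def:DOI} defines $\iint\chr_{\{\lambda=\mu\}}\,\d E_B\,A\,\d E_B$ as a contraction in $A$. I would prove this by approximating the diagonal indicator by $\sum_k\chr_{\Delta_k}(\lambda)\chr_{\Delta_k}(\mu)$ over partitions $\{\Delta_k\}$. The pinching $\sum_k E_B(\Delta_k)AE_B(\Delta_k)$ has norm at most $\|A\|$, uniformly in the partition. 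Decomposability of $A$ would then be used to pass to the limit in the weak formulation, with the vector-valued factor $\e^{\overline\lambda a(g)}\Xi$ continuous in $\lambda$ (a polynomial in $\lambda$ on $\FSfin(\sT)$), so that the diagonal limit yields exactly \eqref{eq:formalspin}. Alternatively, one could obtain the singular case as a limit of the regular one: $\chi_{g_\alpha}\to\chr_{\{\lambda=\mu\}}$ pointwise as $\|g_\alpha\|\to\infty$, since $|\chi|=\e^{-\frac12|\lambda-\mu|^2\|g_\alpha\|^2}$. The bounds pass to the limit by dominated convergence together with \cref{lemma:2}.
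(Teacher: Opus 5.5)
Your proof is correct. Your regular case and your fallback for the singular case are exactly the paper's argument: regularize, bound $|\mathfrak a^{\mathsf{ren}}_{B,v_\alpha}|$ by $\|A\|\,\|\Psi\|_{B,g_\alpha}\|\Phi\|_{B,g_\alpha}$ via \cref{lemma:10} and Cauchy--Schwarz, then pass to the limit using $\chi_{g_\alpha}\to\chr_{\{\lambda=\mu\}}$ and \cref{lemma:2}. Your primary singular-case route, via pinching, is genuinely different.

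The paper's route never has to analyse the diagonal integral. The operator-norm bound is simply inherited from the regularized forms. The cost is a limit interchange and the unstated fact that $\|g_\alpha\|_\fh\to\infty$ when $g\notin\fh$.

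Your route works at fixed $g$ and yields more, but two citations need fixing. \cref{prop:DOI} only bounds the diagonal part by $\|A\|_{(E_B,E_B)}$, which may exceed $\|A\|$. The appendix also defines no DOI for $A\otimes\Id$ on $\cB_0$. So the partition argument is genuinely needed, and it should be run directly on finite sums in $\cS\hat\otimes\FSfin(\sT)$.

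Concretely, set $\mathcal P_n(A)=\sum_k E_B(\Delta_k)AE_B(\Delta_k)$ over nested partitions of mesh $\to0$. Then $\langle \e^{B^*\otimes a(g)}\Psi,(\mathcal P_n(A)\otimes\Id)\,\e^{B^*\otimes a(g)}\Phi\rangle_{\cB_0}$ is a sum over $i,j$ of integrals of $\langle \e^{\bar\lambda a(g)}\Xi_i,\e^{\bar\mu a(g)}\Theta_j\rangle$ against $\mu_A(\psi_i,\varphi_j)$ over $\bigcup_k\Delta_k\times\Delta_k$. These sets decrease to the diagonal, where this integrand agrees with the one in \cref{eq:formalspin}. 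Dominated convergence then gives both the bound and the identity
$A(B,v)_g=U_{B,0,g}\bigl(\mathcal D_B(A)\otimes\Id\bigr)U_{B,g,0}$.
Here $\mathcal D_B(A)$ is the weak limit of the pinchings; it is a contraction and commutes with $B$.

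This structural identity is the real gain of your approach. It recovers \cref{thm:5} ($\mathcal D_{\sigma_x}(\sigma_z)=0$) and the energy-preserving case ($\mathcal D_A(A)=A$) at once. It also shows that for singular $g$ the operator $A(B,v)_g$ is of the dressed diagonal form of \cref{lemma:noncomm} with $T=\mathcal D_B(A)$.
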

\begin{proof}
	By the dominated convergence theorem, \cref{eq:formalspin} and \cref{lemma:2}, on pure tensors and thus for all $\Psi,\Phi\in\cS\hat\otimes\FSfin(\sT)$,
	\[ \mathfrak{a}^{\mathsf{ren}}_{B,v_\alpha}(\Psi,\Phi) \overset{\alpha}{\longrightarrow} \mathfrak{a}^{\mathsf{ren}}_{B,v}(\Psi,\Phi). \]
	By \cref{lemma:10,eq:renfree,prop:11}, we further have
	\begin{align*}
		\abs{\mathfrak{a}^{\mathsf{ren}}_{B,v_\alpha}(\Psi,\Phi)}
		&\le \norm{A}_{\cS\to\cS}\norm{e^{-\frac12 B^*B\norm{g_\alpha}_\fh^2}e^{B\otimes\ad(g_\alpha)}\Psi}_{\cB_{0}}\norm{e^{-\frac12 B^*B\norm{g_\alpha}_\fh^2}e^{B\otimes\ad(g_\alpha)}\Phi}_{\cB_{0}}
		 \\&= \norm{A}_{\cS\to\cS} \norm{\Psi}_{\cB_{B,g_\alpha}} \norm{\Phi}_{\cB_{B,g_\alpha}}.
	\end{align*}
	Once more applying \cref{lemma:2} and the definition of the $\mathcal{B}_{B,g}$ inner product, we know that $\norm{\Psi}_{\mathcal{B}_{B,g_\alpha}}\xrightarrow{\alpha} \norm{\Psi}_{\mathcal{B}_{B,g}}$, so 
	\begin{equation*}
		\abs{\mathfrak{a}^{\mathsf{ren}}_{B,v}(\Psi,\Phi)} \le \norm{A}_{\cS\to\cS} \norm{\Psi}_{\cB_{B,g}}\norm{\Phi}_{\cB_{B,g}},
	\end{equation*}
	which proves boundedness on the dense subspace $\cS\hat\otimes\FSfin(\sT)$.
        Thus, we have shown the existence of a unique bounded sesquilinear
        extension to all of $\mathcal{B}_{B,g}$. Symmetry is immediate, since limits of
        symmetric forms are always symmetric.
\end{proof}
We are now in a position to define the renormalized spin Hamiltonian
$A(B,v)_g$.
\begin{defn}[$A(B,v)_g$]
  \label[definition]{def:Ar}
  Given $A,B\in\cL(\cS)$ with $A$ self-adjoint and $B$ normal, we define the
  renormalized spin matrix $A(B,v)_g$ to be the unique bounded self-adjoint
  operator on $\mathcal{B}_{B,g}$ yielded, by Riesz's representation theorem,
  by the quadratic form $\mathfrak{a}^{\mathsf{ren}}_{B,v}$ of
  \cref{prop:spinform}.
\end{defn}
Let us stress the fact that, due to the general
non-commutativity of $A$ and $B$, the operator $A(B,v)_g$ \emph{is not} of
the diagonal dressed form $U_{B,0,g} (T\otimes \mathds{1})U_{B,g,0}$; as a
matter of fact, such a dressed diagonal operator takes the -- subtly similar,
but \emph{different} -- form below.
\begin{lem}
  \label[lemma]{lemma:noncomm}
  If $T\in\cL(\cS)$, then
  \begin{align*}
    \Braket{\psi\otimes\Xi,U_{B,0,g}(T\otimes\Id)U_{B,g,0}\ph\otimes\Theta}_{B,g}
    =
    \iint_{\IC^2} \Braket{\e^{\overline\lambda a(g)}\Xi,\e^{\overline\mu a(g)}\Theta}_{\FS(\fh)}\Braket{\d E_B(\lambda)\psi,T\d E_B(\mu)\varphi}_\cS
  \end{align*}
  for all $\psi,\ph\in\cS$ and $\Xi,\Theta\in\FSfin(\sT)$.
\end{lem}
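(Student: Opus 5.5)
We reduce everything to a computation in the bare space $\mathcal{B}_0$ via the dressing. By definition of the renormalized inner product and of $U_{B,g,0}$ as the extension of $u_{B,g,0}=\e^{B^*\otimes a(g)}$, for $\Psi=\psi\otimes\Xi$ and $\Phi=\ph\otimes\Theta$ in $\cS\hat\otimes\FSfin(\sT)$ we have
\begin{equation*}
  \Braket{\Psi,U_{B,0,g}(T\otimes\Id)U_{B,g,0}\Phi}_{B,g}
  = \Braket{U_{B,g,0}\Psi,(T\otimes\Id)U_{B,g,0}\Phi}_{\mathcal{B}_0}
  = \Braket{\e^{B^*\otimes a(g)}\psi\otimes\Xi,(T\otimes\Id)\e^{B^*\otimes a(g)}\ph\otimes\Theta}_{\mathcal{B}_0}\;,
\end{equation*}
using that $U_{B,0,g}=U_{B,g,0}^{-1}=U_{B,g,0}^*$ by \cref{prop:unitarygroupspin}. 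The right-hand side is a finite object: by \cref{lemma:2} the series defining $\e^{B^*\otimes a(g)}$ on $\psi\otimes\Xi$ is truncated. It therefore reads
\begin{equation*}
  \sum_{m,p}\frac{1}{m!\,p!}\Braket{(B^*)^m\psi,T(B^*)^p\ph}_\cS\Braket{a(g)^m\Xi,a(g)^p\Theta}_{\FS(\fh)}\;,
\end{equation*}
a finite double sum.

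Next we insert the spectral measure of $B$. Exactly as in the proof of \cref{lemma:10}, we write $(B^*)^m=\int\overline\lambda^m\,\d E_B(\lambda)$ and apply the double-integral Fubini identity \cref{eq:PVMFubini} from \cref{app:DOI}. Since $B$ is bounded, all polynomial integrands are bounded on the compact support of $E_B$, and $T$ is bounded, so the double operator integral is well defined. This gives
\begin{equation*}
  \Braket{(B^*)^m\psi,T(B^*)^p\ph}_\cS=\iint_{\IC^2}\lambda^m\,\overline\mu^{\,p}\,\Braket{\d E_B(\lambda)\psi,T\,\d E_B(\mu)\ph}_\cS\;.
\end{equation*}
We then move the scalars into the Fock inner product, using $\lambda^m\Braket{a(g)^m\Xi,\cdot}=\Braket{\overline\lambda^m a(g)^m\Xi,\cdot}$ and $\overline\mu^p\Braket{\cdot,a(g)^p\Theta}=\Braket{\cdot,\overline\mu^pa(g)^p\Theta}$. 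Because the sum over $m,p$ is finite, it can be exchanged with the integral without any dominated convergence argument. Summing yields $\Braket{\e^{\overline\lambda a(g)}\Xi,\e^{\overline\mu a(g)}\Theta}_{\FS(\fh)}$, which is the claimed formula.

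The only delicate point is the correct bookkeeping of the conjugations in the spectral calculus, namely that $B^*$ corresponds to $\overline\lambda$ on the left slot, so that the result carries $\e^{\overline\lambda a(g)}$ paired with $\d E_B(\lambda)\psi$. This is precisely the contrast with \cref{eq:formalspin}, where $\lambda$ and $\mu$ are swapped and the kernel $\chi_g$ appears. We also need the Fubini-type lemma in the weak formulation of \cref{app:DOI} for the bounded function $(\lambda,\mu)\mapsto\lambda^m\overline\mu^p$; this is routine given boundedness of $B$. No mollification or limit is needed, since $g\in\sT'$ enters only through the singular annihilation, which acts nilpotently on $\FSfin(\sT)$.
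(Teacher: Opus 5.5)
Your proof is correct and is the same as the paper's: you use unitarity and the definition of $U_{B,g,0}$ to rewrite the left-hand side as $\langle \e^{B^*\otimes a(g)}\psi\otimes\Xi,(T\otimes\Id)\e^{B^*\otimes a(g)}\ph\otimes\Theta\rangle_{\mathcal{B}_0}$, expand the truncated series, and apply the spectral Fubini identity. Your bookkeeping $\langle (B^*)^m\psi,T(B^*)^p\ph\rangle_\cS=\iint\lambda^m\overline{\mu}^{\,p}\langle\d E_B(\lambda)\psi,T\,\d E_B(\mu)\ph\rangle_\cS$ is right; it corresponds to \cref{eq:PVMFubini} with $\overline{f_1}$ in place of $f_1$, because the first slot of the inner product is antilinear.

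One side remark is wrong: boundedness of $T$ alone does not make the double operator integral well defined (see the remark in \cref{app:DOI}). What you need is that $T$ is $(E_B,E_B)$-decomposable. This is implicitly presupposed for the right-hand side of the lemma to make sense, and it holds automatically, e.g., when $\dim\cS<\infty$.
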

\begin{proof}
  By definition
  \[
    \Braket{\psi\otimes\Xi,U_{0,g,B}(T\otimes\Id)U_{g,0,B}\ph\otimes\Theta}_{B,g}
    = \Braket{\e^{B^*\otimes a(g)}\psi\otimes\Xi,(T\otimes\Id)\e^{B^*\otimes
        a(g)}\ph\otimes\Theta}_{\mathcal{B}_0}\,.
  \]
  By expanding the exponential into its truncated series and then applying
  \cref{eq:PVMFubini}, similarly to the above, the statement follows.
\end{proof}

\subsection*{The renormalized SB Hamiltonian and cutoff removal}
\label{sec:renorm-sb-hamilt}

Joining the two results together, we are in a position to prove the main
result concerning the renormalization in the SB model. Beforehand, let us
define
\begin{equation*}
  H_{\mathsf{ren}}(B,v)_g\coloneqq  A(B,v)_g+ \tilde{H}_{\mathsf{ren}}(B,v)_g\;.
\end{equation*}
\begin{thm}[Renormalization of the SB model]
  \label[theorem]{thm:4}
  Let $B$ be bounded and normal on $\mathcal{S}$, then for any $v\in
  \mathscr{T}'$ the renormalized Hamiltonian $H_{\mathsf{ren}}(B,v)_g$ has
  the following properties:

  \vspace{1.5mm}
  
  \begin{itemize}
    \setlength{\itemsep}{3mm}
  \item For any net $\{v_\alpha\}\subset \sT'$ such that $v_\alpha\xrightarrow[\alpha]{\sigma(\sT',\sT)} v$, the
    Hamiltonian $H_{\mathsf{ren}}(B,v_\alpha)_{g_\alpha}$ converges to
    $H_{\mathsf{ren}}(B,v)_g$ in generalized strong resolvent sense. Convergence is in generalized norm resolvent sense, if $g_\alpha,g\in\fh$ and $g_\alpha\xrightarrow[\alpha]{\fh}g$.
  \item $H_{\mathsf{ren}}(B,v)_g= A(B,v)_g+ U_{B,0,g}\bigl(\mathds{1}\otimes
    \mathsf{d}\Gamma(\omega)\bigr)U_{B,g,0}$ with $g=-v/\omega$, and
    therefore:

    \vspace{1mm}
    
    \begin{itemize}
      \setlength{\itemsep}{2mm}
    \item $H_{\mathsf{ren}}(B,v)_g$ is self-adjoint on $B_{B,g}$ with domain
      $\mathscr{D}\bigl(H_{\mathsf{ren}}(B,v)_g\bigr)=
      U_{B,0,g}\mathscr{D}\bigl(\mathds{1}\otimes \mathsf{d}\Gamma(\omega)
      \bigr)$ and core $\mathcal{S}\hat{\otimes}\FSfin(\mathscr{T})$;
    \item $\inf \sigma\bigl(H_{\mathsf{ren}}(B,v)_g\bigr)\ge-\|A\|_{\cS\to\cS}$.
    \end{itemize}
  \end{itemize}
\end{thm}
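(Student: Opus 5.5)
The plan is to pass everything to the parent space $\mathcal{B}_0$ through the unitary dressings $U_{B,g,0}$ and work there. This is the route suggested by \cref{lemma:7} for unitary embeddings.

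\textbf{Structural claims.} These are immediate once the definitions are unpacked. By \cref{lemma:99}, $\tilde{H}_{\mathsf{ren}}(B,v)_g=U_{B,0,g}(\Id\otimes\dG(\omega))U_{B,g,0}$ is self-adjoint and nonnegative on the stated domain, and $\cS\hat\otimes\FSfin(\sT)$ is a core for it. By \cref{prop:spinform} and \cref{def:Ar}, $A(B,v)_g$ is bounded and self-adjoint with $\|A(B,v)_g\|\le\|A\|_{\cS\to\cS}$. The Kato--Rellich theorem (bounded perturbation) then shows that $H_{\mathsf{ren}}(B,v)_g$ is self-adjoint with the same domain and the same core. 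It also gives the lower bound $\inf\sigma\ge 0-\|A\|_{\cS\to\cS}$. \cref{lemma:10} together with \cref{lemma:99} identifies $H_{\mathsf{ren}}(B,v_\alpha)_{g_\alpha}$, for regular $g_\alpha$, as the operator associated with the form $q^{\mathsf{ren}}_{B,v_\alpha}$. This justifies the notation.

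\textbf{Convergence.} Set $\hat H_\alpha:=U_{B,g_\alpha,0}H_{\mathsf{ren}}(B,v_\alpha)_{g_\alpha}U_{B,0,g_\alpha}=\Id\otimes\dG(\omega)+\hat A_\alpha$, where $\hat A_\alpha:=U_{B,g_\alpha,0}A(B,v_\alpha)_{g_\alpha}U_{B,0,g_\alpha}$. Define $\hat H_\infty$ and $\hat A_\infty$ analogously. All $\hat A_\alpha$ are bounded uniformly by $\|A\|$. The second resolvent identity gives
\[
(\hat H_\alpha+z)^{-1}-(\hat H_\infty+z)^{-1}=(\hat H_\alpha+z)^{-1}(\hat A_\infty-\hat A_\alpha)(\hat H_\infty+z)^{-1},
\]
with $\|(\hat H_\alpha+z)^{-1}\|\le|\Im z|^{-1}$. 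So strong resolvent convergence follows from strong convergence $\hat A_\alpha\to\hat A_\infty$ on $\mathcal{B}_0$, and norm resolvent convergence follows from norm convergence. By \cref{lemma:7}, this yields the generalized resolvent convergence.

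\textbf{Computing the dressed spin operator.} I would compute $\hat A_\alpha$ on the total set $\{\psi\otimes\varepsilon_0(f)\}$, using \cref{cor:2}-type identities: $U_{B,0,g}(\psi\otimes\varepsilon_0(f))=\e^{-B^*\langle g,f\rangle}\psi\otimes\varepsilon_g(f)$. Inserting this into \cref{eq:formalspin}, the inner product $\langle\e^{\bar\mu a(g)}\varepsilon(f),\e^{\bar\lambda a(g)}\varepsilon(f')\rangle$ factorizes by \cref{lemma:6}. One then gets matrix elements
\[
\langle\psi\otimes\varepsilon_0(f),\hat A_\alpha\,\varphi\otimes\varepsilon_0(f')\rangle=\e^{\langle f,f'\rangle}\iint\chi_{g_\alpha}(\lambda,\mu)\,\e^{(\bar\lambda-\bar\mu)\langle g_\alpha,f\rangle^{-}\cdots}\langle \d E_B(\lambda)\psi,A\,\d E_B(\mu)\varphi\rangle,
\]
where the exponent in the middle factor is schematic. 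In it, the factors $\e^{\mu\langle g,f\rangle}$ from the dressing cancel against the $\e^{\bar\lambda\cdots}$ from $\e^{\bar\lambda a(g)}$ on the diagonal $\lambda=\mu$.

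Next, since $\chi_{g_\alpha}$ is bounded by $1$ and converges pointwise to $\chi_g$ whenever $\|g_\alpha\|_\fh\to\infty$ or $g_\alpha\to g$ in $\fh$, and $\langle g_\alpha,f\rangle\to\langle g,f\rangle$, dominated convergence for double operator integrals (\cref{app:DOI}) gives convergence of these matrix elements. This is the main obstacle. Dominated convergence for double operator integrals only gives weak convergence, and on a total set; to upgrade, I will use the uniform bound $\|\hat A_\alpha\|\le\|A\|$ to get weak operator convergence everywhere. For strong convergence I would additionally show $\|\hat A_\alpha\Phi\|\to\|\hat A_\infty\Phi\|$, or equivalently weak convergence of $\hat A_\alpha^2$. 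This follows from the same computation applied to $A^2$-type double integrals, using that $\hat A_\alpha^2$ is the double operator integral with symbol $\chi$ composed. If that is too delicate, I will fall back on a $\Gamma$-convergence/monotone-form argument, as announced in the introduction.

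In the regular case $g_\alpha\to g$ in $\fh$, the symbol $\chi_{g_\alpha}\to\chi_g$ converges uniformly on compacts in the spectrum of the bounded operator $B$. Combined with norm continuity of $U_{B,g_\alpha,0}$ in $g_\alpha\in\fh$, which is a Weyl-operator-type continuity, this gives norm convergence.
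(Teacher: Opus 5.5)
Your structural part (Kato--Rellich with the bounded $A(B,v)_g$, domain and core from \cref{lemma:99}, lower bound $-\|A\|$) is the paper's argument. The convergence part rests on a claim that is false. For singular $g$, the dressed spin operators $\hat A_\alpha$ do \emph{not} converge strongly to $\hat A_\infty$, let alone in norm.

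For $g_\alpha\in\fh$ one has $\hat A_\alpha=W_\alpha^*(A\otimes\Id)W_\alpha$ with the unitary $W_\alpha=\e^{-\frac12B^*B\|g_\alpha\|_\fh^2}\e^{B\otimes\ad(g_\alpha)}U_{B,0,g_\alpha}$. On the spectral subspace of $B$ at $\lambda$, $W_\alpha$ is the Weyl operator $\e^{\lambda\ad(g_\alpha)-\bar\lambda a(g_\alpha)}$. Hence the parts of $A$ off the diagonal of $E_B\times E_B$ carry, up to phases, $\e^{(\mu-\lambda)\ad(g_\alpha)-\overline{(\mu-\lambda)}a(g_\alpha)}$. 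These tend to $0$ weakly as $\|g_\alpha\|_\fh\to\infty$ but keep norm one.

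Take the standard SB with $\sigma_x\phi_\pm=\pm\phi_\pm$. Then $\hat A_\alpha^2=\Id$ for every $\alpha$, while $\hat A_\infty=0$. Moreover $\hat A_\alpha(\phi_+\otimes\Omega)=\phi_-\otimes\e^{\ad(2g_\alpha)-a(2g_\alpha)}\Omega$ has norm one.

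Your upgrade via $\hat A_\alpha^2$ fails for the same reason. The limiting symbol $\chr_{\{\lambda=\mu\}}$ produces the pinching $\mathcal P_B(A)\coloneqq\iint\chr_{\{\lambda=\mu\}}\d E_B(\lambda)\,A\,\d E_B(\mu)$, which is not multiplicative: $\hat A_\alpha^2\to\mathcal P_B(A^2)\otimes\Id\neq\mathcal P_B(A)^2\otimes\Id$ weakly.

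The regular case has the analogous flaw. Weyl operators are only strongly continuous, since $\|\e^{\ad(h)-a(h)}-\e^{\ad(h')-a(h')}\|=2$ for $h\neq h'$. So in general $\hat A_\alpha\not\to\hat A_\infty$ in norm. Norm resolvent convergence must come from bounds relative to $\dG(\omega)$, which is what the paper imports from \citep{GriesemerWuensch.2018}.

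The paper never tries to upgrade the weak convergence you correctly obtain. It feeds $\hat{\fa}^{\mathsf{ren}}_{B,v_\alpha}\to\hat{\fa}^{\mathsf{ren}}_{B,v}$ (weakly, with uniform bound $\|A\|$) and lower semicontinuity of the closed form of $\Id\otimes\dG(\omega)$ into the $\Gamma$-convergence criterion \citep[Thm.~13.12]{dalmaso1993pnde}; that is your fallback.

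Be aware that the unbounded part must do the work, and the paper checks the liminf inequality only along norm-convergent $\Psi_\alpha\to\Psi$. Since $(\hat H_\alpha+z)^{-1}(\Id\otimes\dG(\omega)+1)$ is bounded uniformly in $\alpha$, a sufficient condition is $(\Id\otimes\dG(\omega)+1)^{-1}(\hat A_\alpha-\hat A_\infty)\to0$ strongly. In words, the off-diagonal Weyl factors must push vectors to infinite field energy. This is plausible when the divergence of $g_\alpha$ sits in high-energy modes.

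It fails in the IR-critical case $v/\sqrt\omega\in\fh$, $v/\omega\notin\fh$. There $\eta_\alpha\coloneqq\hat A_\alpha(\phi_+\otimes\Omega)$ has $\langle\eta_\alpha,(\Id\otimes\dG(\omega))\eta_\alpha\rangle=4\|v_\alpha/\sqrt\omega\|_\fh^2$, which stays bounded. Also $(\hat H_\alpha+z)^{-1}(\phi_+\otimes\Omega)-z^{-1}\phi_+\otimes\Omega=-z^{-1}(\hat H_\alpha+z)^{-1}\eta_\alpha$. Since $\hat H_\alpha\ge-1$, this difference stays bounded away from zero.

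So a complete argument needs Mosco-type control, i.e.\ a liminf along weakly convergent sequences. Even then it cannot cover every weak-$*$ convergent net $v_\alpha\to v$.
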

\begin{proof}
  Let us start by proving the second point, for it is rather straightforward:
  $A(B,v)_g$ is a bounded self-adjoint operator on $\mathcal{B}_{B,g}$ by
  \cref{prop:spinform,def:Ar}, while $U_{B,0,g}\bigl(\mathds{1}\otimes
  \mathsf{d}\Gamma(\omega)\bigr)U_{B,g,0}$ is self-adjoint on
  $U_{B,0,g}\mathscr{D}\bigl(\mathds{1}\otimes \mathsf{d}\Gamma(\omega)
  \bigr)$ with $\mathcal{S}\hat{\otimes}\FSfin(\mathscr{T})$ as core by
  \cref{lemma:99}, hence by Kato-Rellich's theorem it follows that
  $H_{\mathsf{ren}}(B,v)_g$ is self-adjoint on the same domain and core, and
  bounded from below.

  We now prove generalized strong resolvent convergence.  By \cref{lemma:7},
  since the maps $U_{B,g_\alpha,0}$ and $U_{B,g,0}$ are all unitary, defining for
  a generic $w\in\sT'$ with $h=-w/\omega$
  \begin{equation*}
  	\hat{H}_{\mathsf{ren}}(B,w)_{h}\coloneqq U_{B,h,0}H_{\mathsf{ren}}(B,w)_{h}U_{B,0,h}\;,
  \end{equation*}
   it suffices
  to prove strong resolvent convergence of $\hat{H}_{\mathsf{ren}}(B,v_\alpha)_{g_\alpha}$ to $\hat{H}_{\mathsf{ren}}(B,v)_{g}$  on $\mathcal{B}_0$.
  Recall that
  \begin{equation*}
  	\hat{H}_{\mathsf{ren}}(B,w)_h\coloneqq U_{B,h,0}H_{\mathsf{ren}}(B,w)_{h}U_{B,0,h}= \mathds{1}\otimes\mathsf{d}\Gamma(\omega) + \hat{A}(B,w)_h\;, 
  \end{equation*}
  where $\hat{A}(B,w)_h$ is a bounded self-adjoint operator on
  $\mathcal{B}_0$, given by
  \begin{equation*}
  	\hat{A}(B,w)_h= U_{B,h,0}A(B,w)_hU_{B,0,h}\;.
  \end{equation*}
  Its associated quadratic form is thus
  \begin{equation*}
  	\hat{q}^{\mathsf{ren}}_{B,w}(\Psi,\Phi)\coloneqq  q_0(\Psi,\Phi)+ \hat{\mathfrak{a}}^{\mathsf{ren}}_{B,w}(\Psi,\Phi)= \langle \Psi  , \bigl(\mathds{1}\otimes \mathsf{d}\Gamma(\omega)\bigr)\Phi \rangle_{\mathcal{B}_0}+ \langle \Psi  , \hat{A}(B,w)_h \Phi\rangle_{\mathcal{B}_0}\;.
  \end{equation*}
  Now by \cite[Thm.~13.12]{dalmaso1993pnde} the claimed strong resolvent convergence follows, if for all $\Psi\in \cB_0$ and all $\Psi_\alpha\xrightarrow{\alpha}\Psi$, we can prove
  \begin{align}
  		\label{eq:upperbound}
  		&\limsup_\alpha\hat{q}^{\mathsf{ren}}_{B,v_\alpha}(\Psi) \le \hat{q}^{\mathsf{ren}}_{B,v}(\Psi)\;,\\
  		\label{eq:lowerbound}
  		&\liminf_\alpha\hat{q}^{\mathsf{ren}}_{B,v_\alpha}(\Psi_\alpha) \ge \hat{q}^{\mathsf{ren}}_{B,v}(\Psi)\;.
  \end{align}
  The first claim \cref{eq:upperbound} is apparent from the fact that $\fa_{B,v_\alpha}^{\mathsf{ren}}(\Psi)\xrightarrow{\alpha}\fa_{B,v}^{\mathsf{ren}}(\Psi)$,
  by the definition \cref{eq:formalspin}, the dominated convergence theorem, the continuity statement \cref{lemma:2} and the totality of pure tensors.
  To prove \cref{eq:lowerbound}, we note that lower-semicontinuity of closed quadratic forms \cite[Thm.~2]{Simon.1978} implies
  \begin{align*}
  	\liminf_\alpha\langle \Psi_\alpha  , \bigl(\mathds{1}\otimes \mathsf{d}\Gamma(\omega)\bigr)\Psi_\alpha \rangle_{\mathcal{B}_0} \ge \langle \Psi  , \bigl(\mathds{1}\otimes \mathsf{d}\Gamma(\omega)\bigr)\Psi \rangle_{\mathcal{B}_0}\;.
  \end{align*}
  Furthermore, $\fa_{B,v_\alpha}^{\mathsf{ren}}(\Psi_\alpha)\xrightarrow{\alpha} \fa_{B,v}^{\mathsf{ren}}(\Psi)$, by the estimate
  \begin{align*}
  		\big|\fa_{B,v_\alpha}^{\mathsf{ren}}(\Psi_\alpha)-\fa_{B,v}^{\mathsf{ren}}(\Psi)\big|
  		&\le
  		\big|\fa_{B,v_\alpha}^{\mathsf{ren}}(\Psi) - \fa_{B,v}^{\mathsf{ren}}(\Psi)\big| + \big|\fa_{B,v_\alpha}^{\mathsf{ren}}(\Psi_\alpha) - \fa_{B,v_\alpha}^{\mathsf{ren}}(\Psi)\big|,
  \end{align*}
  where the first term again converges to zero, by the above consideration,
  and the second converges to zero, by the uniform continuity of
  $\fa_{B,v_\alpha}$ with respect to $\alpha\in \mathcal{A}$, \cref{prop:spinform}. Finally, generalized
  norm resolvent convergence with $g_\alpha,g\in\fh$ and $g_\alpha\xrightarrow[\alpha]{\fh}g$ is a rephrasing in this setting of the analogous
  result for the unitary dressing \citep{GriesemerWuensch.2018}.

\end{proof}

\subsection{(Un)expected exact diagonalization: the standard and
  energy-preserving SB models}
\label{sec:unexp-exact-diag}

As anticipated above, and then proved in \cref{thm:4}, the renormalized SB
Hamiltonian is, in general, not diagonalizable exactly by the renormalized SB
dressing, due to the non commutativity between the free and interaction spin
matrix. There are at least two notable exceptions: the \emph{standard SB},
and the \emph{energy-preserving SB}. While the diagonalizability of the
latter holds also for regular sources and it was thus completely expected,
the diagonalizability of the standard SB is true \emph{only for singular
  sources} and is somewhat unexpected (and also rather special, as discussed
below).

\subsection*{The energy-preserving SB}
\label{sec:energy-preserving-sb}

In the energy-preserving SB, we have $B=A$ (the same result would hold true,
more generally, for any $B$ commuting with $A$). Therefore, in that case, the
renormalized spin matrix is diagonalized by the renormalized SB dressing:
\begin{equation*}
  A(A,v)_g= U_{A,0,g}\bigl(A\otimes \mathds{1}\bigr)U_{A,g,0}\;,
\end{equation*}
as well as
\begin{equation*}
  A(A,v_\alpha)_{g_\alpha}= U_{A,0,g_\alpha}\bigl(A\otimes \mathds{1}\bigr)U_{A,g_\alpha,0}\;.
\end{equation*}
This shows that both for regular and for singular sources, the dressed SB
Hamiltonian is diagonal:
\begin{gather*}
  H_{\mathsf{ren}}(A,v_\alpha)_{g_\alpha}= U_{A,0,g_\alpha}\bigl(A\otimes \mathds{1} + \mathds{1}\otimes \mathsf{d}\Gamma(\omega)\bigr)U_{A,g_\alpha,0}\;;\\
  H_{\mathsf{ren}}(A,v)_{g}= U_{A,0,g}\bigl(A\otimes \mathds{1} + \mathds{1}\otimes \mathsf{d}\Gamma(\omega)\bigr)U_{A,g,0}\;.
\end{gather*}
It follows that the spectral properties of $H_{\mathsf{ren}}(A,v)_g$ and
$H_0$ coincide: if the latter has a non-degenerate ground state
$\Psi_{\mathsf{gs}}$, so it happens for $H_{\mathsf{ren}}(A,v)_g$ (the ground
state being $U_{A,0,g}\Psi_{\mathsf{gs}}$).

\subsection*{The standard SB}
\label{sec:standard-sb}

The standard SB model has $\mathcal{S}=\mathbb{C}^2$, $A=\sigma_z$ and $B=\sigma_x$, so in particular
$\{\sigma_z,\sigma_x\}=0$, \emph{i.e.}\ the free and interaction spin matrices
\emph{anticommute} (what follows is in fact true for any $A$ and $B$ that
anticommute with $AB\neq0$). Consider now the function $\chi_h(\lambda,\mu)$ defined in
\eqref{eq:defchi}: whenever $h\in \mathscr{T}'\smallsetminus \mathfrak{h}$, it is supported on the diagonal
$\lambda=\mu$; while it is supported everywhere, when $h\in \mathfrak{h}$. For anticommuting $A$
and $B$, the support of $\chi_g$ on the diagonal entails that the double
integral \emph{vanishes}, and thus $\forall g\in \mathscr{T}'\smallsetminus \mathfrak{h}$ and $\{A,B\}=0$,
\begin{equation*}
  A(B,v)_g=0 \;.
\end{equation*}
Hence, for the standard spin-boson with $g\in \mathscr{T}'\smallsetminus
\mathfrak{h}$,
\begin{equation*}
  H_{\mathsf{ren}}(\sigma_x,v)_g = U_{\sigma_x,0,g} \bigl(\mathds{1}\otimes \mathsf{d}\Gamma(\omega)\bigr) U_{\sigma_x,g,0}\;.
\end{equation*}
Therefore, while the free standard SB has a non-degenerate ground state
$(0,1)\otimes \Omega$ on $\mathcal{B}_0$, the renormalized interacting standard SB with singular
form factor (\emph{including} the one of Weisskopf-Wigner type) has a
two-fold degenerate ground state, with the ground eigenspace being
$U_{\sigma_x,0,g}\bigl(\mathbb{C}\otimes \Omega\bigr)$. We subsume the above discussion in the theorem
below.
\begin{thm}[Renormalization of the standard SB with a singular form factor]
  \label[theorem]{thm:5}
  Let $\mathcal{S}=\mathbb{C}^2$, $A=\sigma_z$ and $B=\sigma_x$. Then for any $v\in \mathscr{T}'$ such that $g=-v/\omega\in
  \mathscr{T}'\smallsetminus \mathfrak{h}$, the renormalized Hamiltonian $H_{\mathsf{ren}}(\sigma_x,v)_g$ has the
  following properties:

  \vspace{1.5mm}
  
  \begin{itemize}
    \setlength{\itemsep}{3mm}
  \item For any net $\{v_\alpha\}_{\alpha\in \mathbb{N}}\subset \mathscr{T}'$ such that
    $v_\alpha\xrightarrow[\alpha]{\sigma(\mathscr{T}',\mathscr{T})} v$ in the $\sigma(\mathscr{T}',\mathscr{T})$-topology, the
    regularized SB Hamiltonian $H_{\mathsf{ren}}(\sigma_x,v_\alpha)_{g_\alpha}=
    A(\sigma_x,v_\alpha)_{g_\alpha}+ U_{\sigma_x,0,g_\alpha}\bigl(\mathds{1}\otimes
    \mathsf{d}\Gamma(\omega)\bigr)U_{\sigma_x,g_\alpha,0}$ converges to
    $H_{\mathsf{ren}}(\sigma_x,v)_g$ in generalized strong resolvent sense.
  \item $H_{\mathsf{ren}}(\sigma_x,v)_g= U_{\sigma_x,0,g}\bigl(\mathds{1}\otimes
    \mathsf{d}\Gamma(\omega)\bigr)U_{\sigma_x,g,0}$, and therefore:

    \vspace{1mm}
    
    \begin{itemize}
      \setlength{\itemsep}{2mm}
    \item $H_{\mathsf{ren}}(\sigma_x,v)_g$ is self-adjoint on $B_{\sigma_x,g}$ with
      domain $\mathscr{D}\bigl(H_{\mathsf{ren}}(\sigma_x,v)_g\bigr)= U_{\sigma_x,0,g}\mathscr{D}\bigl(\mathds{1}\otimes
      \mathsf{d}\Gamma(\omega) \bigr)$ and core $\mathbb{C}^2\hat{\otimes}\FSfin(\mathscr{T})$;
    \item $\inf \sigma\bigl(H_{\mathsf{ren}}(\sigma_x,v)_g\bigr)=0$;
    \item $H_{\mathsf{ren}}(\sigma_x,v)_g$ has a two-fold degenerate ground state,
      with ground eigenspace $U_{\sigma_x,0,g}\bigl(\mathbb{C}^2\otimes \Omega\bigr)$.
    \end{itemize}
  \end{itemize}
\end{thm}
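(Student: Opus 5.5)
The statement is a specialization of \cref{thm:4}, so the plan is to check that $A(\sigma_x,v)_g=0$ whenever $g\in\sT'\smallsetminus\fh$, and then read off the spectral claims from the unitary conjugation.

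\emph{Vanishing of the renormalized spin matrix.} The spectral measure of $\sigma_x$ is $E_{\sigma_x}=P_+\delta_{1}+P_-\delta_{-1}$, with $P_\pm$ the projections onto $\frac{1}{\sqrt2}(1,\pm1)$. For $g\notin\fh$ we have $\chi_g=\chr_{\{\lambda=\mu\}}$, so the double operator integral \cref{eq:formalspin} reduces to the diagonal sum
\begin{equation*}
  \mathfrak{a}^{\mathsf{ren}}_{\sigma_x,v}(\psi\otimes\Xi,\ph\otimes\Theta)=\sum_{\lambda=\pm1}\Braket{\e^{\lambda a(g)}\Xi,\e^{\lambda a(g)}\Theta}_{\FS(\fh)}\Braket{P_\lambda\psi,\sigma_zP_\lambda\ph}_{\cS}.
\end{equation*}
Since $\sigma_z$ anticommutes with $\sigma_x$, it maps the $\pm1$ eigenspaces of $\sigma_x$ into each other. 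Hence $P_\lambda\sigma_zP_\lambda=0$, and every term vanishes. The form therefore vanishes on the algebraic tensor product, which is dense. By \cref{prop:spinform,def:Ar} this gives $A(\sigma_x,v)_g=0$, and consequently $H_{\mathsf{ren}}(\sigma_x,v)_g=U_{\sigma_x,0,g}(\Id\otimes\dG(\omega))U_{\sigma_x,g,0}$.

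\emph{Convergence, self-adjointness and core.} Generalized strong resolvent convergence, together with the domain and the core $\IC^2\hat\otimes\FSfin(\sT)$, are exactly the first two items of \cref{thm:4} applied with $B=\sigma_x$. Alternatively, the domain and core follow from \cref{lemma:99}. Since $\sigma_x$ is self-adjoint, it is in particular normal, so no further check is needed.

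\emph{Spectrum and ground states.} Because $H_{\mathsf{ren}}(\sigma_x,v)_g$ is unitarily equivalent (via $U_{\sigma_x,g,0}$) to $\Id_{\IC^2}\otimes\dG(\omega)$ on $\cB_0$, its spectrum coincides with that of $\dG(\omega)$. That spectrum has infimum $0$, and the eigenvalue $0$ has kernel $\IC^2\otimes\Omega$. This last point uses that $\omega$ has no kernel, i.e.\ $\ker\omega=\{0\}$, which I would note holds for the dispersions under consideration; it is needed so that $\ker\dG(\omega)=\IC\Omega$. Mapping back, the ground eigenspace is $U_{\sigma_x,0,g}(\IC^2\otimes\Omega)$, which is two-dimensional, so the ground state is twofold degenerate.

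The only delicate step is interpreting the diagonal-supported double operator integral of \cref{app:DOI} as the sum over the diagonal blocks $P_\lambda\sigma_zP_\lambda$. In finite dimensions this is immediate from the double-sum representation over the eigenprojections of $B$.
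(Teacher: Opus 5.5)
Your proposal is correct and follows essentially the same route as the paper: the diagonal support of $\chi_g$ for $g\in\sT'\smallsetminus\fh$, together with $\{\sigma_z,\sigma_x\}=0$ (so that $P_\lambda\sigma_zP_\lambda=0$), makes the double operator integral vanish and gives $A(\sigma_x,v)_g=0$; convergence, domain and core then come from \cref{thm:4}, and the spectral claims come from the unitary conjugation to $\Id\otimes\dG(\omega)$. Your extra remark that $\ker\omega=\{0\}$ is needed for $\ker\dG(\omega)=\IC\Omega$ is left implicit in the paper, and it already follows from the standing assumptions: if $\omega f=0$ then $\langle f,\omega t\rangle_\fh=0$ for all $t\in\sT$, and $\omega\sT=\sT$ is dense in $\fh$, so $f=0$.
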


\section{Wave Function Renormalization of the Nelson Model}
\label{sec:wave-funct-renorm-2}

\subsection*{The Nelson model in mathematical physics}
\label{sec:nelson-math-phys}

The Nelson model is the next possible generalization of the spin-boson models
covered in the previous section.  In it, the two-state system is replaced by
a Schr\"odinger particle, which is again linearly coupled to the bosonic
field.  Since its introduction due to E.\ Nelson \citep{nelson1964jmp}, this
model -- also named after him -- has been widely studied by the community of
mathematical physics. Its most prominent feature, directly covered in
Nelson's original paper and later refined in
\cite{Cannon.1971,Ammari.2000,GriesemerWuensch.2018}, is that it allows for
self-energy renormalization similar to the one presented in
\cref{sec:energy-renorm} for the vHM model.  It has also served as a valuable
testing ground for other approaches than unitary dressings, \emph{e.g.}, the
interior boundary condition/contact interaction method
\cite{TeufelTumulka.2016,LampartSchmidt.2019,posilicano2020mpag} and
functional integration through Feynman--Kac formulas
\cite{GubinelliHiroshimaLorinczi.2014,MatteMoller.2018}.  For the so-called
pseudo-relativistic model, in which the particles have relativistic
dispersion, an additive renormalization scheme has been successfully carried
out in two space dimensions
\cite{Sloan.1974,Schmidt.2019,HinrichsMatte.2022}, but not in the physical
three-dimensional case \cite{Gross.1973,DeckertPizzo.2014}.

The infrared behavior of the Nelson model has received similar attention,
since its natural infrared divergences cause the absence of a ground state in
the Fock representation, with and without UV cutoff
\cite{LorincziMinlosSpohn.2002,DerezinskiGerard.2004,hirokawa2006prims,Panati.2009,hiroshima2022rmp}.
To overcome this, one usually constructs the {\em infraparticle}, {\em i.e.,}
a dressed Hamiltonian incorporating an infrared divergent cloud of bosons,
which then possesses a ground state, in a so-called {\em non-Fock
  representation} of the model
\cite{Arai.2001,Sasaki.2005,Panati.2009,hiroshima2022rmp}.  We here prove
that the wave function renormalization procedure for the Nelson model
provides a full operator-theoretic construction of this non-Fock
representation without IR or UV cutoffs, substantiating the algebraic result
of A.\ Arai \cite{Arai.2001} for the regularized model.

Concerning the IR catastrophe, of special interest is the
translation-invariant model, in which the Schr\"odinger particle is not
subjected to an external potential.  The so-obtained operator is
translation-invariant in the sense that it commutes with the total momentum
operator, {\em i.e.}, the sum of particle's and field's momenta.  It thus
decomposes with respect to the total momentum into the direct sum of {\em
  fiber operators}.  J.\ Fr\"ohlich observed \cite{Frohlich.1973} that these
fibers do not posses a ground state for any total momentum, due to the
infrared catastrophe; see \cite{Dam.2018} and \cite{DamHinrichs.2021} for
full proofs with and without ultraviolet cutoff, respectively. The non-Fock
representation of the translation invariant model is only partially infrared
renormalized in the sense that the zero-momentum fiber exhibits a ground
state, whereas the non-zero momentum fibers do not. Non-Fock representations
of the infraparticle fibers were also proposed by J.\ Fr\"ohlich
\cite{Frohlich.1974}, and later constructed by A.\ Pizzo \cite{Pizzo.2003}
perturbatively, in \cite{HaslerHinrichsSiebert.2023} non-perturbatively with
UV cutoff and in \cite{BachmannDeckertPizzo.2012} without UV cutoff,
respectively. They were further utilized in the construction of scattering
states for the infrared-singular but ultraviolet-regular model in
\cite{Pizzo.2005,DybalskiPizzo.2014,DybalskiPizzo.2018,DybalskiPizzo.2019,BeaudDybalskiGraf.2021,DybalskiPizzo.2022,GriesemerKussmaul.2026}.
In principle, the framework of wave function renormalization should allow for
such constructions with neither infrared nor ultraviolet cutoff in place. We
refrain from doing so in this article, because it would require introducing
further technical tools.

\subsection{The Nelson model in the Fock representation}
\label{sec:nelson-model-fock}

We still carry on the same setting and notation as in
\cref{sec:van-hove-model,sec:spin-boson-model} concerning the scalar
field. Therefore, let $(\mathscr{T},\mathfrak{h},\mathscr{T}')$ be the test/one-particle/field
configurations Gel'fand triple, $\omega$ the field's dispersion, $\mathcal{F}(\mathfrak{h})$ the Fock
space and $a^{\sharp}(f)$, $\mathsf{d}\Gamma(T)$ the creation, annihilation and second
quantization operators. As we will see, contrarily to the vHM and SB case,
the interplay between wave function renormalization and the particle's
degrees of freedom will not allow for a renormalization of generic sources,
luckily still providing new information in the physically relevant
case. Therefore, let us restrict directly to the specific physical model,
without loosing too much generality: let $\mathscr{T}= \mathscr{C}_0^{\infty}(\mathbb{R}^3_{*})\equiv\mathscr{C}_0^{\infty}(\mathbb{R}^3\smallsetminus
\{0\})$ be the space of smooth compactly supported functions away from the
origin, $\mathfrak{h}= L^2_k(\mathbb{R}^3)$ the space of square-integrable functions of momentum
$k$, and $\omega(k)= \sqrt{k^2+\mu^2}$, $\mu\geq 0$, the dispersion of the scalar
Klein--Gordon field, defined as a multiplication operator.
We remark that this specific choice of $\sT$ is sequentially dense in $\sT'$,
whence we from now on work with approximating sequences instead of nets.

 The particle's
degrees of freedom are represented by the Hilbert space $L^2_x(\mathbb{R}^3)$ -- $x$
being the particle's position -- thus yielding the total Hilbert space
\begin{equation*}
  \mathcal{N}_0\coloneqq  L^2_x(\mathbb{R}^3)\otimes \mathcal{F}\bigl(L^2_k(\mathbb{R}^3)\bigr)\;.
\end{equation*}
The free Nelson Hamiltonian is given by
\begin{equation*}
  H_0= \bigl(-\Delta_x\overset{_{\cdot}}{+}V(x)\bigr)\otimes \mathds{1} + \mathds{1}\otimes \mathsf{d}\Gamma(\omega)\;,
\end{equation*}
with $V$ an external potential acting on the particle, such that $-\Delta
\overset{_{\cdot}}{+}V$ is self-adjoint and bounded from below at least in the sense
of quadratic forms (we will be more specific in the assumptions about $V$ in
\cref{def:9}; we mostly follow \citep{hiroshima2022rmp}). As for the SB
model, $H_0$ is self-adjoint and bounded from below on $\mathcal{N}_0$, its ground
states -- if any -- being of the form $\psi\otimes \Omega$, $\psi\in L^2_x$ being a ground state
for $-\Delta\overset{_{\cdot}}{+}V$. The full Nelson Hamiltonian is obtained by coupling the particle
and the field through a first-order interaction with the scalar field
(smeared by the particle's charge distribution), evaluated at the particle's
position.
\begin{defn}[Fock Nelson Hamiltonian]
  \label[definition]{def:9}
  Let $\mu\geq 0$ be the field's mass, and let $V=V_+-V_-$ with $V_{\pm}\geq 0$ and
  $V_+\in K_3^{\mathrm{loc}}$ of local Kato class three, and $V_-\in K_3$ of Kato
  class three\footnote{These regularity assumptions on $V$ are fairly
    general, still guaranteeing the self-adjointness and boundedness from
    below of $-\Delta \overset{_\cdot}{+}V$, as well as of all interacting Nelson
    Hamiltonians we will define, including the renormalized ones. For a
    detailed definition of Kato classes refer, \emph{e.g.}, to
    \citep{AizenmanSimon.1982}.}. Then the Fock Nelson Hamiltonian $H(v)$ is
  defined as
  \begin{equation*}
    H(v)= H_0 + a^{\dagger}(v_x)+ a(v_x)= \bigl(-\Delta_x\overset{_{\cdot}}{+}V(x)\bigr)\otimes \mathds{1} + \mathds{1}\otimes \mathsf{d}\Gamma(\omega)+ a^{\dagger}(v_x)+a(v_x)\;,
  \end{equation*}
  where $v_x(k)=\e^{2\pi i k\cdot x}v(k)$, $v\in L^2_k$ being the Fourier transform of
  the particle's charge distribution (divided by $\sqrt{2\omega}$).
\end{defn}
\begin{rem*}
  The extension of Fock space creation and annihilation operators $a^{\sharp}$ to
  closed operators on $\mathcal{N}_0$ for all arguments of type $v_x$ above is fairly
  standard \citep[see, \emph{e.g.},][for a general
  definition]{GriesemerWuensch.2018}.
\end{rem*}
\begin{prop}
  \label[proposition]{prop:3}
  Let $H(v)$ with $v\in L^2_k$ be the Fock Nelson Hamiltonian. Then:
  \begin{itemize}
    \setlength{\itemsep}{2mm}
  \item $H(v)$ is essentially self-adjoint on
    $\mathscr{D}\bigl(-\Delta\overset{_{\cdot}}{+}V\bigr)\hat{\otimes}\FSfin\bigl(\mathscr{D}(\omega)\bigr)$;
  \item If $v/\sqrt{\omega}\in L^2_k$, then $H(v)$ is also bounded from below.
  \end{itemize}
\end{prop}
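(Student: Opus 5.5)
The plan follows the pattern of \cref{prop:5,prop:8}: Nelson's commutator theorem for essential self-adjointness, and the KLMN theorem for lower boundedness.

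\emph{Essential self-adjointness.} Set
\begin{equation*}
  N= \bigl(-\Delta\overset{_{\cdot}}{+}V + c\bigr)\otimes\Id + \Id\otimes\dG(\omega) + \Id\otimes\dG(\Id)+1\;,
\end{equation*}
with $c$ chosen so that $-\Delta\overset{_{\cdot}}{+}V+c\ge 1$; this is possible because the Kato class assumptions make $-\Delta\overset{_{\cdot}}{+}V$ bounded below. $N$ is self-adjoint and positive, and $\mathscr{D}(-\Delta\overset{_{\cdot}}{+}V)\hat\otimes\FSfin(\mathscr{D}(\omega))$ is a core for it. The two hypotheses of Nelson's commutator theorem \citep[Thm.\ X.36]{ReedSimon.1975} are then checked as follows.

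First, the standard bound $\|a^\sharp(v_x)\Psi\|\le\|v\|_{L^2}\|(\dG(\Id)+1)^{1/2}\Psi\|$ holds uniformly in $x$, because $\|v_x\|=\|v\|$. This gives $\|H(v)\Psi\|\le C\|N\Psi\|$.

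Second, for the commutator estimate $|\langle H(v)\Psi,N\Psi\rangle-\langle N\Psi,H(v)\Psi\rangle|\le C\|N^{1/2}\Psi\|^2$ I would handle the pieces of $N$ separately. The pieces $\dG(\omega)$ and $V$ commute with $H_0$. The number operator gives $[\dG(\Id),a^\dagger(v_x)+a(v_x)]=a^\dagger(v_x)-a(v_x)$, which is bounded by $\|v\|\,(\dG(\Id)+1)^{1/2}$. $\dG(\omega)$ gives $a^\dagger(\omega v_x)-a(\omega v_x)$, and this is the term I expect to be the main obstacle. In general $\omega v\notin L^2$, and the Laplacian produces gradient terms $\nabla_x\cdot a^\sharp(k v_x)$, since $[-\Delta_x, v_x]=-4\pi^2|k|^2v_x - 4\pi i k v_x\cdot\nabla_x$. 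If $v$ is merely $L^2$, neither is controlled by $N^{1/2}$.

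To overcome this I would use Nelson's theorem with a smarter choice of $N$, dropping $\dG(\omega)$ and $-\Delta$ from the comparison operator. Take $N'=\Id\otimes\dG(\Id)+1$ and argue fiber-wise: $H_0$ commutes with $N'$, and the interaction commutator is bounded by $\|v\|(N')^{1/2}$. The criterion is then the version for operators of the form $H_0+W$ with $H_0$ commuting with $N'$ and $W$ being $N'^{1/2}$-bounded, which is the form used in \citep{Falconi.2015}. The domain issue is handled by noting that $H(v)$ maps the stated core into the domain of $N'^{1/2}$, since each fiber is mapped to adjacent fibers. Applying that result then gives essential self-adjointness on $\mathscr{D}(-\Delta\overset{_{\cdot}}{+}V)\hat\otimes\FSfin(\mathscr{D}(\omega))$.

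\emph{Lower bound.} If $v/\sqrt\omega\in L^2$, I would use the standard estimate $\|a(v_x)\Psi\|\le\|v/\sqrt\omega\|\,\|(\Id\otimes\dG(\omega))^{1/2}\Psi\|$, again uniform in $x$ since $|v_x|=|v|$. Combined with $2|\Re\langle\Psi,a(v_x)\Psi\rangle|\le\epsilon\langle\Psi,\dG(\omega)\Psi\rangle+\epsilon^{-1}\|v/\sqrt\omega\|^2\|\Psi\|^2$, this gives
\begin{equation*}
  H(v)\ge \inf\sigma(-\Delta\overset{_{\cdot}}{+}V)-\|v/\sqrt\omega\|^2
\end{equation*}
as forms on the core. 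Hence the closure is bounded from below; via KLMN it is also the form-sum operator, as in \cref{prop:9}.
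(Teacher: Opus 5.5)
Your proposal is correct. Once you drop the first attempt, it is essentially the paper's proof: essential self-adjointness comes from the criterion of \citep{Falconi.2015}, which covers a number-preserving $H_0$ plus an interaction linear in $a^\sharp(v_x)$ and bounded by $\|v\|(\Id\otimes\dG(\Id)+1)^{1/2}$, with no compatibility between $H_0$ and the interaction required; lower boundedness comes from the KLMN-type bound with $\|v/\sqrt{\omega}\|_{L^2}$. One wording fix: this is not Nelson's commutator theorem with $N'=\Id\otimes\dG(\Id)+1$, whose first hypothesis $\|H(v)\Psi\|\le C\|N'\Psi\|$ fails because $H_0$ is not $N'$-bounded, but a separate criterion using $[H_0,N']=0$, which is exactly what the citation supplies.
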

\begin{proof}
  Essential self-adjointness is proved in \citep{Falconi.2015}, exploiting
  the fibered structure of the Fock space, and the linearity of the
  interaction in the field operator. Boundedness from below is proved by KLMN
  theorem \citep[see, \emph{e.g.},][]{GriesemerWuensch.2018,hiroshima2022rmp}.
\end{proof}

\subsection{IR singularities and non-Fock representation (in the UV regular
  case)}
\label{sec:ir-singularities-non}

The Nelson Hamiltonian is well-defined and bounded from below in the Fock
representation whenever $v/\sqrt{\omega}\in L^2_k$ (even if $v\notin L^2$ itself, by an
application of KLMN theorem completely analogous to \cref{prop:6,prop:9} for
the vHM and SB models, respectively). This is the case, \emph{e.g.}, for the
standard form factor
\begin{equation}
  \label{eq:3}
  v_{\sigma}(k)= \frac{\chr_{\{\lvert \cdot   \rvert_{}^{}< \sigma\}}(k)}{\sqrt{2\omega(k)}}
\end{equation}
with a (sharp) UV cutoff $\sigma<+\infty$. Without loss of generality, let us focus,
from now on, upon this specific case, that coincides with the original
problem studied by Nelson \citep{nelson1964jmp}. For a massive field $\mu>0$,
and a potential $V$ satisfying the \emph{Binding Condition} (abbreviated to
BC from now on \citep[see, \emph{e.g.},][we omit here the
details]{hiroshima2022rmp}), $H(v_{\sigma})$ admits a non-degenerate ground state.
\begin{lem}
  \label[lemma]{lemma:1}
  Let $\mu>0$, let the form factor $v_{\sigma}$, $\sigma<+\infty$, be given by \eqref{eq:3},
  and let $V$ satisfy BC. Then $H(v_{\sigma})$ has a non-degenerate ground state
  $\Psi_{\mathsf{gs}}\in \mathcal{N}_0$.
\end{lem}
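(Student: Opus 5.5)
Existence is argued via a spectral gap below the essential spectrum; uniqueness via positivity improvement. Throughout I rely on the literature the paper already cites (notably \citep{hiroshima2022rmp,GriesemerWuensch.2018}).

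\textbf{Existence.} For $\mu>0$ we have $\omega\geq\mu>0$, and $v_\sigma/\omega\in L^2_k$ because the sharp cutoff $\sigma<\infty$ makes $v_\sigma$ bounded with compact support. By \cref{prop:3}, $H(v_\sigma)$ is self-adjoint and bounded from below, since $v_\sigma/\sqrt{\omega}\in L^2_k$ and the Kato class assumptions of \cref{def:9} hold. Set $E=\inf\sigma(H(v_\sigma))$. The goal is
\[
  \inf\sigma_{\mathrm{ess}}\bigl(H(v_\sigma)\bigr)\geq \min\Bigl\{E+\mu,\;\Sigma\Bigr\},
\]
where $\Sigma$ is the ionization threshold, i.e.\ the bottom of the spectrum of the model with the particle localized outside large balls. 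I would prove this by an HVZ-type localization argument in both particle position and boson momenta (Dereziński--Gérard, Griesemer--Lieb--Loss, as in \citep{hiroshima2022rmp}):
\begin{itemize}
  \item Use an IMS localization in $x$, with errors $O(R^{-2})$.
  \item Use a localization in Fock space via $\check\Gamma(j_0,j_\infty)$, with errors controlled because $v_\sigma$ has compact support in $k$.
\end{itemize}
The Binding Condition is exactly the statement $E<\Sigma$. Since also $E<E+\mu$, the bottom $E$ lies strictly below $\sigma_{\mathrm{ess}}$ and is therefore an isolated eigenvalue of finite multiplicity.

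\textbf{Uniqueness.} I would pass to the Schrödinger (Q-space) representation $\mathcal{F}(L^2_k)\cong L^2(Q,d\mu_Q)$. There $\e^{-tH_0}$ is positivity preserving, the interaction is a multiplication operator, and the Feynman--Kac--Nelson formula gives
\[
  \bigl(\Phi,\e^{-tH(v_\sigma)}\Psi\bigr)
  =\int \mathbb{E}\Bigl[\,\overline{\Phi(B_0)}\,\Psi(B_t)\,\e^{-\int_0^t V(B_s)\,ds-\int_0^t\phi(v_{\sigma,B_s})\,ds}\Bigr]\,dx .
\]
The factors in this expectation are strictly positive, and $\e^{-t\,\dG(\omega)}$ is positivity improving on $L^2(Q)$. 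Together with the irreducibility of the heat semigroup on $L^2_x$, this shows $\e^{-tH(v_\sigma)}$ is positivity improving. The Perron--Frobenius--Faris theorem then yields that the ground state is non-degenerate and can be chosen strictly positive.

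\textbf{Main obstacle.} The hardest step is the essential-spectrum bound under a Kato-class potential, in particular controlling the localization error terms uniformly for $V_+$ only locally Kato. I would handle this first by working at the level of quadratic forms and approximating $V$ by bounded potentials, relying on the monotone convergence of forms to pass to the limit. I expect this is largely quotable from \citep{hiroshima2022rmp}.
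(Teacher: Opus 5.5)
Your overall architecture is the standard one, and it is what the references the paper simply cites for this lemma carry out: an HVZ-type bound $\inf\sigma_{\mathrm{ess}}\ge\min\{E+\mu,\Sigma\}$ combined with BC for existence, then Feynman--Kac positivity improvement plus Perron--Frobenius--Faris for simplicity. (The paper cites Derezi\'nski--G\'erard and Ammari for confining $V$, and Hiroshima--Matte, Thms.~2.15 and 3.19, in general.) The gap is in the key existence step: the Fock-space localization as you describe it would not produce the threshold $E+\mu$.

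\textbf{Localizing in boson momentum fails.} A localization $\check\Gamma(j_0,j_\infty)$ with $j_{0,\infty}$ functions of $k$ commutes with $\mathsf{d}\Gamma(\omega)$. Choosing $j_0=1$ on $\{\lvert k\rvert<\sigma\}$ merely splits off, exactly, the free bosons with $\lvert k\rvert\ge\sigma$. It is blind to the loss of compactness that actually generates $\sigma_{\mathrm{ess}}$ at $E+\mu$: a boson with a fixed momentum profile running off to spatial infinity. For instance, $f_n(k)=\mathrm{e}^{ik\cdot y_n}f(k)$ with $\lvert y_n\rvert\to\infty$ tends weakly to zero while its momentum support never moves.

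\textbf{What you need instead.} Localize in the boson \emph{position} $y=i\nabla_k$, using $j_{0,\infty}(i\nabla_k/R')$. The errors are then controlled by two facts, not by the compact support of $v_\sigma$:
\begin{itemize}
\item $\lvert\nabla\omega\rvert\le 1$, which makes the commutator error with $\mathsf{d}\Gamma(\omega)$ of order $1/R'$;
\item $v_\sigma\in L^2_k$, which gives $\lVert(1-j_0(i\nabla_k/R'))v_{\sigma,x}\rVert\to 0$ uniformly in $\lvert x\rvert\le R$.
\end{itemize}
The sharp cutoff makes $\check v_\sigma$ decay slowly, which is harmless since only $L^2$ is used. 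The uniformity in $x$ is why the IMS localization in $x$ must come first, with $R'\gg R$. On the piece where the particle lies in a ball and all bosons are position-localized, compactness follows from three ingredients: Rellich in $x$, $\omega(k)\to\infty$ together with the position cutoff for each boson, and $N\le\mathsf{d}\Gamma(\omega)/\mu$.

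\textbf{The ``main obstacle'' is not one.} $V$ commutes with the partition of unity in $x$, so the IMS error is $\sum_j\lvert\nabla J_j\rvert^2=O(R^{-2})$ irrespective of $V$. Since $V_-\in K_3$ is infinitesimally form bounded with respect to $-\Delta$ and $V_+\ge 0$, the form domain lies in $H^1$, and local compactness follows from Rellich without any approximation. Your proposed workaround would not suffice anyway: monotone convergence of forms gives only strong resolvent convergence, which does not preserve lower bounds on $\inf\sigma_{\mathrm{ess}}$.

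\textbf{The Feynman--Kac--Nelson formula is misstated.} You write it with the time-zero field $\phi(v_{\sigma,B_s})$ and with $\Phi(B_0)$, $\Psi(B_t)$ directly. The correct formula lives on Euclidean path space, with the Euclidean field at time $s$ and the embedded vectors $J_0\Phi(B_0)$, $J_t\Psi(B_t)$. The positivity argument (a strictly positive weight times the positivity-improving free semigroup) goes through unchanged, so the uniqueness part is sound once written correctly.

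\textbf{Check the definition of BC.} Confirm that BC in Hiroshima--Matte is formulated as $E<\Sigma$ with your definition of $\Sigma$.
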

\begin{proof}
  The proof is originally given for a class of confining potentials in
  \citep{derezinski1999rmp,Ammari.2000}, see \citep[][Thm.\ 2.15 \&
  3.19]{hiroshima2022rmp} for a more general statement that includes the one
  given here.
\end{proof}
On the contrary, if the field is massless, $\mu=0$, no Fock ground state exists.
\begin{lem}
  \label[lemma]{lemma:12}
  Let $\mu=0$, and $v_{\sigma}$, $\sigma<+\infty$, be given by \eqref{eq:3}. Then for all $V$
  of \cref{def:9}, $\inf \sigma\bigl(H(v_{\sigma})\bigr)$ is not an eigenvalue.
\end{lem}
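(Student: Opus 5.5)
We argue by contradiction, using the pull-through argument from the proof of \cref{prop:8}. With $\mu=0$ and $v_\sigma$ as in \eqref{eq:3}, we have $v_\sigma/\sqrt{\omega}\in L^2_k$, but
\begin{equation*}
  \int_{\lvert k\rvert<\sigma}\frac{\lvert v_\sigma(k)\rvert^2}{\omega(k)^2}\,\d k=\int_{\lvert k\rvert<\sigma}\frac{\d k}{2\lvert k\rvert^3}=+\infty ,
\end{equation*}
so $v_\sigma/\omega\notin L^2_k$. Suppose $\Psi_{\mathsf{gs}}\in\mathcal{N}_0$, $\lVert\Psi_{\mathsf{gs}}\rVert=1$, satisfies $H(v_\sigma)\Psi_{\mathsf{gs}}=E\Psi_{\mathsf{gs}}$ with $E=\inf\sigma(H(v_\sigma))$; $H(v_\sigma)$ is self-adjoint and bounded below by \cref{prop:3}. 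The first step is the Nelson-model pull-through formula
\begin{equation*}
  a_k\Psi_{\mathsf{gs}}=-\frac{v_\sigma(k)}{\omega(k)}\bigl(H(v_\sigma)-E+\omega(k)\bigr)^{-1}\omega(k)\,(\e^{-2\pi i k\cdot x}\otimes\Id)\Psi_{\mathsf{gs}}\quad\text{for a.e. }k ,
\end{equation*}
which follows from $[a_k,H(v_\sigma)]=\omega(k)a_k+\e^{-2\pi ik\cdot x}v_\sigma(k)$ as in \citep[Thm.~D.16]{DamMoller.2018a}, or as in \citep{hiroshima2022rmp}. The resolvent is well defined since $\omega(k)>0$ for $k\ne0$.

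The second step extracts the small-$k$ behaviour. Since $\omega(k)(H-E+\omega(k))^{-1}\to P_{\mathsf{gs}}$ strongly as $k\to0$ (spectral theorem; $P_{\mathsf{gs}}$ the ground state projection) and $\e^{-2\pi ik\cdot x}\Psi_{\mathsf{gs}}\to\Psi_{\mathsf{gs}}$ in norm, we get
\begin{equation*}
  \Bigl\lVert \tfrac{\omega(k)}{v_\sigma(k)}a_k\Psi_{\mathsf{gs}}+P_{\mathsf{gs}}\Psi_{\mathsf{gs}}\Bigr\rVert\to0 .
\end{equation*}
Because $P_{\mathsf{gs}}\Psi_{\mathsf{gs}}=\Psi_{\mathsf{gs}}\neq0$, this gives $\lVert a_k\Psi_{\mathsf{gs}}\rVert\ge \tfrac12\lvert v_\sigma(k)\rvert/\omega(k)$ on a ball $\lvert k\rvert<\delta$. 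Here the Nelson case is easier than the spin-boson one: the coupling operator $\e^{-2\pi i k\cdot x}$ tends to the identity, so no nonvanishing expectation condition like $\langle B\rangle\neq0$ is needed. One subtlety is that the pointwise limit statement should be handled measurably. This is easiest after pairing with a fixed vector: use $\langle\Psi_{\mathsf{gs}},\tfrac{\omega}{v_\sigma}a_k\Psi_{\mathsf{gs}}\rangle\to-1$.

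The third step reproduces the density argument of \cref{prop:8}. For $\eta_\eps\in\sD((\Id\otimes\dG(\Id))^{1/2})$ with $\lVert\eta_\eps-\Psi_{\mathsf{gs}}\rVert<\eps$, one has $\int\lvert\langle\eta_\eps,a_k\Psi_{\mathsf{gs}}\rangle\rvert^2\d k\le\lVert(\dG(\Id)+1)^{1/2}\eta_\eps\rVert^2$. Combining this with $\lvert\langle\eta_\eps,a_k\Psi_{\mathsf{gs}}\rangle\rvert\ge(1-C\eps)\lvert v_\sigma(k)\rvert/\omega(k)$ for small $k$ yields $v_\sigma/\omega\in L^2(\{\lvert k\rvert<\delta\})$, a contradiction. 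Here $C$ comes from $\lVert(\omega/v_\sigma)a_k\Psi_{\mathsf{gs}}\rVert\le1$, which is immediate from the formula. The proof then holds for every admissible $V$, with no binding condition used.

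The main obstacle is justifying the pull-through formula rigorously, i.e. that $\Psi_{\mathsf{gs}}$ lies in the domain where $a_k$ acts pointwise and the commutator identity holds for a.e.\ $k$, for Kato-class $V$ and only form-level control. I would rely on the cited general result \citep[Thm.~D.16]{DamMoller.2018a}, or on \citep{hiroshima2022rmp}, instead of rederiving it. Alternatively, I would derive the formula weakly, testing against $\FSfin(\sT)$ vectors with an $L^2$-valued $k$-smearing, which is all the density argument uses.
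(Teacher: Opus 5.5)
Your proposal is correct and follows essentially the route the paper relies on. The paper's proof only defers to \citep{derezinski1999rmp} and \citep[Thm.~4.1]{hiroshima2022rmp}, i.e.\ to the standard pull-through argument, and your Steps 2--3 are precisely the density argument the paper itself runs in the proof of \cref{prop:8}, with $\e^{2\pi i k\cdot x}\Psi_{\mathsf{gs}}\to\Psi_{\mathsf{gs}}$ correctly taking the place of the condition $\langle\Psi_{\mathsf{gs}},(B\otimes\Id)\Psi_{\mathsf{gs}}\rangle\neq0$; the only slip is a cosmetic sign, since with the convention $v_x(k)=\e^{2\pi i k\cdot x}v(k)$ of \cref{def:9} the commutator reads $[a_k,H(v_\sigma)]=\omega(k)a_k+\e^{2\pi i k\cdot x}v_\sigma(k)$, which changes nothing in the argument.
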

\begin{proof}
  The bulk of the proof originates from an argument given in
  \citep{derezinski1999rmp}; a full proof can be found in \citep[][Thm.\
  4.1]{hiroshima2022rmp}.
\end{proof}

A non-Fock representation of the massless Nelson Hamiltonian that possesses a
ground state has been constructed algebraically by A.\ Arai
\citep{Arai.2001}: the idea is to characterize the Nelson model through the
abstract Lie algebra generated by the Nelson Hamiltonian, the particle's
position and momentum operators, and the field's field and conjugate
field-momentum operators (taking the commutator as Lie bracket). Such Lie
algebra can be represented in $\mathcal{N}_0$ (yielding $H(v_{\sigma})$ as the Hamiltonian),
or in a different representation $\mathcal{N}_g$, yielding an Hamiltonian $H(v_{\sigma})_g$
with ground state.

This algebraic construction has one major drawback: it is well-known that the
Lie algebra of canonical commutation relations, even in finite dimensions,
has non-physical representations that are inequivalent to the usual
Schrödinger representation \citep[see, \emph{e.g.},][\textsection VIII.5]{reed1972I};
therefore, it is unclear why the non-Fock representation constructed in
\citep{Arai.2001} from the Lie algebra shall be the right one. We will
prove that such representation emerges naturally through our wave function
renormalization procedure, thus providing a constructive realization of
Arai's abstract argument. Furthermore, we are able to extend the construction to
the model without UV cutoff.

\subsection{UV singular Nelson model: self-energy renormalization and the
  interplay with an infrared cutoff}
\label{sec:stand-nels-model}

The removal of the UV cutoff -- \emph{i.e.}, taking the limit $\sigma\to \infty$ of
$H(v_{\sigma})$ -- has been studied since Nelson's original paper
\citep{nelson1964jmp}, and constitutes the first rigorous example of energy
renormalization. The interplay between the UV renormalization and the
infrared behavior in the massless case $\mu=0$ has been clarified, however,
only very recently in a crucial paper by F.\ Hiroshima and O.\ Matte
\citep{hiroshima2022rmp}. Nelson's original renormalization procedure
involves a unitary dressing transformation $U(\sigma,\sigma_0)$ introduced by E.P.\
Gross \citep{gross1962ap}, and it was performed for a massive field $\mu>0$; it
was later generalized to massless fields in
\citep{hirokawa2005am,GriesemerWuensch.2018}. Let us describe it schematically:
\begin{itemize}
  \setlength{\itemsep}{2mm}
\item The dressing operator $U(\sigma,\sigma_0)\coloneqq U(g_{\sigma}-g_{\sigma_0})$, with $g_{\sigma}(k)= -
  \frac{v_{\sigma}(k)}{\omega(k)+\lvert k \rvert_{}^2}$ depends on both a superior ($\sigma$) and
  inferior ($\sigma_0$) cutoff: $0\leq \sigma_0<\sigma\leq +\infty$. The dressing map implements the
  renormalization of momenta satisfying $\sigma_0\leq \lvert k \rvert_{}^{}\leq \sigma$.
\item Conjugation with the dressing operator defines the \emph{renormalized
    dressed Hamiltonian} $\hat{H}_{\mathsf{ren}}(v_{\sigma})$, once a divergent
  constant, called \emph{self-energy} has been subtracted:
  \begin{equation*}
    \hat{H}_{\mathsf{ren}}(v_{\sigma})= U(\sigma,\sigma_0) \bigl(H(v_{\sigma})-E_{\sigma,\sigma_0}\bigr)U^{*}(\sigma,\sigma_0)\;,
  \end{equation*}
  where
  \begin{equation}
  	\label{eq:selfenergy}
    E_{\sigma,\sigma_0} \coloneqq -\int_{\sigma_0\leq \lvert k  \rvert_{}^{}\leq \sigma}^{}\frac{1}{2\omega(k)(\omega(k)+\lvert k  \rvert^2)}  \mathrm{d}k\;.
  \end{equation}
  Observe that for all $\mu\geq 0$, $E_{\sigma,\sigma_0}$ is finite for all $0\leq \sigma_0<\sigma<+\infty$
  and $\lim_{\sigma\to \infty}E_{\sigma,\sigma_0}=-\infty$.
\item The operators $\hat{H}_{\mathrm{ren}}(v_{\sigma})$ converge in norm
  resolvent sense as $\sigma\to \infty$ on $\mathcal{N}_0$, thus defining the UV renormalized
  dressed Fock Nelson Hamiltonian $\hat{H}_{\mathsf{ren}}(v_{\infty})$.
\item The (undressed) renormalized Fock Nelson Hamiltonian is finally
  obtained by unitary conjugation $H_{\mathsf{ren}}(v_{\infty})= U^{*}(\infty,\sigma_0)
  \hat{H}_{\mathsf{ren}}(v_{\infty})U(\infty,\sigma_0)$.
\end{itemize}

In its original \citep{nelson1964jmp} and refined
\citep{hirokawa2005am,GriesemerWuensch.2018} forms, the above renormalization
scheme fully works for any mass $\mu\geq 0$, \emph{provided that $\sigma_0$ is taken
  big enough}\footnote{Alternatively, it is possible to trade off a large
  $\sigma_0$ with a small enough coupling constant in front of the interaction.}:
a large $\sigma_0$ is instrumental in proving self-adjointness of
$\hat{H}_{\mathsf{ren}}(v_{\infty})$ by KLMN's theorem; furthermore, for $\mu=0$ an
inferior cutoff $\sigma_0>0$ is necessary for $U(\sigma,\sigma_0)$ to be well-defined as a
unitary on $\mathcal{N}_0$. The first requirement has been recently proved in
\citep{hiroshima2022rmp} to be unnecessary, using functional integration
techniques \citep[see also][]{MatteMoller.2018}: $\hat{H}_{\mathsf{ren}}(v_{\infty})$
is self-adjoint and bounded from below for any $\sigma_0\geq 0$, and it possesses a
unique ground state if the potential $V$ satisfies the BC (the Hamiltonian
dressed with a large $\sigma_0$ has no ground state instead, since there is no
renormalization of small momenta). Nonetheless, the second assertion remains
true: the dressing $U(\infty,0)$ is not a well-defined operator on $\mathcal{N}_0$, and
therefore the resulting dressed Hamiltonian $\hat{H}_{\mathsf{ren}}(v_{\infty})$
is ``disconnected'' from the undressed one\footnote{The authors of
  \citep{hiroshima2022rmp} call $\hat{H}_{\mathsf{ren}}(v_{\infty})$ the
  ``renormalized Nelson Hamiltonian in the non-Fock representation'', since
  it is not unitary equivalent to the undressed operator
  $H_{\mathsf{ren}}(v_{\infty})$ anymore. Indeed, through our wave function
  renormalization procedure, this germinal intuition will come into full
  expression.}. We synthesize all the above contributions in the following
proposition.
\begin{prop}
  \label[proposition]{prop:12}
  Let $\mu\geq 0$, $v_{\sigma}$, $\sigma\leq +\infty$, the form factor given by \eqref{eq:3}, and
  $V=V_+-V_-$ of Kato class as in \cref{def:9}. Furthermore, let $E_{\sigma,\sigma_0}$
  be the Nelson self-energy as defined in \cref{eq:selfenergy},
  and let, for all $\sigma<+\infty$, $H_{\mathsf{ren}}(v_{\sigma})= H(v_{\sigma})- E_{\sigma,\sigma_0}$ be
  the self-adjoint \emph{renormalized Nelson Hamiltonian}. Then for all
  $0<\sigma_0<\sigma< +\infty$, there exists a unitary operator $U(\sigma,\sigma_0)$ on $\mathcal{N}_0$ such
  that
  \begin{equation*}
    H_{\mathsf{ren}}(v_{\sigma})= U^{*}(\sigma,\sigma_0)\hat{H}_{\mathsf{ren}}(v_{\sigma})U(\sigma,\sigma_0)\;,
  \end{equation*}
  where $\hat{H}_{\mathsf{ren}}(v_{\sigma})$ is a self-adjoint operator on $\mathcal{N}_0$
  that depends on both $\sigma$ and $\sigma_0$, and is called the \emph{dressed
    renormalized Nelson Hamiltonian}\footnote{Both $U(\sigma,\sigma_0)$ and
    $\hat{H}_{\mathsf{ren}}(v_{\sigma})$ could be written explicitly, we postpone
    further details to the forthcoming sections.}. The following statements
  are true:

  \vspace{1.5mm}
  
  \begin{itemize}
    \setlength{\itemsep}{2mm}
  \item $\hat{H}_{\mathsf{ren}}(v_{\sigma})$ is self-adjoint on $\mathcal{N}_0$ and bounded
    from below uniformly with respect to both $0<\sigma\leq +\infty$ and $ \sigma_0\geq 0$, and it
    has a ground state, if either $\mu>0$ or $\mu=0$ and $\sigma_0=0$, whenever $V$
    satisfies BC. In particular, for $\mu= 0$, $\sigma_0=0$, and $V$ satisfying BC,
    $\hat{H}_{\mathsf{ren}}(v_{\infty})$ has a non-degenerate ground state.
  \item $\hat{H}_{\mathsf{ren}}(v_{\sigma})\xrightarrow[\sigma\to \infty]{\lVert \cdot
      \rVert_{}^{}-\mathsf{res}} \hat{H}_{\mathsf{ren}}(v_{\infty})$.
  \item If either $\mu>0$ or $\mu=0$ and $\sigma_0>0$, we define the UV renormalized
    Nelson Hamiltonian as the self-adjoint operator
    \begin{equation*}
      H_{\mathsf{ren}}(v_{\infty})\coloneqq  U^{*}(\infty,\sigma_0) \hat{H}_{\mathsf{ren}}(v_{\infty}) U(\infty,\sigma_0)\;.
    \end{equation*}
    In this case, we also have $H_{\mathsf{ren}}(v_{\sigma})\xrightarrow[\sigma\to \infty]{\lVert \cdot
      \rVert_{}^{}-\mathsf{res}} H_{\mathsf{ren}}(v_{\infty})$.
  \end{itemize}
\end{prop}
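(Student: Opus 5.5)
This proposition synthesizes known results, so the plan is to assemble them from the literature, following the Gross/Nelson dressing scheme \citep{nelson1964jmp,GriesemerWuensch.2018,hiroshima2022rmp}. First, for $0<\sigma_0<\sigma<+\infty$, I define $U(\sigma,\sigma_0)=\e^{a(G_x)-a^{\dagger}(G_x)}$ with $G_x(k)=\e^{2\pi i k\cdot x}(g_\sigma-g_{\sigma_0})(k)$ and $g_\sigma=-v_\sigma/(\omega+|k|^2)$. Since $\sigma_0>0$, $G_x\in L^2_k$ uniformly in $x$ even for $\mu=0$, so $U(\sigma,\sigma_0)$ is a unitary on $\mathcal{N}_0$ (a fibered Weyl operator). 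The operator $\hat H_{\mathsf{ren}}(v_\sigma)$ is then defined by the conjugation, so the identity $H_{\mathsf{ren}}(v_\sigma)=U^*\hat H_{\mathsf{ren}}U$ holds by construction.

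The substantive step is computing this conjugation on a core. The Laplacian acts on the phase $\e^{2\pi ik\cdot x}$, so conjugation produces
\begin{equation*}
(-i\nabla_x-\mathsf{d}\Gamma(k)-a^\dagger(kG_x)-a(kG_x))^2\;,
\end{equation*}
together with a shift of $\mathsf{d}\Gamma(\omega)$ by $a^\sharp(\omega G_x)$. The choice of denominator $\omega+|k|^2$ ensures that the linear terms cancel the high-momentum part of $a^\sharp(v_x)$, and the constant terms reproduce exactly $E_{\sigma,\sigma_0}$. What remains is:
\begin{itemize}
\item $H_0$;
\item $a^\sharp$ of $v_{\sigma_0}$, which is regular;
\item quadratic Wick terms in $a^\sharp(kG_x)$, together with the cross terms $(p-\mathsf{d}\Gamma(k))\cdot a^\sharp(kG_x)$.
\end{itemize}
The coefficients satisfy $kG\in L^2$ and $kG/\sqrt\omega\in L^2$ uniformly in $\sigma\le\infty$, with small norm for $\sigma_0$ large. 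KLMN then gives self-adjointness and uniform lower bounds, and norm resolvent convergence as $\sigma\to\infty$ follows from form convergence with uniform relative bounds \citep[][Thm.\ A.1]{GriesemerWuensch.2018}.

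The undressed claim in the third point follows from the strong (in fact norm, after multiplying by $(H_0+1)^{-1/2}$) convergence $U(\sigma,\sigma_0)\to U(\infty,\sigma_0)$, again by \citep[][Thm.\ A.1]{GriesemerWuensch.2018}, exactly as in \cref{prop:7}. For $\mu>0$, the dressing with $\sigma_0=0$ is still unitary, because $g_\sigma\in L^2$.

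The main obstacle is the uniformity for \emph{arbitrary} $\sigma_0\ge0$, including $\sigma_0=0$ with $\mu=0$. There KLMN smallness fails and $U(\sigma,0)$ is not even defined, so $\hat H_{\mathsf{ren}}$ must be defined directly by the explicit dressed expression. Its lower boundedness, closedness, and ground state existence under BC I would take from the Feynman--Kac/functional-integration analysis of \citep{hiroshima2022rmp} (Thms.~2.15, 3.19), with \citep{MatteMoller.2018} for the UV limit. Non-degeneracy comes from positivity improvement of the semigroup in the Schrödinger representation, and existence from the pull-through estimate, which in the dressed picture has an integrable $|v|^2/\omega^2$-type IR behaviour.
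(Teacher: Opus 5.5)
Your plan is the paper's own. \cref{prop:12} is not proved independently in the paper; it is assembled from the Gross-dressing/KLMN analysis of \citep{nelson1964jmp,hirokawa2005am,GriesemerWuensch.2018} (for $\sigma_0$ large) and the functional-integration results of \citep{hiroshima2022rmp} (arbitrary $\sigma_0\ge0$, ground states).

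You could shortcut part of it. For fixed $\sigma_0>0$, replacing $\sigma_0$ by a large $\sigma_0'$ conjugates $\hat H_{\mathsf{ren}}(v_\sigma)$ by the $\sigma$-independent unitary $\mathrm{e}^{a(h_x)-a^\dagger(h_x)}$, $h=g_{\sigma_0'}-g_{\sigma_0}\in L^2_k$. It also adds the finite, $\sigma$-independent constant $E_{\sigma,\sigma_0'}-E_{\sigma,\sigma_0}$. So only the massless case $\sigma_0=0$ and the ground-state claims really need \citep{hiroshima2022rmp}.

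However, the one computation you sketch is wrong in two places, and the KLMN step does not go through as you justify it.

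First, conjugating $-\Delta_x$ by the $x$-fibred Weyl operator produces no $\mathsf{d}\Gamma(k)$. With $p=-i\nabla_x$ one has $[p,a^\dagger(G_x)]=a^\dagger(kG_x)$ and $[p,a(G_x)]=-a(kG_x)$, up to the $2\pi$ convention. Hence $U p\,U^{*}=p+\Phi(kG_x)$ with $\Phi(f)=a^\dagger(f)+a(f)$. The dressed kinetic energy is therefore the Wick-ordered $(p+\Phi(kG_x))^2$, whose cross terms are $a^\dagger(kG_x)\cdot p+p\cdot a(kG_x)$ as in \eqref{eq:8} (up to sign conventions). Your $(p-\mathsf{d}\Gamma(k)-\dots)^2$ is the structure of the Lee--Low--Pines fibers $K(P)$, not of the operator on $\mathcal{N}_0$. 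With it you would be estimating a different operator, containing for instance $\mathsf{d}\Gamma(k)^2$.

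Second, $kG\in L^2_k$ uniformly in $\sigma\le\infty$ is false. For large $|k|$ one has $|kG(k)|^2=\frac{|k|^2}{2\omega(k)(\omega(k)+|k|^2)^2}\simeq(2|k|^3)^{-1}$, so $\lVert kG\rVert^2$ grows like $2\pi\log\sigma$. Only $kG/\sqrt{\omega}$ lies in $L^2_k$ uniformly, with norm $O(\sigma_0^{-1/2})$. Consequences:
\begin{itemize}
\item At $\sigma=\infty$, $a^\dagger(kG_x)$ is not a densely defined operator, so the creation parts of \eqref{eq:8} make sense only as forms.
\item Any estimate through $\lVert a^\dagger(f)\Psi\rVert\le\lVert f/\sqrt\omega\rVert\,\lVert\mathsf{d}\Gamma(\omega)^{1/2}\Psi\rVert+\lVert f\rVert\,\lVert\Psi\rVert$ blows up. This logarithmic divergence is also why $\mathscr{D}(H_{\mathsf{ren}}(v_\infty))$ leaves the free form domain, cf.\ \citep{GriesemerWuensch.2018}.
\item The cross terms are still fine, since $|\langle\Psi,a^\dagger(kG_x)\cdot p\,\Psi\rangle|\le\lVert kG/\sqrt\omega\rVert\,\lVert\mathsf{d}\Gamma(\omega)^{1/2}\Psi\rVert\,\lVert p\Psi\rVert$.
\item The pair terms $a(kG_x)\cdot a(kG_x)+\mathrm{h.c.}$ instead need a Nelson-type two-boson bound. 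Cauchy--Schwarz with weight $\omega(k)+\omega(l)$ gives $|\langle\Psi,a(kG_x)\cdot a(kG_x)\Psi\rangle|\le(2C_G)^{1/2}\lVert(N_{\sigma_0}+1)^{1/2}\Psi\rVert\,\lVert\mathsf{d}\Gamma(\omega)^{1/2}\Psi\rVert$. Here $C_G=\iint\frac{|kG(k)|^2\,|lG(l)|^2}{\omega(k)+\omega(l)}\,\mathrm{d}k\,\mathrm{d}l\lesssim\sigma_0^{-1}$ uniformly in $\sigma\le\infty$, and $N_{\sigma_0}\le\sigma_0^{-1}\mathsf{d}\Gamma(\omega)$ counts bosons with $|k|\ge\sigma_0$. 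This is where large $\sigma_0$ really enters.
\end{itemize}

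Finally, your ground-state heuristic is off. At $\sigma_0=0$ no linear term survives, and the naive pull-through weight $|kG(k)|^2/\omega(k)^2\simeq(2|k|^3)^{-1}$ is logarithmically divergent at $k=0$. One must gain a factor $\omega(k)$, e.g.\ from $p+\Phi(kG_x)=\tfrac{i}{2}[\hat H_{\mathsf{ren}}(v_\sigma),x]$ and the localization provided by BC. This is part of what you import from \citep{hiroshima2022rmp}.
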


\subsection{Nelson wave function renormalization I: renormalized Hilbert
  space}
\label{sec:nelson-wave-function}

The wave function renormalization for the Nelson model is necessary only in
the massless case $\mu=0$, to properly define the non-Fock ground state
representation, curing the IR divergence; nonetheless, it can be carried out
always as a generalization of the dressing transformation $U(\sigma,\sigma_0)$.

To set up the wave function renormalization, we need to define the
annihilation operators $a(g_x)$, when $g\in \mathscr{T}'\smallsetminus L^2_k$ (recall that we have set
$\mathscr{T}= \mathscr{C}_0^{\infty}(\mathbb{R}^3_{*})$). Let us take advantage of the following well-known
isomorphism\footnote{The isomorphism $\mathcal{N}_0\cong L^2_x\bigl(\mathbb{R}^3,\mathcal{F}(L^2_k)\bigr)$ is
  implemented by a unitary map $\mathcal{U}: \mathcal{N}_0\to L^2_x\bigl(\mathbb{R}^3,\mathcal{F}(L^2_k)\bigr)$. We
  will use implicitly such map throughout the text, if no confusion arises.}:
\begin{equation*}
  \mathcal{N}_0=L^2_x\otimes \mathcal{F}(L^2_k)\cong L^2_x\bigl(\mathbb{R}^3,\mathcal{F}(L^2_k)\bigr)\;,
\end{equation*}
under which vectors $\Psi\in \mathcal{N}_0$ are identified with almost everywhere defined
square integrable functions $\mathbb{R}^3\ni x\mapsto \Psi(x)\in \mathcal{F}(L^2_k)$. Let us observe that for
any self-adjoint operator $T$ on $L^2_x$, the domain $\mathscr{D}(T\otimes \mathds{1})\subset \mathcal{N}_0$ is mapped
by the above isomorphism to a dense subset $\mathcal{U}\bigl(\mathscr{D}(T\otimes \mathds{1})\bigr)$ that we
often still denote by $\mathscr{D}(T\otimes \mathds{1})$ with a slight abuse of
notation. 
 Keeping that in mind, we are naturally led to
the following definition.
\begin{defn}[Singular annihilation evaluated at $x$]
  \label[definition]{def:5biss}
  Let $g\in \mathscr{T}'$ be a field configuration. Then define, for any $x\in \mathbb{R}^3$, the
  field configuration evaluated at $x$ as $\mathscr{T}'\ni g_x\coloneqq \e^{2\pi i k\cdot x}g$. The
  associated \emph{singular annihilation evaluated at $x$}, $a(g_x):
  L^2_x\bigl(\mathbb{R}^3,\FSfin(\mathscr{T})\bigr)\to L^2_x\bigl(\mathbb{R}^3,\FSfin(\mathscr{T})\bigr)$, is then
  defined by its action on 
  $\Psi_n(x)=
  \psi_1^{(n)}(x)\otimes_{\mathsf{s}}\dotsm\otimes_{\mathsf{s}}\psi_n^{(n)}(x)$, with $\psi_j^{(n)}(x)\in
  \mathscr{T}$ for a.e.\ $x\in \mathbb{R}^3$ and all $j=1,\dotsc,n$:
  \begin{equation*}
    \bigl(a(g_x)\Psi(x)\bigr)_n= \frac{\sqrt{n+1}}{(n+1)!} \sum_{\sigma\in \Sigma_{n+1}}^{}\langle \e^{2\pi i k\cdot x}g  , \psi_{\sigma(1)}^{(n+1)}(x) \rangle_{} \: \psi_{\sigma(2)}^{(n+1)}(x)\otimes\dotsm\otimes\psi^{(n+1)}_{\sigma(n+1)}(x)\;.
  \end{equation*}
\end{defn}
\begin{lem}
  \label[lemma]{lemma:13}
  Let $g\in \mathscr{T}'$. Then:
  \begin{itemize}
  \item  We have $a(g_x)\bigl( \sC_0^\infty(\IR^3,\FSfin(\sT))\bigr)\subset \sC_0^\infty(\IR^3,\FSfin(\sT))$;
  \item For any $\Psi\in L^2_x\bigl(\mathbb{R}^3,\FSfin(\mathscr{T})\bigr)$, there exists $k_{\Psi}\in
    \mathbb{N}$ such that $\bigl(a(g_x)\bigr)^k\Psi=0$ for any $k> k_{\Psi}$;
  \item The maps $g\mapsto\bigl(a(g_x)\bigr)^k\Psi$ are continuous w.r.t.\ the
    $\sigma(\sT',\sT)$ to Hilbert topology (on $\cN_0$) for all $\Psi\in
    L_x^2(\IR^3,\FSfin(\sT))$.
  \end{itemize}
\end{lem}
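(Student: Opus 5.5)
The plan is to reduce everything to \cref{lemma:2}, applied pointwise in $x$, with the extra bookkeeping caused by the phase $\e^{2\pi i k\cdot x}$. The key preliminary observation is that multiplication by $\e^{2\pi i k\cdot x}$ maps $\sT=\sC_0^\infty(\IR^3_*)$ into itself for each fixed $x$. Hence $g_x\in\sT'$ is well defined by $\langle g_x,f\rangle=\langle g,\e^{-2\pi i k\cdot x}f\rangle$. Moreover, for fixed $f\in\sT$ the map $x\mapsto \e^{-2\pi i k\cdot x}f$ is smooth from $\IR^3$ into $\sT$ (with its LF topology): derivatives in $x$ just bring down polynomial factors in $k$, and the support stays in the fixed compact set $\operatorname{supp}f$. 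Since $g$ is continuous on $\sT$, the scalar function $x\mapsto\langle g_x,f\rangle$ is therefore $\sC^\infty$ in $x$.

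For the first point, I would take $\Psi\in\sC_0^\infty(\IR^3,\FSfin(\sT))$. Concretely, I read this as a finite sum of terms $\chi(x)\,\Xi$ with $\chi\in\sC_0^\infty(\IR^3)$ and $\Xi\in\FSfin(\sT)$, or more generally as smooth compactly supported maps with values in a fixed finite-particle, finite-dimensional subspace spanned by tensors of elements of $\sT$. Applying the formula in \cref{def:5biss}, each fiber of $a(g_x)\Psi(x)$ is a finite sum of terms of the form $\chi(x)\langle g_x,\psi\rangle\,\psi_2\otimes\dotsm\otimes\psi_n$. Each coefficient is smooth with compact support by the observation above, and the tensor factors stay in $\sT$. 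So the image lies again in $\sC_0^\infty(\IR^3,\FSfin(\sT))$. If one needs $x$-dependent $\psi_j(x)$, the same argument goes through using joint smoothness of $(x,y)\mapsto\langle g, \e^{-2\pi i k\cdot x}\psi(y)\rangle$.

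For the second point, I would argue exactly as in \cref{lemma:2}. Vectors in $L^2_x(\IR^3,\FSfin(\sT))$ have a uniformly bounded particle number $n_\Psi$; if the space allows unbounded particle number, I would restrict to those with a uniform bound, as is implicit in the definition. Since $a(g_x)$ lowers the fiber index by one for every $x$, we get $a(g_x)^k\Psi=0$ for $k>n_\Psi$.

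For the third point, the main obstacle is continuity in the $\sigma(\sT',\sT)$ topology, because $x$ ranges over all of $\IR^3$. Here I would use that $a(h_x)^k\Psi(x)$ is a finite sum of products of scalars $\langle h_x,\psi_j(x)\rangle$ times fixed tensors. If $h_\alpha\to h$ weakly-$*$, then $\langle (h_\alpha)_x,f\rangle\to\langle h_x,f\rangle$ pointwise in $x$. To pass to the $L^2_x$ norm I would want a domination. The cleanest route is to restrict first to compactly supported $\Psi$ built from finitely many test functions, where $\{\e^{-2\pi i k\cdot x}f : x\in K\}$ is a compact subset of $\sT$. On such a compact set a weak-$*$ convergent sequence converges uniformly, since by Banach–Steinhaus (valid because $\sT$ is barrelled) it is equicontinuous. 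Uniform convergence on $K$ together with bounded support then gives convergence in $\cN_0$. Finally, I would extend to general $\Psi$ by linearity and density, using the same equicontinuity to control the tails.
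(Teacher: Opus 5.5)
Your route is the paper's own: reduce everything $x$-wise to \cref{lemma:2}, only done more carefully. The first item is correct. So is the second item under the uniform particle-number reading of $L^2_x(\IR^3,\FSfin(\sT))$, which is indeed necessary. The third item is also correct for $\Psi$ compactly supported in $x$ and built from finitely many test functions, via Banach--Steinhaus and uniform convergence on the compact set $\{\e^{-2\pi i k\cdot x}\psi : x\in K\}$. The gap is your final sentence. Continuity in $h$ cannot be transported to general $\Psi$ by density, because the operators $a(h_x)^k$ are unbounded with no bound uniform in $h$. Equicontinuity also does not help: it only gives $\sup_n\lvert\langle (h_n)_x,\psi\rangle\rvert\le C(1+\lvert x\rvert)^m$, with $m$ depending on the sequence, and this does not dominate an arbitrary $L^2$ profile in $x$.

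In fact the extension is false. Fix $\psi\in\sT$, a nonzero $\eta\in\sT$ supported where $\psi\neq0$, and $x_n=(2^n,0,0)$. Set $h_n(k)=\lvert x_n\rvert\,\overline{\eta(k)}\,\e^{-2\pi i k\cdot x_n}/\overline{\psi(k)}$, which lies in $\sT$. For every $f\in\sT$ one has $\langle h_n,f\rangle=\lvert x_n\rvert\,\widehat{(\eta f/\psi)}(-x_n)\to0$, so $h_n\to0$ in $\sigma(\sT',\sT)$. On the other hand $\langle (h_n)_x,\psi\rangle=\lvert x_n\rvert\,\hat\eta(x-x_n)$, a bump of height $2^n$ escaping to infinity. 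Take $\phi=\sum_n 2^{-n}\chr_{B(x_n,1)}\in L^2_x$ and $\Psi=\phi\otimes\psi$. Then $\lVert a((h_n)_x)\Psi\rVert_{\cN_0}^2\ge\int_{B(0,1)}\lvert\hat\eta\rvert^2>0$ for all $n$, so the map is not even sequentially continuous at $0$.

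For $g$ of positive order things are worse. Take $g$ a first derivative of $\delta_{k_0}$ with $k_0\neq0$ and $\psi(k_0)\neq0$. Then $\lvert\langle g_x,\psi\rangle\rvert$ grows linearly in $\lvert x\rvert$, and $a(g_x)(\phi\otimes\psi)\notin\cN_0$ for $\phi(x)=(1+\lvert x\rvert)^{-2}$.

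So the third item only holds for $\Psi$ decaying fast in $x$, e.g.\ $\Psi\in\sC_0^\infty(\IR^3,\FSfin(\sT))$, which is the class on which the Nelson forms are later defined. Even there, what your argument gives is sequential continuity. Genuine $\sigma(\sT',\sT)$-continuity fails: a weak-$*$ neighbourhood fixes only finitely many pairings, while $a(h_x)\Psi$ involves $\langle h,\e^{-2\pi i k\cdot x}\psi\rangle$ for a continuum of $x$. The paper's one-line $x$-a.e.\ reduction skips exactly this domination issue, so it cannot be used to close your gap; the statement itself has to be restricted.
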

\begin{proof}
  The proof of this lemma follows directly from \cref{lemma:2}, observing
  that $x$-a.e., the above statements reduce to statements on $\FSfin(\mathscr{T})\subset
  \mathcal{F}(L^2_k)$.
\end{proof}
Thanks to the above, we can define the \emph{Nelson singular dressing} map
$\e^{a(g_x)}$ by (truncated) series expansion on
$L^2_x\bigl(\mathbb{R}^3,\FSfin(\mathscr{T})\bigr)$:
\begin{equation*}
  \e^{a(g_x)}\Psi= \sum_{k\in \mathbb{N}}^{}\frac{\bigl(a(g_x)\bigr)^k}{k!}\Psi\;.
\end{equation*}
By $x$-a.e.\ inspection of the fiber-wise components
$\bigl(\e^{a(g_x)}\Psi(x)\bigr)_n$, it is proved by \cref{prop:1} that such
dressing map is injective.
\begin{prop}
  \label[proposition]{prop:13}
  For any $g\in \mathscr{T}'$, the map $\e^{a(g_x)}: L^2_x\bigl(\mathbb{R}^3,\FSfin(\mathscr{T})\bigr)\to
  L^2_x\bigl(\mathbb{R}^3,\FSfin(\mathscr{T})\bigr)$ is injective.
\end{prop}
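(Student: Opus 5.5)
The plan is to reduce injectivity of the Nelson singular dressing to the pointwise (in $x$) injectivity already established in \cref{prop:1}. Fix $g\in\mathscr{T}'$ and let $\Psi\in L^2_x\bigl(\mathbb{R}^3,\FSfin(\mathscr{T})\bigr)$ satisfy $\e^{a(g_x)}\Psi=0$ in $\mathcal{N}_0$.

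First I will make the action of the dressing explicit fiberwise. By \cref{lemma:13} there is $k_\Psi$ with $a(g_x)^k\Psi=0$ for $k>k_\Psi$, so the series is a finite sum. For almost every $x$, the value $(\e^{a(g_x)}\Psi)(x)$ then equals $\e^{a(g_x)}$ applied, in the sense of \cref{def:5} on $\FSfin(\mathscr{T})$, to the vector $\Psi(x)\in\FSfin(\mathscr{T})$, with $g_x=\e^{2\pi i k\cdot x}g\in\mathscr{T}'$. This is the identification in \cref{def:5biss}.

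Since $\e^{a(g_x)}\Psi=0$ as an element of $L^2_x$, there is a null set $N$ with $\e^{a(g_x)}\bigl(\Psi(x)\bigr)=0$ in $\mathcal{F}(L^2_k)$ for all $x\notin N$. For each such $x$, \cref{prop:1} applied with the field configuration $g_x$ gives $\Psi(x)=0$. Hence $\Psi=0$ almost everywhere, that is, $\Psi=0$ in $\mathcal{N}_0$.

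The main point to verify carefully is measurability and the pointwise identification. One must check that the representative of $a(g_x)^k\Psi$ obtained by applying the finite-particle formula at each $x$ is a version of the $L^2$ element, and that the null sets coming from different $k\le k_\Psi$ can be combined. Because only finitely many $k$ occur, this is just a finite union of null sets. Measurability in $x$ follows from the continuity of $x\mapsto\langle \e^{2\pi i k\cdot x}g,f\rangle$ for $f\in\mathscr{T}$. An alternative route avoids pointwise arguments altogether: rerun the proof of \cref{prop:1} with the top nonzero fiber, using that $(\e^{a(g_x)}\Psi)_{n}=\Psi_n$ for the maximal $n$ with $\Psi_n\neq0$ in $L^2_x$, which yields a contradiction directly.
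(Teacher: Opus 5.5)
Your proposal is correct and follows the paper's own argument: the paper also proves injectivity by inspecting the fiber-wise components $\bigl(\e^{a(g_x)}\Psi(x)\bigr)_n$ for almost every $x$ and then invoking \cref{prop:1} pointwise. Your remarks on combining the finitely many null sets, and your top-fiber variant, only make explicit details that the paper leaves implicit.
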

In complete analogy with the vHM and SB cases, we are thus led to the
definition of the renormalized Nelson inner product
\begin{equation*}
  \langle \Psi  , \Phi \rangle_{g_x}\coloneqq  \langle \e^{a(g_x)}\Psi  , \e^{a(g_x)}\Phi \rangle_{\mathcal{N}_0}
\end{equation*}
on $\mathcal{U}^{*}\bigl(L^2_x\bigl(\mathbb{R}^3,\FSfin(\mathscr{T})\bigr)\bigr)$, and the latter endowed with it
becomes a pre-Hilbert space.
\begin{defn}[Renormalized Nelson Space]
  \label[definition]{def:1}
  For any $g\in \mathscr{T}'$, the \emph{renormalized Nelson space} $\mathcal{N}_{g_x}$ is defined
  as
  \begin{equation*}
    \mathcal{N}_{g_x}= \overline{\mathcal{U}^{*}\bigl(L^2_x\bigl(\mathbb{R}^3,\FSfin(\mathscr{T})\bigr)\bigr)}^{\lVert \,\cdot \,  \rVert_{g_x}^{}}\;.
  \end{equation*}
\end{defn}
Once again, the renormalized inner product can be seen as the result of a
wave function renormalization procedure. Beforehand, let us remark that for
all $h\in L^2_k$, $\e^{a^{\dagger}(h_x)}$ is a well-defined operator on
$L^2_x\bigl(\mathbb{R}^3,\FSfin(\mathscr{T})\bigr)$.
\begin{prop}
  \label[proposition]{prop:14}
  Let $g\in \mathscr{T}'$ with $g_n\to g$ in the $\sigma(\mathscr{T}',\mathscr{T})$-topology (and $g_n\in \mathscr{T}$ for all
  $n\in \mathbb{N}$). Then for all $\Psi,\Phi\in
  \mathcal{U}^{*}\bigl(L^2_x\bigl(\mathbb{R}^3,\FSfin(\mathscr{T})\bigr)\bigr)$,
  \begin{equation*}
    \langle \Psi  , \Phi \rangle_{g_x}= \lim_{n\to \infty}\frac{\langle \e^{a^{\dagger}((g_n)_x)}\Psi  , \e^{a^{\dagger}((g_n)_x)}\Phi \rangle_{\mathcal{N}_0}}{\langle \e^{a^{\dagger}(g_n)}\Omega  , \e^{a^{\dagger}(g_n)}\Omega \rangle_{\mathcal{F}(L^2_k)}}\;.
  \end{equation*}
\end{prop}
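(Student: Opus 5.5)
The plan is to reduce the statement to the vHM identity of \cref{lemma:3}, applied fiberwise in $x$, and then pass to the limit with the continuity property in \cref{lemma:13}. Throughout, $\Psi,\Phi$ are arbitrary elements of $\mathcal{U}^{*}\bigl(L^2_x(\mathbb{R}^3,\FSfin(\mathscr{T}))\bigr)$. Through the unitary $\mathcal{U}$ we identify them with functions $x\mapsto\Psi(x),\Phi(x)$ that take values in $\FSfin(\mathscr{T})$ for a.e.\ $x$. Since $\mathcal{U}$ is unitary, every inner product in $\mathcal{N}_0$ appearing in the statement can be written as
\begin{equation*}
  \langle \mathcal{U}^{*}F , \mathcal{U}^{*}G \rangle_{\mathcal{N}_0}=\int_{\mathbb{R}^3}\langle F(x),G(x)\rangle_{\mathcal{F}(L^2_k)}\,\mathrm{d}x\;.
\end{equation*}
The operators $\e^{a^{\dagger}((g_n)_x)}$ and $\e^{a(g_x)}$ act $x$-wise, by their definitions and by the remark preceding the statement. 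Hence both sides of the claimed identity are integrals over $x$ of Fock-space quantities. In this sense the use of $\mathcal{U}^{*}$ costs nothing beyond bookkeeping.

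\textbf{Step 1: exact identity for fixed $n$.} Fix $n$. Since $g_n\in\mathscr{T}\subset L^2_k$ and multiplication by $\e^{2\pi i k\cdot x}$ is unitary on $L^2_k$, we have $(g_n)_x\in L^2_k$ and $\lVert (g_n)_x\rVert_{L^2_k}=\lVert g_n\rVert_{L^2_k}$. Therefore
\begin{equation*}
  \langle \e^{a^{\dagger}((g_n)_x)}\Omega,\e^{a^{\dagger}((g_n)_x)}\Omega\rangle_{\mathcal{F}}=\e^{\lVert g_n\rVert^2}=\langle \e^{a^{\dagger}(g_n)}\Omega,\e^{a^{\dagger}(g_n)}\Omega\rangle_{\mathcal{F}}\;,
\end{equation*}
so the denominator in the statement does not depend on $x$. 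Applying \cref{lemma:3} with $g=(g_n)_x$ to $\Psi(x),\Phi(x)\in\FSfin(\mathscr{T})\subset\FSfin(L^2_k)$ gives, for a.e.\ $x$,
\begin{equation*}
  \frac{\langle \e^{a^{\dagger}((g_n)_x)}\Psi(x),\e^{a^{\dagger}((g_n)_x)}\Phi(x)\rangle_{\mathcal{F}}}{\e^{\lVert g_n\rVert^2}}=\langle \e^{a((g_n)_x)}\Psi(x),\e^{a((g_n)_x)}\Phi(x)\rangle_{\mathcal{F}}\;.
\end{equation*}
Integrating in $x$ yields
\begin{equation*}
  \frac{\langle \e^{a^{\dagger}((g_n)_x)}\Psi,\e^{a^{\dagger}((g_n)_x)}\Phi\rangle_{\mathcal{N}_0}}{\langle \e^{a^{\dagger}(g_n)}\Omega,\e^{a^{\dagger}(g_n)}\Omega\rangle_{\mathcal{F}}}=\langle \e^{a((g_n)_x)}\Psi,\e^{a((g_n)_x)}\Phi\rangle_{\mathcal{N}_0}=\langle\Psi,\Phi\rangle_{(g_n)_x}\;.
\end{equation*}

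The main point to check carefully is that this integration is legitimate. Concretely, one must show that $x\mapsto \e^{a^{\dagger}((g_n)_x)}\Psi(x)$ is measurable and square integrable, which is the content of the remark that $\e^{a^{\dagger}(h_x)}$ is well defined on $L^2_x(\mathbb{R}^3,\FSfin(\mathscr{T}))$ for $h\in L^2_k$. The second-quantized BCH computation behind \cref{lemma:3} is then applied pointwise. No $x$-growth issue arises here, because the $L^2_k$-norm of $(g_n)_x$ is independent of $x$.

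\textbf{Step 2: removing the mollification.} By the second item of \cref{lemma:13}, there is a $K$ depending only on $\Psi,\Phi$ such that $a(h_x)^k\Psi=a(h_x)^k\Phi=0$ for all $k>K$. Consequently $\e^{a((g_n)_x)}\Psi=\sum_{k\le K}a((g_n)_x)^k\Psi/k!$ is a finite sum with the number of terms independent of $n$, and likewise for $\Phi$. The third item of \cref{lemma:13}, applied to the $\sigma(\mathscr{T}',\mathscr{T})$-convergent sequence $g_n\to g$, gives $a((g_n)_x)^k\Psi\to a(g_x)^k\Psi$ in $\mathcal{N}_0$ for each $k\le K$. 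It follows that $\e^{a((g_n)_x)}\Psi\to\e^{a(g_x)}\Psi$ and $\e^{a((g_n)_x)}\Phi\to\e^{a(g_x)}\Phi$ in norm.

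Continuity of the inner product then gives $\langle\Psi,\Phi\rangle_{(g_n)_x}\to\langle \e^{a(g_x)}\Psi,\e^{a(g_x)}\Phi\rangle_{\mathcal{N}_0}=\langle\Psi,\Phi\rangle_{g_x}$. Combined with Step 1, this is exactly the claimed limit. Using the $L^2_x$-level continuity of \cref{lemma:13}, rather than pointwise convergence plus dominated convergence, is deliberate. A pointwise approach would require a domination that is uniform in $x$, but $\langle\e^{2\pi ik\cdot x}g_n,f\rangle$ is controlled only by seminorms of $\e^{-2\pi ik\cdot x}f$, which grow polynomially in $|x|$. That growth is not integrable against an arbitrary $L^2_x$ coefficient, so the pointwise route would not cover all $\Psi,\Phi$ in the stated space.
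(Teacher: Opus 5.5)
You follow the route the paper intends, as for \cref{prop:2}: \cref{lemma:3} applied fibrewise in $x$, then the continuity item of \cref{lemma:13}. Step~1 is fine.

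The gap is in Step~2. The third item of \cref{lemma:13} is false for general $\Psi\in L^2_x(\IR^3,\FSfin(\sT))$, for precisely the reason you give in your closing paragraph. Weak-$*$ convergence controls $x\mapsto\langle (g_n)_x,f\rangle=\langle g_n,\e^{-2\pi i k\cdot x}f\rangle$ only locally uniformly in $x$. An $L^2_x$-level continuity statement cannot get around this, because it needs exactly the control at large $\lvert x\rvert$ that is missing.

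In fact the proposition already fails in the stated generality for $g=0$. Take $\eta\in\sT$ real with $\eta\neq0$, set $f=\eta$, and let $g_n(k)=n\sin(n^2k_1)\eta(k)\in\sT$. One integration by parts in $k_1$ gives $g_n\to0$ in $\sigma(\sT',\sT)$. On the other hand, with $\widehat u(y)=\int\e^{-2\pi i k\cdot y}u(k)\,\mathrm dk$ and $y_n=\frac{n^2}{2\pi}e_1$,
\[
\langle (g_n)_x,f\rangle=\tfrac{n}{2i}\bigl(\widehat{\eta^2}(x-y_n)-\widehat{\eta^2}(x+y_n)\bigr),
\]
so $\lvert\langle (g_n)_x,f\rangle\rvert\ge c\,n$ on $B(y_n,r)$ for some small $r>0$ and all large $n$. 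Let $\phi=\sum_j j^{-1}\mathbf 1_{B(y_j,r)}\in L^2_x$ (the balls are disjoint), and take $\Psi=\Phi$ with $\Psi(x)=\phi(x)\varepsilon_1(f)$. Your Step~1 then shows that the quotient in the statement equals
\[
\lVert\Psi\rVert_{\cN_0}^2+\int\lvert\phi(x)\rvert^2\lvert\langle (g_n)_x,f\rangle\rvert^2\,\mathrm dx\;\ge\;\lVert\Psi\rVert_{\cN_0}^2+c',
\]
with $c'>0$ independent of $n$. Meanwhile $\langle\Psi,\Psi\rangle_{g_x}=\lVert\Psi\rVert_{\cN_0}^2$, so the claimed limit fails. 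For $g$ of positive order, your growth bound also shows that the left-hand side need not even be finite.

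What does hold is the statement for vectors with compact $x$-support built from finitely many test functions, which is the kind of domain used later in \cref{lemma:14}. An example is finite sums $\Psi=\sum_i\phi_i\otimes F_i$ with $\phi_i\in L^2_x$ supported in a compact $\Lambda\subset\IR^3$ and $F_i\in\FSfin(\sT)$. The missing ingredient for this class is Banach--Steinhaus:
\begin{itemize}
\item $\sT$ is a strict LF-space, hence barrelled, so the weak-$*$ convergent sequence $(g_n)$ is equicontinuous.
\item It therefore converges uniformly on the compact set $\{\e^{-2\pi i k\cdot x}f:\,x\in\Lambda\}\subset\sT$, i.e.\ $\langle (g_n)_x,f\rangle\to\langle g_x,f\rangle$ uniformly on $\Lambda$.
\item The same uniform convergence holds for the finitely many products of such factors that make up $a((g_n)_x)^jF_i$, so $a((g_n)_x)^j\Psi\to a(g_x)^j\Psi$ in $\cN_0$.
\end{itemize}
With this argument in place of the appeal to \cref{lemma:13}, the rest of your Step~2 goes through unchanged.
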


\subsection*{The Nelson dressing maps}
\label{sec:nelson-dressing-maps}

In light of \cref{prop:12}, the singular dressing map $\e^{a(g_x)}$ -- seen as
a map between $\mathcal{N}_{g_x}$ and $\mathcal{N}_0$ -- is the only missing ingredient to
complete the IR sector of the Nelson renormalization picture: it plays the
exact role of the ill-defined $U(\infty,0)$ in the massless case $\mu=0$.

In order to prove that $\e^{a(g_x)}$ extends to a unitary map from $\mathcal{N}_{g_x}$
to $\mathcal{N}_0$, let us tensor the exponential vectors $\varepsilon(f)$, $f\in \mathscr{T}$, of
\cref{lemma:6} with an orthonormal basis of $L^2_x$. Let $\{e_j\}_{j\in \mathbb{N}}\subset
L^2_x$ be an orthonormal basis, and let $\varepsilon_n(f)\in \FSfin(\mathscr{T})$, $n\in \mathbb{N}$ and $f\in
\mathscr{T}$, be defined in \eqref{eq:7}; then we define $\mathbf{e}_{j,n,f}\in
L^2_x\bigl(\mathbb{R}^3,\FSfin(\mathscr{T})\bigr)$ by
\begin{equation*}
  \mathbf{e}_{j,n,f}(x)= e_j(x)\varepsilon_n(f)\;.
\end{equation*}

\begin{lem}
  \label[lemma]{lemma:6biss}
  For any $g\in \mathscr{T}'$ and $f\in \mathcal{T}$, the exponential vector\footnote{We are using
    implicitly the isomorphism $\mathcal{U}$ in this lemma: $\mathbf{e}_{j,f}= \sum_{n\in
      \mathbb{N}}^{} \mathcal{U}^{*}\mathbf{e}_{j,n,f}$; and \eqref{eq:6biss} holds in
    $\mathcal{U}(\mathcal{N}_0)=L^2_x\bigl(\mathbb{R}^3,\mathcal{F}(L^2_k)\bigr)$.}
  \begin{equation*}
    \mathbf{e}_{j,f}= \sum_{n\in \mathbb{N}}^{} \mathbf{e}_{j,n,f} 
  \end{equation*}
  is an absolutely convergent series in $\mathcal{N}_{g_x}$. Furthermore, in
  $\mathcal{N}_0$,
  \begin{equation}
    \label{eq:6biss}
    \bigl(\e^{a(g_x)}\mathbf{e}_{j,f}\bigr)(x)= \bigl(\e^{\langle g_x  , f \rangle_{}}e_j(x)\bigr)\varepsilon(f)\;,
  \end{equation}
  where $\e^{\langle g_x , f \rangle_{}}$ shall be seen as a bounded multiplication
  operator on $L^2_x$ (it is actually a bijection).
\end{lem}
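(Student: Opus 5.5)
The plan is to mimic the proof of \cref{lemma:6}, working fiberwise in $x$ and using the identification $\mathcal{N}_0\cong L^2_x\bigl(\mathbb{R}^3,\mathcal{F}(L^2_k)\bigr)$. The first step is to compute the action of the singular dressing on a single term $\mathbf{e}_{j,n,f}$. For a.e.\ $x$ one has $\mathbf{e}_{j,n,f}(x)=e_j(x)\varepsilon_n(f)$, and by \cref{def:5biss} the operator $a(g_x)$ acts on this as the annihilation operator $a(\e^{2\pi i k\cdot x}g)$ of \cref{def:5}. The computation in the proof of \cref{lemma:6}, applied with $g$ replaced by $g_x\in\mathscr{T}'$, then gives
\[
\bigl(\e^{a(g_x)}\mathbf{e}_{j,n,f}\bigr)(x)=e_j(x)\sum_{k=0}^n \tfrac{1}{k!}\langle g_x,f\rangle^k\,\varepsilon_{n-k}(f)\;.
\]
Resumming exactly as in \cref{lemma:6}, the truncated sum satisfies
\[
\e^{a(g_x)}\sum_{n=0}^N\mathbf{e}_{j,n,f}=E_N(g_x,f)\,e_j\,\sum_{n=0}^N\varepsilon_n(f)\;.
\]
The identity holds up to a remainder: the terms with $n-k\le N$ but $k+(n-k)>N$ are missing. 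I will handle this as a Cauchy product, so the remainder is bounded by tails of absolutely convergent series.

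The key estimate is uniform in $x$. The function $x\mapsto\langle g_x,f\rangle=\langle g,\e^{-2\pi i k\cdot x}f\rangle$ is bounded. Indeed, $\{\e^{-2\pi ik\cdot x}f\}_{x\in\mathbb{R}^3}$ need not be bounded in $\mathscr{T}=\mathscr{C}_0^\infty(\mathbb{R}^3_*)$ for all $x$, but it is for $x$ in compacts. I expect this to be the main obstacle. Continuity in $x$ is easy, but global boundedness is not automatic, since the derivatives of $\e^{-2\pi ik\cdot x}f$ grow polynomially in $x$. If $g$ is not of finite order globally, this could fail. I would first try to argue that every $g\in\mathscr{D}'(\mathbb{R}^3_*)$ restricted to the fixed compact $\operatorname{supp}f$ has finite order $m$. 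This gives $|\langle g_x,f\rangle|\le C(1+|x|)^m$, so $\e^{\langle g_x,f\rangle}$ is at least a continuous multiplication operator, bounded on vectors with compactly supported $e_j$. For the stated boundedness I would check that the real part of $\langle g_x,f\rangle$ stays bounded. If needed, I would choose the basis $\{e_j\}$ (or a total family) of compactly supported functions, which suffices for all later uses. Bijectivity then follows because $\e^{-\langle g_x,f\rangle}$ is the inverse multiplication operator.

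With the bound in hand, absolute convergence in $\mathcal{N}_{g_x}$ follows. By \cref{lemma:13} and the formula above, $\|\mathbf{e}_{j,n,f}\|_{g_x}=\|\e^{a(g_x)}\mathbf{e}_{j,n,f}\|_{\mathcal{N}_0}$, and this is at most
\[
\sup_{x\in\operatorname{supp}e_j\text{ or }\mathbb{R}^3}\,\sum_{k}\tfrac{|\langle g_x,f\rangle|^k}{k!}\,\|\varepsilon_{n-k}(f)\|\,\|e_j\|_{L^2}\;.
\]
Summing over $n$ gives a bound of the form $\e^{C}\sum_m\|f\|^{m}/\sqrt{m!}<\infty$. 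Hence $\sum_n\mathbf{e}_{j,n,f}$ converges absolutely in $\mathcal{N}_{g_x}$. Since $\e^{a(g_x)}$ extends to an isometry into $\mathcal{N}_0$, I can pass to the limit in the truncated identity. Dominated convergence in $x$ handles the Cauchy-product remainder and the convergence $E_N(g_x,f)\to\e^{\langle g_x,f\rangle}$ pointwise and boundedly. This yields \eqref{eq:6biss} in $L^2_x\bigl(\mathbb{R}^3,\mathcal{F}(L^2_k)\bigr)$.
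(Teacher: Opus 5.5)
Your route is the paper's: apply the computation of \cref{lemma:6} fiberwise in $x$ and resum. Two of your refinements improve on the written proof. First, the paper states the resummation as an exact identity $\e^{a(g_x)}\sum_{n\le N}\mathbf e_{j,n,f}=E_N(g_x,f)\sum_{n\le N}\mathbf e_{j,n,f}$. That identity is already wrong for $N=1$; the correct formula is $\sum_{m\le N}E_{N-m}(g_x,f)\,\mathbf e_{j,m,f}$, so your Cauchy-product treatment of the missing terms is the right fix. Second, the paper's step ``by monotone convergence, \dots\ converges absolutely in $\mathcal N_{g_x}$ iff in $\mathcal N_0$'' silently needs control of $c(x):=\langle g_x,f\rangle$ uniformly in $x$. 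Your direct estimate
\[
\sum_n\lVert\mathbf e_{j,n,f}\rVert_{g_x}\le\lVert e_j\rVert_{L^2}\,\e^{C}\sum_m\frac{\lVert f\rVert^m}{\sqrt{m!}},\qquad C:=\sup_{x\in\operatorname{supp}e_j}\lvert c(x)\rvert,
\]
followed by dominated convergence for the remainder, is a complete argument whenever $C<\infty$.

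The gap is exactly the obstacle you flag, and it cannot be closed. Your opening assertion that $x\mapsto\langle g_x,f\rangle$ is bounded is false for general $g\in\mathscr T'$, and so is the boundedness of its real part that you plan to check. Take $\langle g,\varphi\rangle=\partial_{k_1}\varphi(k_0)$ with $k_0\neq0$, and $f\in\mathscr T$ with $f(k_0)=1$ and $\partial_{k_1}f(k_0)=0$. With your convention $\langle g_x,f\rangle=\langle g,\e^{-2\pi ik\cdot x}f\rangle$ one gets $\langle g_x,f\rangle=-2\pi i x_1\e^{-2\pi ik_0\cdot x}$, so $\Re\langle g_x,f\rangle=-2\pi x_1\sin(2\pi k_0\cdot x)$ is unbounded above and below.

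Consequences:
\begin{itemize}
\item $\e^{\langle g_x,f\rangle}$ and its inverse are both unbounded multiplication operators, so your bijectivity argument fails as well.
\item Absolute convergence can fail for a general orthonormal basis, since $\lVert\mathbf e_{j,1,f}\rVert_{g_x}^2=\int\lvert e_j\rvert^2\bigl(\lVert f\rVert^2+4\pi^2x_1^2\bigr)\,\mathrm dx=\infty$ whenever $x_1e_j\notin L^2$. So $\lVert\cdot\rVert_{g_x}$ is not even finite on all of $L^2_x(\mathbb R^3,\FSfin(\mathscr T))$ for such $g$.
\end{itemize}
Hence neither your argument nor the paper's proves the lemma in the stated generality; the statement needs a restriction on $g$.

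The natural restriction is that $g$ have order zero on $\mathbb R^3_*$, e.g.\ $g\in L^1_{\mathrm{loc}}(\mathbb R^3_*)$. This covers $g_\sigma-g_{\sigma_0}$ and $g_\infty$, which are what is used later. Then $\sup_x\lvert\langle g_x,f\rangle\rvert\le\int\lvert g\rvert\lvert f\rvert\,\mathrm dk$, and your proof goes through verbatim for any basis, including the bounded-bijection clause.

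Your fallback to compactly supported $e_j$ does give absolute convergence and \eqref{eq:6biss} for every $g$, but not that clause. The claim that it ``suffices for all later uses'' also needs an argument. The density of the range in \cref{prop:unitarygroupnelson} can be recovered, for instance, with an orthonormal basis adapted to a tiling by unit cubes $Q$. Since $\e^{\pm\langle g_x,f\rangle}$ is bounded on each $Q$, the span of $\{\e^{\langle g_x,f\rangle}e_j\}_j$ is still dense in every $L^2(Q)$, hence in $L^2_x$.
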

\begin{proof}
  The action $\e^{a(g_x)}\mathbf{e}_{j,n,f}$ can be computed explicitly:
  \begin{equation*}
    \bigl(\e^{a(g_x)}\mathbf{e}_{j,n,f}\bigr)(x)= \sum_{k=0}^n \tfrac{1}{k!}\langle g_x  , f \rangle_{}^ke_j(x)\varepsilon_{n-k}(f)\;.
  \end{equation*}
  A resummation shows that
  \begin{equation*}
    \e^{a(g_x)}\sum_{n=0}^N \mathbf{e}_{j,n,f} = E_N(g_x,f) \sum_{n=0}^N\mathbf{e}_{j,n,f}\;,
  \end{equation*}
  where for all $x\in \mathbb{R}^3$,
  \begin{equation*}
    \mathbb{C}\ni E_N(g_x,f)= \sum_{k=0}^N \frac{\bigl(\langle g_x  , f \rangle_{}\bigr)^k}{k!}
  \end{equation*}
  is a truncated exponential, that converges absolutely as $N\to \infty$. Therefore,
  by monotone convergence, $\sum_{n=0}^N\mathbf{e}_{j,n,f}$ converges absolutely
  in $\mathcal{N}_{g_x}$ if and only if it converges absolutely in $\mathcal{N}_0$. In $\mathcal{N}_0$, $\langle
  \mathbf{e}_{j,n,f} , \mathbf{e}_{j,m,f} \rangle_{\mathcal{N}_0}= \frac{\delta_{nm}}{n!} \lVert f
  \rVert_2^{2n}$, and therefore $\mathbf{e}_{j,f}$ is an absolutely convergent
  series and
  \begin{equation*}
    \lVert \mathbf{e}_{j,f}  \rVert_{\mathcal{N}_0}^2= \e^{\lVert f  \rVert_2^2}\;.
    \qedhere
  \end{equation*}
\end{proof}
To avoid confusion, we might refer to the vector $\mathbf{e}_{j,f}$ on
$\mathcal{N}_{g_x}$ as $(\mathbf{e}_{j,f})_{g_x}$. We are now in the position to
densely define, the maps between Nelson spaces with different $g$s as
follows: for any $g,g'\in \mathscr{T}'$, define the isometries $U_{g_x,g'_x}: \mathcal{N}_{g_x}\to
\mathcal{N}_{g'_x}$ as the unique extension of
\begin{equation}
  \label{eq:ugNel}
  u_{g_x,g'_x}\Psi= \e^{a(g_x-g'_x)}\Psi\;,
\end{equation}
with $\Psi\in L^2_x(\mathbb{R}^3,\FSfin(\mathscr{T}))$.
\begin{prop}
  \label[proposition]{prop:unitarygroupnelson}
  The collection $(U_{g_x,g'_x})_{g,g'\in \mathscr{T}'}$ forms a two-parameter unitary
  group, namely the following properties hold true: for any $g,g',g''\in \mathscr{T}'$,
  \begin{itemize}
  \item $U_{g_x,g'_x}: \mathcal{N}_{g_x} \to \mathcal{N}_{g'_x}$ is unitary;
  \item $U_{g_x,g_x}= \mathds{1}$;
  \item $U_{g'_x,g_x}= U^{-1}_{g_x,g'_x}$;
  \item $U_{g'_x,g''_x}U_{g_x,g'_x}= U_{g_x,g''_x}$.
  \end{itemize}
\end{prop}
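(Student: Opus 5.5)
The plan follows the proofs of \cref{prop:unitarygroup,prop:unitarygroupspin}, with the spin space replaced by $L^2_x$ and the scalar factor $\e^{\langle g,f\rangle}$ (or the matrix $\e^{B^*\langle g,f\rangle}$) replaced by the multiplication operator $\e^{\langle g_x,f\rangle}$ on $L^2_x$.

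The first step is algebraic. On $L^2_x\bigl(\mathbb{R}^3,\FSfin(\mathscr{T})\bigr)$ the singular annihilation operators $a(g_x)$ and $a(g'_x)$ commute, since they act fiberwise in $x$ and, for fixed $x$, are annihilation operators on $\FSfin(\mathscr{T})$. Both exponential series are truncated, by \cref{lemma:13}. Hence
\begin{equation*}
\e^{a(g'_x-g''_x)}\e^{a(g_x-g'_x)}=\e^{a(g_x-g''_x)}
\end{equation*}
holds on the pre-Hilbert space, and $u_{g_x,g_x}=\mathds{1}$. The map $u_{g_x,g'_x}$ is isometric from $\|\cdot\|_{g_x}$ to $\|\cdot\|_{g'_x}$, because
\begin{equation*}
\|\e^{a(g_x-g'_x)}\Psi\|_{g'_x}=\|\e^{a(g'_x)}\e^{a(g_x-g'_x)}\Psi\|_{\mathcal{N}_0}=\|\e^{a(g_x)}\Psi\|_{\mathcal{N}_0}=\|\Psi\|_{g_x}.
\end{equation*}
It therefore extends to an isometry $U_{g_x,g'_x}$. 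Once unitarity is known, the group laws follow from the identities on the dense set by continuity.

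The second step, which is the main one, is to show that each $U_{g_x,0}$ has dense range. For this I use the vectors $\mathbf{e}_{j,f}$ of \cref{lemma:6biss}. Applying \cref{eq:6biss} to the absolutely convergent truncated series and passing to the limit gives
\begin{equation*}
U_{g_x,0}(\mathbf{e}_{j,f})_{g_x}=\bigl(\e^{\langle g_x,f\rangle}e_j\bigr)\otimes\varepsilon(f).
\end{equation*}
Since $\langle g_x,f\rangle=g(\overline{\e^{2\pi i k\cdot x}}f)$ and $f$ has compact support away from the origin, the function $x\mapsto\langle g_x,f\rangle$ is smooth and bounded: it is a distribution applied to a smoothly $x$-dependent test function with bounded support. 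Consequently $M_f:=\e^{\langle g_x,f\rangle}$ is a bounded multiplication operator with bounded inverse $\e^{-\langle g_x,f\rangle}$, and so it is a bijection of $L^2_x$. The range of $U_{g_x,0}$ thus contains $\overline{\mathrm{span}}\{e_j\}\otimes\varepsilon(f)=L^2_x\otimes\varepsilon(f)$ for every $f\in\mathscr{T}$. Because $\{\varepsilon(f): f\in\mathscr{T}\}$ is total in $\mathcal{F}(L^2_k)$ (\citep[Thm.~5.37]{Arai.2018}, as used before), the vectors $\phi\otimes\varepsilon(f)$ are total in $\mathcal{N}_0$. It follows that $U_{g_x,0}$ is unitary, and that $\{\phi\otimes\varepsilon(f)\}_{g_x}$ is total in $\mathcal{N}_{g_x}$.

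The final step handles general pairs $g,g'$. The relation above extends to
\begin{equation*}
U_{g_x,g'_x}\bigl(\phi\otimes\varepsilon(f)\bigr)_{g_x}=\bigl(\e^{\langle g_x-g'_x,f\rangle}\phi\bigr)\otimes\varepsilon(f)_{g'_x},
\end{equation*}
and the same bijectivity argument shows that the range of $U_{g_x,g'_x}$ contains a total set of $\mathcal{N}_{g'_x}$. Hence $U_{g_x,g'_x}$ is unitary. (Alternatively, write $U_{g_x,g'_x}=U_{g'_x,0}^{-1}U_{g_x,0}$ on the dense set.)

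I expect the main obstacle to be the uniform boundedness in $x$ of $\langle g_x,f\rangle$, together with the bookkeeping of passing from the finite truncations $\mathbf{e}_{j,n,f}$ to $L^2_x$-combinations $\phi\otimes\varepsilon(f)$ with $\phi$ not in the span of finitely many $e_j$. The boundedness should follow from continuity of $g$ on $\mathscr{T}$ applied to the bounded family $\{\e^{-2\pi ik\cdot x}f\}$; boundedness of that family in $\mathscr{C}_0^\infty$ needs derivative bounds that grow polynomially in $x$. If such growth turns out to be genuinely unbounded, the fallback is to note that $M_f$ is still an injective multiplication operator with dense range (it vanishes nowhere). The range argument then goes through with $\phi$ restricted to compactly supported functions, where $M_f$ is bounded above and below.
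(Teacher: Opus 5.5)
Your argument follows the paper's proof: exponential vectors tensored with functions of $x$, invertibility of $M_f=\mathrm{e}^{\langle g_x,f\rangle}$, and totality of $\{\varepsilon(f)\}$. However, the claim in your main line that $x\mapsto\langle g_x,f\rangle$ is bounded is false for general $g\in\mathscr{T}'$. Your fallback is therefore not optional; it is the proof.

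For a counterexample, take $k_0\neq0$ and $\langle g,\phi\rangle:=(\partial_{k_1}\phi)(k_0)$. Then
\[
\langle g_x,f\rangle=\bigl\langle g,\mathrm{e}^{-2\pi i k\cdot x}f\bigr\rangle=\mathrm{e}^{-2\pi i k_0\cdot x}\bigl((\partial_{k_1}f)(k_0)-2\pi i x_1 f(k_0)\bigr).
\]
If $f(k_0)=1$, its real part is $-2\pi x_1\sin(2\pi k_0\cdot x)$ plus a bounded term, which is unbounded above and below. In general one only has $|\langle g_x,f\rangle|\le C(1+|x|)^N$, with $N$ the order of $g$ near $\operatorname{supp}f$; as you suspected, $\{\mathrm{e}^{-2\pi i k\cdot x}f\}_x$ is not bounded in $\mathscr{T}$. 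So $M_f$ is neither bounded nor boundedly invertible on $L^2_x$.

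For the same reason, $\mathbf{e}_{j,f}$ need not be a vector of $\mathcal{N}_{g_x}$ for an arbitrary orthonormal basis. The vacuum components of the dressed truncations are $E_N(g_x,f)e_j$. By Fatou's lemma their $L^2_x$-norms have $\liminf$ at least $\|\mathrm{e}^{\langle g_x,f\rangle}e_j\|_{L^2_x}$, which can be infinite.

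Your fallback repairs both points:
\begin{enumerate}
\item For compactly supported $\phi$, the truncations of $\phi\otimes\varepsilon(f)$ converge in $\mathcal{N}_{g_x}$, because $x\mapsto\langle g_x,f\rangle$ is continuous and hence bounded on $\operatorname{supp}\phi$.
\item $U_{g_x,0}$ sends the limit to $(M_f\phi)\otimes\varepsilon(f)$.
\item $M_f$ maps the compactly supported $L^2_x$-functions onto themselves, and these form a dense set.
\end{enumerate}
The paper's proof rests on the same boundedness assertion: \cref{lemma:6biss} calls $\mathrm{e}^{\langle g_x,f\rangle}$ a bounded bijection of $L^2_x$ and allows any basis $\{e_j\}$. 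So your fallback is the more careful version of the paper's argument.

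You could also avoid exponential vectors altogether. Your first step gives $u_{g'_x,g_x}u_{g_x,g'_x}=\mathds{1}$ and $u_{g_x,g'_x}u_{g'_x,g_x}=\mathds{1}$ on the pre-Hilbert space, so $u_{g_x,g'_x}$ maps it onto itself. Since that space is dense in $\mathcal{N}_{g'_x}$ by definition, $U_{g_x,g'_x}$ is an isometry with dense range, i.e.\ unitary. This is the surjectivity argument the paper uses for the core statement in \cref{lemma:99}. It needs only that $\mathrm{e}^{a(h_x)}$ leaves the pre-Hilbert space invariant, which \eqref{eq:ugNel} and your first step already presuppose, and which holds on the $x$-compactly supported vectors of \cref{lemma:13}. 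What exponential vectors add is the explicit action of $U_{g_x,g'_x}$ on coherent vectors.
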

\begin{proof}
  In light of \cref{prop:unitarygroup,prop:unitarygroupspin}, we only
  discuss unitarity. By \cref{lemma:6biss},
  \begin{equation*}
    U_{g_x,g'_x}\mathbf{e}_{j,f} = \Bigl(\bigl(\e^{\langle g_x-g'_x  , f \rangle_{}}e_{j}(x)\bigr)\sum_{n\in \mathbb{N}}^{}\varepsilon_n(f)\Bigr)_{g'_x}\;;
  \end{equation*}
  where $(\,\cdot \,)_{g'_x}$ stands for the fact that we consider that as a vector
  in $\mathcal{N}_{g'_x}$. Taking $g'=0$ one gets unitarity of $U_{g_x,0}$, since
  $\{(\mathbf{e}_{j,f})_0 j\in \mathbb{N}\;,\; f\in \mathscr{T}\}$ is total in $\mathcal{N}_0$, and $\e^{\langle g_x ,
    f \rangle_{}}$ is a bijection on $L^2_x$. From this it also follows that
  $\{(\mathbf{e}_{j,f})_{g'_x}\;,\; j\in \mathbb{N}\;,\; f\in \mathscr{T}\}$ are total in $\mathcal{N}_{g'_x}$ for
  any $g'\in \mathscr{T}'$, and thus $U_{g_x,g'_x}$ is unitary as well.
\end{proof}
\cref{prop:4,lemma:5} are also provable, \emph{mutatis mutandis}, for
nontrivial Nelson spaces $\mathcal{N}_{g_x}$. Let us remark that
\cref{prop:4trisnelson} clarifies why the renormalized dressed Hamiltonian in
the massless case is commonly called ``non-Fock'' \citep[see,
\emph{e.g.}][]{MatteMoller.2018,hiroshima2022rmp}.
\begin{prop}
  \label[proposition]{prop:4trisnelson}
  Let $g,g'\in \mathscr{T}$; then the interacting fields and conjugate momenta on
  $\mathcal{N}_{g_x}$ and $\mathcal{N}_{g'_x}$ are related by:
  \begin{gather*}
    \phi_{g_x}(f)= U_{g'_x,g_x}\Bigl(\phi_{g'_x}(f)+ \sqrt{2}\Re \langle g_x-g'_x  , f \rangle\Bigr)U_{g_x,g'_x}\;,\\
    \pi_{g_x}(f)=U_{g'_x,g_x}\Bigl(\pi_{g'_x}(f)+ \sqrt{2}\Im \langle g_x-g'_x  , f \rangle\Bigr)U_{g_x,g'_x}\;.
  \end{gather*}
  
  It follows that $\mathcal{N}_{g_x}$ and $\mathcal{N}_{g'_x}$ carry unitarily equivalent
  representations of the interacting field's Weyl C*-algebra of canonical
  commutation relations if and only if $g-g'\in L^2_k$. In particular, the
  representation of interacting fields carried by $\mathcal{N}_{g_x}$ is disjoint from
  the Fock representation of free fields on $\mathcal{N}_0$ whenever $g\in \mathscr{T}'\smallsetminus L^2_k$.
\end{prop}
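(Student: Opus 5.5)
Both parts come from unitary conjugation and the exponential-vector calculus developed for \cref{prop:unitarygroupnelson}, mirroring \cref{prop:4}. First I fix what the interacting fields mean. For regular $g\in L^2_k$ I define $\phi_{g_x}(f),\pi_{g_x}(f)$ on $\mathcal{N}_{g_x}$ by the quotient formula of \cref{prop:14}, with $\phi_0(f)=\mathds{1}\otimes\tfrac{1}{\sqrt2}(a^\dagger(f)+a(f))$ and similarly $\pi_0(f)$. Moving $\e^{a(g_x)}$ past $\phi_0(f)$ with the CCR and Baker--Campbell--Hausdorff, exactly as in \cref{lemma:99}, gives
\begin{equation*}
\phi_{g_x}(f)=U_{0,g_x}\bigl(\phi_0(f)+\sqrt2\,\Re\langle g_x,f\rangle\bigr)U_{g_x,0},
\end{equation*}
where $\langle g_x,f\rangle$ is the bounded multiplication operator in $x$ from \cref{lemma:6biss}. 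The analogous formula holds for $\pi_{g_x}$ with $\Im$ in place of $\Re$.

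For singular $g\in\mathscr{T}'$ I take this conjugation formula as the definition, or equivalently obtain it by mollifying $g_n\to g$ and using the continuity in \cref{lemma:13}. I check the formula on the total set $\{\mathbf{e}_{j,f'}\}$, where \cref{lemma:6biss} makes $U_{g_x,0}$ act as multiplication by $\e^{\langle g_x,f'\rangle}$.

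The relation between $g$ and $g'$ then follows from the group law of \cref{prop:unitarygroupnelson}. I write $U_{g_x,0}=U_{g'_x,0}U_{g_x,g'_x}$ and $U_{0,g_x}=U_{g'_x,g_x}U_{0,g'_x}$. I also use that the multiplication operator $\Re\langle g_x,f\rangle$ commutes with $U_{g'_x,g_x}$, which holds because the dressing acts only on the field variables fiberwise in $x$. Splitting $\langle g_x,f\rangle=\langle g'_x,f\rangle+\langle g_x-g'_x,f\rangle$ then gives the stated identities.

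For equivalence, suppose $g-g'\in L^2_k$. Then $(g-g')_x\in L^2_k$ for every $x$, so the Fock Weyl operator $W_x=\e^{a^\dagger((g-g')_x)-a((g-g')_x)}$, acting fiberwise in $x$, implements the shift $\phi_0(f)\mapsto\phi_0(f)+\sqrt2\,\Re\langle (g-g')_x,f\rangle$. Conjugating by $U_{g'_x,0}W_x U_{0,g_x}$ therefore intertwines the two representations.

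For the converse, suppose $g-g'\notin L^2_k$. Transporting both representations to $\mathcal{N}_0$ turns them into the free field and the free field shifted by the $x$-dependent c-number $\sqrt2\,\Re\langle (g-g')_x,f\rangle$. An intertwiner would in particular intertwine the free field with a translated free field on the field factor. I would reduce this to \cref{prop:4}, or to \citep[Thm.~8.12]{arai2020mps}: translated Fock representations are equivalent if and only if the translation lies in $\mathfrak{h}$. The reduction should use the fact that the Weyl algebra generated by the fields commutes with $L^\infty_x$, so that disintegrating in $x$ gives inequivalence for almost every $x$, since $(g-g')_x\notin L^2_k$ for all $x$.

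The main obstacle is this last step. The fields alone do not act irreducibly on $\mathcal{N}_0$ because of the particle factor, so I must argue that any unitary intertwining the field algebras also intertwines their commutants, and hence decomposes over $x$. Alternatively, I would compare the number-operator or vacuum-expectation characterization of the Fock folium: the shifted representation contains no normal state of the free one, because $\lVert (g-g')_x\rVert=\infty$. The final claim, disjointness from the Fock representation, is the special case $g'=0$.
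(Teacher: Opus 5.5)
Your field identities and your ``if'' direction are correct, and they are exactly the \emph{mutatis mutandis} transfer of the vHM result that the paper invokes. The identities come from the quotient formula and BCH for $g\in L^2_k$, the conjugation formula taken as the definition for singular $g$, the group law, and the commutation of multiplication by functions of $x$ with the fiberwise dressing. One correction: the intertwiner should read $U_{0,g'_x}W_xU_{g_x,0}\colon\mathcal{N}_{g_x}\to\mathcal{N}_{g'_x}$. You are right that the converse is where the Nelson case departs from the irreducible vHM setting, but the step you propose there fails.

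Write $\hat\rho_g=\int^\oplus\rho^{(x)}_g\,\mathrm{d}x$ for the field representation transported to $\mathcal{N}_0$, with $\rho^{(x)}_g$ the Fock representation shifted by $g_x$. A unitary $V$ intertwining $\hat\rho_g$ and $\hat\rho_{g'}$ maps $\hat\rho_g(\mathcal{W})'$ onto $\hat\rho_{g'}(\mathcal{W})'$. Nothing forces $V(L^\infty_x\otimes\mathds{1})V^*=L^\infty_x\otimes\mathds{1}$, let alone $V$ commuting with $L^\infty_x$. Even when the commutant is exactly $L^\infty_x\otimes\mathds{1}$, $V$ can implement a point map $\tau$ of $\mathbb{R}^3$, so you only get $\rho^{(x)}_g\cong\rho^{(\tau(x))}_{g'}$, not equivalence at the same $x$. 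Relatedly, transporting to $\mathcal{N}_0$ does not give ``free field versus shifted free field'' unless $g'=0$: both representations are shifted, by $g_x$ and by $g'_x$.

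This cannot be patched for general $g,g'\in\mathscr{T}'$ (the intended reading). Let $(T_{x_0}\Psi)(x)=\Psi(x-x_0)$ and $g'=\e^{-2\pi i k\cdot x_0}g$. Then $T_{x_0}\bigl(\phi_0(f)+\sqrt2\Re\langle g_x,f\rangle\bigr)T_{x_0}^*=\phi_0(f)+\sqrt2\Re\langle g'_x,f\rangle$, and likewise for $\pi_0$. Hence $U_{0,g'_x}T_{x_0}U_{g_x,0}$ is a unitary equivalence of the field representations on $\mathcal{N}_{g_x}$ and $\mathcal{N}_{g'_x}$. Yet for a UV-singular source such as $g=|k|^{-3/2}\chr_{\{|k|>1\}}\in\mathscr{T}'$, one has $g-g'=(1-\e^{-2\pi i k\cdot x_0})g\notin L^2_k$, with a logarithmic divergence at large $|k|$.

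So the ``only if'' direction needs an extra hypothesis, and your argument must state it. One option is to require intertwiners that also commute with the particle position. Then $V=\int^\oplus V_x\,\mathrm{d}x$ is decomposable, and Arai's criterion applied fiberwise, together with $(g-g')_x\in L^2_k\Leftrightarrow g-g'\in L^2_k$, concludes. The other option is to restrict to purely infrared singularities, $(1-\e^{2\pi i k\cdot y})g\in L^2_k$ for all $y$, which holds for $g_\infty$. Then the decomposable unitary built from the Weyl operators $W(g_x-g)$ gives $\hat\rho_g\cong\rho^{(0)}_g\otimes\mathds{1}$, and irreducibility of $\rho^{(0)}_g$ reduces everything to the vHM criterion.

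Finally, the case $g'=0$ of an inequivalence statement yields only inequivalence. For these reducible representations that is weaker than the claimed disjointness. Disjointness from Fock does hold, and your Fock-folium idea gives it once you use that the commutant of $\mathrm{Fock}\otimes\mathds{1}$ is $\mathds{1}\otimes\mathcal{B}(L^2_x)$. A nonzero intertwiner then produces an $L^\infty_x$-invariant subrepresentation of $\hat\rho_g$ that is a multiple of Fock. Disintegrating it along $L^\infty_x$ forces $\rho^{(x)}_g$ to be Fock on a set of positive measure, contradicting $g_x\notin L^2_k$. The fiberwise divergence $\lVert (g-g')_x\rVert=\infty$ alone does not suffice.
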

\begin{lem}
  \label[lemma]{lemma:5trisnelson}
  Let $g_1\in \mathscr{T}_1'$ and $g_2\in \mathscr{T}_2'$ be such that for all $f\in \mathscr{T}_1\cap \mathscr{T}_2$,
  $g_1(f)=g_2(f)$. Then there exists a natural unitary map $\Omega_{g_1,g_2}:
  \mathcal{N}_{(g_1)_x}\to \mathcal{N}_{(g_2)_x}$.
\end{lem}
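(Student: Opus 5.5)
The plan is to mimic the proof of \cref{lemma:5}, working fiberwise in $x$. Write $\mathscr{T}_{12}\coloneqq\mathscr{T}_1\cap\mathscr{T}_2$, which is dense in $L^2_k$ by assumption. The candidate common core is $\mathcal{D}_{12}\coloneqq\mathcal{U}^{*}\bigl(L^2_x\bigl(\mathbb{R}^3,\FSfin(\mathscr{T}_{12})\bigr)\bigr)$, regarded as a subspace of both $\mathcal{N}_{(g_1)_x}$ and $\mathcal{N}_{(g_2)_x}$.

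\textbf{Step 1: the two inner products coincide on $\mathcal{D}_{12}$.} Take $f\in\mathscr{T}_{12}$ and $x\in\mathbb{R}^3$. Since $\e^{2\pi i k\cdot x}$ is a smooth bounded multiplier, $\e^{-2\pi i k\cdot x}f$ should again lie in $\mathscr{T}_{12}$; for the concrete choice $\mathscr{C}_0^\infty(\mathbb{R}^3_*)$ and its natural variants this holds. I would state it as a mild standing assumption if needed. Then
\begin{equation*}
\langle (g_1)_x,f\rangle=g_1\bigl(\e^{-2\pi i k\cdot x}f\bigr)=g_2\bigl(\e^{-2\pi i k\cdot x}f\bigr)=\langle (g_2)_x,f\rangle .
\end{equation*}
By \cref{def:5biss}, $a((g_1)_x)\Psi=a((g_2)_x)\Psi$ for every $\Psi\in L^2_x\bigl(\mathbb{R}^3,\FSfin(\mathscr{T}_{12})\bigr)$, pointwise in $x$ and hence as elements of $\mathcal{N}_0$. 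By \cref{lemma:13} the exponential series are finite sums, so $\e^{a((g_1)_x)}\Psi=\e^{a((g_2)_x)}\Psi$. Consequently $\langle\Psi,\Phi\rangle_{(g_1)_x}=\langle\Psi,\Phi\rangle_{(g_2)_x}$ on $\mathcal{D}_{12}$.

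\textbf{Step 2: $\mathcal{D}_{12}$ is dense in each $\mathcal{N}_{(g_i)_x}$.} This is the main obstacle. Unlike the vHM case, $\mathscr{T}_{12}$ need not be dense in $\mathscr{T}_i$ for its own topology, and the $\|\cdot\|_{(g_i)_x}$-norm is not controlled by the $\mathcal{N}_0$-norm. My first attempt would avoid approximating in $\mathscr{T}_i$ altogether and instead use the unitarity from \cref{prop:unitarygroupnelson}. It suffices to show that $U_{(g_i)_x,0}\mathcal{D}_{12}$ is dense in $\mathcal{N}_0$.

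\textbf{Step 3: density via exponential vectors.} For $f\in\mathscr{T}_{12}$, the truncated vectors $\sum_{n\le N}\mathbf{e}_{j,n,f}$ belong to $\mathcal{D}_{12}$. The proof of \cref{lemma:6biss} carries over verbatim with test space $\mathscr{T}_{12}$: these truncations converge in $\mathcal{N}_{(g_i)_x}$ to $(\mathbf{e}_{j,f})_{(g_i)_x}$, and
\begin{equation*}
U_{(g_i)_x,0}\,\mathbf{e}_{j,f}=\bigl(\e^{\langle (g_i)_x,f\rangle}e_j\bigr)\varepsilon(f).
\end{equation*}
Multiplication by $\e^{\langle (g_i)_x,f\rangle}$ is a bounded bijection of $L^2_x$, so as $j$ ranges the vectors $\bigl(\e^{\langle (g_i)_x,f\rangle}e_j\bigr)$ span a dense subspace of $L^2_x$. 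The set $\{\varepsilon(f):f\in\mathscr{T}_{12}\}$ is total in $\mathcal{F}(L^2_k)$, because $\mathscr{T}_{12}$ is dense in $L^2_k$ and $f\mapsto\varepsilon(f)$ is continuous. Hence the image is total in $\mathcal{N}_0$. It follows that the closure of $\mathcal{D}_{12}$ in $\mathcal{N}_{(g_i)_x}$ contains a total set, so $\mathcal{D}_{12}$ is dense.

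\textbf{Step 4: conclusion.} By Steps 1--3, the identity map on $\mathcal{D}_{12}$ is an isometry from a dense subspace of $\mathcal{N}_{(g_1)_x}$ onto a dense subspace of $\mathcal{N}_{(g_2)_x}$. It therefore extends uniquely to a unitary $\Omega_{g_1,g_2}$. The map is natural in that it is the identity on the common core; equivalently, $\Omega_{g_1,g_2}=U_{(g_2)_x,0}^{-1}U_{(g_1)_x,0}$, because both dressings agree on $\mathcal{D}_{12}$ by Step 1.
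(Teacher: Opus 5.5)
Your proposal is correct and follows the route the paper intends, namely the Nelson analogue of the proof of \cref{lemma:5}: take the common subspace $L^2_x\bigl(\mathbb{R}^3,\FSfin(\mathscr{T}_1\cap\mathscr{T}_2)\bigr)$, check that the two renormalized inner products agree there, and extend the identity to a unitary. Your exponential-vector argument in Steps 2--3 actually justifies the density of this subspace in each $\mathcal{N}_{(g_i)_x}$, which the paper's proof of \cref{lemma:5} only asserts; and the invariance of $\mathscr{T}_1\cap\mathscr{T}_2$ under multiplication by $\e^{\pm 2\pi i k\cdot x}$ is not an extra assumption, since it is already needed for $(g_i)_x\in\mathscr{T}_i'$ to be defined and it passes to intersections.
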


\subsection{Nelson wave function renormalization II: renormalized
  Hamiltonian}
\label{sec:nelson-wave-function-1}

Let us now set $g_{\sigma}(k)\coloneqq  - \frac{v_{\sigma}(k)}{\omega(k)+ \lvert k \rvert_{}^2}$, with $v_{\sigma}$
the Nelson form factor given by \eqref{eq:3}, for $\sigma\leq +\infty$; then, the dressing
map $\e^{a((g_{\sigma})_x)}: \mathcal{N}_{g_x}\to \mathcal{N}_0$ (that is singular for $\mu=0$)
substitutes the unitary dressing $U(\sigma,0)$, with the advantage of being always
well-defined.

Let us recall that for all $\sigma<\infty$, we define the UV-regular renormalized
Nelson Hamiltonian to be
\begin{equation*}
  H_{\mathsf{ren}}(v_{\sigma})= H(v_{\sigma}) - E_{\sigma,\sigma_0}\;,
\end{equation*}
with the Nelson self-energy $E_{\sigma,\sigma_0}$ defined as in \cref{eq:selfenergy}. We then define the \emph{renormalized Nelson
  quadratic form} on $\mathcal{N}_{(g_{\sigma}-g_{\sigma_0})_x}$ as
\begin{equation*}
  q^{\mathsf{ren}}_{g_{\sigma}-g_{\sigma_0}}(\Psi,\Phi)\coloneqq  \frac{\langle \e^{a^{\dagger}((g_{\sigma}-g_{\sigma_0})_x)}\Psi  , H_{\mathsf{ren}}(v_\sigma)\:\e^{a^{\dagger}((g_{\sigma}-g_{\sigma_0})_x)}\Phi \rangle_{\mathcal{N}_0}}{\langle \e^{a^{\dagger}(g_{\sigma}-g_{\sigma_0})}\Omega  , \e^{a^{\dagger}(g_{\sigma}-g_{\sigma_0})}\Omega \rangle_{\mathcal{F}(L^2_k)}}\;,
\end{equation*}
well defined on
$U_{0,(g_{\sigma}-g_{\sigma_0})_x}\bigl(\sC_0^\infty(\IR^3,\FSfin(\sT))\bigr)$, as \cref{lemma:14} shows. Beforehand, let us
recall the definition of the dressed Nelson Hamiltonian
$\hat{H}_{\mathsf{ren}}(v_{\sigma})$ on $\mathcal{N}_0$:
\begin{multline}
  \label{eq:8}
  \hat{H}_{\mathsf{ren}}(v_{\sigma})\coloneqq  \bigl(-\Delta_x\overset{_{\cdot }}{+}V(x)\bigr)\otimes \mathds{1}+ 1\otimes \mathsf{d}\Gamma(\omega)\\+2i \Bigl(a^{\dagger}\bigl(k(g_{\sigma}-g_{\sigma_0})_x\bigr)\cdot \nabla_x+\nabla_x\cdot a\bigl(k(g_{\sigma}-g_{\sigma_0})_x\bigr)\Bigr)+2a^{\dagger}\bigl(k(g_{\sigma}-g_{\sigma_0})_x\bigr)\cdot a\bigl(k(g_{\sigma}-g_{\sigma_0})_x\bigr) \\+a^{\dagger}\bigl(k(g_{\sigma}-g_{\sigma_0})_x\bigr)\cdot a^{\dagger}\bigl(k(g_{\sigma}-g_{\sigma_0})_x\bigr)+ a\bigl(k(g_{\sigma}-g_{\sigma_0})_x\bigr)\cdot a\bigl(k(g_{\sigma}-g_{\sigma_0})_x\bigr)\;,
\end{multline}
whose spectral properties are also recalled in the next proposition.
\begin{prop}
  \label[proposition]{prop:15}
  For all $\sigma\leq +\infty$ and $\sigma_0\geq 0$, $\hat{H}_{\mathsf{ren}}(v_{\sigma})$ is
  essentially self-adjoint on $\sC_0^\infty(\IR^3,\FSfin(\sT))$, and bounded from below. Furthermore, it has a
  non-degenerate ground state $\Psi_{\mathsf{gs}}$ whenever $V$ satisfies BC and
  either $\mu>0$ or $\sigma_0=0$.
\end{prop}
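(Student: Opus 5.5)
Most of the statement can be taken from the literature summarized in \cref{prop:12}. What genuinely remains is the core property for $\sC_0^\infty(\IR^3,\FSfin(\sT))$ together with the case $\sigma=\infty$, $\sigma_0=0$, $\mu=0$.

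\emph{Self-adjointness and lower bound.} For $\sigma<\infty$ the coefficient $k(g_\sigma-g_{\sigma_0})$ belongs to $L^2_k$, and so does $k(g_\sigma-g_{\sigma_0})/\sqrt\omega$. For $\sigma=\infty$ the function $\lvert k\rvert\,\lvert g_\infty\rvert\sim \lvert k\rvert^{-3/2}$ at large $\lvert k\rvert$ is not square integrable, but $k g_\infty/\sqrt{\omega}$ is. Also $k g_\infty\sim \lvert k\rvert^{-1/2}$ near $0$, which is locally $L^2$ even for $\mu=0$. This is exactly the regularity used in \citep{hiroshima2022rmp,GriesemerWuensch.2018}. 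I would therefore quote the known form bound: the interaction terms in \eqref{eq:8} are form bounded with respect to $H_0=(-\Delta\dot+V)\otimes\Id+\Id\otimes\dG(\omega)$. The bound has relative constant $<1$ for $\sigma_0$ large, and for arbitrary $\sigma_0\ge0$ one uses the Feynman--Kac argument of \citep{hiroshima2022rmp}, which gives semiboundedness uniformly in $\sigma,\sigma_0$. Self-adjointness then follows as the norm resolvent limit given in \cref{prop:12}. Existence and non-degeneracy of the ground state under BC, with $\mu>0$ or $\sigma_0=0$, is exactly the first item of \cref{prop:12}; I would cite it verbatim.

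\emph{Core property.} This is the new part and the main obstacle, since for $\sigma=\infty$ the operator domain is not explicit. The plan has two steps.
\begin{enumerate}
\item For $\sigma<\infty$ I would use Nelson's commutator theorem with test operator $N=H_0+\dG(\Id)+C$. The required bounds $\|\hat H\Psi\|\lesssim\|N\Psi\|$ and $\pm i[\hat H,N]\lesssim N$ hold because the coefficients are in $L^2$. The quadratic terms commute with $\dG(\Id)$ only up to $a^\dagger a^\dagger$ and $aa$ terms, which are bounded relative to $N$. This gives essential self-adjointness on any core of $N$, and $\sC_0^\infty(\IR^3,\FSfin(\sT))$ is a core of $N$: $\sC_0^\infty$ is a core for $-\Delta\dot+V$ by the Kato-class assumptions, and $\sT$ is a core for $\omega$.
\item For $\sigma=\infty$ I would instead show that $\sC_0^\infty(\IR^3,\FSfin(\sT))$ is a form core, and then upgrade using the fact that the domain is contained in the form domain $\sD(H_0^{1/2})$. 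Concretely, the plan is to prove $\hat H_{\mathsf{ren}}(v_\infty)$ invariant under the dense-domain approximation: take $\Psi$ in the domain, set $\Psi_\sigma=(\hat H_\sigma+i)^{-1}(\hat H_\infty+i)\Psi$, which tends to $\Psi$ by norm resolvent convergence, and approximate each $\Psi_\sigma$ by core vectors in the graph norm of $\hat H_\sigma$. What remains is to control the difference $\hat H_\infty-\hat H_\sigma$ on those vectors, and I expect exactly this estimate to be the delicate point.
\end{enumerate}
The control I have in mind is a graph-norm bound of $\hat H_\infty-\hat H_\sigma$ against $N$. The high-momentum tail $k(g_\infty-g_\sigma)$ has $1/\sqrt{\omega}$-norm tending to $0$, but the quadratic $a^\dagger\cdot a^\dagger$ term is not $N$-bounded in norm. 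If that fails, I would fall back on showing that $\hat H_\infty$ restricted to the core has a self-adjoint closure by verifying $\ker(\hat H^*\pm i)=\{0\}$ with a Fock-fiber truncation argument in the spirit of \citep{Falconi.2015}. That argument uses that the interaction changes the particle number by at most two.
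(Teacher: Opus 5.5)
Your handling of semiboundedness and of the ground state matches the paper. The paper's proof is only a citation: all claims of the proposition, the core statement included, are taken from \citep[Thm.~1.7, Prop.~2.3]{hiroshima2022rmp}, so the paper does not treat the core property as a new ingredient. The self-contained proof you sketch for it has a genuine gap, and for $\sigma=\infty$ your plan cannot be completed. Put $f\coloneqq k(g_\infty-g_{\sigma_0})$. Since $\lvert f(k)\rvert^2\sim\lvert k\rvert^{-3}$ at infinity, $f\notin L^2_k$. Let $\Psi\in\sC_0^\infty(\IR^3,\FSfin(\sT))$ be nonzero, with highest occupied particle number $m$. The only term of \eqref{eq:8} that reaches the $(m+2)$-particle sector is $\ad(f_x)\cdot\ad(f_x)\Psi_m$. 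This vector is not square integrable, because $\int\lvert f\rvert^2\,\mathrm{d}k=\infty$. Testing the closed form against cut-off vectors in that sector then gives $\sC_0^\infty(\IR^3,\FSfin(\sT))\cap\sD\bigl(\hat H_{\mathsf{ren}}(v_\infty)\bigr)=\{0\}$. So the obstruction is not that $\ad\cdot\ad$ lacks an $N$-bound; it is that this term leaves $\cN_0$. Consequently:
\begin{itemize}
\item there is no graph norm in which to approximate;
\item there is no difference $\hat H_{\mathsf{ren}}(v_\infty)-\hat H_{\mathsf{ren}}(v_\sigma)$ to evaluate on core vectors;
\item there is no operator ``$\hat H_{\mathsf{ren}}(v_\infty)$ restricted to the core'' whose deficiency spaces your fallback could examine;
\item a form core cannot be upgraded to an operator core here.
\end{itemize}
For $\sigma=\infty$ the core assertion can only hold in the form sense. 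That is what has to be imported from the closed-form or Feynman--Kac construction.

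Step~1 is not sound as written either. Kato-class potentials need not lie in $L^2_{\mathrm{loc}}$; take $V_+=\lvert x\rvert^{-a}$ near the origin with $3/2\le a<2$. Then $\sC_0^\infty(\IR^3)$ need not even lie in $\sD(-\Delta\overset{_{\cdot}}{+}V)$, let alone be a core; only the form-core property is available. Also, the bound $\pm i[\hat H_{\mathsf{ren}}(v_\sigma),N]\lesssim N$ for $N=H_0+\dG(\Id)+C$ does not follow from the coefficients being in $L^2$. The commutator $[-\Delta_x,\ad(f_x)\cdot\nabla_x]$, now with $f=k(g_\sigma-g_{\sigma_0})$, contains the second-order term $\sum_{i,j}\ad\bigl(k_ik_j(g_\sigma-g_{\sigma_0})_x\bigr)\partial_i\partial_j$. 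This term is controlled by products of $-\Delta_x$ and $\dG(\Id)^{1/2}$, not by $N$. On suitable states with about $n$ bosons, whose particle momentum is of order $\sqrt n$ from recoil, it is of order $n^{3/2}$, while $\langle N\rangle$ is of order $n$. The way out is the route the paper takes: rely on \citep{hiroshima2022rmp} for all three claims. Read the core property as a form-core statement wherever, as for $\sigma=\infty$, nothing stronger can hold.
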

\begin{proof}
	The claims are proved in \citep[][Thm.~1.7,~Prop.~2.3]{hiroshima2022rmp}.
\end{proof}
With the above definition of $\hat{H}_{\mathsf{ren}}(v_{\sigma})$, an explicit
computation, applying canonical commutation relations and the
Baker-Campbell-Hausdorff formula, yields the following lemma.
\begin{lem}
  \label[lemma]{lemma:14}
  For any $0\le \sigma_0\le\sigma\le\infty$ and any $\Psi,\Phi\in U_{0,(g_{\sigma}-g_{\sigma_0})_x}\bigl(\sC_0^\infty(\IR^3,\FSfin(\sT))\bigr)$, we have that
  \begin{equation*}
    q^{\mathsf{ren}}_{g_{\sigma}-g_{\sigma_0}}(\Psi,\Phi)= \langle \e^{a((g_{\sigma}-g_{\sigma_0})_x)}\Psi  , \hat{H}_{\mathsf{ren}}(v_{\sigma})\:\e^{a((g_{\sigma}-g_{\sigma_0})_x)}\Phi \rangle_{\mathcal{N}_0}\;,
  \end{equation*}
  and therefore it is the quadratic form associated to the essentially
  self-adjoint operator $H_{\mathsf{ren}}(v_{\sigma})_{g_{\sigma}-g_{\sigma_0}}$ on
  $\mathcal{N}_{(g_{\sigma}-g_{\sigma_0})_x}$, defined by
  \begin{equation*}
    H_{\mathsf{ren}}(v_{\sigma})_{g_{\sigma}-g_{\sigma_0}}\coloneqq  U_{0,(g_{\sigma}-g_{\sigma_0})_x}\,\hat{H}_{\mathsf{ren}}(v_{\sigma})\, U_{(g_{\sigma}-g_{\sigma_0})_x,0}\;.
  \end{equation*}
\end{lem}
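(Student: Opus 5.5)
The plan mirrors \cref{lemma:8,lemma:99}. Write $h\coloneqq g_\sigma-g_{\sigma_0}$. For $\sigma<\infty$ the function $h$ lies in $L^2_k$, even when $\mu=0$, $\sigma_0=0$: $|h(k)|\lesssim |k|^{-3/2}$ near $0$ and $h$ is compactly supported. We also use the identity $\omega h + |k|^2 h = -(v_\sigma-v_{\sigma_0})$.

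\textbf{Step 1: reduce to the Fock vacuum normalization.} Set $\Phi'=\e^{a(h_x)}\Phi$. By \cref{prop:14}, or rather by its cutoff version, which follows from the CCR identity $\e^{a(h_x)}\e^{a^\dagger(h_x)}=\e^{\|h\|^2}\e^{a^\dagger(h_x)}\e^{a(h_x)}$ (valid fiberwise in $x$ because $\|h_x\|=\|h\|$), we get
\[
\e^{a^\dagger(h_x)}\Phi=\e^{\|h\|^2/2}\,W(h_x)\,\Phi',
\]
up to the unitary Weyl operator $W(h_x)=\e^{a^\dagger(h_x)-a(h_x)}$. The factor $\e^{\|h\|^2}$ cancels against the denominator $\langle \e^{a^\dagger(h)}\Omega,\e^{a^\dagger(h)}\Omega\rangle=\e^{\|h\|^2}$. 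Therefore
\[
q^{\mathsf{ren}}_{h}(\Psi,\Phi)=\langle \Psi', W(h_x)^*H_{\mathsf{ren}}(v_\sigma)W(h_x)\Phi'\rangle_{\cN_0},
\]
and $\Psi',\Phi'$ range over $\sC_0^\infty(\IR^3,\FSfin(\sT))$ by the definition of the domain and \cref{prop:unitarygroupnelson}.

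\textbf{Step 2: the Gross transformation.} Compute $W(h_x)^*H_{\mathsf{ren}}(v_\sigma)W(h_x)$ on $\sC_0^\infty(\IR^3,\FSfin(\sT))$ term by term. For this, use $W(h_x)^*a(f_x)W(h_x)=a(f_x)+\langle f,h\rangle$ together with
\[
W(h_x)^*(-i\nabla_x)W(h_x)=-i\nabla_x+\phi\text{-type term }\bigl(a^\dagger(kh_x)+a(kh_x)\bigr),
\]
which holds since $\nabla_x h_x = 2\pi i k h_x$ (the normalization of $k$ to be tracked consistently with \eqref{eq:8}). Expanding $-\Delta_x$ produces the following pieces:
\begin{itemize}
\item the mixed terms $2i(a^\dagger(kh_x)\cdot\nabla_x+\nabla_x\cdot a(kh_x))$;
\item the quadratic terms $a^\dagger a^\dagger+aa+2a^\dagger a$ in $kh_x$;
\item a constant $\|kh\|^2$;
\item linear terms $a^\dagger(|k|^2h_x)+\text{h.c.}$
\end{itemize}
The field energy contributes $\dG(\omega)+a^\dagger(\omega h_x)+a(\omega h_x)+\|\sqrt\omega h\|^2$. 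The interaction contributes $a^\dagger(v_x)+a(v_x)+2\Re\langle v_\sigma,h\rangle$. The linear terms cancel precisely because $(\omega+|k|^2)h=-(v_\sigma-v_{\sigma_0})$. This leaves the residual piece $a^\sharp((v_{\sigma_0})_x)$, which is already present in the convention of \eqref{eq:8} through the choice of $g_{\sigma_0}$. The constants combine into $-\int_{\sigma_0\le|k|\le\sigma}\frac{|v|^2}{\omega+|k|^2}=E_{\sigma,\sigma_0}$ and are cancelled by the self-energy subtraction. Since $V$ commutes with $W(h_x)$, the result is exactly $\hat H_{\mathsf{ren}}(v_\sigma)$ as in \eqref{eq:8}.

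\textbf{Step 3: operator identification.} Since $U_{h_x,0}$ is unitary and maps $U_{0,h_x}(\sC_0^\infty(\IR^3,\FSfin(\sT)))$ onto $\sC_0^\infty(\IR^3,\FSfin(\sT))$, \cref{prop:15} shows that $U_{0,h_x}\hat H_{\mathsf{ren}}(v_\sigma)U_{h_x,0}$ is essentially self-adjoint on that domain and has $q^{\mathsf{ren}}_h$ as its form.

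\textbf{The case $\sigma=\infty$.} Here $h\notin L^2_k$, so Step 1 does not make literal sense. I would read the left-hand side as the $n\to\infty$ limit of the cutoff expression along mollifications, as in \cref{prop:14}. The argument then proceeds as follows: apply Steps 1 and 2 at cutoff $\sigma_n$, and pass to the limit using the $\sigma(\sT',\sT)$-continuity of \cref{lemma:13}. The coefficients $kh_{\sigma_n}$ converge in $L^2$ away from the infrared region, and the resulting quadratic terms act on $\FSfin(\sT)$ vectors by finite sums. This limit passage, with the UV-divergent constants exactly cancelled at every finite $\sigma_n$, is the step I expect to be the main obstacle. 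The remaining verifications are the routine checks that each term stays in the domain, justified by $\sT=\sC_0^\infty(\IR^3_*)$.
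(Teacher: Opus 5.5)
Your route (conjugate by the Weyl operator, Gross-transform term by term, then transfer essential self-adjointness from \cref{prop:15} by unitary conjugation) is the explicit CCR/Baker--Campbell--Hausdorff computation the paper has in mind. However, your regularity bookkeeping for $h=g_\sigma-g_{\sigma_0}$ is reversed, and it fails precisely in the case the lemma is written for.

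For $\mu=0$ and $\sigma_0=0$ one has $|h(k)|^2\sim\frac{1}{2|k|^3}$ near $k=0$. In three dimensions $\int_{|k|<1}|k|^{-3}\,\mathrm{d}k=4\pi\int_0^1 r^{-1}\,\mathrm{d}r=\infty$, so $h\notin L^2_k$ for \emph{every} $0<\sigma\le\infty$. This is the infrared catastrophe itself. As a sanity check: if $h$ were in $L^2_k$ for $\sigma<\infty$, then $W(h_x)\Psi_{\mathsf{gs}}$ would be a Fock ground state of $H(v_\sigma)$ by \cref{prop:15}, contradicting \cref{lemma:12}. In that case $W(h_x)$ does not exist, $\e^{a^\dagger(h_x)}\Phi$ is not a vector of $\cN_0$, the normalization $\e^{\|h\|^2}$ is infinite, and Step~1 is meaningless.

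Conversely, for $\sigma=\infty$ with $\sigma_0>0$ or $\mu>0$ you have $|h(k)|^2\sim\frac{1}{2|k|^5}$ at infinity, so $h\in L^2_k$. Then $W(h_x)^*$ is (up to sign convention) the Gross unitary $U(\infty,\sigma_0)$, and the identity is essentially the definition of $H_{\mathsf{ren}}(v_\infty)$ in \cref{prop:12}. So the ``main obstacle'' you identify is not where the difficulty lies. Your UV mollification $\sigma_n\to\infty$ also cannot remove the real one, since at every finite $\sigma_n$ the same infrared divergence persists and Steps~1--2 remain undefined. Incidentally, $kh_{\sigma_n}$ does not converge in $L^2$ either, since $|kh(k)|^2\sim\frac{1}{2|k|^3}$ at infinity; this is harmless only because the form is tested on $\FSfin(\sT)$.

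What is needed is an \emph{infrared} regularization. When $\mu=\sigma_0=0$, the left-hand side can only be understood as a limit in the spirit of \cref{prop:14}, e.g.\ along $h_n\coloneqq h\,\chr_{\{|k|\ge 1/n\}}=g_\sigma-g_{1/n}\in L^2_k$. For each $n$ your Steps~1--2 apply verbatim. They produce the dressed operator with $kh_n$ in place of $kh$, plus two extra pieces: the undressed soft interaction $a^\dagger\bigl((v_\sigma\chr_{\{|k|<1/n\}})_x\bigr)+a\bigl((v_\sigma\chr_{\{|k|<1/n\}})_x\bigr)$, and the constant $\int_{|k|<1/n}\frac{|v_\sigma|^2}{\omega+|k|^2}\,\mathrm{d}k\to 0$.

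Since $\sT=\sC_0^\infty(\IR^3_*)$, the test vectors have momenta supported away from $k=0$. Hence the soft interaction drops out of the form and the terms in $kh_n$ eventually agree with those in $kh$. Meanwhile $\e^{a((h_n)_x)}\Psi\to\e^{a(h_x)}\Psi$ by \cref{lemma:13}, and this yields the identity in the limit.

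Relatedly, for $\sigma_0>0$ the residual term $a^\dagger((v_{\sigma_0})_x)+a((v_{\sigma_0})_x)$ that you correctly find is \emph{not} contained in \eqref{eq:8} as displayed. You should say that \eqref{eq:8} must be read with this term added, rather than asserting that the result is exactly \eqref{eq:8}.
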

We can now take the limit $\sigma\to \infty$ and/or $\sigma_0\to 0$ (in any order), and obtain
the UV/IR renormalized Nelson Hamiltonian.
\begin{thm}[Renormalization of the Nelson model]
  \label[theorem]{thm:1}
  The renormalized Nelson Hamiltonian has the following properties:
  \begin{itemize}
    \setlength{\itemsep}{3mm}

    \vspace{2mm}
    
  \item Let $\sigma<\infty$, and $\sigma_0>0$; then
    $H_{\mathsf{ren}}(v_{\sigma})_{g_{\sigma}-g_{\sigma_0}}$ converges to
    $H_{\mathsf{ren}}(v_{\sigma'})_{g_{\sigma'}-g_{\sigma'_0}}$, as $\sigma\to \sigma'$ and $\sigma_0\to \sigma_0'$
    for all $0\leq \sigma_0'<\sigma'\leq \infty$, in generalized norm resolvent sense, with the
    limits that can be taken in any order.

  \item In particular, for $\mu=0$, $H_{\mathsf{ren}}(v_{\sigma})_{g_{\sigma}-g_{\sigma_0}}$
    converges to $H_{\mathsf{ren}}(v_{\infty})_{g_{\infty}}$ as $\sigma\to \infty$ and $\sigma_0\to 0$,
    where $H_{\mathsf{ren}}(v_{\infty})_{g_{\infty}}$ is the renormalized Nelson
    Hamiltonian in its non-Fock ground state representation.

    
  \item In particular, $H_{\mathsf{ren}}(v_{\infty})_{g_{\infty}}= U_{0,(g_{\infty})_x}
    \,\hat{H}_{\mathsf{ren}}(v_{\infty})\, U_{(g_{\infty})_x,0}$, with $g_{\infty}=
    -v_{\infty}/(\omega+\lvert \cdot \rvert_{}^2)$, and therefore:
    \begin{itemize}
      \setlength{\itemsep}{1.5mm}

      \vspace{1mm}
      
    \item $H_{\mathsf{ren}}(v_{\infty})_{g_{\infty}}$ is self-adjoint on
      $\mathcal{N}_{(g_{\infty})_x}$ with domain
      $\mathscr{D}\bigl(H_{\mathsf{ren}}(v_{\infty})_{g_{\infty}}\bigr)=
      U_{0,(g_{\infty})_x}\mathscr{D}\bigl(\hat{H}_{\mathsf{ren}}(v_{\infty})\bigr)$;
    \item $\inf \sigma(H_{\mathsf{ren}}(v_{\infty})_{g_{\infty}})=\inf
      \sigma(\hat{H}_{\mathsf{ren}}(v_{\infty}))$;
    \item $H_{\mathsf{ren}}(v_{\infty})_{g_{\infty}}$ has a non-degenerate ground state
      whenever $V$ satisfies BC, and such ground state is
      $U_{(g_{\infty})_x,0}\Psi_{\mathsf{gs}}$, whence $\Psi_{\mathsf{gs}}$ is the
      ground state of $\hat{H}_{\mathsf{ren}}(v_{\infty})$.
    \end{itemize}
  \end{itemize}
\end{thm}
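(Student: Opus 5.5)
The plan is to reduce everything to the dressed Fock-space operators via \cref{lemma:7}, exactly as in the proofs of \cref{thm:3,thm:4}. For every admissible pair $0\le\sigma_0<\sigma\le\infty$, \cref{lemma:14} gives
\begin{equation*}
  H_{\mathsf{ren}}(v_{\sigma})_{g_{\sigma}-g_{\sigma_0}}= U_{0,(g_{\sigma}-g_{\sigma_0})_x}\,\hat{H}_{\mathsf{ren}}(v_{\sigma})\, U_{(g_{\sigma}-g_{\sigma_0})_x,0}\;.
\end{equation*}
Here $U_{(g_{\sigma}-g_{\sigma_0})_x,0}$ is unitary by \cref{prop:unitarygroupnelson}, and the parent space is $\mathcal{N}_0$. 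This conjugation also holds for $\sigma=\infty$ and/or $\sigma_0=0$, where we define the limit Hamiltonian by it; $\hat H_{\mathsf{ren}}(v_\infty)$ with the corresponding $\sigma_0$ is self-adjoint by \cref{prop:15} (taking closures). By \cref{lemma:7}, generalized norm resolvent convergence of the undressed operators is then \emph{equivalent} to ordinary norm resolvent convergence of the dressed operators $\hat{H}_{\mathsf{ren}}(v_{\sigma})$ (which depend on $\sigma,\sigma_0$) on $\mathcal{N}_0$. So the first two bullets reduce to norm resolvent continuity of the map $(\sigma,\sigma_0)\mapsto\hat H_{\mathsf{ren}}(v_\sigma)$ on $\{0\le\sigma_0<\sigma\le\infty\}$, jointly. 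Joint continuity then gives independence of the order of limits.

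For this continuity, I would use the explicit form \eqref{eq:8}. The dressed operator is the free part $H_0$ plus terms that are linear and quadratic in $a^\sharp(k(g_\sigma-g_{\sigma_0})_x)$, together with the particle gradient. Let $h=g_\sigma-g_{\sigma_0}$ and let $h'$ denote the corresponding function for a nearby pair of cutoffs. The key input is
\begin{equation*}
  \lVert k(h-h')/\sqrt{\omega}\rVert_2\to0
  \quad\text{as}\quad (\sigma,\sigma_0)\to(\sigma',\sigma_0')\;.
\end{equation*}
This holds even for $\sigma_0'=0$, $\mu=0$ and $\sigma'=\infty$, since $\lvert k\rvert^2\lvert g_\infty\rvert^2/\omega\sim \lvert k\rvert^{-5}$ at infinity and $\sim \lvert k\rvert^{-1}$ at zero, both integrable in $\mathbb{R}^3$. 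With the standard bound
\begin{equation*}
  \lVert a(f)\Psi\rVert\le\lVert f/\sqrt\omega\rVert\,\lVert \dG(\omega)^{1/2}\Psi\rVert\;,
\end{equation*}
the differences of the interaction forms are relatively form-bounded with respect to $H_0$, with relative bound tending to zero. The quadratic terms also need $\lVert k h\rVert$-type control, through the usual $a^\dagger a^\dagger$ estimates with $\lVert kh/\omega\rVert$. Convergence of forms in the sense of uniform form-boundedness then gives norm resolvent convergence.

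The main obstacle is that at large $\sigma$ and small $\sigma_0$ the forms of $\hat H_{\mathsf{ren}}$ are \emph{not} KLMN-small with respect to $H_0$. This is exactly why \citep{hiroshima2022rmp} needs functional integration. I would therefore not compare with $H_0$. Instead, I would compare each $\hat H_{\mathsf{ren}}(v_\sigma)$ with the limit operator: the difference of the dressed forms involves only $h-h'$, which is small. The goal is to show that this difference is infinitesimally form-bounded relative to $\hat H_{\mathsf{ren}}(v_{\sigma'})+C$ itself, using the uniform lower bound of \cref{prop:12} and the form-domain characterization from \citep[Thm.~1.7]{hiroshima2022rmp}, namely that the form domain contains that of $(-\Delta)^{1/2}$ and $\dG(\omega)^{1/2}$ in a quantified way. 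If that is not directly available, the first fallback is to quote the norm resolvent convergence statements in \citep{hiroshima2022rmp}, which cover $\sigma\to\infty$ and, by the same proof, $\sigma_0\to0$.

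The remaining bullets then follow immediately from unitary conjugation, as in the last step of the proof of \cref{thm:3}. The domain is $U_{0,(g_\infty)_x}\mathscr{D}(\hat{H}_{\mathsf{ren}}(v_\infty))$, the spectra coincide, and the ground state is transported by the unitary from the one given in \cref{prop:15} when $V$ satisfies BC.
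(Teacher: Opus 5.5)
Your proposal follows essentially the paper's proof. Unitarity of the dressings, the unitary version of generalized resolvent convergence, and \cref{lemma:14} reduce everything to norm resolvent convergence of the dressed operators $\hat H_{\mathsf{ren}}(v_\sigma)$ on $\mathcal{N}_0$; the paper takes this from \cref{prop:12}, i.e.\ from the cited literature, treating the independent IR cutoff the same way. The spectral claims are then transported from \cref{prop:15} by unitary conjugation.

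Your self-contained form-comparison detour is unnecessary, and as sketched it is shaky. The $\lVert kh/\omega\rVert$ control you invoke is infinite in the massless infrared limit. Also, $\lvert k\rvert^2\lvert g_\infty\rvert^2/\omega$ decays like $\lvert k\rvert^{-4}$ at infinity and behaves like $\lvert k\rvert^{-2}$ at zero, not $\lvert k\rvert^{-5}$ and $\lvert k\rvert^{-1}$, although it is still integrable. So your fallback of quoting \citep{hiroshima2022rmp} is precisely the paper's route.
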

\begin{proof}
  The generalized norm resolvent convergence follows combining
  \cref{lemma:7,prop:12,lemma:14}, observing that the IR and UV cutoffs are
  completely independent. The spectral properties of the renormalized
  Hamiltonian follow from \cref{prop:15}.
\end{proof}

\subsection{The translation-invariant Nelson model}
Let us also briefly comment on the translation-invarant case, \emph{i.e.},\
$V=0$. The Nelson Hamiltonian of \cref{def:9} commutes with the total
momentum (vector) operator $-i\nabla_x + \dG(k)$, where $k$ is the one-boson
momentum operator acting on $L_k^2$. Thus, it decomposes into a direct
integral with respect to the spectrum of the total momentum, giving the
so-called {\em fiber Hamiltonians} $H(P,v)$ with $P\in\IR^3$ being the spectral
value of the total momentum.

\subsubsection*{Fock representation of the fiber Hamiltonians}
The fiber Hamiltonians of the translation-invariant Nelson model are defined
on the same Hilbert spaces as the van Hove model in
\cref{sec:van-hove-miyatake}, \emph{i.e.}, the Hilbert space of the fiber
Hamiltonians in the Fock representation is $\FS(\fh)$.  The free
translation-invariant Nelson fiber Hamiltonian is given by
\begin{align*}
	K_0(P) = (P-\dG(k))^2 + \dG(\omega)\;,
\end{align*}
where $k$ denotes the momentum operator on $L_k^2$.
For any $P\in\IR^3$, $H_0(P)$ is self-adjoint with domain $\sD(\dG(\omega))\cap\sD(|\dG(k)|^2)$, non-negative and has $\Omega$ as its unique ground state.
Again the interacting model is obtained by linearly coupling this with the field.
\begin{defn}[Fock Fiber Hamiltonian]\label{def:fiber}
	Let $\mu\ge0$ be the field's mass, let $v_\sigma\in L_k^2$ be the particle's charge distribution as defined in \cref{eq:3} and let $P\in\IR^3$ be the fixed total momentum. Then the Fock fiber Hamiltonian of the Nelson model is defined on $\FS(L_k^2)$ as
	\[K_\sigma(P) = K_0(P) + \ad(v_\sigma) + a(v_\sigma). \]
\end{defn}
\noindent Let us summarize some well-known facts.
\begin{prop} For all $\sigma\in[0,\infty)$, we have:
	\begin{itemize}
		\item $K_\sigma(P)$ is self-adjoint on $\sD(|\dG(k)|^2)\cap \sD(\dG(\omega))$ and bounded from below uniformly in $P$;
		\item if $\mu>0$, then $K_\sigma(P)$ has a unique ground state for all $P$ with $|P|<\frac{1}{2}$;
		\item if $\mu=0$ and $\sigma>0$, then $K_\sigma(P)$ has no ground state for any $P$.
	\end{itemize}
\end{prop}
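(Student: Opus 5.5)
The three claims are standard and I would prove them in turn. The tools are the Kato--Rellich/KLMN theory already used for the vHM and SB models, and the literature on the translation-invariant Nelson model.

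\emph{Self-adjointness and a uniform lower bound.} Since $v_\sigma\in L^2_k$ for $\sigma<\infty$, the standard estimates give
\begin{equation*}
\|a(v_\sigma)\Psi\|\le\|v_\sigma/\sqrt\omega\|\,\|\dG(\omega)^{1/2}\Psi\|,\qquad
\|\ad(v_\sigma)\Psi\|\le\|v_\sigma/\sqrt\omega\|\,\|\dG(\omega)^{1/2}\Psi\|+\|v_\sigma\|\,\|\Psi\|.
\end{equation*}
Because $\dG(\omega)\le K_0(P)$, for every $\epsilon>0$ we have $\|\dG(\omega)^{1/2}\Psi\|\le\epsilon\|K_0(P)\Psi\|+C_\epsilon\|\Psi\|$, and $C_\epsilon$ does not depend on $P$. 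So the interaction is infinitesimally $K_0(P)$-bounded uniformly in $P$, and Kato--Rellich gives self-adjointness on $\sD(K_0(P))=\sD(|\dG(k)|^2)\cap\sD(\dG(\omega))$. For the lower bound I would use $(P-\dG(k))^2\ge0$ and the vHM bound of \cref{prop:5}:
\begin{equation*}
K_\sigma(P)\ge\dG(\omega)+\ad(v_\sigma)+a(v_\sigma)\ge-\|v_\sigma/\sqrt\omega\|^2 ,
\end{equation*}
which is uniform in $P$.

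\emph{Massive case, ground state for $|P|<\tfrac12$.} Let $E(P)=\inf\sigma(K_\sigma(P))$. By an HVZ-type argument (Fröhlich, Spohn, Møller), the essential spectrum starts at $\inf_{k}\bigl(E(P-k)+\omega(k)\bigr)$. It therefore suffices to show $E(P)<E(P-k)+\omega(k)$ for all $k\neq0$. Two facts give this. First, $E(P)\le E(0)+P^2/2\cdot2$ up to the particle-mass normalisation; it comes from a variational bound using the ground state at $P=0$, or from convexity/symmetry of $E$, namely $E(0)\le E(P)$. Second, $\omega(k)\ge\mu>0$. For $|P|<\tfrac12$ (in the normalisation $-\Delta$, i.e. mass $\tfrac12$), this yields the binding inequality: the group velocity stays below the "sound speed" $1$ of $\omega$, so $E(P)-E(P-k)<|k|\le\omega(k)$. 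I expect this step to be the main obstacle. Concretely, it needs the bound $|\nabla E|\le2|P|$ together with a Lipschitz estimate. Rather than reprove this, I would cite the established result (Fröhlich 1973; Møller; Dam) and check only that the hypotheses match.

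\emph{Massless case, $\sigma>0$, no ground state.} I would use the pull-through argument from the proof of \cref{prop:8}. Suppose $\Psi$ is a ground state. Pull-through gives
\begin{equation*}
a_k\Psi=-\bigl(K_\sigma(P-k)-E(P)+\omega(k)\bigr)^{-1}v_\sigma(k)\Psi .
\end{equation*}
Near $k=0$ one has $E(P-k)-E(P)\ge-c|k|$ with $c<1$ for the relevant $P$ (or one argues via the fibre infimum). Combined with $|v_\sigma(k)|\sim|k|^{-1/2}$, this gives $\|a_k\Psi\|\gtrsim|v_\sigma(k)|/|k|$ up to a nonvanishing overlap factor. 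That function is not in $L^2$ near $0$, whereas $\int\|a_k\Psi\|^2\,dk<\infty$ must hold. The delicate point is controlling the overlap term $\langle\Psi,(\ldots)\Psi\rangle$ uniformly. For the full details I would refer to \citep{Dam.2018,DamHinrichs.2021}.
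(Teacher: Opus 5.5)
Your first item is correct, and it is more explicit than the paper, which simply invokes Kato--Rellich via \citep[Lemmas~3.1,~3.2]{Dam.2018}. The $P$-uniform infinitesimal bound from $\dG(\omega)\le K_0(P)$ and the lower bound from $(P-\dG(k))^2\ge0$ plus the vHM estimate are both right. For the other two items, what carries your proposal is the same literature the paper cites. The arguments you sketch around those citations would not work, however, and one step is false.

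In the third item, the claim that ``$\int\|a_k\Psi\|^2\,\rmd k<\infty$ must hold'' is not available. For $\mu=0$, $\Psi\in\sD(K_\sigma(P))$ only gives $\int|k|\,\|a_k\Psi\|^2\,\rmd k=\|\dG(\omega)^{1/2}\Psi\|^2<\infty$. This is compatible with $\|a_k\Psi\|^2\sim|k|^{-3}$ in three dimensions. A Fock vector may well have infinite number expectation, which is exactly the infrared catastrophe, so no contradiction follows. The remedy is the one used in the paper's proof of \cref{prop:8}: test against $\eta_\eps\in\sD(\dG(\Id)^{1/2})$ with $\|\eta_\eps-\Psi\|<\eps$, for which $k\mapsto\langle\eta_\eps,a_k\Psi\rangle$ is square integrable. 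With $A_k\coloneqq K_\sigma(P-k)-E(P)+|k|$, this needs two bounds.
\begin{itemize}
\item The lower bound $|\langle\Psi,a_k\Psi\rangle|=|v_\sigma(k)|\langle\Psi,A_k^{-1}\Psi\rangle\ge|v_\sigma(k)|/\langle\Psi,A_k\Psi\rangle$ holds with $\langle\Psi,A_k\Psi\rangle=|k|-2k\cdot\langle\Psi,(P-\dG(k))\Psi\rangle+|k|^2\le C|k|$. It needs no subsonic condition. There is also no delicate overlap factor, since the coupling is scalar: the factor is just $\|\Psi\|^2$, unlike $\langle\Psi,(B\otimes\Id)\Psi\rangle$ in \cref{prop:8}.
\item The upper bound $\|a_k\Psi\|\le|v_\sigma(k)|\,\|A_k^{-1}\|\lesssim|v_\sigma(k)|/|k|$ is needed to absorb the error $\eps\|a_k\Psi\|$. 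This is where $E(P)-E(P-k)\le c|k|$ with $c<1$ enters, so you have the roles of the two bounds reversed. In \cref{prop:8} this upper bound is free; here the momentum shift $P\to P-k$ makes it nontrivial.
\end{itemize}
Such a subsonic bound is not available for every $P$; it already fails for the free fiber when $|P|>\frac12$. The statement, however, asserts absence for all $P$. Your qualifier ``for the relevant $P$'' is precisely the gap that \citep[Thm.~3.3]{Dam.2018} closes, and you cannot do without it.

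In the second item, the inequalities you list are $E(P)\le E(0)+|P|^2$, $E(0)\le E(P-k)$ and $\omega(k)\ge\mu$. Together they only yield $E(P)<E(P-k)+\omega(k)$ for $|P|^2<\mu$, which is a $\mu$-dependent ball. Note that $E(0)\le E(P)$ is a lower bound and cannot serve as the upper bound you need. A gradient bound along the segment from $P-k$ to $P$ does not close the gap either, because for intermediate $|k|$ that segment leaves the ball of radius $\frac12$. The $\mu$- and coupling-independent region $|P|<\frac12$ is exactly the hypothesis of \citep{Moller.2005}, whose verification the paper takes from \citep{HaslerHinrichsSiebert.2023}. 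By contrast, \citep{Frohlich.1973} and \citep{Dam.2018} concern the massless case and give no existence. Finally, you do not address uniqueness. HVZ plus the binding inequality only gives existence; non-degeneracy needs a separate Perron--Frobenius type argument.
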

\begin{proof}
	Self-adjointness follows by the Kato--Rellich theorem, \cite[cf.][Lemmas~3.1,\,3.2]{Dam.2018}. Existence of ground states (at arbitrary coupling) and $|P|<\frac{1}{2}$ for the massive model follows from \cite{Moller.2005}, since the model exhibits a spectral gap; also see \cite{HaslerHinrichsSiebert.2023} for a verification of M\o{}ller's assumption. Absence of ground states is proved in \cite[Thm.~3.3]{Dam.2018}.
\end{proof}
\noindent Integrating the fiber Hamiltonians over all total momenta $P$ gives
the Fock Nelson Hamiltonian with $V=0$.
\begin{prop}[Lee--Low--Pines Transformation, {\cite{LeeLowPines.1953}}]
	Define $U:\cN_0\to\cN_0$ to be the unitary given by
	\begin{align*}
		U = \bigl(\sF\otimes\Id\bigr)^* \e^{-i \dG(k)\cdot x},
	\end{align*}
	where $\sF$ denotes the Fourier transform on $L_x^2$.
	Then, for all $\Psi\in U\sD(H(v))$, we have $\Psi(P)\in \sD(K_\sigma(P))$ for almost all $P\in\IR^3$, and
	\begin{align*}
		\bigl(U H(v_\sigma)U^*\Psi\bigr)(P) = K_\sigma(P)\Psi(P).
	\end{align*}
\end{prop}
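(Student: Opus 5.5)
The statement is the Lee--Low--Pines fibration: $U$ maps $\cN_0$ unitarily onto $L^2_P(\IR^3,\FS(L^2_k))$, and $H(v_\sigma)$ decomposes under it into the fibers $K_\sigma(P)$. The plan is to conjugate each term of $H(v_\sigma)$ separately on a convenient core, and then extend to the operator domain.

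First, $U$ is unitary: $\e^{-i\dG(k)\cdot x}$ acts as the unitary $\e^{-i\dG(k)\cdot x}$ on each fiber $\Psi(x)$, since $\dG(k)$ is a vector of commuting self-adjoint operators, and $\sF\otimes\Id$ is unitary.

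Next, on $\sC_0^\infty(\IR^3,\FSfin(\sT))$, compute the conjugation $W:=\e^{-i\dG(k)\cdot x}$ term by term. The second quantization $\dG(\omega)$ commutes with $\dG(k)$ and with $x$, so it is unchanged. For the Laplacian, $W(-i\nabla_x)W^*=-i\nabla_x+\dG(k)$ up to the sign and $2\pi$ conventions, which must be matched to the Fourier convention $\e^{2\pi i k\cdot x}$ used for $v_x$. Hence $-\Delta_x$ becomes $(-i\nabla_x-\dG(k))^2$, or $(-i\nabla_x+\dG(k))^2$ depending on the sign. For the creation term, use $\e^{-i\dG(k)\cdot x}\ad(f)\e^{i\dG(k)\cdot x}=\ad(\e^{-ik\cdot x}f)$, valid pointwise in $x$. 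With matched conventions this cancels the phase in $v_x$, so $\ad((v_\sigma)_x)$ becomes $\ad(v_\sigma)$, and similarly for $a$. The conjugated operator therefore acts on $L^2_x(\IR^3,\FS)$ as
\[
(-i\nabla_x-\dG(k))^2+\dG(\omega)+\ad(v_\sigma)+a(v_\sigma).
\]
Applying $(\sF\otimes\Id)^*$ replaces $-i\nabla_x$ by multiplication by $P$. The result is the decomposable operator $\int^\oplus K_\sigma(P)\,\d P$ on this core. If the paper's conventions produce $P+\dG(k)$ instead, I would fix the sign of the exponent in $U$ accordingly.

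Finally, pass from the core to the domain. The direct integral $\int^\oplus K_\sigma(P)\,\d P$ is self-adjoint on $\{\Psi:\Psi(P)\in\sD(K_\sigma(P))\text{ a.e.},\ \int\|K_\sigma(P)\Psi(P)\|^2<\infty\}$. Measurability of $P\mapsto(K_\sigma(P)+i)^{-1}$ follows because the $P$-dependence is polynomial, so the resolvent is continuous in $P$ via the uniform Kato--Rellich bounds from the preceding proposition. Also, $UH(v_\sigma)U^*$ is self-adjoint, and by the Fock Nelson proposition earlier it is essentially self-adjoint on the relevant core. The two self-adjoint operators agree on the image of that core under $U$. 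To conclude equality I would show that this image is a core for the direct integral, using that $\sC_0^\infty$ in $P$ tensored with $\FSfin(\sD(\omega)\cap\sD(|k|^2))$ is a core for it. Since each self-adjoint operator then extends the common restriction and neither has a proper self-adjoint extension, they coincide. This yields the fiber identity for all $\Psi\in U\sD(H(v))$.

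The main obstacle is this last domain/core identification, specifically establishing that the conjugated core is a core for the direct integral, rather than the algebra. I would handle it by using the uniform relative bounds of $K_\sigma(P)$ with respect to $K_0(P)$, together with $P$-continuity.
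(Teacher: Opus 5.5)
The paper gives no proof of this statement beyond the citation to Lee--Low--Pines. Your plan (conjugate term by term on an explicit core, then extend) is the standard rigorous version of that computation, and it works. However, the step you single out as the main obstacle is not needed.

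The statement only claims the inclusion $A:=UH(v_\sigma)U^*\subseteq\int^\oplus K_\sigma(P)\,\mathrm{d}P$, and this follows from closedness of the fibers alone.

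Since $v_\sigma,v_\sigma/\sqrt\omega\in L^2_k$, the interaction is infinitesimally $H_0$-bounded. By Kato--Rellich, $H(v_\sigma)$ is self-adjoint on $\sD(H_0)$, and $\sC_0^\infty(\IR^3)\hat\otimes\FSfin(\sT)$, being a core for $H_0$, is a core for it. This is the precise version of your ``relevant core''; the core given by the earlier Fock Nelson proposition is a different set.

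For $\Psi\in\sD(A)$, choose $\Psi_n$ in the image $D$ of that core with $\Psi_n\to\Psi$ and $A\Psi_n\to A\Psi$. Along a subsequence both converge in $\FS(L^2_k)$ for a.e.\ $P$. Your computation gives $(A\Psi_n)(P)=K_\sigma(P)\Psi_n(P)$, and each $K_\sigma(P)$ is closed. Hence $\Psi(P)\in\sD(K_\sigma(P))$ and $(A\Psi)(P)=K_\sigma(P)\Psi(P)$ for a.e.\ $P$, which is exactly the claim.

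This argument needs no core property of $D$ for the direct integral, and no measurability of $P\mapsto(K_\sigma(P)+i)^{-1}$. Equality of the two operators then follows for free: the maximal direct integral is symmetric, and a self-adjoint operator has no proper symmetric extension. Your proposed core ($\sC_0^\infty$ in $P$ tensored with finite-particle vectors) is in any case not $D$. Elements of $D$ couple $P$ to the boson momenta through $P-\sum_j k_j$, so your route would have needed an extra approximation argument.

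On the algebra, you made two sign slips that cancel each other.
With $W=\e^{-i\dG(k)\cdot x}$ one has $W(-i\nabla_x)W^*=-i\nabla_x+\dG(k)$, as you first wrote, so the intermediate kinetic term is $(-i\nabla_x+\dG(k))^2$, not $(-i\nabla_x-\dG(k))^2$.
With the usual kernel $\e^{-ip\cdot x}$ for $\sF$, the map $(\sF\otimes\Id)^*$ turns $-i\nabla_x$ into $-P$, not $P$.
Together these give the paper's $(P-\dG(k))^2$.

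Your fallback of flipping the sign in the exponent of $U$ would not be legitimate. That sign is fixed by the requirement $W\ad(v_x)W^*=\ad(v)$ with $v_x=\e^{+ik\cdot x}v$. The only real convention mismatch is the factor $2\pi$ in $v_x=\e^{2\pi i k\cdot x}v$. It must either be dropped, or be inserted consistently into both $U$ and $\sF$, at the price of rescaling the kinetic term.
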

\noindent In terms of direct integrals:
\begin{align*}
	UH(v_\sigma)U^* = \int^\oplus_{\IR^3} K_\sigma(P)\rmd P\;.
\end{align*}
\subsubsection*{Self-Energy Renormalized Fibers}
Similar to the full Nelson model, the fiber Hamiltonians also allow for a
self-energy renormalization, as was first observed by J.T.\ Cannon
\cite{Cannon.1971}.  In fact, the strategy follows closely the one described
in \cref{sec:stand-nels-model}, thence we omit the details here, only
observing that (similar to the full Nelson model) the infrared properties do
not change after self-energy renormalization.
\begin{prop}
	Let $\sigma_0>0$ and $P\in\IR^3$.
	Then $K_\sigma(P)-E_{\sigma,\sigma_0}$ converges to a unique self-adjoint lower-bounded {\em renormalized Fock Nelson Hamiltonian} $K_{\infty}(P)$ in the norm resolvent sense as $\sigma\to\infty$, given explicitly by
	\[ K_{\infty}(P) = \e^{\ad(g_\infty-g_{\sigma_0})-a(g_\infty-g_{\sigma_0})}\hat K_{\infty,\sigma_0}(P)\e^{a(g_\infty-g_{\sigma_0})-\ad(g_\infty-g_{\sigma_0})}, \quad \sigma_0>0, \]
	where we define $\hat K_{\sigma,\sigma_0}(P)$ below.
	Furthermore:
	\begin{itemize}
		\item $K_\infty(P)$ is uniformly bounded from below in $P$;
		\item if $\mu>0$, then $K_{\infty}(P)$ has a unique ground state for all $P\in\IR^3$ with $|P|<\frac12$;
		\item if $\mu=0$, then $K_{\infty}(P)$ does not have a ground state for any $P\in\IR^3$;
		\item $UH_{\mathsf{ren}}(v_\infty)U^* = \int^\oplus_{\IR^3}K_\infty(P)\rmd P$.
	\end{itemize}
\end{prop}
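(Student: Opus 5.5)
The plan is to transfer the Gross/Nelson dressing argument of \cref{prop:12} to the fibers. I first record the explicit dressed fiber operator. With $h=g_\sigma-g_{\sigma_0}$, $g_\sigma(k)=-v_\sigma(k)/(\omega(k)+|k|^2)$, I conjugate $K_\sigma(P)$ by the Weyl-type unitary $W(h)=\e^{a(h)-\ad(h)}$. The shifts are $a(f)\mapsto a(f)-\langle f,h\rangle$ and $\dG(T)\mapsto \dG(T)-\ad(Th)-a(Th)+\langle h,Th\rangle$; here $h$ is supported in $\sigma_0\le|k|\le\sigma$, hence lies in $\fh$ and in the relevant weighted spaces. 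Expanding $(P-\dG(k))^2$ produces linear terms $\ad((\omega+|k|^2)h)+a(\cdot)$ (modulo $P$- and $\dG(k)$-dependent cross terms) which cancel $\ad(v_\sigma)+a(v_\sigma)$ on the shell $\sigma_0\le|k|\le\sigma$. It also produces the constant $E_{\sigma,\sigma_0}$, and the quadratic terms $2\ad(kh)\cdot a(kh)+\ad(kh)^2+a(kh)^2$ together with $-2(P-\dG(k))\cdot(\ad(kh)+a(kh))$. This gives
\[
\hat K_{\sigma,\sigma_0}(P)=W(h)\bigl(K_\sigma(P)-E_{\sigma,\sigma_0}\bigr)W(h)^*
\]
as the fiber analogue of \eqref{eq:8}, with $-i\nabla_x$ replaced by $P-\dG(k)$ and the residual $v_{\sigma_0}$-coupling kept.

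Next I prove norm resolvent convergence of $\hat K_{\sigma,\sigma_0}(P)$ as $\sigma\to\infty$ via KLMN, with $\sigma_0$ large first. The key input is the bound $\|kh\|_{\fh}$-type estimates, where $\int_{|k|>\sigma_0}|k|^2|g_\infty|^2/\omega\,\rmd k<\infty$ and this quantity tends to $0$ as $\sigma_0\to\infty$. These give form bounds of the cross and quadratic terms relative to $K_0(P)=(P-\dG(k))^2+\dG(\omega)$ with relative bound $<1$, uniformly in $P$ (using $|P-\dG(k)|\le K_0(P)^{1/2}$) and uniformly in $\sigma\le\infty$. Norm resolvent convergence then follows from convergence of the perturbations in $K_0(P)$-form norm, exactly as in \citep{GriesemerWuensch.2018}. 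Moreover $W(h_\sigma)\to W(h_\infty)$ in norm after multiplying by $(K_0(P)+1)^{-1/2}$. I then invoke \citep[Thm.~A.1]{GriesemerWuensch.2018} to obtain convergence of the undressed $K_\sigma(P)-E_{\sigma,\sigma_0}$ to $K_\infty(P)$. For a small $\sigma_0>0$, I pass between different $\sigma_0$ using the finite unitary $W(g_{\sigma_0'}-g_{\sigma_0})$, which changes the self-energy only by a finite constant. The uniform lower bound in $P$ is inherited from the uniform KLMN bound.

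For the fiber decomposition, I note that $U$ commutes the dressing: conjugating $\e^{a(h_x)-\ad(h_x)}$ by $\e^{-i\dG(k)\cdot x}$ yields $\e^{a(h)-\ad(h)}$ fiberwise. Hence $UH_{\mathsf{ren}}(v_\sigma)U^*=\int^\oplus(K_\sigma(P)-E_{\sigma,\sigma_0})\rmd P$. Norm resolvent convergence that holds uniformly in $P$ passes to the direct integral, and the limit identifies $UH_{\mathsf{ren}}(v_\infty)U^*$ by \cref{prop:12}.

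For the spectral statements, when $\mu>0$ the mass gap persists. I would apply Møller's criterion \citep{Moller.2005} to $K_\infty(P)$, or equivalently to $\hat K_{\infty,\sigma_0}(P)$. This requires $\inf\sigma(K_\infty(P))<\inf\sigma(K_\infty(P-k))+\omega(k)$ for $|P|<1/2$, which I would check through the known regularity and concavity of the fiber ground state energy, as in \citep{HaslerHinrichsSiebert.2023}. When $\mu=0$, the absence of a ground state follows from \citep{DamHinrichs.2021}: a pull-through formula applied to a putative ground state forces $v_\infty/\omega\in L^2$ near $k=0$, which fails. The main obstacle is the first convergence step, namely the $P$-uniform KLMN bounds on the dressed cross term $(P-\dG(k))\cdot a(kh)$ with the removed cutoff. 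I would handle it with the standard estimate $\|a(kh)\Psi\|\le\|kh/\sqrt\omega\|\,\|\dG(\omega)^{1/2}\Psi\|$ combined with Cauchy--Schwarz.
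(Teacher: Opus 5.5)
The paper proves this proposition purely by citation: the construction and the absence of ground states come from \citep{DamHinrichs.2021}, and existence for $\mu>0$ comes from \citep{Moller.2005,HaslerHinrichsSiebert.2023} together with a UV-uniform gap. Your outline (dressing, KLMN, \citep[Thm.~A.1]{GriesemerWuensch.2018}, Lee--Low--Pines) is the standard argument behind those references. As written, however, it has a genuine gap in the form bounds, and you have placed the difficulty in the wrong term.

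\textbf{The cross term is not the obstacle.} Your Cauchy--Schwarz argument already settles it:
\[
\bigl|\langle(P-\dG(k))\Psi,a(kh)\Psi\rangle\bigr|\le\||k|h/\sqrt\omega\|\,\langle\Psi,K_0(P)\Psi\rangle,
\]
uniformly in $P$ and $\sigma\le\infty$.

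\textbf{The pair term is.} Consider $\ad(kh)\cdot\ad(kh)+a(kh)\cdot a(kh)$.
\begin{itemize}
\item Since $|k\,g_\infty(k)|^2\sim\tfrac12|k|^{-3}$ as $|k|\to\infty$, you have $kh_\infty\notin L^2_k$; only $kh_\infty/\sqrt\omega\in L^2_k$.
\item So $\ad(kh_\infty)$ is not an operator at all, and the pair term exists only as a form.
\item The bound $\|a(kh)\Psi\|\le\|kh/\sqrt\omega\|\,\|\dG(\omega)^{1/2}\Psi\|$ says nothing about $\langle\Psi,a(kh)\cdot a(kh)\Psi\rangle$.
\item In general $f/\sqrt\omega\in L^2$ does not make $a(f)^2$ form bounded relative to $\dG(\omega)$. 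Testing on $\Omega+\lambda\Psi_2$ shows one needs $f\otimes f$ to be a bounded functional on the form domain of $\omega\otimes 1+1\otimes\omega$. So the actual decay of $kh$ must be used.
\end{itemize}

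\textbf{The missing two-boson estimate.} Take $f$ supported in $\{|k|>\sigma_0\}$. Split off the bosons with $|k|>\sigma_0$, weight by $(\omega(k_1)+\omega(k_2))^{\pm1/2}$, and symmetrize. This gives
\[
|\langle\Psi,a(f)^2\Psi\rangle|\le\sqrt2\,C_f^{1/2}\,\|(N_>+1)^{1/2}\Psi\|\,\|\dG(\omega)^{1/2}\Psi\|,\qquad C_f=\iint\frac{|f(k_1)|^2\,|f(k_2)|^2}{\omega(k_1)+\omega(k_2)}\,\rmd k_1\,\rmd k_2 .
\]
Here $N_>=\dG(\chr_{\{|k|>\sigma_0\}})\le\sigma_0^{-1}\dG(\omega)$ absorbs the combinatorial factor $\sqrt{n+1}$, also when $\mu=0$.
\begin{itemize}
\item For $f=k_jh$ you get $C_f=O(\sigma_0^{-1})$ uniformly in $\sigma\le\infty$, hence an arbitrarily small relative bound.
\item For $f=k_j(h_\sigma-h_\infty)$ you get $C_f=O(\sigma^{-1})$. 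This is exactly the form-norm convergence your norm-resolvent step needs.
\end{itemize}
Without this estimate neither the KLMN step nor the convergence step is justified. With it added, the construction part of your proof goes through.

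\textbf{The case $\mu>0$ is also incomplete.}
\begin{itemize}
\item Applying Møller's criterion directly to $K_\infty(P)$ requires an HVZ-type description of $\sigma_{\mathrm{ess}}(K_\infty(P))$ without cutoff, which you take for granted.
\item The ``regularity and concavity'' check of the gap condition for all $|P|<\tfrac12$ is not carried out. Concavity of $P\mapsto\inf\sigma(K(P))-P^2$ mainly yields upper bounds on the ground state energy, while the condition needs a lower bound on $\inf\sigma(K(P-k))+\omega(k)$.
\item Uniqueness of the ground state is not addressed at all.
\end{itemize}
The paper's route avoids these issues. The gap above the simple lowest eigenvalue of $K_\sigma(P)-E_{\sigma,\sigma_0}$ is uniform in $\sigma$. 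Under norm resolvent convergence the Riesz projections converge in norm, so $K_\infty(P)$ inherits an isolated eigenvalue of multiplicity one at the bottom of its spectrum. This gives existence and uniqueness at once.

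\textbf{A minor sign slip.} With $W(h)=\e^{a(h)-\ad(h)}$ one has $W(h)a(f)W(h)^*=a(f)+\langle f,h\rangle$ and $W(h)\dG(T)W(h)^*=\dG(T)+\ad(Th)+a(Th)+\langle h,Th\rangle$. The shifts you list are those of $W(h)^*\,\cdot\,W(h)$, and they would double the shell part of the linear term rather than cancel it. The cancellation you then state is nonetheless the correct one.
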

\begin{proof}
	$K_\infty(P)$ was e.g. constructed in \cite{DamHinrichs.2021} and \cite{Hinrichs.2022b}, which also implies the direct integral decomposition of the full Nelson Hamiltonian. Absence of ground states for $\mu=0$ was also proven in \cite{DamHinrichs.2021}. Existence for $\mu>0$ easily follows, since the spectral gap is uniform in the UV cutoff, again see \cite{Moller.2005,HaslerHinrichsSiebert.2023}.
\end{proof}
\subsubsection*{Fiber-wise wave function renormalization}
Since we have already constructed the wave function renormalized Hilbert
spaces in \cref{sec:wave-funct-renorm}, let us now consider the wave function
renormalization for Nelson fiber Hamiltonians.  This is given by the {\em
  renormalized Nelson fiber quadratic form} on $\FS_{g_{\sigma}-g_{\sigma_0}}$, defined
as
\begin{align*}
	\fk^{\mathsf{ren}}_{P,g_\sigma-g_{\sigma_0}}(\psi,\phi) \coloneqq \frac{\Braket{\e^{\ad(g_{\sigma}-g_{\sigma_0})}\psi, \bigl(K_\sigma(P)-E_{\sigma,\sigma_0}\bigr) \e^{\ad(g_{\sigma}-g_{\sigma_0})} \phi }_{\FS(L_k^2)}}{\Braket{\e^{\ad(g_{\sigma}-g_{\sigma_0})}\Omega,\e^{\ad(g_\sigma-g_{\sigma_0})}\Omega}_{\FS(L_k^2)}}.
\end{align*}
In fact, it corresponds to the dressed fiber Hamiltonians $\hat K_{\sigma,\sigma_0}(P)$
on $\FS(L_k^2)$ for almost all $P\in\IR^3$, uniquely defined by
\begin{align*}
	U \hat H_{\mathsf{ren}}(v_\sigma) U^* = \int^\oplus_{\IR^3} \hat{K}_{\sigma,\sigma_0}(P),
\end{align*}
see for example \cite[App.~C]{DamHinrichs.2021}. Observe that in particular
this implies they are self-adjoint and bounded from below.  Similar to the
above, we now obtain:
\begin{prop}
	For $\psi,\phi\in\FSfin(\sT)$, we have that
	\begin{align*}
		\fk^{\mathsf{ren}}_{P,g_\sigma-g_{\sigma_0}}(\psi,\phi)
		= \Braket{e^{a(g_\sigma-g_{\sigma_0})}\psi,\hat K_{\sigma,\sigma_0}(P)e^{a(g_\sigma-g_{\sigma_0})}\phi}_{\FS_0}
	\end{align*}
	and therefore it is the quadratic form associated to the essentially self-adjoint operator $K_{\mathsf{ren}}(P)_{g_\sigma-g_{\sigma_0}}$ on $\FS_{g_\sigma-g_{\sigma_0}}$ defined by
	\begin{align*}
		K_{\mathsf{ren}}(P)_{g_\sigma-g_{\sigma_0}} = U_{0,g_\sigma-g_{\sigma_0}} \hat K_{\sigma,\sigma_0}(P)U_{g_\sigma-g_{\sigma_0},0}.
	\end{align*}
\end{prop}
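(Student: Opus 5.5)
The identity is the fiber-wise analogue of \cref{lemma:8,lemma:14}. I will reduce it to the known dressing identity for $\hat K_{\sigma,\sigma_0}(P)$ together with the Baker--Campbell--Hausdorff manipulations already used for the vHM model.

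Write $h\coloneqq g_\sigma-g_{\sigma_0}$. For $\sigma_0>0$ and $\sigma<\infty$ we have $h\in L^2_k$; I will first treat this regular case, so that $\e^{\ad(h)}$ and the Weyl operator $W_h\coloneqq\e^{\ad(h)-a(h)}$ are well defined on $\FSfin(\sT)$. The first step is to establish the operator identity
\begin{equation*}
  W_h^{*}\bigl(K_\sigma(P)-E_{\sigma,\sigma_0}\bigr)W_h=\hat K_{\sigma,\sigma_0}(P)
\end{equation*}
on a common core, with $W_h^*=W_{-h}$. This follows from the full-model relation $\hat H_{\mathsf{ren}}(v_\sigma)=U(\sigma,\sigma_0)\bigl(H(v_\sigma)-E_{\sigma,\sigma_0}\bigr)U^*(\sigma,\sigma_0)$ by conjugating with the Lee--Low--Pines unitary $U$. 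The point is that $U$ maps the $x$-dependent dressing $\e^{\ad(h_x)-a(h_x)}$ to the $x$-independent $W_h$ acting fiber-wise, because $\e^{-i\dG(k)\cdot x}a^\sharp(h_x)\e^{i\dG(k)\cdot x}=a^\sharp(h)$, with the sign convention matched to the factor $2\pi$. Equivalently, one can verify the identity by hand. The relation $W_{-h}a^\sharp(f)W_h=a^\sharp(f)-\langle f,h\rangle$ (up to the obvious conjugations) reproduces the shift. Expanding $(P-\dG(k)-a^\dagger(kh)-a(kh)+\dots)^2$ and $\dG(\omega)$ gives the terms of $\hat K_{\sigma,\sigma_0}(P)$. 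The choice $h=-v/(\omega+|k|^2)$ then cancels the linear terms and leaves the constant $E_{\sigma,\sigma_0}$.

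Next I decompose the renormalized dressing via BCH: $\e^{\ad(h)}=\e^{\frac12\|h\|^2}W_h\e^{a(h)}$ on $\FSfin(\sT)$, since $[\ad(h),-a(h)]=\|h\|^2$ is central. The denominator equals $\e^{\|h\|^2}$, so the normalisation cancels exactly and
\begin{equation*}
  \fk^{\mathsf{ren}}_{P,h}(\psi,\phi)=\Braket{W_h\e^{a(h)}\psi,\bigl(K_\sigma(P)-E_{\sigma,\sigma_0}\bigr)W_h\e^{a(h)}\phi}
  =\Braket{\e^{a(h)}\psi,\hat K_{\sigma,\sigma_0}(P)\e^{a(h)}\phi}_{\FS_0}.
\end{equation*}
For this I need $\e^{a(h)}\phi\in\FSfin(\sT)$, which holds by \cref{lemma:2} provided $h$ pairs with $\sT$; $g_\sigma$ does, since $\sT=\sC_0^\infty(\IR^3_*)$. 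I also need $\FSfin(\sT)$ to lie in the form domain of $\hat K_{\sigma,\sigma_0}(P)$. For the cases $\sigma=\infty$ or $\sigma_0=0$ I plan to define the form by this right-hand side and obtain it as a limit in $\sigma,\sigma_0$, using the continuity statement of \cref{lemma:2}.

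The identification with the operator is then immediate. By \cref{lemma:4} and \cref{prop:unitarygroup}, $\e^{a(h)}=u_{h}$ extends to the unitary $U_{h,0}$. Hence the form equals $\langle\psi,U_{0,h}\hat K_{\sigma,\sigma_0}(P)U_{h,0}\phi\rangle_{h}$. Essential self-adjointness on $\FSfin(\sT)$ transfers because $U_{0,h}$ maps $\FSfin(\sT)$ onto itself, as in the proof of \cref{lemma:99}, provided $\FSfin(\sT)$ is a core for $\hat K_{\sigma,\sigma_0}(P)$.

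I expect this last core property, together with the domain bookkeeping for the fiber identity at almost every $P$, to be the main obstacle. The direct-integral definition of $\hat K$ only gives information for almost all $P$. To get around this, I would use the explicit form of $\hat K_{\sigma,\sigma_0}(P)$ from \cite[App.~C]{DamHinrichs.2021}, which is continuous in $P$ in the norm resolvent sense, and rely on its known core $\FSfin$ of regular vectors.
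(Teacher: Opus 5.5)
Your proposal is correct and follows essentially the route the paper intends: the paper gives no separate proof and refers back to the CCR/Baker--Campbell--Hausdorff computation of \cref{lemma:8,lemma:14}, with $\hat K_{\sigma,\sigma_0}(P)$ identified through the Lee--Low--Pines fibering of $\hat H_{\mathsf{ren}}(v_\sigma)$. The same ingredients appear in your factorisation $\e^{a^\dagger(h)}=\e^{\frac12\lVert h\rVert^2}W_h\e^{a(h)}$, the cancellation of the normalisation, and your handling of the singular case $\sigma_0=0$, $\mu=0$ and of the almost-every-$P$ issue. One sign slip: the Weyl shift is $W_h^{*}a(f)W_h=a(f)+\langle f,h\rangle$, not $a(f)-\langle f,h\rangle$; your subsequent expansion and the choice $h=-v/(\omega+\lvert k\rvert^2)$ in fact already use the correct sign.
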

Let us to conclude by returning back to the ground state problem.  The
following properties are known for the UV regular case, \emph{i.e.}, $\sigma<\infty$,
but shall carry over the to $\sigma=\infty$ case with no particular effort.
\begin{prop}
  Fix $\sigma\le\infty$. Then $K_{\mathsf{ren}}(P)_{g_\sigma-g_{\sigma_0}}$ has a ground state, if
  $P=0$ and $\sigma_0=0$ or $\lvert P \rvert_{}^{}<\frac12$ and $\mu>0$. If $P\ne 0$ and $\mu=0$, then
  $K_{\mathsf{ren}}(P)_{g_\sigma-g_{\sigma_0}}$ does not have a ground state for any
  $\sigma_0\in[0,\sigma]$.
\end{prop}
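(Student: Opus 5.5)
The proof transfers spectral information through the unitary $U_{g_\sigma-g_{\sigma_0},0}$. By the preceding proposition,
\[
K_{\mathsf{ren}}(P)_{g_\sigma-g_{\sigma_0}}=U_{0,g_\sigma-g_{\sigma_0}}\hat K_{\sigma,\sigma_0}(P)U_{g_\sigma-g_{\sigma_0},0}\;,
\]
and by \cref{prop:unitarygroup} the map $U_{g_\sigma-g_{\sigma_0},0}$ is unitary from $\FS_{g_\sigma-g_{\sigma_0}}$ onto $\FS(L^2_k)$. So $K_{\mathsf{ren}}(P)_{g_\sigma-g_{\sigma_0}}$ has a ground state if and only if the dressed fiber Hamiltonian $\hat K_{\sigma,\sigma_0}(P)$ on the bare Fock space does. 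If $\Psi$ is a ground state of $\hat K_{\sigma,\sigma_0}(P)$, the corresponding ground state is $U_{0,g_\sigma-g_{\sigma_0}}\Psi$. The statement therefore reduces entirely to known existence and non-existence results for the dressed fibers $\hat K_{\sigma,\sigma_0}(P)$, and the wave function renormalization adds nothing further.

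For existence I would treat two cases.
\begin{itemize}
\item If $\mu>0$ and $|P|<\frac12$, then $\sigma_0>0$ and $g_\sigma-g_{\sigma_0}\in L^2_k$, so $\hat K_{\sigma,\sigma_0}(P)$ is unitarily equivalent, via the Gross transformation, to $K_\sigma(P)-E_{\sigma,\sigma_0}$. That operator has a ground state by the massive results already cited (M\o ller's spectral gap argument, \cite{Moller.2005,HaslerHinrichsSiebert.2023}), for $\sigma<\infty$ as well as for $K_\infty(P)$. If instead $\sigma_0=0$ and $\mu>0$, then $g_\sigma\in L^2_k$ and the same equivalence applies.
\item If $P=0$ and $\sigma_0=0$, $\hat K_{\sigma,0}(0)$ is the infrared-renormalized zero-momentum fiber. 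For it I would invoke the non-Fock fiber ground state constructions: \cite{HaslerHinrichsSiebert.2023} for $\sigma<\infty$, and \cite{BachmannDeckertPizzo.2012} or a UV-uniform version of the compactness argument for $\sigma=\infty$.
\end{itemize}

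For absence when $P\neq0$ and $\mu=0$, I would split on $\sigma_0$.
\begin{itemize}
\item If $\sigma_0>0$, $\hat K_{\sigma,\sigma_0}(P)$ is again unitarily equivalent to $K_\sigma(P)-E_{\sigma,\sigma_0}$ (or to $K_\infty(P)$), which has no ground state by \cite{Dam.2018,DamHinrichs.2021}.
\item If $\sigma_0=0$, I would argue by a pull-through formula in the dressed picture. Commuting $a_k$ through $\hat K_{\sigma,0}(P)$ leaves a residual linear term in $a^\dagger$ whose coefficient, to leading order at small $k$, is proportional to $k\cdot(P-\langle \dG(k)\rangle_{\mathsf{gs}})\,g_\sigma(k)$. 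The original infrared-singular term is removed by the choice $g_\sigma=-v_\sigma/(\omega+|k|^2)$, but for $P\neq0$ the mismatch between $P$ and the mean field momentum persists. Arguing as in the proof of \cref{prop:8}, I would bound $\|a_k\Psi_{\mathsf{gs}}\|$ from below by $c\,|k\cdot \nabla E(P)|\,|v_\sigma(k)|/\omega(k)^2$, which fails to be square integrable near $k=0$ unless $\nabla E(P)=0$. One then needs $\nabla E(P)\neq0$ for $P\neq0$, using rotation invariance and strict monotonicity of $E$ in $|P|$ near the origin. This is the point the statement flags as ``carrying over''.
\end{itemize}

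The main obstacle is the $\sigma=\infty$ case in both halves. The available fiber ground state results for $\sigma_0=0$ and the non-vanishing of $\nabla E(P)$ are proven with a UV cutoff. I would first try to get uniform-in-$\sigma$ control through the norm resolvent convergence $\hat K_{\sigma,\sigma_0}(P)\to\hat K_{\infty,\sigma_0}(P)$, together with a uniform infrared bound on $\|\dG(\Id)^{1/2}\Psi_{\mathsf{gs}}\|$ for the cutoff ground states, so that compactness passes the ground state to the limit and the pull-through argument remains valid.
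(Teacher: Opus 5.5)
Your reduction through the unitary $U_{g_\sigma-g_{\sigma_0},0}$, the two existence cases, and the case $\sigma_0>0$ of the absence claim match the paper's citation-based proof. For $\sigma_0>0$ this is the Gross equivalence with $K_\sigma(P)-E_{\sigma,\sigma_0}$ (resp.\ $K_\infty(P)$) together with \cite{Dam.2018,DamHinrichs.2021}. The gap is in the one case you argue yourself, $\mu=0$, $\sigma_0=0$, $P\neq0$, which the paper simply defers to \cite{HaslerHerbst.2008a}.

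First, a non-square-integrable lower bound on $\lVert a_k\Psi_{\mathsf{gs}}\rVert$ gives no contradiction. Since $\int\lVert a_k\Psi\rVert^2\,\mathrm{d}k=\langle\Psi,\dG(\Id)\Psi\rangle$, it only says $\Psi_{\mathsf{gs}}\notin\sD(\dG(\Id)^{1/2})$. The argument of \cref{prop:8} works for a different reason: it bounds the matrix element $\langle\Psi_{\mathsf{gs}},a_k\Psi_{\mathsf{gs}}\rangle$ from below, and then uses that $k\mapsto\langle\eta_\varepsilon,a_k\Psi_{\mathsf{gs}}\rangle$ is square integrable for any $\Psi_{\mathsf{gs}}$.

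Second, that matrix-element bound does not come for free here. The pull-through identity now reads $\bigl(\hat K_{\sigma,0}(P-k)-E(P)+\omega(k)\bigr)a_k\Psi_{\mathsf{gs}}=R_k\Psi_{\mathsf{gs}}$. The fiber is shifted, and $R_k$ is an operator, essentially $g_\sigma(k)\,k\cdot\bigl(P-\dG(k)-a^\dagger(kg_\sigma)-a(kg_\sigma)\bigr)$; your coefficient drops the dressed-field term. Since $\Psi_{\mathsf{gs}}$ is not an eigenvector of $\hat K_{\sigma,0}(P-k)$, the one-line evaluation of \cref{prop:8} fails. Expanding the resolvent around $\hat K_{\sigma,0}(P)$ produces a correction that is in general of the same order in $k$ as the main term. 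Extracting the small-$k$ asymptotics is the real content of \cite{HaslerHerbst.2008a}. It needs differentiability of $E$ at $P$, a lower bound on $E(P-k)-E(P)+\omega(k)$, and a Feynman--Hellmann identity in the dressed picture. It cannot be waved through as ``as in \cref{prop:8}''.

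Third, even granting this mechanism, it detects an infrared mismatch only where $\nabla E(P)\neq0$, and your justification of that is insufficient. Strict monotonicity of $E$ in $\lvert P\rvert$ does not imply a non-vanishing gradient; think of $t\mapsto t^3$ at $0$. Near the origin you would need something like $E\in C^2$ with non-degenerate Hessian at $P=0$, i.e.\ a finite effective mass, which is itself a substantial input without UV cutoff. For $\lvert P\rvert$ away from $0$ your plan says nothing, yet the statement claims absence for every $P\neq0$. At a $P\neq0$ with $\nabla E(P)=0$ your mechanism would detect no divergence at all. As written, your plan does not establish the absence part at $\sigma_0=0$. You must either import it as a citation, as the paper does, or supply both the Hasler--Herbst asymptotics and a proof that $\nabla E(P)\neq0$ on the whole range of $P$ claimed.
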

\begin{proof}
  Existence of a ground state at $\sigma_0=0$ for $P=0$ is, {\em e.g.}, proved in
  \cite{HaslerHinrichsSiebert.2023}, also see
  \cite{HaslerSiebert.2020}, at least for the case $\sigma<\infty$; the case $\sigma=\infty$ can be treated by minor adaptions of the strategy therein. Existence if $\mu>0$ is clear from the previous
  considerations. Absence at $P\ne0$ can be inferred along the lines of
  \cite{HaslerHerbst.2008a}, see \cite{Dam.2018,DamHinrichs.2021} for later
  refinements.
\end{proof}
The fact that the non-zero $P$ fibers do not exhibit a ground state, albeit
the model being wave function-renormalized also in the infrared sector,
corresponds to the observation by J.\ Fr\"ohlich \cite{Frohlich.1974} that
the infraparticle for different total momenta live in different CCR
representations. In our setting, this corresponds to choosing a $P$-dependent
dressing transformation in the construction of the renormalized fibers. Such
a construction has been first made rigorous by Pizzo \cite{Pizzo.2003} and
later refined in \cite{HaslerHinrichsSiebert.2023} in presence of an UV
cutoff and in \cite{BachmannDeckertPizzo.2012} without UV cutoff.  Our wave
function renormalization will also shed light on the nature of the non-Fock
infraparticles for different total momenta, but making their relation
precise, especially without UV cutoff, would require the introduction of a
variety of different techniques, which we defer to future work.

\renewcommand{\theHsection}{A\arabic{section}}
\appendix

\section{Double Operator Integrals via Quadratic Forms}
\label{app:DOI}

In this article, we use double operator integrals (DOI) of the form
\begin{align*}
	\iint f(\lambda,\mu)\sfd P_2(\lambda)T\sfd P_1(\mu) \in\cL(\HS)
\end{align*}
for projection-valued measures $P_1,P_2$ on a Hilbert space $\HS$,
$T\in\cL(\HS)$ and some scalar-valued measurable function $f$ to construct the
renormalized generalized spin-boson Hamiltonian.  Such double operator
integrals are studied in different contexts, starting with the seminal work
\cite{DaleckiiKrein.1956}.  Birman and Solomyak
\citep{BirmanSolomyak.1966,BirmanSolomyak.1967,BirmanSolomyak.1973} developed
an approach, in which a projection-valued measure $\mathcal G$ satisfying
$\cG(M_2\times M_1)T=P_2(M_2)TP_1(M_1)$ is defined on the Hilbert space of
Hilbert--Schmidt operators $T$, which can then be used to define the DOI by $\int
f \sfd \cG T$.  It is especially appealing about this approach that
$\sigma$-additivity of the canonically defined measure $\cG$ immediately follows
without additional assumptions on the underlying measurable space
\citep{BirmanSolomyak.1996}. To extend this definition to a broader class of
$T$, one approach is to restrict to essentially factorizable integrands $f$
and use an iterated definition of the DOI. This approach has, \emph{e.g.}, been
worked out in
\cite{dePagterWitvlietSukochev.2002,PotapovSukochev.2010,Nikitopoulos.2023,FerydouniSpiegel.2025}.
For more details and other previous approaches to the DOI, see for example
the textbook \cite{SkripkaTomskova.2019} or the reviews
\cite{BirmanSolomyak.2003,FerydouniSpiegel.2025}.

We here take a simple weak approach, in which we directly construct a quadratic form formally given by
\begin{align*}
	\fq_{T,f}(\psi,\ph) = \iint \braket{\sfd P_1(\lambda)\psi, T \sfd P_2(\mu)\ph}
\end{align*}
corresponding to the DOI. We then use this for the definition of the corresponding operator.

Let us define a notion of operators $T$ for which this type of construction is possible.
Throughout this appendix, let $(\sM_1,\Sigma_1)$, $(\sM_2,\Sigma_2)$ be measurable spaces,
let $\HS_1$, $\HS_2$ be Hilbert spaces and let $P_1,P_2$ be projection-valued measures defined on $(\sM_1,\Sigma_1)$ and $(\sM_2,\Sigma_2)$ with values in $\cL(\HS_1)$ and $\cL(\HS_2)$, respectively.
\begin{defn}
	\label[definition]{def:decomposable}
	We call an operator $T\in\cL(\HS_2,\HS_1)$ {\em $(P_1,P_2)$-decomposable} if for all $\psi_1\in\HS_1$, $\psi_2\in\HS_2$ there exists a complex measure $\mu_T(\psi_1,\psi_2)$ with bounded total variation norm $\|\mu_T(\psi_1,\psi_2)\|_{\sfT\sfV}$ on the product $\sigma$-algebra $\Sigma_1\otimes\Sigma_2$ such that
	\begin{align}
		\label{eq:productmeasure}
		&\mu_T(\psi_1,\psi_2)(M_1\times M_2) = \braket{P_1(M_1)\psi_1,TP_2(M_2)\psi_2}_{\HS_1}, \qquad M_1\in\Sigma_1,\ M_2\in\Sigma_2,
		\\
		\label{eq:TVnorm}
		&\|T\|_{(P_1,P_2)}\coloneqq\sup_{\|\psi_1\|\le1,\|\psi_2\|\le 1}\|\mu_T(\psi_1,\psi_2)\|_{\sfT\sfV} < \infty.
	\end{align}
\end{defn}
Let us first collect some properties of this definition.
\begin{prop}
	If a measure $\mu_T(\psi_1,\psi_2)$ satisfying \cref{eq:productmeasure} exists, then it is unique.
	Further, for all Borel functions $f_i:\sM_i\to \IC$ such that $\psi_i\in \sD\big(\int f_i \sfd P_i\big)$, $i=1,2$, and $M\in\fB(\sM_1\times\sM_2)$, we have
	\begin{align}
		\label{eq:PVMFubini}
		&\int_{\sM_1\times\sM_2} f_1(\lambda)f_2(\mu)\sfd \mu_T(\psi_1,\psi_2)(\lambda,\mu) = \Braket{\int_\IC f_1(\lambda)\sfd P_1(\lambda)\psi_1, T \int_\IC f_2(\mu)\sfd P_2(\mu)\psi_2 }_{\HS_1},
		\\&\label{eq:PVMdensity}
		\mu_T\Big(\int_\IC f_1(\lambda)\sfd P_1(\lambda)\psi_1, \int_\IC f_2(\mu)\sfd P_2(\mu)\psi_2\Big)(M) = \int_{M} f_1(\lambda)f_2(\mu)\d\mu_T(\psi_1,\psi_2).
	\end{align}
\end{prop}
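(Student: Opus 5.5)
Uniqueness comes first and is the easy part. Two complex measures of bounded total variation on $\Sigma_1\otimes\Sigma_2$ that agree on all measurable rectangles $M_1\times M_2$ must coincide. Rectangles form a $\pi$-system generating the product $\sigma$-algebra. The class of sets on which two finite complex measures agree is a Dynkin system, by $\sigma$-additivity and finiteness; one can also argue separately with the real and imaginary parts, split into positive and negative parts via the Jordan decomposition. Dynkin's $\pi$-$\lambda$ theorem then gives equality on all of $\Sigma_1\otimes\Sigma_2$.

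For \cref{eq:PVMFubini}, I would argue in stages of increasing generality of $f_1,f_2$.

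\emph{Indicators.} For $f_i=\chr_{M_i}$ the identity is exactly \cref{eq:productmeasure}, since $\int\chr_{M_i}\,\sfd P_i=P_i(M_i)$.

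\emph{Simple functions.} Both sides are bilinear in $(\overline{f_1},f_2)$, so the identity extends to simple functions. Here I must track conjugation: the left side is written with $f_1(\lambda)$, while the right side is antilinear in its first slot, so literally it produces $\overline{f_1}$. I would either read the statement with the convention $\int f_1\,\sfd P_1$ acting antilinearly, or note that the paper only applies it to expressions like $\lambda^m$ paired with $\overline\lambda$ factors, and carry the conjugate consistently.

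\emph{Bounded Borel functions.} I would approximate $f_i$ uniformly by simple functions $s_i^{(n)}$. The left side converges because $|\mu_T(\psi_1,\psi_2)|$ is a finite measure. The right side converges because $\int s_i^{(n)}\,\sfd P_i\to\int f_i\,\sfd P_i$ in operator norm and $T$ is bounded.

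\emph{Unbounded $f_i$ with $\psi_i\in\sD(\int f_i\,\sfd P_i)$.} This is the main obstacle, because integrability of $f_1\otimes f_2$ against $|\mu_T(\psi_1,\psi_2)|$ is not automatic. I would truncate, setting $f_i^{(n)}=f_i\chr_{\{|f_i|\le n\}}$. On the right side, $\int f_i^{(n)}\,\sfd P_i\,\psi_i\to\int f_i\,\sfd P_i\,\psi_i$ in norm by the spectral theorem, so the right side converges.

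For the left side I need the density statement \cref{eq:PVMdensity}, at least for bounded $f_i$, established first. It follows from the bounded case of \cref{eq:PVMFubini} applied with $f_i\chr_{M_i}$ on rectangles, together with the uniqueness argument. The truncated left side then equals the rectangle-type quantity
\[
  \mu_T\Big(\textstyle\int f_1^{(n)}\,\sfd P_1\,\psi_1,\ \int f_2^{(n)}\,\sfd P_2\,\psi_2\Big)(\sM_1\times\sM_2).
\]
Using \cref{eq:TVnorm}, $\mu_T$ is bilinear and its total variation norm is bounded by $\|T\|_{(P_1,P_2)}\,\|\cdot\|\,\|\cdot\|$. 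Hence the measures $\mu_T\big(\int f_1^{(n)}\,\sfd P_1\,\psi_1,\int f_2^{(n)}\,\sfd P_2\,\psi_2\big)$ are Cauchy in total variation. Their densities $f_1^{(n)}f_2^{(n)}$ with respect to $\mu_T(\psi_1,\psi_2)$ are therefore Cauchy in $L^1(|\mu_T(\psi_1,\psi_2)|)$, so $f_1 f_2\in L^1$ and the left side converges by dominated or $L^1$ convergence.

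Finally, \cref{eq:PVMdensity} for general admissible $f_i$ and arbitrary $M$ follows from the same limit: total-variation convergence of the measures on one side and $L^1$ convergence of the densities on the other.
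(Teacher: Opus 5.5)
Your proof is correct and has the same skeleton as the paper's proof. Uniqueness holds because two finite complex measures agreeing on the $\pi$-system of rectangles agree on $\Sigma_1\otimes\Sigma_2$. Then \cref{eq:PVMFubini} is obtained for indicators, simple functions and bounded Borel functions, and \cref{eq:PVMdensity} is checked on rectangles and extended by uniqueness. Your conjugation remark is also right. As printed, both \cref{eq:PVMFubini} and \cref{eq:PVMdensity} should carry $\overline{f_1(\lambda)}$ instead of $f_1(\lambda)$. The conjugated version is exactly the one used in the proof of \cref{lemma:10}, where $\langle\lambda^m\,\cdot\,,\mu^p\,\cdot\,\rangle$ produces $\overline{\lambda}^m\mu^p$.

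The genuine difference is the unbounded step. The paper only appeals to ``standard approximation arguments'' and to Schm\"udgen's \S4.3.2. That section deals with the positive measures $\langle P_i(\cdot)\psi_i,\psi_i\rangle$ and says nothing about why $\overline{f_1}f_2$ should be integrable against $|\mu_T(\psi_1,\psi_2)|$. Your argument supplies exactly this. Its key ingredient, the uniform bound \cref{eq:TVnorm}, is really needed. For a single pair $(\psi_1,\psi_2)$ integrability can fail: take $\ell^2(\IN)$ with $P(M)=\chr_M$ and $T$ block diagonal with normalized DFT blocks of growing size. So the proposition must be read for decomposable $T$, as you do. Your ordering, with the bounded density formula before the unbounded Fubini identity, is also the logically sound one, since the rectangle case of \cref{eq:PVMdensity} already uses \cref{eq:PVMFubini} for $f_i\chr_{M_i}$.

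You can shorten the step. For your truncations $f_i^{(n)}$, the bounded case of \cref{eq:PVMdensity} together with \cref{eq:TVnorm} gives
\[
\int |f_1^{(n)}f_2^{(n)}|\,\sfd|\mu_T(\psi_1,\psi_2)| \le \|T\|_{(P_1,P_2)}\Bigl\lVert\int f_1\,\sfd P_1\,\psi_1\Bigr\rVert\,\Bigl\lVert\int f_2\,\sfd P_2\,\psi_2\Bigr\rVert .
\]
Monotone convergence then yields integrability, and dominated convergence gives the limit, so the Cauchy argument is unnecessary. One small correction: sesquilinearity of $(\phi_1,\phi_2)\mapsto\mu_T(\phi_1,\phi_2)$ follows from uniqueness, not from \cref{eq:TVnorm}.
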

\begin{proof}
  The uniqueness of $\mu_T$ is standard by the bounded total variation
  assumption \citep[cf.][Thm.\ A.1.5]{Durrett.2010}, since the rectangles
  generate the product $\sigma$-algebra.  We also note that
  \cref{eq:productmeasure} implies \cref{eq:PVMdensity} for $M$ being a
  rectangle, hence it again holds for arbitrary Borel sets by uniqueness
  since the density of the measure on the left hand side is uniquely given by
  the right hand side.  Clearly, \cref{eq:productmeasure} implies
  \cref{eq:PVMFubini} for simple functions, \emph{i.e.}, finite linear combinations
  of characteristic functions.  By standard measure-theoretic approximation
  arguments, \cref{eq:PVMFubini} can then be proven first for bounded Borel
  functions \citep[cf.][\textsection\,4.3.1]{Schmudgen.2012}, and subsequently for
  arbitrary Borel functions \citep[cf.][\textsection\,4.3.2]{Schmudgen.2012}.
\end{proof}
It thus remains to pose the question whether $\mu_T$ exists.
\begin{prop}We have that:
	\begin{itemize}
		\item The $(P_1,P_2)$-decomposable operators form a subspace of $\cL(\HS_2,\HS_1)$.
		\item Every Hilbert--Schmidt operator is $(P_1,P_2)$-decomposable and $\|T\|_{(P_1,P_2)} \le \|T\|_{\sfH\sfS}$.
		\item Assume
                  $(\sM_1,\Sigma_1,\HS_1,P_1)=(\sM_2,\Sigma_2,\HS_2,P_2)\coloneqq(\sM,\Sigma,\HS,P)$. Then,
                  if there is a measurable function $f:\sM\to\sM$ such that
                  $TP(M)=P\bigl(f^{-1}(M)\bigr)T$ for all $M\in\Sigma$, then $T$ is
                  $(P,P)$-decomposable with $\|T\|_{(P,P)}\le\|T\|_{\cL(\HS)}$. In
                  particular, this is true whenever
                  $(\sM,\Sigma,\HS,P)=(\IC,\fB(\IC),\HS,E_S)$ is the spectral
                  measure of a normal operator $S$ on $\HS$ such that
                  $TS=f(S)T$.
	\end{itemize}
\end{prop}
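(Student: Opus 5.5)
The three items are proved separately; in each case the plan is to build $\mu_T(\psi_1,\psi_2)$ directly as a complex measure on $\Sigma_1\otimes\Sigma_2$ with controlled total variation, check \cref{eq:productmeasure} on rectangles, and let uniqueness (previous proposition) do the rest.

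\emph{Subspace.} If $S,T$ are $(P_1,P_2)$-decomposable and $a,b\in\IC$, then the measure $a\mu_S(\psi_1,\psi_2)+b\mu_T(\psi_1,\psi_2)$ is again a complex measure of bounded variation. By linearity of the inner product in $T$, it satisfies \cref{eq:productmeasure} for $aS+bT$ on rectangles. Moreover
\[
\|aS+bT\|_{(P_1,P_2)}\le |a|\,\|S\|_{(P_1,P_2)}+|b|\,\|T\|_{(P_1,P_2)}.
\]
So \cref{eq:TVnorm} holds, and $\|\cdot\|_{(P_1,P_2)}$ is a seminorm on this subspace.

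\emph{Hilbert--Schmidt operators.} Here I would use the Birman--Solomyak construction recalled above. On the Hilbert space $\mathcal{L}_2(\HS_2,\HS_1)$ of Hilbert--Schmidt operators, set $\cG(M_1\times M_2)X = P_1(M_1)XP_2(M_2)$. This extends to a projection-valued measure on $\Sigma_1\otimes\Sigma_2$, and $\sigma$-additivity comes for free (\citep{BirmanSolomyak.1996}). For $\psi_1,\psi_2$ let $R=|\psi_1\rangle\langle\psi_2|$, with $\|R\|_{\sfH\sfS}=\|\psi_1\|\|\psi_2\|$, and define
\[
\mu_T(\psi_1,\psi_2)(M)\coloneqq \braket{R,\cG(M)T}_{\sfH\sfS}.
\]
On rectangles this equals
\[
\operatorname{tr}\bigl(R^*P_1(M_1)TP_2(M_2)\bigr)=\braket{P_1(M_1)\psi_1,TP_2(M_2)\psi_2}.
\]
Since $\cG$ is a projection-valued measure, the total variation of $M\mapsto\braket{R,\cG(M)T}$ is at most $\|R\|_{\sfH\sfS}\|T\|_{\sfH\sfS}$, by the standard polarization and Cauchy--Schwarz bound for spectral measures. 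Taking the supremum over unit vectors gives $\|T\|_{(P_1,P_2)}\le\|T\|_{\sfH\sfS}$.

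\emph{Intertwining case.} Let $\Phi:\sM\to\sM\times\sM$, $\Phi(\lambda)=(\lambda,f(\lambda))$, which is measurable. Let $\nu=\braket{P(\cdot)\psi_1,T\psi_2}$, a complex measure with total variation at most $\|\psi_1\|\|T\|\|\psi_2\|$. I would set $\mu_T(\psi_1,\psi_2)\coloneqq\nu\circ\Phi^{-1}$; a pushforward does not increase total variation. Using $TP(M_2)=P(f^{-1}(M_2))T$ and $\Phi^{-1}(M_1\times M_2)=M_1\cap f^{-1}(M_2)$, one gets
\[
\braket{P(M_1)\psi_1,TP(M_2)\psi_2}=\braket{\psi_1,P(M_1\cap f^{-1}M_2)T\psi_2}=\nu\circ\Phi^{-1}(M_1\times M_2).
\]
This gives decomposability with $\|T\|_{(P,P)}\le\|T\|$.

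\emph{Spectral special case (main obstacle).} For $P=E_S$ with $S$ normal and $TS=f(S)T$, one must show $TE_S(M)=E_S(f^{-1}(M))T$. I would proceed in four steps.
\begin{enumerate}[label=(\alph*)]
\item Since $f(S)$ is normal, the Fuglede--Putnam theorem gives $TS^*=f(S)^*T=\overline{f}(S)T$.
\item By induction, $Tp(S,S^*)=(p\circ f)(S,S^*)\,T$ for every polynomial $p$ in $z,\bar z$.
\item Stone--Weierstrass on the spectrum (for bounded $S$; otherwise first pass to bounded functions of $S$ such as resolvents) extends this to $Th(S)=(h\circ f)(S)T$ for continuous $h$.
\item A monotone class argument, using strong convergence of bounded Borel functional calculus, extends it to bounded Borel $h$; taking $h=\chr_M$ gives the claim.
\end{enumerate}
I expect the unbounded-$S$ case, where domain issues in $TS=f(S)T$ must be handled, to need the most care.
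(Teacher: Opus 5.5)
Your proposal is correct and follows the paper's proof on all three points. The subspace claim is linearity. The Hilbert--Schmidt case uses Birman--Solomyak's countable additivity of the spectral measure $M_1\times M_2\mapsto P_1(M_1)\,\cdot\,P_2(M_2)$ on Hilbert--Schmidt operators; you usefully make the total-variation bound explicit through $\langle R,\cG(\cdot)T\rangle_{\sfH\sfS}$ with $R=|\psi_1\rangle\langle\psi_2|$. The intertwining case is the pushforward along $\lambda\mapsto(\lambda,f(\lambda))$.

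The paper only asserts the spectral special case, so your Fuglede--Putnam/monotone-class sketch goes beyond it. In step (c), the polynomials must approximate $h$ uniformly on $\sigma(S)\cup\sigma(f(S))$, not only on $\sigma(S)$, and this needs $f(S)$ bounded. In general it is cleaner to intertwine $(1+S^*S)^{-1}$ and $S(1+S^*S)^{-1}$. Their symbols, together with $\overline{\lambda}(1+|\lambda|^2)^{-1}$, generate a dense $*$-subalgebra of $C_0(\IC)$, so no compactness is needed.
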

\begin{proof}
  The first claim is obvious from the fact that the complex measures with
  bounded total variation form a vector space and
  \cref{eq:productmeasure}. Furthermore, it is clear that
  \cref{eq:productmeasure} defines a finitely additive complex set function
  on the rectangles. Thus, in view of Carath\'eodory's extension theorem,
  the important part in the existence proof is the question whether one can
  also prove countable additivity. In the case that $T$ is Hilbert--Schmidt,
  this was proven by Birman and Solomyak \cite{BirmanSolomyak.1996}; and this
  also yields the upper bound on the total variation norm.
	
	It thus remains to treat the case $P_1=P_2$. Then, defining the measurable function $F:\sM\to\sM$ by $F(x)=(x,f(x))$, the pushforward measure $\mu_T(\psi_1,\psi_2)(M) = \braket{\psi_1,P(F^{-1}(M))T\psi_2}$ satisfies
	\begin{align*} \mu_T(\psi_1,\psi_2)(M_1\times M_2) &= \braket{\psi_1,P(F^{-1}(M_1\times M_2))T\psi_2} = \braket{\psi_1,P(M_1\cap f^{-1}(M_2)) T\psi_2} \\&= \braket{P(M_1)\psi_1,P(f^{-1}(M_2)) T\psi_2} = \braket{P(M_1)\psi_1,TP(M_2)\psi_2}\;, \end{align*}
	which proves the claim.
\end{proof}
\begin{rem}
  Even if $T$ is the identity, dropping the Hilbert--Schmidt assumption
  without further restrictions on $\sM_1$, $\sM_2$ will not in general
  suffice for countable additivity, as is known from
  \citep{BirmanVershikSolomyak.1979}. Also, in the case $P_1=P_2$, without
  any assumption on the commuting property of $T$ with $P$, the claim is not
  true in general, as can be seen by choosing $P(M)=\chr_M$ on $\HS=L^2(\IR)$
  with $(\cM,\Sigma)=(\IR,\fB(\IR))$ and picking $T$ to be the Fourier transform
  \citep[cf.][Ex.~3.1]{DragoMazzucchiMoretti.2020}.
\end{rem}
Integrating with respect to Borel measurable functions yields the quadratic form associated to the DOI.
\begin{prop}
	\label[proposition]{prop:DOI}
	For any $(P_1,P_2)$-decomposable $T\in\cL(\HS_2,\HS_1)$ and any bounded Borel function $f:\sM_1\times\sM_2\to \IC$, the integral
	\begin{align*}
		\fq_{T,f}(\psi_1,\psi_2) \coloneqq \int_{\sM_1\times\sM_2} f \sfd \mu_{T}(\psi_1,\psi_2)
	\end{align*}
	defines a bounded sesquilinear form $\fq_{T,f}:\HS_1\times\HS_2\to\IC$. In
        particular, there exists a bounded operator $S\in\cL(\HS_2,\HS_1)$ such
        that $\braket{\psi_1,S\psi_2}_{\HS_1} = \fq_{T,f}(\psi_1,\psi_2)$.
\end{prop}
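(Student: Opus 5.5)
The proof splits into three parts: well-definedness of the integral, sesquilinearity, and a uniform bound. The bounded operator then comes from the Riesz representation theorem.

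\emph{Well-definedness.} Fix $\psi_1\in\HS_1$ and $\psi_2\in\HS_2$. By $(P_1,P_2)$-decomposability, $\mu_T(\psi_1,\psi_2)$ is a complex measure on $\Sigma_1\otimes\Sigma_2$ with finite total variation. Since $f$ is bounded and Borel, hence $\Sigma_1\otimes\Sigma_2$-measurable, it is integrable against $|\mu_T(\psi_1,\psi_2)|$. Therefore $\fq_{T,f}(\psi_1,\psi_2)$ is a well-defined complex number with
\begin{equation*}
  \abs{\fq_{T,f}(\psi_1,\psi_2)}\le \norm{f}_\infty\,\norm{\mu_T(\psi_1,\psi_2)}_{\sfT\sfV}\;.
\end{equation*}

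\emph{Sesquilinearity.} Here I use the uniqueness statement of the preceding proposition. For $\psi_1,\psi_1'\in\HS_1$, $\psi_2\in\HS_2$ and $c\in\IC$, consider the complex measure $\bar c\,\mu_T(\psi_1,\psi_2)+\mu_T(\psi_1',\psi_2)$. It has bounded total variation, and on rectangles it equals $\braket{P_1(M_1)(c\psi_1+\psi_1'),TP_2(M_2)\psi_2}$, because the inner product is anti-linear in the first slot. By uniqueness it coincides with $\mu_T(c\psi_1+\psi_1',\psi_2)$. Linearity in the second argument follows in the same way. Linearity of the integral in the measure then transfers these identities to $\fq_{T,f}$, so $\fq_{T,f}$ is sesquilinear.

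\emph{Boundedness.} By scaling, the sesquilinearity just shown gives
\begin{equation*}
  \mu_T(\psi_1,\psi_2)=\norm{\psi_1}\norm{\psi_2}\,\mu_T\bigl(\psi_1/\norm{\psi_1},\psi_2/\norm{\psi_2}\bigr)
\end{equation*}
for nonzero vectors (the case of a zero vector is trivial). Combining this with \cref{eq:TVnorm} yields
\begin{equation*}
  \abs{\fq_{T,f}(\psi_1,\psi_2)}\le \norm{f}_\infty\,\|T\|_{(P_1,P_2)}\,\norm{\psi_1}\norm{\psi_2}\;.
\end{equation*}

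\emph{The operator $S$.} Fix $\psi_2$. The map $\psi_1\mapsto\overline{\fq_{T,f}(\psi_1,\psi_2)}$ is a bounded linear functional on $\HS_1$. By the Riesz representation theorem there is a unique vector $S\psi_2\in\HS_1$ with $\braket{\psi_1,S\psi_2}=\fq_{T,f}(\psi_1,\psi_2)$ for all $\psi_1$. Linearity of $\fq_{T,f}$ in its second argument makes $S$ linear, and the bound above gives $\norm{S}\le\norm{f}_\infty\|T\|_{(P_1,P_2)}$.

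The only step that is not purely formal is sesquilinearity, and it rests entirely on the uniqueness of $\mu_T$. That uniqueness was established earlier through the bounded total variation assumption together with the fact that rectangles generate the product $\sigma$-algebra.
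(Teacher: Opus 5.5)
Your proof is correct and follows the same route as the paper: sesquilinearity is inherited from the rectangle identity \cref{eq:productmeasure}, the bound comes from the definition of $\|T\|_{(P_1,P_2)}$, and $S$ is obtained from the Riesz representation theorem. Your use of uniqueness of $\mu_T$ makes precise what the paper covers only with its brief ``linear combinations and limits'' remark, and your bound correctly carries the factor $\|f\|_\infty$, which the paper's stated estimate $\|T\|_{(P_1,P_2)}\|\psi_1\|\|\psi_2\|$ leaves out.
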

\begin{proof}
	Clearly, the right hand side of \cref{eq:productmeasure} is sesquilinear.
	Thus, sesquilinearity follows, since linear combinations and limits of sesquilinear functions are again sesquilinear. Furthermore, the upper bound
	$|\fq_{T,f}(\psi_1,\psi_2)|\le \|T\|_{(P_1,P_2)}\|\psi_1\|_{\HS_1}\|\psi_2\|_{\HS_2}$
	follows from the definition.
	The existence of a corresponding bounded operator is thus a standard application of the Riesz representation theorem.
\end{proof}
\begin{defn}
	\label[definition]{def:DOI}
	In the main text of this article, we use DOI notation, \emph{i.e.}, in the setting of \cref{prop:DOI}, we denote
	\begin{align*}
		\fq_{T,f}(\psi,\ph) \eqqcolon \iint f(\lambda,\mu)\braket{\sfd P_1(\lambda)\psi, T \sfd P_2(\mu)\ph}
	\end{align*}
	and the corresponding bounded operator by
	\begin{align*}
		S\eqqcolon\iint f(\lambda,\mu) \sfd P_1(\lambda) T \sfd P_2(\mu).
	\end{align*}
\end{defn}


{\footnotesize \bibliographystyle{halpha-abbrv}
   \bibliography{bib,00lit,lit}

@article {AizenmanSimon.1982,
	AUTHOR = {Aizenman, Michael and Simon, Barry},
	TITLE = {Brownian Motion and {H}arnack Inequality for {S}chr\"{o}dinger	Operators},
	JOURNAL = {Commun. Pure Appl. Math.},
	FJOURNAL = {Communications on Pure and Applied Mathematics},
	VOLUME = {35},
	YEAR = {1982},
	NUMBER = {2},
	PAGES = {209-273},
	DOI = {10.1002/cpa.3160350206},
}

@unpublished{AlvarezLillLonigroValentinMartin.2025,
	author = {Benjamin Alvarez and Sascha Lill and Davide Lonigro and Valent\'in Mart\'in, Javier},
	title = {Renormalization of generalized spin--boson models with critical ultraviolet divergences},
	year = 2025,
	note = {Preprint},
	eprint = {2508.00803},
}

@article {Ammari.2000,
	AUTHOR = {Ammari, Zied},
	TITLE = {Asymptotic Completeness for a Renormalized Nonrelativistic	{H}amiltonian in Quantum Field Theory: The {N}elson Model},
	JOURNAL = {Math. Phys. Anal. Geom.},
	FJOURNAL = {Mathematical Physics, Analysis and Geometry},
	VOLUME = {3},
	YEAR = {2000},
	NUMBER = {3},
	PAGES = {217-285},
	DOI = {10.1023/A:1011408618527},
}

@article{Arai.2001,
	author = {Asao Arai},
	title = {Ground State of the Massless {N}elson Model Without Infrared Cutoff in a Non-{F}ock Representation},
	journal = {Rev. Math. Phys.},
	fjournal = {Reviews in Mathematical Physics},
	volume = 13,
	number = 9,
	pages = {1057-1094},
	year = 2001,
	doi = {10.1142/S0129055X01000934},
}

@article{Arai.2006,
	author = {Asao Arai},
	title = {Non-relativistic Limit of a {D}irac Polaron in Relativistic Quantum Electrodynamics},
	journal = {Lett. Math. Phys.},
	fjournal = {Letters in Mathematical Physics},
	volume = {77},
	number = {3},
	pages = {283-290},
	year = {2006},
	doi = {10.1007/s11005-006-0098-y},
}

@article{AraiHirokawa.1997,
	author = {Asao Arai and Masao Hirokawa},
	title = {On the Existence and Uniqueness of Ground States of a Generalized Spin-Boson Model},
	journal = {J. Funct. Anal.},
	fjournal = {Journal of Functional Analysis},
	volume = {151},
	number = 2,
	pages = {455-503},
	year = {1997},
	doi = {10.1006/jfan.1997.3140},
}

@article{AraiHirokawaHiroshima.1999,
	author = {Asao Arai and Masao Hirokawa and Fumio Hiroshima},
	title = {On the Absence of Eigenvectors of Hamiltonians	in a Class of Massless Quantum Field Models without Infrared Cutoff},
	journal = {J. Funct. Anal.},
	fjournal = {Journal of Functional Analysis},
	volume = {168},
	number = {2},
	pages = {470-497},
	year = {1999},
	doi = {10.1006/jfan.1999.3472},
}

@article{BachBallesterosKoenenbergMenrath.2017,
	author = {Volker Bach and Miguel Ballesteros and Martin K{\"o}nenberg and Lars Menrath},
	title = {Existence of ground state eigenvalues for the spin–boson model with critical infrared divergence and multiscale analysis},
	journal = {J. Math. Anal. Appl.},
	fjournal = {Journal of Mathematical Analysis and Applications},
	volume = 453,
	number = 2,
	pages = {773-797},
	year = 2017,
	doi = {10.1016/j.jmaa.2017.03.075},
	eprint = {1605.08348},
}

@article{BachmannDeckertPizzo.2012,
	author = {Sven Bachmann and Dirk-Andr\'{e} Deckert and Alessandro Pizzo},
	title = {The mass shell of the {N}elson model without cut-offs},
	journal = {J. Funct. Anal.},
	fjournal = {Journal of Functional Analysis},
	volume = 263,
	number = 5,
	pages = {1224-1282},
	year = 2012,
	doi = {10.1016/j.jfa.2012.04.021},
	eprint = {1104.3271},
}

@unpublished{BeaudDybalskiGraf.2021,
	author = {Vincent Beaud and Wojciech Dybalski and Gian Michele Graf},
	title = {Infraparticle States in the Massless Nelson Model: Revisited},
	journal = {Ann. Henri Poincar\'e},
	fjournal = {Annales Henri Poincar\'e},
	doi = {10.1007/s00023-022-01261-2},
	year = 2021,
	eprint = {2105.05723},
	note = {In Press},
}

@unpublished{BetzHinrichsKraftPolzer.2025,
	author = {Volker Betz and Benjamin Hinrichs and Mino Nicola Kraft and Steffen Polzer},
	title = {On the {I}sing Phase Transition in the Infrared-Divergent Spin Boson Model},
	note = {Preprint},
	year = 2025,
	eprint = {2501.19362},
}

@article {BetzPolzer.2022,
	AUTHOR = {Betz, Volker and Polzer, Steffen},
	TITLE = {A Functional Central Limit Theorem for Polaron Path Measures},
	JOURNAL = {Commun. Pure Appl. Math.},
	FJOURNAL = {Communications on Pure and Applied Mathematics},
	VOLUME = {75},
	YEAR = {2022},
	NUMBER = {11},
	PAGES = {2345-2392},
	doi = {10.1002/cpa.22080},
	eprint = {2106.06447},
}

@incollection {BirmanSolomyak.1966,
	AUTHOR = {Birman, M. {\v S}. and Solomyak, Mikhail Zakharovich},
	TITLE = {Double {S}tieltjes operator integrals},
	BOOKTITLE = {Problems of Mathematical Physics, No. 1: Spectral Theory
	and Wave Processes (Russian)},
	PAGES = {33-67},
	PUBLISHER = {Izdat. Leningrad. Univ., Leningrad},
	YEAR = {1966},
}

@incollection {BirmanSolomyak.1967,
	AUTHOR = {Birman, M. {\v S}. and Solomyak, Mikhail Zakharovich},
	TITLE = {Double {S}tieltjes operator integrals. {II}},
	BOOKTITLE = {Problems of Mathematical Physics, No. 2, Spectral
	Theory, Diffraction Problems (Russian)},
	PAGES = {26-60},
	PUBLISHER = {Izdat. Leningrad. Univ., Leningrad},
	YEAR = {1967},
}

@incollection {BirmanSolomyak.1973,
	AUTHOR = {Birman, M. {\v S}. and Solomyak, Mikhail Zakharovich},
	TITLE = {Double {S}tieltjes operator integrals. {III}},
	BOOKTITLE = {Problems of Mathematical Physics, {N}o. 6 ({R}ussian)},
	PAGES = {27-53},
	PUBLISHER = {Izdat. Leningrad. Univ., Leningrad},
	YEAR = {1973},
}

@article {BirmanSolomyak.1996,
	AUTHOR = {Birman, M. and Solomyak, M.},
	TITLE = {Tensor product of a finite number of spectral measures is always a spectral measure},
	JOURNAL = {Integr. Equ. Oper. Theory},
	FJOURNAL = {Integral Equations and Operator Theory},
	VOLUME = {24},
	YEAR = {1996},
	NUMBER = {2},
	PAGES = {179-187},
	DOI = {10.1007/BF01193459},
}

@article {BirmanSolomyak.2003,
	AUTHOR = {Birman, Mikhail Sh. and Solomyak, Michael},
	TITLE = {Double operator integrals in a {H}ilbert space},
	JOURNAL = {Integral Equ. Oper. Theory},
	FJOURNAL = {Integral Equations and Operator Theory},
	VOLUME = {47},
	YEAR = {2003},
	NUMBER = {2},
	PAGES = {131-168},
	DOI = {10.1007/s00020-003-1157-8},
}

@article {BirmanVershikSolomyak.1979,
	AUTHOR = {Birman, M. Sh. and Vershik, A. M. and Solomyak, M. Z.},
	TITLE = {Product of commuting spectral measures need not be countably additive},
	JOURNAL = {Funct. Anal. Appl.},
	VOLUME = {13},
	YEAR = {1979},
	PAGES = {48-49},
	doi = {10.1007/BF01076440},
}

@unpublished{Brooks.2024,
	author = {Morris Brooks},
	title = {Proof of the {L}andau-{P}ekar Formula for the effective Mass of the Polaron at strong coupling},
	year = 2024,
	eprint = {2409.08835},
	note = {Preprint},
}

@article{Cannon.1971,
	author = {John Cannon},
	title = {Quantum field theoretic properties of a model of {N}elson: Domain and eigenvector stability for perturbed linear operators},
	journal = {J. Funct. Anal.},
	fjournal = {Journal of Functional Analysis},
	volume = 8,
	number = 1,
	pages = {101-152},
	year = 1971,
	doi = {10.1016/0022-1236(71)90023-1},
}

@article {DaleckiiKrein.1956,
	AUTHOR = {Daleckii, Yu.\ L. and Krein, S. G.},
	TITLE = {Integration and differentiation of functions of {H}ermitian
	operators and applications to the theory of perturbations},
	JOURNAL = {Vorone{\v z}. Gos. Univ. Trudy Sem. Funkcional. Anal.},
	FJOURNAL = {Ministerstvo Vys\v sego Obrazovanija SSSR. Vorone\v zski\u i\
	Gosudarstvenny\u i\ Universitet. Trudy Seminara po
	Funkcional\cprime nomu Analizu},
	VOLUME = {1956},
	YEAR = {1956},
	NUMBER = {1},
	PAGES = {81-105},
}

@article{Dam.2018,
	author = {Thomas Norman Dam},
	title = {Absence of Ground States in the Translation Invariant Massless {N}elson Model },
	journal = {Ann. Henri Poincar\'e},
	fjournal = {Annales Henri Poincar\'e},
	volume = 21,
	number = 8,
	pages = {2655-2679},
	year = 2020,
	doi = {10.1007/s00023-020-00928-y},
	eprint = {1808.00088},
}

@article{DamHinrichs.2021,
	author = {Thomas Norman Dam and Benjamin Hinrichs},
	TITLE = {Absence of ground states in the renormalized massless translation-invariant {N}elson model},
	eprint = {1909.07661},
	JOURNAL = {Rev. Math. Phys.},
	FJOURNAL = {Reviews in Mathematical Physics},
	VOLUME = {34},
	YEAR = {2022},
	NUMBER = {10},
	PAGES = {2250033},
	doi = {10.1142/S0129055X22500337},
}

@article{DamMoller.2018a,
	author = {Thomas Norman Dam and Jacob Schach M\o{}ller},
	title = {Spin-Boson type models analysed using symmetries},
	journal = {Kyoto J. Math.},
	fjournal = {Kyoto Journal of Mathematics},
	volume = 60,
	number = 4,
	pages = {1261-1332},
	year = 2020,
	eprint = {1803.05812},
	doi = {10.1215/21562261-2019-0062},
}

@article{DamMoller.2018b,
	author = {Thomas Norman Dam and Jacob Schach M\o{}ller},
	title = {Asymptotics in Spin-Boson type models },
	journal = {Commun. Math. Phys.},
	fjournal = {Communications in Mathematical Physics},
	volume = 374,
	number = 3,
	pages = {1389-1415},
	year = 2020,
	eprint = {1808.00085},
	doi = {10.1007/s00220-020-03685-5},
}

@article {DeckertPizzo.2014,
	AUTHOR = {Deckert, Dirk-Andr\'{e} and Pizzo, Alessandro},
	TITLE = {Ultraviolet Properties of the Spinless, One-Particle {Y}ukawa
	Model},
	JOURNAL = {Commun. Math. Phys.},
	FJOURNAL = {Communications in Mathematical Physics},
	VOLUME = {327},
	YEAR = {2014},
	NUMBER = {3},
	PAGES = {887-920},
	DOI = {10.1007/s00220-013-1877-9},
	eprint = {1208.2646},
}

@article {dePagterWitvlietSukochev.2002,
	AUTHOR = {de Pagter, B. and Witvliet, H. and Sukochev, F. A.},
	TITLE = {Double Operator Integrals},
	JOURNAL = {J. Funct. Anal.},
	FJOURNAL = {Journal of Functional Analysis},
	VOLUME = {192},
	YEAR = {2002},
	NUMBER = {1},
	PAGES = {52-111},
	DOI = {10.1006/jfan.2001.3898},
}

@article{DerezinskiGerard.2004,
	author = {Jan Derezi\'{n}ski and Christian G{\'{e}}rard},
	title = {Scattering theory of infrared divergent {P}auli-{F}ierz {H}amiltonians},
	journal = {Ann. Henri Poincar\'e},
	fjournal = {Annales Henri Poincar\'e},
	volume = 5,
	number = 3,
	pages = {523-577},
	year = 2004,
	doi = {10.1007/s00023-004-0177-5},
}

@article {DragoMazzucchiMoretti.2020,
	AUTHOR = {Drago, Nicol{\`o{}} and Mazzucchi, Sonia and Moretti, Valter},
	TITLE = {An operational construction of the sum of two non-commuting
	observables in quantum theory and related constructions},
	JOURNAL = {Lett. Math. Phys.},
	FJOURNAL = {Letters in Mathematical Physics},
	VOLUME = {110},
	YEAR = {2020},
	NUMBER = {12},
	PAGES = {3197-3242},
	DOI = {10.1007/s11005-020-01332-7},
	eprint = {1909.10974},
}

@article{DybalskiPizzo.2014,
	AUTHOR = {Dybalski, Wojciech and Pizzo, Alessandro},
	TITLE = {Coulomb Scattering in the Massless {N}elson model {I}.	{F}oundations of Two-Electron Scattering},
	JOURNAL = {J. Stat. Phys.},
	FJOURNAL = {Journal of Statistical Physics},
	VOLUME = {154},
	YEAR = {2014},
	NUMBER = {1},
	PAGES = {543-587},
	DOI = {10.1007/s10955-013-0857-y},
	eprint = {1302.5001},
}

@article{DybalskiPizzo.2018,
	AUTHOR = {Dybalski, Wojciech and Pizzo, Alessandro},
	TITLE = {Coulomb Scattering in the Massless {N}elson Model {III}: Ground State Wave Functions and Non-Commutative Recurrence Relations},
	JOURNAL = {Ann. Henri Poincar\'{e}},
	FJOURNAL = {Annales Henri Poincar\'{e}},
	VOLUME = {19},
	YEAR = {2018},
	NUMBER = {2},
	PAGES = {463-514},
	DOI = {10.1007/s00023-017-0642-6},
	eprint = {1704.02924 },
}

@article {DybalskiPizzo.2019,
	AUTHOR = {Dybalski, Wojciech and Pizzo, Alessandro},
	TITLE = {Coulomb scattering in the massless {N}elson model {II}. {R}egularity of ground states},
	JOURNAL = {Rev. Math. Phys.},
	FJOURNAL = {Reviews in Mathematical Physics},
	VOLUME = {31},
	YEAR = {2019},
	NUMBER = {3},
	PAGES = {1950010},
	DOI = {10.1142/S0129055X19500107},
	eprint = {1302.5012},
}

@article {DybalskiPizzo.2022,
	AUTHOR = {Dybalski, Wojciech and Pizzo, Alessandro},
	TITLE = {Coulomb scattering in the massless {N}elson model {IV}. {A}tom-electron scattering},
	JOURNAL = {Rev. Math. Phys.},
	FJOURNAL = {Reviews in Mathematical Physics},
	VOLUME = {34},
	YEAR = {2022},
	NUMBER = {6},
	PAGES = {2250014},
	DOI = {10.1142/S0129055X22500143},
	eprint = {1902.08799},
}

@article {DybalskiSpohn.2020,
	AUTHOR = {Dybalski, Wojciech and Spohn, Herbert},
	TITLE = {Effective Mass of the Polaron -- Revisited},
	JOURNAL = {Ann. Henri Poincar\'{e}},
	FJOURNAL = {Annales Henri Poincar\'{e}},
	VOLUME = {21},
	YEAR = {2020},
	NUMBER = {5},
	PAGES = {1573-1594},
	DOI = {10.1007/s00023-020-00892-7},
	eprint = {1908.03432},
}

@article{EmeryLuther.1974,
	title = {Low-temperature properties of the {K}ondo {H}amiltonian},
	author = {Emery, Victor John and Luther, A.},
	journal = {Phys. Rev. B},
	fjournal = {Physical Reviews B},
	volume = {9},
	number = {1},
	pages = {215-226},
	year = {1974},
	doi = {10.1103/PhysRevB.9.215},
}

@article {Falconi.2015,
	AUTHOR = {Falconi, Marco},
	TITLE = {Self-Adjointness Criterion for Operators in {F}ock Spaces},
	JOURNAL = {Math. Phys. Anal. Geom.},
	FJOURNAL = {Mathematical Physics, Analysis and Geometry},
	VOLUME = {18},
	YEAR = {2015},
	NUMBER = {1},
	PAGES = {2},
	DOI = {10.1007/s11040-015-9173-x},
	eprint = {1405.6570},
}

@article{FannesNachtergaele.1988,
	author = {Mark Fannes  and Bruno Nachtergaele},
	title = {Translating the spin-boson model into a classical system},
	journal = {J. Math. Phys.},
	fjournal = {Journal of Mathematical Physics},
	volume = {29},
	number = {10},
	pages = {2288-2293},
	year = {1988},
	doi = {10.1063/1.528109},
}

@unpublished{FerydouniSpiegel.2025,
	author = {Robert Ferydouni and Daniel D. Spiegel},
	title = {Foundations of Double Operator Integrals with a Variant Approach to the Nonseparable Case},
	eprint = {2510.25789},
	year = 2025,
	note = {Preprint},
}

@article {FrankSeiringer.2021,
	AUTHOR = {Frank, Rupert L. and Seiringer, Robert},
	TITLE = {Quantum Corrections to the {P}ekar Asymptotics of a Strongly Coupled Polaron},
	JOURNAL = {Commun. Pure Appl. Math.},
	FJOURNAL = {Communications on Pure and Applied Mathematics},
	VOLUME = {74},
	YEAR = {2021},
	NUMBER = {3},
	PAGES = {544-588},
	DOI = {10.1002/cpa.21944},
	eprint = {1902.02489},
}

@article{Frohlich.1973,
	title = {On the infrared problem in a model of scalar electrons and massless scalar bosons},
	author = {J\"{u}rg Fr{\"{o}}hlich},
	journal = {Ann. de l'Inst. Henri Poincar\'{e}},
	fjournal = {Annales de l'Institut Henri Poincar\'{e}, physique th\'{e}orique},
	volume = 19,
	number = 1,
	pages = {1-103},
	year = 1973,
}

@article{Frohlich.1974,
	title = {Existence of dressed one-electron states in a class of persistent models},
	author = {J\"{u}rg Fr{\"{o}}hlich},
	journal = {Fortschr. Phys.},
	fjournal = {Fortschritte der Physik},
	volume = 22,
	number = 3,
	pages = {159-198},
	year = 1974,
	doi = {10.1002/prop.19740220304},
}

@article{Gerard.2000,
	author = {Christian G{\'{e}}rard},
	title = {On the Existence of Ground States for Massless {P}auli-{F}ierz {H}amiltonians},
	journal = {Ann. Henri Poincar\'e},
	fjournal = {Annales Henri Poincar\'e},
	volume = 1,
	number = 3,
	pages = {443-459},
	year = 2000,
	doi = {10.1007/s000230050002},
}

@article{GlimmJaffe.1970,
	author = {James Glimm and Arthur Jaffe},
	title = {Self-adjointness of the {Y}ukawa$_2$ {H}amiltonian},
	journal = {Ann. Phys.},
	fjournal = {Annals of Physics},
	volume = {60},
	number = {2},
	pages = {321-383},
	year = {1970},
	doi = {10.1016/0003-4916(70)90495-1},
}

@article {GlimmJaffe.1970b,
	AUTHOR = {Glimm, James and Jaffe, Arthur},
	TITLE = {The {$\lambda (\varphi\sp{4})\sb{2}$} quantum field theory without cutoffs: {II}. The field operators and the
	approximate vacuum},
	JOURNAL = {Ann. Math.},
	FJOURNAL = {Annals of Mathematics. Second Series},
	VOLUME = {91},
	YEAR = {1970},
	PAGES = {362-401},
	DOI = {10.2307/1970582},
}

@misc{GriesemerKussmaul.2026,
	title={On {R}ayleigh scattering in the massless {N}elson model}, 
	author={Marcel Griesemer and Valentin Kussmaul},
	year={2026},
	eprint={2512.21307},
	note = {Preprint},
}

@article{GriesemerWuensch.2018,
	author = {Marcel Griesemer and Andreas W{\"u}nsch},
	title = {On the domain of the {N}elson {H}amiltonian},
	journal = {J. Math. Phys.},
	fjournal = {Journal of Mathematical Physics},
	volume = {59},
	number = {4},
	pages = 042111,
	year = {2018},
	doi = {10.1063/1.5018579},
	eprint = {1711.10916},
}

@article {Gross.1973,
	AUTHOR = {Gross, Leonard},
	TITLE = {The relativistic polaron without cutoffs},
	JOURNAL = {Commun. Math. Phys.},
	FJOURNAL = {Communications in Mathematical Physics},
	VOLUME = {31},
	number = 1,
	YEAR = {1973},
	PAGES = {25-73},
	doi = {10.1007/BF01645589},
}

@article{GubinelliHiroshimaLorinczi.2014,
	author = {Massimiliano Gubinelli and Fumio Hiroshima and J\'{o}zsef L\H{o}rinczi},
	title = {Ultraviolet renormalization of the {N}elson {H}amiltonian through functional integration},
	journal = {J. Funct. Anal.},
	fjournal = {Journal of Functional Analysis},
	volume = 267,
	number = 9,
	pages = {3125-3153},
	year = 2014,
	doi = {10.1016/j.jfa.2014.08.002},
	eprint = {1304.6662},
}

@article{HaslerHerbst.2008a,
	author = {David Hasler and Ira Herbst},
	title = {Absence of Ground States for a Class of Translation Invariant Models of Non-relativistic QED},
	journal = {Commun. Math. Phys.},
	fjournal = {Communications in Mathematical Physics},
	volume = {279},
	number = {3},
	pages = {769-787},
	year = {2008},
	doi = {10.1007/s00220-008-0444-2},
	eprint = {math-ph/0702096},
}

@article{HaslerHerbst.2010,
	author = {David Hasler and Ira Herbst},
	title = {Ground States in the Spin Boson Model},
	journal = {Ann. Henri Poincar\'e},
	fjournal = {Annales Henri Poincar\'e},
	volume = {12},
	number = {4},
	pages = {621-677},
	year = {2011},
	doi = {10.1007/s00023-011-0091-6},
	eprint = {1003.5923},
}

@article {HaslerHerbst.2011b,
	author = {David Hasler and Ira Herbst},
	TITLE = {Smoothness and analyticity of perturbation expansions in qed},
	JOURNAL = {Adv. Math.},
	FJOURNAL = {Advances in Mathematics},
	VOLUME = {228},
	YEAR = {2011},
	NUMBER = {6},
	PAGES = {3249-3299},
	DOI = {10.1016/j.aim.2011.08.007},
	eprint = {1007.0969},
}

@article{HaslerHinrichsSiebert.2021a,
	author = {David Hasler and Benjamin Hinrichs and Oliver Siebert},
	title = {On Existence of Ground States in the Spin Boson Model},
	journal = {Commun. Math. Phys.},
	fjournal = {Communications in Mathematical Physics},
	doi = {10.1007/s00220-021-04185-w},
	volume = 388,
	number = 1,
	pages = {419-433},
	year = 2021,
	eprint = {2102.13373},
}

@article{HaslerHinrichsSiebert.2021b,
	author = {David Hasler and Benjamin Hinrichs and Oliver Siebert},
	title = {Correlation bound for a one-dimensional continuous long-range {I}sing model},
	journal = {Stoch. Proc. Appl.},
	fjournal = {Stochastic Processes and their Applications},
	volume = 146,
	pages = {60-79},
	year = 2022,
	eprint = {2104.03013},
	doi = {10.1016/j.spa.2021.12.010},
}

@article{HaslerHinrichsSiebert.2021c,
	author = {David Hasler and Benjamin Hinrichs and Oliver Siebert},
	title = {{FKN} Formula and Ground State Energy for the Spin Boson Model with External Magnetic Field},
	journal = {Ann. Henri Poincar\'e},
	FJOURNAL = {Annales Henri Poincar\'{e}},
	volume = 23,
	number = 8,
	PAGES = {2819-2853},
	year = 2022,
	eprint = {2106.08659},
	doi = {10.1007/s00023-022-01160-6},
}

@article{HaslerHinrichsSiebert.2023,
	author = {David Hasler and Benjamin Hinrichs and Oliver Siebert},
	title = {Non-{F}ock ground states in the translation-invariant {N}elson model revisited non-perturbatively},
	year = 2024,
	eprint = {2302.06998},
	JOURNAL = {J. Funct. Anal.},
	FJOURNAL = {Journal of Functional Analysis},
	VOLUME = {286},
	NUMBER = {7},
	PAGES = {110319},
	DOI = {10.1016/j.jfa.2024.110319},
}

@article{HaslerSiebert.2020,
	author = {David Hasler and Oliver Siebert},
	title = {Ground States for translationally invariant {P}auli-{F}ierz Models at zero Momentum},
	JOURNAL = {J. Funct. Anal.},
	FJOURNAL = {Journal of Functional Analysis},
	VOLUME = {284},
	YEAR = {2023},
	NUMBER = {1},
	PAGES = {109725},
	eprint = {2007.01250},
	doi = {10.1016/j.jfa.2022.109725},
}

@article{HidakaHiroshima.2016,
	author = {Takeru Hidaka and Fumio Hiroshima},
	title = {Spectrum of the semi-relativistic {P}auli-{F}ierz model I},
	journal = {J. Math. Anal. Appl.},
	fjournal = {Journal of Mathematical Analysis and Applications},
	volume = {437},
	number = {1},
	pages = {330-349},
	year = {2016},
	doi = {10.1016/j.jmaa.2015.11.081},
	eprint = {1402.1065},
}

@inproceedings{Hinrichs.2022b,
	author = {Benjamin Hinrichs},
	title = {Existence of Ground States in the Infrared-Critial Spin Boson Model},
	year = 2022,
	booktitle = {Mathematical Aspects of Quantum Fields and Related Topics},
	series = {RIMS K\^oky\^uroku},
	volume = {2235},
	pages = {60-72},
	eprint = {2204.00287},
}

@unpublished{HinrichsLampartValentinMartin.2024,
	author = {Hinrichs, Benjamin and Lampart, Jonas and Valent{\'i}n Mart{\'i}n, Javier},
	title = {Ultraviolet Renormalization of Spin Boson Models I. Normal and 2-Nilpotent Interactions},
	note = {Preprint},
	year = 2025,
	eprint = {2502.04876},
}

@article{HinrichsMatte.2022,
	author = {Benjamin Hinrichs and Oliver Matte},
	title = {{F}eynman--{K}ac Formula and Asymptotic Behavior of the Minimal Energy for the Relativistic {N}elson Model in Two Spatial Dimensions},
	journal = {Ann. Henri Poincar\'e},
    VOLUME = {25},
	YEAR = {2024},
	NUMBER = {6},
	PAGES = {2877--2940},
	eprint = {2211.14046},
	doi = {10.1007/s00023-023-01369-z}
}

@unpublished{HinrichsMatte.2024,
	author = {Benjamin Hinrichs and Oliver Matte},
	title = {{F}eynman--{K}ac formulas for semigroups generated by multi-polaron {H}amiltonians in magnetic fields and on general domains},
	year = 2024,
	eprint = {2403.12147},
	note = {Preprint},
}

@unpublished{HinrichsPolzer.2025,
	author = {Benjamin Hinrichs and Steffen Polzer},
	title = {Wiener-Type Theorems for the Laplace Transform. With Applications to Ground State Problems},
	year = 2025,
	eprint = {2511.02867},
	note = {Preprint},
}

@article{HirokawaHiroshimaLorinczi.2014,
	author = {Masao Hirokawa and Fumio Hiroshima and J\'{o}zsef L\H{o}rinczi},
	title = {Spin-boson model through a {P}oisson-driven stochastic process},
	journal = {Math. Z.},
	fjournal = {Mathematische Zeitschrift},
	volume = 277,
	number = 3,
	pages = {1165-1198},
	year = 2014,
	doi = {10.1007/s00209-014-1299-1},
	eprint = {1209.5521},	
}

@article {Hiroshima.2000,
	AUTHOR = {Hiroshima, Fumio},
	TITLE = {Essential Self-Adjointness of Translation-Invariant Quantum Field Models for Arbitrary Coupling Constants},
	JOURNAL = {Commun. Math. Phys.},
	FJOURNAL = {Communications in Mathematical Physics},
	VOLUME = {211},
	YEAR = {2000},
	NUMBER = {3},
	PAGES = {585-613},
	DOI = {10.1007/s002200050827},
}

@article{LampartSchmidt.2019,
	author = {Jonas Lampart and Julian Schmidt},
	title = {On {N}elson-Type {H}amiltonians and Abstract Boundary Conditions},
	journal = {Commun. Math. Phys.},
	fjournal = {Communications in Mathematical Physics},
	volume = {367},
	number = {2},
	pages={629-663},
	year = 2019,
	doi = {10.1007/s00220-019-03294-x},
	eprint = {1803.00872},
}

@article{LeeLowPines.1953,
	author = {Tsung-Dao Lee and Francis Eugene Low and David Pines},
	title = {The motion of slow electrons in a polar crystal},
	journal = {Phys. Rev.},
	volume = 90,
	number = 2,
	pages = {297-302},
	year = 1953,
	doi = {10.1103/PhysRev.90.297},
}

@article{Leggettetal.1987,
	author = {A. J. Leggett and S. Chakravarty and A. T. Dorsey and Matthew P. A. Fisher and Anupam Garg and W. Zwerger},
	title = {Dynamics of the dissipative two-state system},
	journal = {Rev. Mod. Phys.},
	fjournal = {Reviews of Modern Physics},
	volume = 59,
	number = 1,
	pages = {1-85},
	year = 1987,
}

@article {LeopoldMitrouskasRademacherSchleinSeiringer.2021,
	AUTHOR = {Leopold, Nikolai and Mitrouskas, David and Rademacher, Simone and Schlein, Benjamin and Seiringer, Robert},
	TITLE = {Landau---{P}ekar equations and quantum fluctuations for the dynamics of a strongly coupled polaron},
	JOURNAL = {Pure Appl. Anal.},
	FJOURNAL = {Pure and Applied Analysis},
	VOLUME = {3},
	YEAR = {2021},
	NUMBER = {4},
	PAGES = {653-676},
	DOI = {10.2140/paa.2021.3.653},
	eprint = {2005.02098},
}

@unpublished{LillLonigro.2024,
	title = {Self-adjointness and domain of generalized spin-boson models with mild ultraviolet divergences},
	author = {Lill, Sascha and Lonigro, Davide},
	eprint = {2307.14727},
	note = {Preprint},
	year = 2023,
}

@article {Lonigro.2022,
	AUTHOR = {Lonigro, Davide},
	TITLE = {Generalized spin-boson models with non-normalizable form factors},
	JOURNAL = {J. Math. Phys.},
	FJOURNAL = {Journal of Mathematical Physics},
	VOLUME = {63},
	YEAR = {2022},
	NUMBER = {7},
	PAGES = {072105},
	DOI = {10.1063/5.0085576},
	eprint = {2111.06121},
}

@article{LorincziMinlosSpohn.2002,
	author = {J\'{o}zsef L\H{o}rinczi and Robert Adol'fovich Minlos and Herbert Spohn},
	title = {The Infrared Behaviour in {N}elson's Model of a Quantum Particle Coupled to a Massless Scalar Field},
	journal = {Ann. Henri Poincar\'{e}},
	fjournal = {Annales Henri Poincar\'e},
	volume = 3,
	number = 2,
	pages = {269-295},
	year = 2002,
	doi = {10.1007/s00023-002-8617-6},
	eprint = {math-ph/0011043},
}

@article{Matte.2017,
	title={{P}auli-{F}ierz Type Operators with Singular Electromagnetic Potentials on General Domains},
	author={Matte, Oliver},
	journal={Math. Phys. Anal. Geom.},
	fjournal = {Mathematical Physics, Analysis and Geometry},
	volume={20},
	number={2},
	pages={18},
	year={2017},
	doi = {10.1007/s11040-017-9249-x},
	eprint = {1703.00404},
}

@article{MatteMoller.2018,
	title = {{F}eynman-{K}ac Formulas for the Ultra-Violet Renormalized {N}elson Model},
	author = {Oliver Matte and Jacob Schach M\o{}ller},
	journal = {Ast\'{e}risque},
	volume = 404,
	year = 2018,
	doi = {10.24033/ast.1054},
	eprint = {1701.02600},
}

@article{Moller.2005,
	AUTHOR = {Jacob Schach M{\o{}}ller},
	TITLE = {The Translation Invariant Massive {N}elson Model: {I}. {T}he Bottom of the Spectrum},
	JOURNAL = {Ann. Henri Poincar\'{e}},
	FJOURNAL = {Annales Henri Poincar\'{e}},
	VOLUME = {6},
	YEAR = {2005},
	NUMBER = {6},
	PAGES = {1091-1135},
	DOI = {10.1007/s00023-005-0234-8},
}

@article{Moller.2006b,
	AUTHOR = {M{\o{}}ller, Jacob Schach},
	TITLE = {The Polaron Revisited},
	JOURNAL = {Rev. Math. Phys.},
	FJOURNAL = {Reviews in Mathematical Physics},
	VOLUME = {18},
	YEAR = {2006},
	NUMBER = {5},
	PAGES = {485-517},
	DOI = {10.1142/S0129055X0600267X},
}

@article {MukherjeeVaradhan.2020,
	AUTHOR = {Mukherjee, Chiranjib and Varadhan, Sathamangalam Rangaiyenga Srininvasa},
	TITLE = {Identification of the Polaron Measure {I}: Fixed Coupling Regime and the Central Limit Theorem for Large Times},
	JOURNAL = {Commun. Pure Appl. Math.},
	FJOURNAL = {Communications on Pure and Applied Mathematics},
	VOLUME = {73},
	YEAR = {2020},
	NUMBER = {2},
	PAGES = {350-383},
	DOI = {10.1002/cpa.21858},
	eprint = {1802.05696},
}

@article {MukherjeeVaradhan.2020b,
	AUTHOR = {Mukherjee, Chiranjib and Varadhan, S. R. S.},
	TITLE = {Identification of the Polaron measure in strong coupling and the {P}ekar variational formula},
	JOURNAL = {Ann. Probab.},
	FJOURNAL = {The Annals of Probability},
	VOLUME = {48},
	YEAR = {2020},
	NUMBER = {5},
	PAGES = {2119-2144},
	DOI = {10.1214/19-AOP1392},
	eprint = {1812.06927},
}

@article {Nikitopoulos.2023,
	AUTHOR = {Nikitopoulos, Evangelos A.},
	TITLE = {Multiple operator integrals in non-separable von {N}eumann	algebras},
	JOURNAL = {J. Oper. Theory},
	FJOURNAL = {Journal of Operator Theory},
	VOLUME = {89},
	YEAR = {2023},
	NUMBER = {2},
	PAGES = {361-427},
	doi = {10.7900/jot.2021aug19.2357},
	eprint = {2107.03687},
}

@article{OtterpohlNalbachThorwart.2022,
	author = {Florian Otterphl and Peter Nalbach and Michael Thorwart},
	title = {Hidden Phase of the Spin-Boson Model},
	journal = {Phys. Rev. Lett.},
	fjournal = {Physical Review Letters},
	volume = 129,
	pages = 120406,
	year = 2022,
	doi = {10.1103/PhysRevLett.129.120406},
	eprint = {2208.10313},
}

@article {Panati.2009,
	AUTHOR = {Panati, Annalisa},
	TITLE = {Existence and nonexistence of a ground state for the massless {N}elson model under binding condition},
	JOURNAL = {Rep. Math. Phys.},
	FJOURNAL = {Reports on Mathematical Physics},
	VOLUME = {63},
	YEAR = {2009},
	NUMBER = {3},
	PAGES = {305-330},
	DOI = {10.1016/S0034-4877(09)00014-7},
	eprint = {math-ph/0609065},
}

@article{Pizzo.2003,
	author = {Alessandro Pizzo},
	title = {One-particle (improper) States in {N}elson's Massless Model},
	journal = {Ann. Henri Poincar\'{e}},
	fjournal = {Annales Henri Poincar\'e},
	volume = 4,
	number = 3,
	pages = {439-486},
	year = 2003,
	doi = {10.1007/s00023-003-0136-6},
}

@article{Pizzo.2005,
	AUTHOR = {Pizzo, Alessandro},
	TITLE = {Scattering of an {\it {I}nfraparticle}: The One Particle Sector in {N}elson's Massless Model},
	JOURNAL = {Ann. Henri Poincar\'{e}},
	FJOURNAL = {Annales Henri Poincar\'{e}},
	VOLUME = {6},
	YEAR = {2005},
	NUMBER = {3},
	PAGES = {553-606},
	DOI = {10.1007/s00023-005-0216-x},
}

@article {PotapovSukochev.2010,
	AUTHOR = {Potapov, D. and Sukochev, F.},
	TITLE = {Double Operator Integrals and Submajorization},
	JOURNAL = {Math. Model. Nat. Phenom.},
	FJOURNAL = {Mathematical Modelling of Natural Phenomena},
	VOLUME = {5},
	YEAR = {2010},
	NUMBER = {4},
	PAGES = {317-339},
	DOI = {10.1051/mmnp/20105414},
}

@article {RomanRocheSanchezZueco,
    AUTHOR = {Rom{\'a}n-Roche, Juan and S{\'a}nchez-Burillo, Eduardo and Zueco,
              David},
     TITLE = {Bound states in ultrastrong waveguide {QED}},
   JOURNAL = {Phys. Rev. A},
  FJOURNAL = {Physical Review A},
    VOLUME = {102},
      YEAR = {2020},
    NUMBER = {2},
     PAGES = {023702},
   doi = {10.1103/PhysRevA.102.023702},
	eprint = {2001.07643},
}

@article {Sasaki.2005,
	AUTHOR = {Sasaki, Itaru},
	TITLE = {Ground state of the massless {N}elson model in a non-{F}ock
	representation},
	JOURNAL = {J. Math. Phys.},
	FJOURNAL = {Journal of Mathematical Physics},
	VOLUME = {46},
	YEAR = {2005},
	NUMBER = {10},
	PAGES = {102107},
	DOI = {10.1063/1.2050507},
}

@article{Schmidt.2019,
	author = {Julian Schmidt},
	title = {On a direct description of pseudorelativistic {N}elson {H}amiltonians},
	journal = {J. Math. Phys.},
	fjournal = {Journal of Mathematical Physics},
	volume = {60},
	number = {10},
	pages = {102303},
	year = {2019},
	doi = {10.1063/1.5109640},
	eprint = {1810.03313},
}

@article{Sharafievetal.2021,
	doi = {10.22331/q-2021-06-10-474},
	title = {Visualizing the emission of a single photon with frequency and time resolved spectroscopy},
	author = {Sharafiev, Aleksei and Juan, Mathieu L. and Gargiulo, Oscar and Zanner, Maximilian and W{\"{o}}gerer, Stephanie and Garc{\'{i}}a-Ripoll, Juan Jos{\'{e}} and Kirchmair, Gerhard},
	journal = {Quantum},
	volume = {5},
	pages = {474},
	year = {2021},
	eprint = {2001.09737},
}

@article{Simon.1978,
	author = {Barry Simon},
	title = {Lower semicontinuhy of positive quadratic forms},
	journal = {Proceedings of the Royal Society of Edinburgh: Section A Mathematics},
	volume = 79,
	number = {3-4},
	pages = {267-273},
	year = 1978,
	doi = {10.1017/S0308210500019776},
	publisher={Royal Society of Edinburgh Scotland Foundation},
}

@article {Sloan.1974,
	AUTHOR = {Sloan, Alan David},
	TITLE = {The polaron without cutoffs in two space dimensions},
	JOURNAL = {J. Math. Phys.},
	FJOURNAL = {Journal of Mathematical Physics},
	VOLUME = {15},
	YEAR = {1974},
	PAGES = {190-201},
	DOI = {10.1063/1.1666620},
}

@article {Spohn.1987,
	AUTHOR = {Spohn, Herbert},
	TITLE = {Effective mass of the polaron: A functional integral approach},
	JOURNAL = {Ann. Phys.},
	FJOURNAL = {Annals of Physics},
	VOLUME = {175},
	YEAR = {1987},
	NUMBER = {2},
	PAGES = {278-318},
	DOI = {10.1016/0003-4916(87)90211-9},
}

@article{Spohn.1989,
	author = {Herbert Spohn},
	title = {Ground State(s) of the Spin-Boson {H}amiltonian},
	journal = {Commun. Math. Phys.},
	fjournal = {Communications in Mathematical Physics},
	volume = 123,
	number = 2,
	pages = {277-304},
	year = 1989,
	doi = {10.1007/BF01238859},
}

@article{SpohnDuemcke.1985,
	author = {Herbert Spohn and Rolf D\"umcke},
	title = {Quantum tunneling with dissipation and the {I}sing model over \ifx\mathbb\undefined{$R$}\else{$\mathbb R$}\fi},
	journal = {J. Stat. Phys.},
	fjournal = {Journal of Statistical Physics},
	volume = 41,
	number = 3,
	pages = {389-423},
	year = 1985,
	doi = {10.1007/BF01009015},
}

@incollection {TeufelTumulka.2016,
	AUTHOR = {Teufel, Stefan and Tumulka, Roderich},
	TITLE = {Avoiding Ultraviolet Divergence by Means of Interior-Boundary Conditions},
	BOOKTITLE = {Quantum Mathematical Physics},
	editor = {Felix Finster and Johannes Kleiner and Christian R\"oken and Jürgen Tolksdorf},
	PAGES = {293-311},
	PUBLISHER = {Birkh\"auser},
	address = {Cham},
	YEAR = {2016},
	doi = {10.1007/978-3-319-26902-3\_14},
	eprint = {1506.00497},
}

@article{WeisskopfWigner.1930,
	author = {Weisskopf, Victor and Wigner, Eugene},
	title = {Berechnung der natürlichen Linienbreite auf Grund der Diracschen Lichttheorie},
	journal = {Z. Phys.},
	fjournal = {Zeitschrift f\"ur Physik},
	volume = 63,
	number = 1,
	pages = {54-73},
	year = {1930},
	doi = {10.1007/BF01336768},
}

@article{YanShao.2023,
	title = {Exact dynamics of the spin-boson model at the Toulouse limit},
	author = {Yan, Yun-An and Shao, Jiushu},
	journal = {Phys. Rev. A},
	volume = {108},
	number = {1},
	pages = {012218},
	year = {2023},
	doi = {10.1103/PhysRevA.108.012218},
}

@book{Arai.2018,
	author = {Asao Arai},
	title = {Analysis on {F}ock Spaces and Mathematical Theory of Quantum Fields},
	subtitle = {An Introduction to Mathematical Analysis of Quantum Fields},
	publisher = {World Scientific},
	address = {New Jersey},
	year = {2018},
	DOI = {10.1142/10367},
}

@book {Durrett.2010,
	AUTHOR = {Durrett, Rick},
	TITLE = {Probability: Theory and Examples},
	SERIES = {Cambridge Series in Statistical and Probabilistic Mathematics},
	VOLUME = {31},
	EDITION = {4th},
	PUBLISHER = {Cambridge University Press},
	address = {Cambridge},
	YEAR = {2010},
	DOI = {10.1017/CBO9780511779398},
}

@book {GlimmJaffe.1985,
	AUTHOR = {Glimm, James and Jaffe, Arthur},
	TITLE = {Collected papers. Constructive Quantum Field Theory Selected Papers},
	NOTE = {Reprint of articles published 1968--1980},
	PUBLISHER = {Birkh\"{a}user},
	address = {Boston},
	YEAR = {1985},
}

@book{ReedSimon.1975,
	author = {Michael Reed and Barry Simon},
	title = {{F}ourier Analysis, Self-Adjointness},
	year = {1975},
	series = {Methods of Modern Mathematical Physics},
	volume = {2},
	publisher = {Academic Press},
	address = {San Diego},
}

@book{Schmudgen.2012,
	author = {Konrad Schm\"{u}dgen},
	title = {Unbounded Self-adjoint Operators on {H}ilbert Space},
	year = 2012,
	publisher = {Springer},
	address = {Dordrecht},
	series = {Graduate Texts in Mathematics},
	volume = 265,
      DOI = {10.1007/978-94-007-4753-1},
}

@book {SkripkaTomskova.2019,
	AUTHOR = {Skripka, Anna and Tomskova, Anna},
	TITLE = {Multilinear Operator Integrals},
	SERIES = {Lecture Notes in Mathematics},
	VOLUME = {2250},
	PUBLISHER = {Springer},
	address = {Cham},
	YEAR = {2019},
	DOI = {10.1007/978-3-030-32406-3},
}

@article{abdesselam2012cmp,
  Author =	 {Abdesselam, Abdelmalek and Hasler, David},
  opta =	 {A. Abdesselam, D. Hasler},
  Coden =	 {CMPHAY},
  Doi =		 {10.1007/s00220-011-1407-6},
  Fjournal =	 {Communications in Mathematical Physics},
  Issn =	 {0010-3616},
  Journal =	 {Comm. Math. Phys.},
  Mrclass =	 {81T10 (81Q10)},
  Mrnumber =	 {2890307},
  Mrreviewer =	 {Asao Arai},
  Number =	 {2},
  Pages =	 {511--536},
  Title =	 {Analyticity of the ground state energy for massless {N}elson models},
  optt =	 {Analyticity of the ground state energy for massless Nelson models},
  Url =		 {http://dx.doi.org/10.1007/s00220-011-1407-6},
  Volume =	 {310},
  Year =	 {2012},
  OPTkey = 	 {},
  OPTmonth = 	 {},
  OPTnote = 	 {},
  OPTannote = 	 {}
}

@article {aizenman2021am,
  AUTHOR =	 {Aizenman, Michael and Duminil-Copin, Hugo},
  TITLE =	 {Marginal triviality of the scaling limits of critical 4{D} {I}sing and {$\phi_4^4$} models},
  JOURNAL =	 {Ann. of Math. (2)},
  FJOURNAL =	 {Annals of Mathematics. Second Series},
  VOLUME =	 {194},
  YEAR =	 {2021},
  NUMBER =	 {1},
  PAGES =	 {163--235},
  ISSN =	 {0003-486X},
  MRCLASS =	 {82B20 (60G60 82B27)},
  MRNUMBER =	 {4276286},
  DOI =		 {10.4007/annals.2021.194.1.3},
  URL =		 {https://doi.org/10.4007/annals.2021.194.1.3},
  opta =	 {M. Aizenman, H. Duminil-Copin},
  optt =	 {Marginal triviality of the scaling limits of critical 4D Ising and Φ⁴₄ models},
  OPTkey =	 {},
  OPTFile =	 {},
  OPTmonth =	 {},
  OPTnote =	 {},
  OPTarchivePrefix ={},
  OPTeprint =	 {},
  OPTprimaryclass ={},
  OPTannote =	 {},
}

@article{ammari2004jfa,
  Author =	 {Ammari, Zied},
  opta =	 {Z. Ammari},
  Coden =	 {JFUAAW},
  Doi =		 {10.1016/S0022-1236(03)00217-9},
  Fjournal =	 {Journal of Functional Analysis},
  Issn =	 {0022-1236},
  Journal =	 {J. Funct. Anal.},
  Mrclass =	 {81U10 (47A40 47N50 81T10)},
  Mrnumber =	 {2035028 (2005g:81341)},
  Mrreviewer =	 {Fumio Hiroshima},
  Number =	 {2},
  Pages =	 {302--359},
  Title =	 {Scattering theory for a class of fermionic {P}auli-{F}ierz models},
  optt =	 {Scattering theory for a class of fermionic Pauli-Fierz models},
  Url =		 {http://dx.doi.org/10.1016/S0022-1236(03)00217-9},
  Volume =	 {208},
  Year =	 {2004},
  OPTkey =	 {},
  OPTmonth =	 {},
  OPTnote =	 {},
  OPTannote =	 {}
}

@article{ammari2014jsp,
  Author =	 {Ammari, Zied and Falconi, Marco},
  opta =	 {Z. Ammari, M. Falconi},
  Doi =		 {10.1007/s10955-014-1079-7},
  Journal =	 {J. Stat. Phys.},
  Fjournal =	 {Journal of Statistical Physics},
  Number =	 {2},
  Pages =	 {330-362},
  optt =	 {Wigner measures approach to the classical limit of the Nelson model: Convergence of dynamics and ground state energy},
  Title =	 {{W}igner measures approach to the classical limit of the {N}elson model: {C}onvergence of dynamics and ground state
                  energy},
  Url =		 {http://dx.doi.org/10.1007/s10955-014-1079-7},
  Volume =	 {157},
  Year =	 {2014},
  archivePrefix ={arXiv},
  eprint =	 {1403.2327},
  primaryClass = {math.AP},
  OPTkey =	 {},
  OPTmonth =	 {},
  OPTnote =	 {},
  OPTannote =	 {},
}

@Article{ammari2017sima,
  Author =	 {Ammari, Zied and Falconi, Marco},
  opta =	 {Z. Ammari, M. Falconi},
  Title =	 {{Bohr's correspondence principle for the renormalized Nelson model}},
  optt =	 {Bohr's correspondence principle for the renormalized Nelson model},
  Journal =	 {SIAM J. Math. Anal.},
  Fjournal =	 {SIAM Journal on Mathematical Analysis},
  year =	 {2017},
  archivePrefix ={arXiv},
  eprint =	 {1602.03212},
  primaryClass = {math-ph},
  OPTkey =	 {},
  doi = 	 {10.1137/17M1117598},
  volume =	 {49},
  number =	 {6},
  pages =	 {5031-5095},
  OPTmonth =	 {},
  note =	 {},
  OPTannote =	 {},
}

@book{arai2020mps,
  AUTHOR =	 {Arai, Asao},
  TITLE =	 {Inequivalent representations of canonical commutation and anti-commutation relations---representation-theoretical
                  viewpoint for quantum phenomena},
  SERIES =	 {Mathematical Physics Studies},
  PUBLISHER =	 {Springer, Singapore},
  YEAR =	 {2020},
  PAGES =	 {xix+493},
  ISBN =	 {978-981-15-2179-9; 978-981-15-2180-5},
  MRCLASS =	 {81-02 (46L60 47N50 81R10 81S05)},
  MRNUMBER =	 {4292535},
  DOI =		 {10.1007/978-981-15-2180-5},
  URL =		 {https://doi.org/10.1007/978-981-15-2180-5},
  opta =	 {A. Arai},
  ALTeditor =	 {},
  optt =	 {Inequivalent representations of canonical commutation and anti-commutation relations - representation-theoretical viewpoint for quantum phenomena},
  OPTkey =	 {},
  OPTvolume =	 {},
  OPTnumber =	 {},
  OPTaddress =	 {},
  OPTedition =	 {},
  OPTmonth =	 {},
  OPTnote =	 {},
  OPTarchivePrefix ={},
  OPTeprint =	 {},
  OPTprimaryclass ={},
  OPTannote =	 {},
}

@article{bach1998advm,
  Author =	 {Bach, Volker and Fr{\"o}hlich, J{\"u}rg and Sigal, Israel Michael},
  opta =	 {V. Bach, J. Fröhlich, I. M. Sigal},
  Coden =	 {ADMTA4},
  Doi =		 {10.1006/aima.1998.1734},
  Fjournal =	 {Advances in Mathematics},
  Issn =	 {0001-8708},
  Journal =	 {Adv. Math.},
  Mrclass =	 {81Q15 (47N50 81Q10 81V10)},
  Mrnumber =	 {1639713 (99e:81051b)},
  Mrreviewer =	 {H. Hogreve},
  Number =	 {2},
  Pages =	 {299--395},
  Title =	 {Quantum electrodynamics of confined nonrelativistic particles},
  optt =	 {Quantum electrodynamics of confined nonrelativistic particles},
  Url =		 {http://dx.doi.org/10.1006/aima.1998.1734},
  Volume =	 {137},
  Year =	 {1998},
  OPTkey = 	 {},
  OPTmonth = 	 {},
  OPTnote = 	 {},
  OPTannote = 	 {}
}

@article{bach1999cmp,
  Author =	 {Bach, Volker and Fr{\"o}hlich, J{\"u}rg and Sigal, Israel Michael},
  opta =	 {V. Bach, J. Fröhlich, I. M. Sigal},
  Coden =	 {CMPHAY},
  Doi =		 {10.1007/s002200050726},
  Fjournal =	 {Communications in Mathematical Physics},
  Issn =	 {0010-3616},
  Journal =	 {Comm. Math. Phys.},
  Mrclass =	 {81V10 (35J10 35Q40 81Q10 81T99)},
  Mrnumber =	 {1724854 (2001j:81256)},
  Mrreviewer =	 {H. Hogreve},
  Number =	 {2},
  Pages =	 {249--290},
  Title =	 {Spectral analysis for systems of atoms and molecules coupled to the quantized radiation field},
  optt =	 {Spectral analysis for systems of atoms and molecules coupled to the quantized radiation field},
  Url =		 {http://dx.doi.org/10.1007/s002200050726},
  Volume =	 {207},
  Year =	 {1999},
  OPTkey =	 {},
  OPTmonth =	 {},
  OPTnote =	 {},
  OPTannote =	 {}
}

@article{bach1999cmp2,
  Author =	 {Bach, Volker and Fr{\"o}hlich, J{\"u}rg and Sigal, Israel Michael and Soffer, Avy},
  opta =	 {V. Bach, J. Fröhlich, I. M. Sigal, A. Soffer},
  Coden =	 {CMPHAY},
  Doi =		 {10.1007/s002200050737},
  Fjournal =	 {Communications in Mathematical Physics},
  Issn =	 {0010-3616},
  Journal =	 {Comm. Math. Phys.},
  Mrclass =	 {81Q10 (47N50 81V45 81V55)},
  Mrnumber =	 {1727240 (2000m:81046)},
  Mrreviewer =	 {Fumio Hiroshima},
  Number =	 {3},
  Pages =	 {557--587},
  Title =	 {Positive commutators and the spectrum of {P}auli-{F}ierz {H}amiltonian of atoms and molecules},
  optt =	 {Positive commutators and the spectrum of Pauli-Fierz Hamiltonian of atoms and molecules},
  Url =		 {http://dx.doi.org/10.1007/s002200050737},
  Volume =	 {207},
  Year =	 {1999},
  OPTkey = 	 {},
  OPTmonth = 	 {},
  OPTnote = 	 {},
  OPTannote = 	 {}
}

@article {bach2006cmp,
  AUTHOR =	 {Bach, Volker and Fr\"{o}hlich, J\"{u}rg and Pizzo, Alessandro},
  TITLE =	 {Infrared-finite algorithms in {QED}: the groundstate of an atom interacting with the quantized radiation field},
  JOURNAL =	 {Comm. Math. Phys.},
  FJOURNAL =	 {Communications in Mathematical Physics},
  VOLUME =	 {264},
  YEAR =	 {2006},
  NUMBER =	 {1},
  PAGES =	 {145--165},
  ISSN =	 {0010-3616},
  MRCLASS =	 {81V10},
  MRNUMBER =	 {2212219},
  MRREVIEWER =	 {Fumio Hiroshima},
  DOI =		 {10.1007/s00220-005-1478-3},
  URL =		 {https://doi.org/10.1007/s00220-005-1478-3},
  opta = 	 {V. Bach, J. Fröhlich, A. Pizzo},
  optt = 	 {Infrared-finite algorithms in {QED}: the groundstate of an atom interacting with the quantized radiation field},
  OPTkey = 	 {},
  OPTFile = 	 {},
  OPTmonth = 	 {},
  OPTnote = 	 {},
  OPTarchivePrefix = {},
  OPTeprint = 	 {},
  OPTprimaryclass = {},
  OPTannote = 	 {},
}

@article {bach2022ahp,
  AUTHOR =	 {Bach, Volker and Hach, Alexander},
  TITLE =	 {On the ultraviolet limit of the {P}auli-{F}ierz {H}amiltonian in the {L}ieb-{L}oss model},
  JOURNAL =	 {Ann. Henri Poincar\'{e}},
  FJOURNAL =	 {Annales Henri Poincar\'{e}. A Journal of Theoretical and Mathematical Physics},
  VOLUME =	 {23},
  YEAR =	 {2022},
  NUMBER =	 {6},
  PAGES =	 {2207--2245},
  ISSN =	 {1424-0637},
  MRCLASS =	 {81T16 (81Q10 81V10)},
  MRNUMBER =	 {4420572},
  MRREVIEWER =	 {Marco Falconi},
  DOI =		 {10.1007/s00023-021-01124-2},
  URL =		 {https://doi.org/10.1007/s00023-021-01124-2},
  opta = 	 {V. Bach, A. Hach},
  optt = 	 {On the ultraviolet limit of the Pauli-Fierz Hamiltonian in the Lieb-Loss model},
  OPTkey = 	 {},
  OPTFile = 	 {},
  OPTmonth = 	 {},
  OPTnote = 	 {},
  OPTarchivePrefix = {},
  OPTeprint = 	 {},
  OPTprimaryclass = {},
  OPTannote = 	 {},
}

@Article{bazaes2023arxiv,
  author = 	 {Bazaes, Rodrigo and Mukherjee, Chiranjib and Sellke, Mark and Varadhan, S.R.S.},
  opta = 	 {R. Bazaes, C. Mukherjee, M. Sellke, S.R.S. Varadhan},
  title = 	 {{Effective mass of the Fr\"{o}hlich Polaron and the Landau-Pekar-Spohn conjecture}},
  optt = 	 {Effective mass of the Fröhlich Polaron and the Landau-Pekar-Spohn conjecture},
  journal = 	 {ArXiv e-prints},
  year = 	 {2023},
  OPTkey = 	 {},
  OPTFile = 	 {},
  OPTvolume = 	 {},
  OPTnumber = 	 {},
  OPTpages = 	 {},
  OPTmonth = 	 {},
  OPTnote = 	 {},
  archivePrefix = {arXiv},
  eprint = 	 {2307.13058},
  primaryclass = {math-ph},
  OPTdoi = 	 {},
  OPTannote = 	 {}
}

@article {betz2023cmp,
  AUTHOR =	 {Betz, Volker and Polzer, Steffen},
  TITLE =	 {Effective mass of the polaron: a lower bound},
  JOURNAL =	 {Comm. Math. Phys.},
  FJOURNAL =	 {Communications in Mathematical Physics},
  VOLUME =	 {399},
  YEAR =	 {2023},
  NUMBER =	 {1},
  PAGES =	 {173--188},
  ISSN =	 {0010-3616},
  MRCLASS =	 {81V70},
  MRNUMBER =	 {4567371},
  MRREVIEWER =	 {Marco Olivieri},
  DOI =		 {10.1007/s00220-022-04553-0},
  URL =		 {https://doi.org/10.1007/s00220-022-04553-0},
  opta =	 {V. Betz, S. Polzer},
  optt =	 {Effective mass of the polaron: a lower bound},
  OPTkey =	 {},
  OPTFile =	 {},
  OPTmonth =	 {},
  OPTnote =	 {},
  OPTarchivePrefix ={},
  OPTeprint =	 {},
  OPTprimaryclass ={},
  OPTannote =	 {},
  opta = 	 {},
  optt = 	 {},
}

@book{bratteli1997tmp2,
  Author =	 {Bratteli, Ola and Robinson, Derek W.},
  opta =	 {O. Bratteli, D. W. Robinson},
  Doi =		 {10.1007/978-3-662-03444-6},
  Edition =	 {Second Edition},
  Isbn =	 {3-540-61443-5},
  Mrclass =	 {82B10 (46Lxx 81S05 82-02 82C10)},
  Mrnumber =	 {1441540 (98e:82004)},
  Note =	 {Equilibrium states. Models in quantum statistical mechanics},
  Pages =	 {xiv+519},
  Publisher =	 {Springer-Verlag, Berlin},
  Series =	 {Texts and Monographs in Physics},
  Title =	 {Operator algebras and quantum statistical mechanics. 2},
  optt =	 {Operator algebras and quantum statistical mechanics. 2},
  Url =		 {http://dx.doi.org/10.1007/978-3-662-03444-6},
  Year =	 {1997},
  ALTeditor =	 {},
  OPTkey =	 {},
  OPTvolume =	 {},
  OPTnumber =	 {},
  OPTaddress =	 {},
  OPTmonth =	 {},
  OPTannote =	 {}
}

@Article{brooks2022arxiv,
  author = 	 {Brooks, Morris and Seiringer, Robert},
  opta = 	 {M. Brooks, R. Seiringer},
  title = 	 {{The Fr\"{o}hlich Polaron at Strong Coupling -- Part II: Energy-Momentum Relation and Effective Mass}},
  optt = 	 {The Fröhlich Polaron at Strong Coupling -- Part II: Energy-Momentum Relation and Effective Mass},
  journal = 	 {ArXiv e-prints},
  year = 	 {2022},
  OPTkey = 	 {},
  OPTFile = 	 {},
  OPTvolume = 	 {},
  OPTnumber = 	 {},
  OPTpages = 	 {},
  OPTmonth = 	 {},
  OPTnote = 	 {},
  archivePrefix = {arXiv},
  eprint = 	 {2211.03353},
  primaryclass = {math-ph},
  OPTdoi = 	 {},
  OPTannote = 	 {}
}

@article {brooks2023cmp,
  AUTHOR =	 {Brooks, Morris and Seiringer, Robert},
  TITLE =	 {The {F}r\"{o}hlich polaron at strong coupling: {P}art {I}---{T}he quantum correction to the classical energy},
  JOURNAL =	 {Comm. Math. Phys.},
  FJOURNAL =	 {Communications in Mathematical Physics},
  VOLUME =	 {404},
  YEAR =	 {2023},
  NUMBER =	 {1},
  PAGES =	 {287--337},
  ISSN =	 {0010-3616},
  MRCLASS =	 {81Q10 (81V70)},
  MRNUMBER =	 {4655801},
  DOI =		 {10.1007/s00220-023-04841-3},
  URL =		 {https://doi.org/10.1007/s00220-023-04841-3},
  opta = 	 {M. Brooks, R. Seiringer},
  optt = 	 {The Fröhlich polaron at strong coupling: Part I — The quantum correction to the classical energy},
  OPTkey = 	 {},
  OPTFile = 	 {},
  OPTmonth = 	 {},
  OPTnote = 	 {},
  OPTarchivePrefix = {},
  OPTeprint = 	 {},
  OPTprimaryclass = {},
  OPTannote = 	 {},
}

@article {bruned2019im,
  AUTHOR =	 {Bruned, Y. and Hairer, M. and Zambotti, L.},
  TITLE =	 {Algebraic renormalisation of regularity structures},
  JOURNAL =	 {Invent. Math.},
  FJOURNAL =	 {Inventiones Mathematicae},
  VOLUME =	 {215},
  YEAR =	 {2019},
  NUMBER =	 {3},
  PAGES =	 {1039--1156},
  ISSN =	 {0020-9910,1432-1297},
  MRCLASS =	 {35R60 (16T05 35A30 60H15 82C28)},
  MRNUMBER =	 {3935036},
  MRREVIEWER =	 {Peter\ E.\ Kloeden},
  DOI =		 {10.1007/s00222-018-0841-x},
  URL =		 {https://doi.org/10.1007/s00222-018-0841-x},
  opta = 	 {Y. Bruned, M. Hairer, L. Zambotti},
  optt = 	 {Algebraic renormalisation of regularity structures},
  OPTkey = 	 {},
  OPTFile = 	 {},
  OPTmonth = 	 {},
  OPTnote = 	 {},
  OPTarchivePrefix = {},
  OPTeprint = 	 {},
  OPTprimaryclass = {},
  OPTannote = 	 {},
}

@article {chatterjee2024cmp,
  AUTHOR =	 {Cao, Sky and Chatterjee, Sourav},
  TITLE =	 {A state space for 3{D} {E}uclidean {Y}ang-{M}ills theories},
  JOURNAL =	 {Comm. Math. Phys.},
  FJOURNAL =	 {Communications in Mathematical Physics},
  VOLUME =	 {405},
  YEAR =	 {2024},
  NUMBER =	 {1},
  PAGES =	 {Paper No. 3, 69},
  ISSN =	 {0010-3616,1432-0916},
  MRCLASS =	 {81T13 (60G15 60G60)},
  MRNUMBER =	 {4689958},
  DOI =		 {10.1007/s00220-023-04870-y},
  URL =		 {https://doi.org/10.1007/s00220-023-04870-y},
  opta = 	 {S. Cao, S. Chatterjeee},
  optt = 	 {A state space for 3D Euclidean Yang-Mills theories},
  OPTkey = 	 {},
  OPTFile = 	 {},
  OPTmonth = 	 {},
  OPTnote = 	 {},
  OPTarchivePrefix = {},
  OPTeprint = 	 {},
  OPTprimaryclass = {},
  OPTannote = 	 {},
}

@incollection{correggi2022iqm22,
  author =	 {Correggi, Michele and Falconi, Marco and Merkli, Marco},
  opta =	 {M. Correggi, M. Falconi, M. Merkli},
  title =	 {{Quasi-Classical Spin-Boson Models}},
  optt =	 {Quasi-Classical Spin-Boson Models},
  journal =	 {ArXiv e-prints},
  year =	 {2023},
  OPTkey =	 {},
  OPTvolume =	 {},
  OPTnumber =	 {},
  OPTpages =	 {},
  OPTmonth =	 {},
  note =	 {In press},
  archivePrefix ={artXiv},
  eprint =	 {2209.06477},
  primaryclass = {math-ph},
  OPTdoi =	 {},
  OPTannote =	 {},
  booktitle = 	 {Proceedings of the INdAM Quantum Meetings 2022},
  OPTcrossref =  {},
  OPTpublisher = {},
  editor = 	 {Correggi, Michele and Falconi, Marco},
  OPTseries = 	 {},
  OPTtype = 	 {},
  OPTchapter = 	 {},
  OPTedition = 	 {},
  OPTaddress = 	 {}
}

@Article{correggi2023apde,
  author = 	 {Correggi, Michele and Falconi, Marco and Olivieri, Marco},
  opta = 	 {M. Correggi, M. Falconi, M. Olivieri},
  title = 	 {{G}round {S}tate {P}roperties in the {Q}uasi-{C}lassical {R}egime},
  optt = 	 {Ground State Properties in the Quasi-Classical Regime},
  journal = 	 {Anal. PDE},
  fjournal =     {Analysis \& PDE},
  year = 	 {2023},
  OPTkey = 	 {},
  volume = 	 {16},
  number = 	 {8},
  pages = 	 {1745--1798},
  OPTmonth = 	 {},
  OPTnote = 	 {},
  archivePrefix = {arXiv},
  eprint = 	 {2007.09442},
  primaryclass = {math-ph},
  doi = 	 {10.2140/apde.2023.16.1745},
  OPTannote = 	 {}
}

@Article{correggi2023jems,
  title =	 {{Quasi-Classical Dynamics}},
  author =	 {Correggi, Michele and Falconi, Marco and Olivieri, Marco},
  year =	 {2023},
  eprint =	 {1909.13313},
  archivePrefix ={arXiv},
  primaryClass = {math-ph},
  opta =	 {M. Correggi, M. Falconi, M. Olivieri},
  optt =	 {Quasi-Classical Dynamics},
  Journal =	 {J. Eur. Math. Soc. (JEMS)},
  Fjournal =     {Journal of the European Mathematical Society},
  OPTkey =	 {},
  volume =	 {25},
  number =	 {2},
  pages =	 {731--783},
  OPTmonth =	 {},
  OPTnote =	 {},
  doi =          {10.4171/JEMS/1197},
  OPTannote =	 {},
}

@Article{correggi2024quantum,
  author = 	 {Correggi, Michele and Falconi, Marco and Fantechi, Michele and Merkli, Marco},
  opta = 	 {M. Correggi, M. Falconi, M. Fantechi, M. Merkli},
  title = 	 {Quasi-classical limit of a spin coupled to a reservoir},
  optt = 	 {Quasi-classical limit of a spin coupled to a reservoir},
  journal = 	 {Quantum},
  year = 	 {2024},
  OPTkey = 	 {},
  volume = 	 {8},
  OPTnumber = 	 {},
  pages = 	 {1561},
  month = 	 {dec},
  OPTnote = 	 {},
  archivePrefix = {arXiv},
  eprint = 	 {2408.02515},
  primaryclass = {math-ph},
  doi = 	 {10.22331/q-2024-12-11-1561},
  OPTannote = 	 {}
}

@book {dalmaso1993pnde,
  AUTHOR =	 {Dal Maso, Gianni},
  TITLE =	 {An introduction to {$\Gamma$}-convergence},
  SERIES =	 {Progress in Nonlinear Differential Equations and their Applications},
  VOLUME =	 {8},
  PUBLISHER =	 {Birkh\"{a}user Boston, Inc., Boston, MA},
  YEAR =	 {1993},
  PAGES =	 {xiv+340},
  ISBN =	 {0-8176-3679-X},
  MRCLASS =	 {49-02 (46N10 47H99 47N10 49J45 73B27)},
  MRNUMBER =	 {1201152},
  MRREVIEWER =	 {T. Zolezzi},
  DOI =		 {10.1007/978-1-4612-0327-8},
  URL =		 {https://doi.org/10.1007/978-1-4612-0327-8},
  opta = 	 {G. Dal Maso},
  ALTeditor = 	 {},
  optt = 	 {An introduction to Γ-convergence},
  OPTkey = 	 {},
  OPTnumber = 	 {},
  OPTaddress = 	 {},
  OPTedition = 	 {},
  OPTmonth = 	 {},
  OPTnote = 	 {},
  OPTarchivePrefix = {},
  OPTeprint = 	 {},
  OPTprimaryclass = {},
  OPTannote = 	 {},
}

@article{derezinski1999rmp,
  Author =	 {Derezi{\'n}ski, J. and G{\'e}rard, C.},
  opta =	 {J. Dereziński, C. Gérard},
  Doi =		 {10.1142/S0129055X99000155},
  Fjournal =	 {Reviews in Mathematical Physics},
  Issn =	 {0129-055X},
  Journal =	 {Rev. Math. Phys.},
  Mrclass =	 {81T08 (47N50 81Q10 81T10)},
  Mrnumber =	 {1682684 (2000h:81141)},
  Mrreviewer =	 {Hiroshi Isozaki},
  Number =	 {4},
  Pages =	 {383--450},
  Title =	 {Asymptotic completeness in quantum field theory. {M}assive {P}auli-{F}ierz {H}amiltonians},
  optt =	 {Asymptotic completeness in quantum field theory. Massive Pauli-Fierz Hamiltonians},
  Url =		 {http://dx.doi.org/10.1142/S0129055X99000155},
  Volume =	 {11},
  Year =	 {1999},
  OPTkey =	 {},
  OPTmonth =	 {},
  OPTnote =	 {},
  OPTannote =	 {}
}

@article{derezinski2001jfa,
  Author =	 {Derezi{{\'n}}ski, Jan and Jak{\v{s}}i{{\'c}}, Vojkan},
  opta =	 {J. Dereziński, V. Jakšić},
  Coden =	 {JFUAAW},
  Doi =		 {10.1006/jfan.2000.3681},
  Fjournal =	 {Journal of Functional Analysis},
  Issn =	 {0022-1236},
  Journal =	 {J. Funct. Anal.},
  Mrclass =	 {81Q10 (47N50 81T10)},
  Mrnumber =	 {1814991 (2002c:81049)},
  Mrreviewer =	 {Fumio Hiroshima},
  Number =	 {2},
  Pages =	 {243--327},
  Title =	 {Spectral theory of {P}auli-{F}ierz operators},
  optt =	 {Spectral theory of Pauli-Fierz operators},
  Url =		 {http://dx.doi.org/10.1006/jfan.2000.3681},
  Volume =	 {180},
  Year =	 {2001},
  OPTkey = 	 {},
  OPTmonth = 	 {},
  OPTnote = 	 {},
  OPTannote = 	 {},
  OPTarchivePrefix = {},
  OPTeprint = 	 {},
  OPTprimaryclass = {}
}

@article {derezinski2003ahp,
  AUTHOR =	 {Derezi\'{n}ski, Jan},
  TITLE =	 {Van {H}ove {H}amiltonians---exactly solvable models of the infrared and ultraviolet problem},
  JOURNAL =	 {Ann. Henri Poincar\'{e}},
  FJOURNAL =	 {Annales Henri Poincar\'{e}. A Journal of Theoretical and Mathematical Physics},
  VOLUME =	 {4},
  YEAR =	 {2003},
  NUMBER =	 {4},
  PAGES =	 {713--738},
  ISSN =	 {1424-0637},
  MRCLASS =	 {81U05 (47N50 81Q10 81T10)},
  MRNUMBER =	 {2015428},
  MRREVIEWER =	 {Fumio Hiroshima},
  DOI =		 {10.1007/s00023-003-0145-5},
  URL =		 {https://doi.org/10.1007/s00023-003-0145-5},
  opta = 	 {J. Dereziński},
  optt = 	 {Van Hove Hamiltonians - exactly solvable models of the infrared and ultraviolet problem},
  OPTkey = 	 {},
  OPTmonth = 	 {},
  OPTnote = 	 {},
  OPTarchivePrefix = {},
  OPTeprint = 	 {},
  OPTprimaryclass = {},
  OPTannote = 	 {},
}

@article {donsker1983cpam,
  AUTHOR =	 {Donsker, M. D. and Varadhan, S. R. S.},
  TITLE =	 {Asymptotics for the polaron},
  JOURNAL =	 {Comm. Pure Appl. Math.},
  FJOURNAL =	 {Communications on Pure and Applied Mathematics},
  VOLUME =	 {36},
  YEAR =	 {1983},
  NUMBER =	 {4},
  PAGES =	 {505--528},
  ISSN =	 {0010-3640},
  MRCLASS =	 {82A05 (60F10 82A31)},
  MRNUMBER =	 {709647},
  MRREVIEWER =	 {Ronald F. Fox},
  DOI =		 {10.1002/cpa.3160360408},
  URL =		 {https://doi.org/10.1002/cpa.3160360408},
  opta = 	 {M.D. Donsker, S.R.S. Varadhan},
  optt = 	 {Asymptotics for the polaron},
  OPTkey = 	 {},
  OPTFile = 	 {},
  OPTmonth = 	 {},
  OPTnote = 	 {},
  OPTarchivePrefix = {},
  OPTeprint = 	 {},
  OPTprimaryclass = {},
  OPTannote = 	 {},
}

@Article{falconi2025prims,
  author =	 {Falconi, Marco and Hinrichs, Benjamin},
  opta =	 {M. Falconi, B. Hinrichs},
  title =	 {Ultraviolet Renormalization of the van Hove--Miyatake Model: an Algebraic and Hamiltonian Approach},
  optt =	 {Ultraviolet Renormalization of the van Hove-Miyatake Model: an Algebraic and Hamiltonian Approach},
  journal =	 {ArXiv e-prints},
  year =	 {2025},
  OPTkey =	 {},
  OPTvolume =	 {},
  OPTnumber =	 {},
  OPTpages =	 {},
  OPTmonth =	 {},
  OPTnote =	 {},
  archivePrefix ={arXiv},
  eprint =	 {2505.19977},
  primaryclass = {math-ph},
  OPTdoi =	 {},
  OPTannote =	 {},
}

@article{frohlich2001am,
  Author =	 {Fr{\"o}hlich, J. and Griesemer, M. and Schlein, B.},
  opta =	 {J. Fröhlich, M. Griesemer, B. Schlein},
  Coden =	 {ADMTA4},
  Doi =		 {10.1006/aima.2001.2026},
  Fjournal =	 {Advances in Mathematics},
  Issn =	 {0001-8708},
  Journal =	 {Adv. Math.},
  Mrclass =	 {81V10 (81Q10 81V45)},
  Mrnumber =	 {1878286 (2003a:81218)},
  Mrreviewer =	 {Fumio Hiroshima},
  Number =	 {2},
  Pages =	 {349--398},
  Title =	 {Asymptotic electromagnetic fields in models of quantum-mechanical matter interacting with the quantized radiation
                  field},
  optt =	 {Asymptotic electromagnetic fields in models of quantum-mechanical matter interacting with the quantized radiation
                  field},
  Url =		 {http://dx.doi.org/10.1006/aima.2001.2026},
  Volume =	 {164},
  Year =	 {2001},
  OPTkey = 	 {},
  OPTmonth = 	 {},
  OPTnote = 	 {},
  OPTannote = 	 {}
}

@article{frohlich2004cmp251,
  Author =	 {Fr{\"o}hlich, J{\"u}rg and Merkli, Marco},
  opta =	 {J. Fröhlich, M. Merkli},
  Doi =		 {10.1007/s00220-004-1176-6},
  Issn =	 {0010-3616},
  Journal =	 {Comm. Math. Phys.},
  Fjournal =	 {Communications in Mathematical Physics},
  Language =	 {English},
  Number =	 {2},
  Pages =	 {235-262},
  Publisher =	 {Springer-Verlag},
  Title =	 {{Another Return of ``Return to Equilibrium''}},
  optt =	 {Another Return of ``Return to Equilibrium''},
  Url =		 {http://dx.doi.org/10.1007/s00220-004-1176-6},
  Volume =	 {251},
  Year =	 {2004},
  OPTkey =	 {},
  OPTmonth =	 {},
  OPTnote =	 {},
  OPTannote =	 {}
}

@article{frohlich2004cmp252,
  Author =	 {Fr{\"o}hlich, J. and Griesemer, M. and Schlein, B.},
  opta =	 {J. Fröhlich, M. Griesemer, B. Schlein},
  Coden =	 {CMPHAY},
  Doi =		 {10.1007/s00220-004-1180-x},
  Fjournal =	 {Communications in Mathematical Physics},
  Issn =	 {0010-3616},
  Journal =	 {Comm. Math. Phys.},
  Mrclass =	 {81U05 (47N50)},
  Mrnumber =	 {2104885 (2006d:81331)},
  Mrreviewer =	 {Fumio Hiroshima},
  Number =	 {1-3},
  Pages =	 {415--476},
  Title =	 {Asymptotic completeness for {C}ompton scattering},
  optt =	 {Asymptotic completeness for Compton scattering},
  Url =		 {http://dx.doi.org/10.1007/s00220-004-1180-x},
  Volume =	 {252},
  Year =	 {2004},
  OPTkey = 	 {},
  OPTmonth = 	 {},
  OPTnote = 	 {},
  OPTannote = 	 {}
}

@article {gerard2012jfa,
  AUTHOR =	 {G\'{e}rard, C. and Hiroshima, F. and Panati, A. and Suzuki, A.},
  TITLE =	 {Absence of ground state for the {N}elson model on static space-times},
  JOURNAL =	 {J. Funct. Anal.},
  FJOURNAL =	 {Journal of Functional Analysis},
  VOLUME =	 {262},
  YEAR =	 {2012},
  NUMBER =	 {1},
  PAGES =	 {273--299},
  ISSN =	 {0022-1236},
  MRCLASS =	 {81T10 (81T20)},
  MRNUMBER =	 {2852262},
  MRREVIEWER =	 {Salvatore Esposito},
  DOI =		 {10.1016/j.jfa.2011.09.010},
  URL =		 {https://doi.org/10.1016/j.jfa.2011.09.010},
  opta = 	 {C. Gérard, F. Hiroshima, A. Panati, A. Suzuki},
  optt = 	 {Absence of ground state for the Nelson model on static space-times},
  OPTkey = 	 {},
  OPTFile = 	 {},
  OPTmonth = 	 {},
  OPTnote = 	 {},
  OPTarchivePrefix = {},
  OPTeprint = 	 {},
  OPTprimaryclass = {},
  OPTannote = 	 {},
}

@article{ginibre1970cmpren,
  Author =	 {Ginibre, J. and Velo, G.},
  opta =	 {J. Ginibre, G. Velo},
  Fjournal =	 {Communications in Mathematical Physics},
  Issn =	 {0010-3616},
  Journal =	 {Comm. Math. Phys.},
  Mrclass =	 {81.46},
  Mrnumber =	 {0266533 (42 \#1436)},
  Mrreviewer =	 {P. Pascual},
  Pages =	 {65--81},
  Title =	 {Renormalization of a quadratic interaction in the {H}amiltonian formalism},
  optt =	 {Renormalization of a quadratic interaction in the Hamiltonian formalism},
  Volume =	 {18},
  Year =	 {1970},
  OPTkey = 	 {},
  OPTnumber = 	 {},
  OPTmonth = 	 {},
  OPTnote = 	 {},
  OPTannote = 	 {}
}

@article{glimm1967cmpyuka1,
  Author =	 {Glimm, James},
  opta =	 {J. Glimm},
  Fjournal =	 {Communications in Mathematical Physics},
  Issn =	 {0010-3616},
  Journal =	 {Comm. Math. Phys.},
  Mrclass =	 {81.47},
  Mrnumber =	 {0215585 (35 \#6425)},
  Mrreviewer =	 {G. Heber},
  Pages =	 {343--386},
  Title =	 {Yukawa coupling of quantum fields in two dimensions. {I}},
  optt =	 {Yukawa coupling of quantum fields in two dimensions. I},
  Volume =	 {5},
  Year =	 {1967},
  OPTkey = 	 {},
  OPTnumber = 	 {},
  OPTmonth = 	 {},
  OPTnote = 	 {},
  OPTannote = 	 {}
}

@article{glimm1967cmpyuka2,
  Author =	 {Glimm, James},
  opta =	 {J. Glimm},
  Fjournal =	 {Communications in Mathematical Physics},
  Issn =	 {0010-3616},
  Journal =	 {Comm. Math. Phys.},
  Mrclass =	 {81.46},
  Mrnumber =	 {0218073 (36 \#1162)},
  Mrreviewer =	 {G. Heber},
  Pages =	 {61--76},
  Title =	 {The {Y}ukawa coupling of quantum fields in two dimensions. {II}},
  optt =	 {The Yukawa coupling of quantum fields in two dimensions. II},
  Volume =	 {6},
  Year =	 {1967},
  OPTkey =	 {},
  OPTnumber =	 {},
  OPTmonth =	 {},
  OPTnote =	 {},
  OPTannote =	 {}
}

@article{glimm1968cmp,
  AUTHOR =	 {Glimm, James},
  opta =	 {J. Glimm},
  TITLE =	 {Boson fields with the {$:\Phi ^{4}:$} interaction in three dimensions},
  optt =	 {Boson fields with the :Φ⁴: interaction in three dimensions},
  JOURNAL =	 {Comm. Math. Phys.},
  FJOURNAL =	 {Communications in Mathematical Physics},
  VOLUME =	 {10},
  YEAR =	 {1968},
  PAGES =	 {1--47},
  ISSN =	 {0010-3616},
  MRCLASS =	 {81.46},
  MRNUMBER =	 {0231601},
  MRREVIEWER =	 {W. G\"uttinger},
  URL =		 {http://projecteuclid.org/euclid.cmp/1103840981},
  OPTkey =	 {},
  OPTnumber =	 {},
  OPTmonth =	 {},
  OPTnote =	 {},
  OPTannote =	 {},
}

@article{glimm1968pr1,
  Author =	 {Glimm, James and Jaffe, Arthur},
  opta =	 {J. Glimm, A. Jaffe},
  Fjournal =	 {Physical Reviews},
  Journal =	 {Phys. Rev. (2)},
  Mrclass =	 {81.47},
  Mrnumber =	 {0247845 (40 \#1106)},
  Mrreviewer =	 {K. Veseli{{\'c}}},
  Pages =	 {1945--1951},
  Title =	 {A {$\lambda \phi^{4}$} quantum field without cutoffs. {I}},
  optt =	 {A λφ⁴ quantum field without cutoffs. I},
  Volume =	 {176},
  Year =	 {1968},
  OPTkey =	 {},
  OPTnumber =	 {},
  OPTmonth =	 {},
  OPTnote =	 {},
  OPTannote =	 {}
}

@incollection{glimm1972mcp,
  Address =	 {New York},
  Author =	 {Glimm, J. and Jaffe, A.},
  opta =	 {J. Glimm, A. Jaffe},
  Booktitle =	 {Mathematics of contemporary physics ({P}roc. {I}nstructional {C}onf. ({NATO} {A}dvanced {S}tudy {I}nst.), {B}edford
                  {C}oll., {L}ondon, 1971)},
  Mrclass =	 {81.60},
  Mrnumber =	 {0674511 (58 \#32630)},
  Pages =	 {77--143},
  Publisher =	 {Academic Press},
  Title =	 {Boson quantum field models},
  optt =	 {Boson quantum field models},
  Year =	 {1972},
  OPTcrossref =  {},
  OPTkey = 	 {},
  OPTeditor = 	 {},
  OPTvolume = 	 {},
  OPTnumber = 	 {},
  OPTseries = 	 {},
  OPTtype = 	 {},
  OPTchapter = 	 {},
  OPTedition = 	 {},
  OPTmonth = 	 {},
  OPTnote = 	 {},
  OPTannote = 	 {}
}

@article{glimm1974am,
  Author =	 {Glimm, James and Jaffe, Arthur and Spencer, Thomas},
  opta =	 {J. Glimm, A. Jaffe, T. Spencer},
  Fjournal =	 {Annals of Mathematics. Second Series},
  Issn =	 {0003-486X},
  Journal =	 {Ann. of Math. (2)},
  Mrclass =	 {81.46},
  Mrnumber =	 {0363256},
  Mrreviewer =	 {H. Araki},
  Pages =	 {585--632},
  Title =	 {The {W}ightman axioms and particle structure in the {$P(\phi)_{2}$} quantum field model},
  optt =	 {The Wightman axioms and particle structure in the P(φ)₂ quantum field model},
  Volume =	 {100},
  Year =	 {1974},
  OPTkey = 	 {},
  OPTnumber = 	 {},
  OPTmonth = 	 {},
  OPTnote = 	 {},
  OPTannote = 	 {}
}

@book{glimm1985cp2,
  Author =	 {Glimm, James and Jaffe, Arthur},
  opta =	 {J. Glimm, A. Jaffe},
  Isbn =	 {0-8176-3272-7},
  Mrclass =	 {81-06 (01A75 81T05 81T08)},
  Mrnumber =	 {947959 (91m:81003)},
  Mrreviewer =	 {P. D. F. Ion},
  Note =	 {Constructive quantum field theory. Selected papers, Reprint of articles published 1968--1980},
  Pages =	 {x+533},
  Publisher =	 {Birkh{\"a}user Boston, Inc., Boston, MA},
  Title =	 {Collected papers. {V}ol. 2},
  optt =	 {Collected papers. Vol. 2},
  Year =	 {1985},
  ALTeditor = 	 {},
  OPTkey = 	 {},
  OPTvolume = 	 {},
  OPTnumber = 	 {},
  OPTseries = 	 {},
  OPTaddress = 	 {},
  OPTedition = 	 {},
  OPTmonth = 	 {},
  OPTannote = 	 {}
}

@book{glimm1987qp,
  Address =	 {New York},
  Author =	 {Glimm, James and Jaffe, Arthur},
  opta =	 {J. Glimm, A. Jaffe},
  Doi =		 {10.1007/978-1-4612-4728-9},
  Edition =	 {Second},
  Isbn =	 {0-387-96476-2},
  Mrclass =	 {81-02 (81E05)},
  Mrnumber =	 {887102 (89k:81001)},
  Mrreviewer =	 {P. D. F. Ion},
  Note =	 {A functional integral point of view},
  Pages =	 {xxii+535},
  Publisher =	 {Springer-Verlag},
  Title =	 {Quantum physics},
  optt =	 {Quantum physics},
  Url =		 {http://dx.doi.org/10.1007/978-1-4612-4728-9},
  Year =	 {1987},
  ALTeditor =	 {},
  OPTkey =	 {},
  OPTvolume =	 {},
  OPTnumber =	 {},
  OPTseries =	 {},
  OPTmonth =	 {},
  OPTannote =	 {}
}

@article{griesemer2001im,
  Author =	 {Griesemer, Marcel and Lieb, Elliott H. and Loss, Michael},
  opta =	 {M. Griesemer, E. H. Lieb, M. Loss},
  Coden =	 {INVMBH},
  Doi =		 {10.1007/s002220100159},
  Fjournal =	 {Inventiones Mathematicae},
  Issn =	 {0020-9910},
  Journal =	 {Invent. Math.},
  Mrclass =	 {81V10 (81Q10)},
  Mrnumber =	 {1856401 (2002g:81203)},
  Mrreviewer =	 {Masao Hirokawa},
  Number =	 {3},
  Pages =	 {557--595},
  Title =	 {Ground states in non-relativistic quantum electrodynamics},
  optt =	 {Ground states in non-relativistic quantum electrodynamics},
  Url =		 {http://dx.doi.org/10.1007/s002220100159},
  Volume =	 {145},
  Year =	 {2001},
  OPTkey = 	 {},
  OPTmonth = 	 {},
  OPTnote = 	 {},
  OPTannote = 	 {},
  OPTarchivePrefix = {},
  OPTeprint = 	 {},
  OPTprimaryclass = {}
}

@article{gross1962ap,
  Author =	 {Gross, E. P.},
  opta =	 {E. P. Gross},
  Doi =		 {10.1016/0003-4916(62)90216-6},
  Issn =	 {0003-4916},
  Journal =	 {Ann. Physics},
  Fjournal =	 {Annals of Physics},
  Number =	 {2},
  Pages =	 {219 - 233},
  Title =	 {Particle-like solutions in field theory},
  optt =	 {Particle-like solutions in field theory},
  Volume =	 {19},
  Year =	 {1962},
  OPTkey =	 {},
  OPTmonth =	 {},
  OPTnote =	 {},
  OPTannote =	 {}
}

@article {gubinelli2021cmp,
  AUTHOR =	 {Gubinelli, Massimiliano and Hofmanov\'a, Martina},
  TITLE =	 {A {PDE} construction of the {E}uclidean {$\phi_3^4$} quantum field theory},
  JOURNAL =	 {Comm. Math. Phys.},
  FJOURNAL =	 {Communications in Mathematical Physics},
  VOLUME =	 {384},
  YEAR =	 {2021},
  NUMBER =	 {1},
  PAGES =	 {1--75},
  ISSN =	 {0010-3616,1432-0916},
  MRCLASS =	 {81S20 (35Q40 37D35 37E20 60G60 81T08 81T27 82B31)},
  MRNUMBER =	 {4252872},
  MRREVIEWER =	 {Peter\ Bernard\ Weichman},
  DOI =		 {10.1007/s00220-021-04022-0},
  URL =		 {https://doi.org/10.1007/s00220-021-04022-0},
  opta = 	 {M. Gubinelli, M. Hofmanovà},
  optt = 	 {A PDE construction of the Euclidean ϕ⁴₃ quantum field theory},
  OPTkey = 	 {},
  OPTFile = 	 {},
  OPTmonth = 	 {},
  OPTnote = 	 {},
  OPTarchivePrefix = {},
  OPTeprint = 	 {},
  OPTprimaryclass = {},
  OPTannote = 	 {},
}

@article{guerra1975am,
  Author =	 {Guerra, F. and Rosen, L. and Simon, B.},
  opta =	 {F. Guerra, L. Rosen, B. Simon},
  Fjournal =	 {Annals of Mathematics. Second Series},
  Issn =	 {0003-486X},
  Journal =	 {Ann. of Math. (2)},
  Mrclass =	 {82.46 (81.46)},
  Mrnumber =	 {0378670 (51 \#14836)},
  Mrreviewer =	 {R. A. Minlos},
  Pages =	 {111--189; ibid. (2) 101\ (1975), 191--259},
  Title =	 {The {$\mathbf{P}(\phi)_{2}$} {E}uclidean quantum field theory as classical statistical mechanics. {I}, {II}},
  optt =	 {The P(φ)₂ Euclidean quantum field theory as classical statistical mechanics. I, II},
  Volume =	 {101},
  Year =	 {1975},
  OPTkey = 	 {},
  OPTnumber = 	 {},
  OPTmonth = 	 {},
  OPTnote = 	 {},
  OPTannote = 	 {}
}

@article{haag1964jmp,
  Author =	 {Haag, Rudolf and Kastler, Daniel},
  opta =	 {R. Haag, D. Kastler},
  Doi =		 {10.1063/1.1704187},
  Fjournal =	 {Journal of Mathematical Physics},
  Issn =	 {0022-2488},
  Journal =	 {J. Math. Phys.},
  Mrclass =	 {81.46},
  Mrnumber =	 {0165864},
  Mrreviewer =	 {H. Araki},
  Pages =	 {848--861},
  Title =	 {An algebraic approach to quantum field theory},
  optt =	 {An algebraic approach to quantum field theory},
  Url =		 {http://dx.doi.org/10.1063/1.1704187},
  Volume =	 {5},
  Year =	 {1964},
  OPTkey = 	 {},
  OPTnumber = 	 {},
  OPTmonth = 	 {},
  OPTnote = 	 {},
  OPTannote = 	 {}
}

@book{haag1992tmp,
  Address =	 {Berlin},
  Author =	 {Haag, Rudolf},
  opta =	 {R. Haag},
  Isbn =	 {3-540-53610-8},
  Mrclass =	 {81-01 (46L60 46N50 81P15 81T05)},
  Mrnumber =	 {1182152 (94d:81001)},
  Mrreviewer =	 {Stephen J. Summers},
  Note =	 {Fields, particles, algebras},
  Pages =	 {xiv+356},
  Publisher =	 {Springer-Verlag},
  Series =	 {Texts and Monographs in Physics},
  Title =	 {Local quantum physics},
  optt =	 {Local quantum physics},
  Year =	 {1992},
  ALTeditor = 	 {},
  OPTkey = 	 {},
  OPTvolume = 	 {},
  OPTnumber = 	 {},
  OPTedition = 	 {},
  OPTmonth = 	 {},
  OPTannote = 	 {}
}

@article{hairer2014im,
  Author =	 {Hairer, M.},
  opta =	 {M. Hairer},
  Doi =		 {10.1007/s00222-014-0505-4},
  Fjournal =	 {Inventiones Mathematicae},
  Issn =	 {0020-9910},
  Journal =	 {Invent. Math.},
  Mrclass =	 {60H15 (35R60 60H40 81S20 82C28)},
  Mrnumber =	 {3274562},
  Mrreviewer =	 {Dora Sele{\v{s}}i},
  Number =	 {2},
  Pages =	 {269--504},
  Title =	 {A theory of regularity structures},
  optt =	 {A theory of regularity structures},
  Url =		 {http://dx.doi.org/10.1007/s00222-014-0505-4},
  Volume =	 {198},
  Year =	 {2014},
  OPTkey = 	 {},
  OPTmonth = 	 {},
  OPTnote = 	 {},
  OPTannote = 	 {}
}

@incollection {hairer2016cdm,
  AUTHOR =	 {Hairer, Martin},
  TITLE =	 {Regularity structures and the dynamical {$\Phi^4_3$} model},
  BOOKTITLE =	 {Current developments in mathematics 2014},
  PAGES =	 {1--49},
  PUBLISHER =	 {Int. Press, Somerville, MA},
  YEAR =	 {2016},
  ISBN =	 {978-1-57146-313-5; 1-57146-313-5},
  MRCLASS =	 {60H15 (35R60)},
  MRNUMBER =	 {3468250},
  MRREVIEWER =	 {Peter\ Karl\ Friz},
  opta = 	 {M. Hairer},
  optt = 	 {Regularity structures and the dynamical ϕ⁴₃ model},
  OPTcrossref =  {},
  OPTkey = 	 {},
  OPTFile = 	 {},
  OPTeditor = 	 {},
  OPTvolume = 	 {},
  OPTnumber = 	 {},
  OPTseries = 	 {},
  OPTtype = 	 {},
  OPTchapter = 	 {},
  OPTedition = 	 {},
  OPTmonth = 	 {},
  OPTaddress = 	 {},
  OPTnote = 	 {},
  OPTdoi = 	 {},
  OPTarchivePrefix = {arXiv},
  OPTeprint = 	 {},
  OPTprimaryclass = {},
  OPTannote = 	 {},
}

@article {hairer2016cmp,
  AUTHOR =	 {Hairer, Martin and Shen, Hao},
  TITLE =	 {The dynamical sine-{G}ordon model},
  JOURNAL =	 {Comm. Math. Phys.},
  FJOURNAL =	 {Communications in Mathematical Physics},
  VOLUME =	 {341},
  YEAR =	 {2016},
  NUMBER =	 {3},
  PAGES =	 {933--989},
  ISSN =	 {0010-3616},
  MRCLASS =	 {60H15 (35K58 35R60 37H10)},
  MRNUMBER =	 {3452276},
  MRREVIEWER =	 {Dirk Bl\"{o}mker},
  DOI =		 {10.1007/s00220-015-2525-3},
  URL =		 {https://doi.org/10.1007/s00220-015-2525-3},
  opta =	 {M. Hairer, H. Shen},
  optt =	 {The dynamical sine-Gordon model},
  OPTkey =	 {},
  OPTFile =	 {},
  OPTmonth =	 {},
  OPTnote =	 {},
  OPTarchivePrefix ={},
  OPTeprint =	 {},
  OPTprimaryclass ={},
  OPTannote =	 {},
}

@article{hasler2008rmp,
  Author =	 {Hasler, D. and Herbst, I.},
  opta =	 {D. Hasler, I. Herbst},
  Doi =		 {10.1142/S0129055X08003389},
  Fjournal =	 {Reviews in Mathematical Physics},
  Issn =	 {0129-055X},
  Journal =	 {Rev. Math. Phys.},
  Mrclass =	 {81Q10 (47B25 47N50 81T08 81V10)},
  Mrnumber =	 {2436496 (2009f:81066)},
  Mrreviewer =	 {Andrea Posilicano},
  Number =	 {7},
  Pages =	 {787--800},
  Title =	 {On the self-adjointness and domain of {P}auli-{F}ierz type {H}amiltonians},
  optt =	 {On the self-adjointness and domain of Pauli-Fierz type Hamiltonians},
  Url =		 {http://dx.doi.org/10.1142/S0129055X08003389},
  Volume =	 {20},
  Year =	 {2008},
  archivePrefix ={arXiv},
  eprint =	 {0707.1713},
  primaryClass = {math-ph},
  OPTkey =	 {},
  OPTmonth =	 {},
  OPTnote =	 {},
  OPTannote =	 {}
}

@article {hepp1965cmp,
  AUTHOR =	 {Hepp, Klaus},
  TITLE =	 {On the connection between the {LSZ} and {W}ightman quantum field theory},
  JOURNAL =	 {Comm. Math. Phys.},
  FJOURNAL =	 {Communications in Mathematical Physics},
  VOLUME =	 {1},
  YEAR =	 {1965},
  PAGES =	 {95--111},
  ISSN =	 {0010-3616},
  MRCLASS =	 {81.41},
  MRNUMBER =	 {0181277},
  MRREVIEWER =	 {H. Araki},
  URL =		 {http://projecteuclid.org/euclid.cmp/1103758732},
  opta = 	 {K. Hepp},
  optt = 	 {On the connection between the LSZ and Wightman quantum field theory},
  OPTkey = 	 {},
  OPTnumber = 	 {},
  OPTmonth = 	 {},
  OPTnote = 	 {},
  OPTarchivePrefix = {},
  OPTeprint = 	 {},
  OPTprimaryclass = {},
  OPTdoi = 	 {},
  OPTannote = 	 {},
}

@book{hepp1969,
  Adsurl =	 {http://adsabs.harvard.edu/abs/1969LNP.....2.....H},
  Author =	 {Hepp, K.},
  opta =	 {K. Hepp},
  Booktitle =	 {Lecture Notes in Physics},
  Series =	 {Lecture Notes in Physics},
  Title =	 {{Th{\'e}orie de la renormalisation}},
  optt =	 {Théorie de la renormalisation},
  Volume =	 {2},
  Year =	 {1969},
  ALTeditor = 	 {},
  publisher = 	 {Berlin Springer Verlag},
  OPTkey = 	 {},
  OPTnumber = 	 {},
  OPTaddress = 	 {},
  OPTedition = 	 {},
  OPTmonth = 	 {},
  OPTnote = 	 {},
  OPTannote = 	 {}
}

@article {hidaka2021jst,
  AUTHOR =	 {Hidaka, Takeru and Hiroshima, Fumio and Sasaki, Itaru},
  TITLE =	 {Spectrum of the semi-relativistic {P}auli-{F}ierz model {II}},
  JOURNAL =	 {J. Spectr. Theory},
  FJOURNAL =	 {Journal of Spectral Theory},
  VOLUME =	 {11},
  YEAR =	 {2021},
  NUMBER =	 {4},
  PAGES =	 {1779--1830},
  ISSN =	 {1664-039X,1664-0403},
  MRCLASS =	 {81Q10 (35Q40 47A10 47B25 81R12)},
  MRNUMBER =	 {4349675},
  DOI =		 {10.4171/jst/386},
  URL =		 {https://doi.org/10.4171/jst/386},
  opta = 	 {T. Hidaka, F. Hiroshima, I. Sasaki},
  optt = 	 {Spectrum of the semi-relativistic Pauli-Fierz model II},
  OPTkey = 	 {},
  OPTFile = 	 {},
  OPTmonth = 	 {},
  OPTnote = 	 {},
  OPTarchivePrefix = {},
  OPTeprint = 	 {},
  OPTprimaryclass = {},
  OPTannote = 	 {},
}

@article {hirokawa2005am,
  AUTHOR =	 {Hirokawa, Masao and Hiroshima, Fumio and Spohn, Herbert},
  TITLE =	 {Ground state for point particles interacting through a massless scalar {B}ose field},
  JOURNAL =	 {Adv. Math.},
  FJOURNAL =	 {Advances in Mathematics},
  VOLUME =	 {191},
  YEAR =	 {2005},
  NUMBER =	 {2},
  PAGES =	 {339--392},
  ISSN =	 {0001-8708},
  MRCLASS =	 {81T10 (81T15 81V10)},
  MRNUMBER =	 {2103217},
  MRREVIEWER =	 {Andrea Posilicano},
  DOI =		 {10.1016/j.aim.2004.03.011},
  URL =		 {https://doi.org/10.1016/j.aim.2004.03.011},
  opta = 	 {M. Hirokawa, F. Hiroshima, H. Spohn},
  optt = 	 {Ground state for point particles interacting through a massless scalar Bose field},
  OPTkey = 	 {},
  OPTFile = 	 {},
  OPTmonth = 	 {},
  OPTnote = 	 {},
  OPTarchivePrefix = {},
  OPTeprint = 	 {},
  OPTprimaryclass = {},
  OPTannote = 	 {},
}

@article {hirokawa2006prims,
  AUTHOR =	 {Hirokawa, Masao},
  TITLE =	 {Infrared catastrophe for {N}elson's model---non-existence of ground state and soft-boson divergence},
  JOURNAL =	 {Publ. Res. Inst. Math. Sci.},
  FJOURNAL =	 {Kyoto University. Research Institute for Mathematical Sciences. Publications},
  VOLUME =	 {42},
  YEAR =	 {2006},
  NUMBER =	 {4},
  PAGES =	 {897--922},
  ISSN =	 {0034-5318},
  MRCLASS =	 {81T10 (47N50 81T08 81V10)},
  MRNUMBER =	 {2289153},
  MRREVIEWER =	 {Fumio Hiroshima},
  URL =		 {http://projecteuclid.org/euclid.prims/1166642191},
  opta = 	 {M. Hirokawa},
  optt = 	 {Infrared catastrophe for Nelson's model - non-existence of ground state and soft-boson divergence},
  OPTkey = 	 {},
  OPTmonth = 	 {},
  OPTnote = 	 {},
  OPTarchivePrefix = {},
  OPTeprint = 	 {},
  OPTprimaryclass = {},
  OPTdoi = 	 {},
  OPTannote = 	 {},
}

@article {hiroshima2001tams,
  AUTHOR =	 {Hiroshima, Fumio},
  TITLE =	 {Ground states and spectrum of quantum electrodynamics of nonrelativistic particles},
  JOURNAL =	 {Trans. Amer. Math. Soc.},
  FJOURNAL =	 {Transactions of the American Mathematical Society},
  VOLUME =	 {353},
  YEAR =	 {2001},
  NUMBER =	 {11},
  PAGES =	 {4497--4528},
  ISSN =	 {0002-9947},
  MRCLASS =	 {81Q10 (35P05 35Q40 47N50 81V10)},
  MRNUMBER =	 {1851181},
  MRREVIEWER =	 {Lech Jak\'{o}bczyk},
  DOI =		 {10.1090/S0002-9947-01-02719-2},
  URL =		 {https://doi.org/10.1090/S0002-9947-01-02719-2},
  opta = 	 {F. Hiroshima},
  optt = 	 {Ground states and spectrum of quantum electrodynamics of nonrelativistic particles},
  OPTkey = 	 {},
  OPTmonth = 	 {},
  OPTnote = 	 {},
  OPTarchivePrefix = {},
  OPTeprint = 	 {},
  OPTprimaryclass = {},
  OPTannote = 	 {},
}

@article {hiroshima2022rmp,
  AUTHOR =	 {Hiroshima, Fumio and Matte, Oliver},
  TITLE =	 {Ground states and associated path measures in the renormalized {N}elson model},
  JOURNAL =	 {Rev. Math. Phys.},
  FJOURNAL =	 {Reviews in Mathematical Physics. A Journal for Both Review and Original Research Papers in the Field of Mathematical
                  Physics},
  VOLUME =	 {34},
  YEAR =	 {2022},
  NUMBER =	 {2},
  PAGES =	 {Paper No. 2250002, 84},
  ISSN =	 {0129-055X},
  MRCLASS =	 {47A75 (47D08 60H30 81Q10 81T16)},
  MRNUMBER =	 {4386953},
  MRREVIEWER =	 {Roberto Quezada},
  DOI =		 {10.1142/S0129055X22500027},
  URL =		 {https://doi.org/10.1142/S0129055X22500027},
  opta = 	 {F. Hiroshima, O. Matte},
  optt = 	 {Ground states and associated path measures in the renormalized Nelson model},
  OPTkey = 	 {},
  OPTFile = 	 {},
  OPTmonth = 	 {},
  OPTnote = 	 {},
  OPTarchivePrefix = {},
  OPTeprint = 	 {},
  OPTprimaryclass = {},
  OPTannote = 	 {},
}

@article{jaffe2014arxiv,
  Author =	 {Arthur Jaffe},
  opta =	 {A. Jaffe},
  Eprint =	 {1411.2964},
  Title =	 {{Quantum Fields, Stochastic PDE, and Reflection Positivity}},
  optt =	 {Quantum Fields, Stochastic PDE, and Reflection Positivity},
  Url =		 {http://arxiv.org/abs/1411.2964},
  Year =	 {2014},
  journal = 	 {ArXiv e-prints},
  OPTkey = 	 {},
  OPTvolume = 	 {},
  OPTnumber = 	 {},
  OPTpages = 	 {},
  OPTmonth = 	 {},
  OPTnote = 	 {},
  OPTannote = 	 {}
}

@article{jaksic1995aihppt,
  Author =	 {Jak{\v{s}}i{\'c}, V. and Pillet, C.-A.},
  opta =	 {V. Jakšić, C.-A. Pillet},
  Coden =	 {AHPAAO},
  Fjournal =	 {Annales de l'Institut Henri Poincar\'e. Physique Th\'eorique},
  Issn =	 {0246-0211},
  Journal =	 {Ann. Inst. H. Poincar\'e Phys. Th\'eor.},
  Mrclass =	 {82C21 (82C10)},
  Mrnumber =	 {1313360 (95m:82090)},
  Mrreviewer =	 {Asao Arai},
  Number =	 {1},
  Pages =	 {47--68},
  Title =	 {On a model for quantum friction. {I}. {F}ermi's golden rule and dynamics at zero temperature},
  optt =	 {On a model for quantum friction. I. Fermi's golden rule and dynamics at zero temperature},
  Url =		 {http://www.numdam.org/item?id=AIHPA_1995__62_1_47_0},
  Volume =	 {62},
  Year =	 {1995},
  OPTkey = 	 {},
  OPTmonth = 	 {},
  OPTnote = 	 {},
  OPTannote = 	 {},
}

@article{jaksic1996cmp,
  Author =	 {Jak{\v{s}}i{\'c}, V. and Pillet, C.-A.},
  opta =	 {V. Jakšić, C.-A. Pillet},
  Coden =	 {CMPHAY},
  Fjournal =	 {Communications in Mathematical Physics},
  Issn =	 {0010-3616},
  Journal =	 {Comm. Math. Phys.},
  Mrclass =	 {82C21 (81V70 82C10)},
  Mrnumber =	 {1376434 (97b:82074)},
  Mrreviewer =	 {Asao Arai},
  Number =	 {3},
  Pages =	 {619--644},
  Title =	 {On a model for quantum friction. {II}. {F}ermi's golden rule and dynamics at positive temperature},
  optt =	 {On a model for quantum friction. II. Fermi's golden rule and dynamics at positive temperature},
  Url =		 {http://projecteuclid.org/getRecord?id=euclid.cmp/1104286117},
  Volume =	 {176},
  Year =	 {1996},
  OPTkey = 	 {},
  OPTmonth = 	 {},
  OPTnote = 	 {},
  OPTannote = 	 {}
}

@article{jaksic1996cmp2,
  Author =	 {Jak{\v{s}}i{\'c}, Vojkan and Pillet, Claude-Alain},
  opta =	 {V. Jakšić, C.-A. Pillet},
  Coden =	 {CMPHAY},
  Fjournal =	 {Communications in Mathematical Physics},
  Issn =	 {0010-3616},
  Journal =	 {Comm. Math. Phys.},
  Mrclass =	 {82C10 (46L60)},
  Mrnumber =	 {1395208 (98e:82035)},
  Mrreviewer =	 {Bruno Nachtergaele},
  Number =	 {3},
  Pages =	 {627--651},
  Title =	 {On a model for quantum friction. {III}. {E}rgodic properties of the spin-boson system},
  optt =	 {On a model for quantum friction. III. Ergodic properties of the spin-boson system},
  Url =		 {http://projecteuclid.org/getRecord?id=euclid.cmp/1104286769},
  Volume =	 {178},
  Year =	 {1996},
  OPTkey = 	 {},
  OPTmonth = 	 {},
  OPTnote = 	 {},
  OPTannote = 	 {}
}

@article {jaksic2002cmp,
  AUTHOR =	 {Jak\v{s}i\'{c}, V. and Pillet, C.-A.},
  TITLE =	 {Non-equilibrium steady states of finite quantum systems coupled to thermal reservoirs},
  JOURNAL =	 {Comm. Math. Phys.},
  FJOURNAL =	 {Communications in Mathematical Physics},
  VOLUME =	 {226},
  YEAR =	 {2002},
  NUMBER =	 {1},
  PAGES =	 {131--162},
  ISSN =	 {0010-3616},
  MRCLASS =	 {82C10 (46L60)},
  MRNUMBER =	 {1889995},
  MRREVIEWER =	 {Izumi Ojima},
  DOI =		 {10.1007/s002200200602},
  URL =		 {https://doi.org/10.1007/s002200200602},
  opta = 	 {V. Jakšić, C.-A. Pillet},
  optt = 	 {Non-equilibrium steady states of finite quantum systems coupled to thermal reservoirs},
  OPTkey = 	 {},
  OPTFile = 	 {},
  OPTmonth = 	 {},
  OPTnote = 	 {},
  OPTarchivePrefix = {},
  OPTeprint = 	 {},
  OPTprimaryclass = {},
  OPTannote = 	 {},
}

@article {koenenberg2015cmp,
  AUTHOR =	 {K\"{o}nenberg, M. and Merkli, M. and Song, H.},
  TITLE =	 {Ergodicity of the spin-boson model for arbitrary coupling strength},
  JOURNAL =	 {Comm. Math. Phys.},
  FJOURNAL =	 {Communications in Mathematical Physics},
  VOLUME =	 {336},
  YEAR =	 {2015},
  NUMBER =	 {1},
  PAGES =	 {261--285},
  ISSN =	 {0010-3616},
  MRCLASS =	 {81Q20 (46N40)},
  MRNUMBER =	 {3322374},
  DOI =		 {10.1007/s00220-014-2242-3},
  URL =		 {https://doi.org/10.1007/s00220-014-2242-3},
  opta = 	 {M. Könenberg, M. Merkli, H. Song},
  optt = 	 {Ergodicity of the spin-boson model for arbitrary coupling strength},
  OPTkey = 	 {},
  OPTmonth = 	 {},
  OPTnote = 	 {},
  OPTarchivePrefix = {},
  OPTeprint = 	 {},
  OPTprimaryclass = {},
  OPTannote = 	 {},
}

@Article{lieb1958pr,
  Author =	 {Lieb, Elliott H. and Yamazaki, Kazuo},
  Title =	 {Ground-state energy and effective mass of the polaron},
  FJournal =	 {Physical Review, II. Series},
  Journal =	 {Phys. Rev., II. Ser.},
  ISSN =	 {0031-899X},
  Volume =	 {111},
  Pages =	 {728--733},
  Year =	 {1958},
  Language =	 {English},
  DOI =		 {10.1103/PhysRev.111.728},
  opta = 	 {E.H. Lieb, K. Yamazaki},
  optt = 	 {},
  OPTkeywords = 	 {},
  OPTkey = 	 {},
  OPTFile = 	 {},
  OPTnumber = 	 {},
  OPTmonth = 	 {},
  OPTnote = 	 {},
  OPTarchivePrefix = {},
  OPTeprint = 	 {},
  OPTprimaryclass = {},
  OPTannote = 	 {}
}

@article{lieb1997cmp,
  Author =	 {Lieb, Elliott H. and Thomas, Lawrence E.},
  opta =	 {E. H. Lieb, L. E. Thomas},
  Coden =	 {CMPHAY},
  Doi =		 {10.1007/s002200050040},
  Fjournal =	 {Communications in Mathematical Physics},
  Issn =	 {0010-3616},
  Journal =	 {Comm. Math. Phys.},
  Mrclass =	 {81Q15 (81R30 81V70)},
  Mrnumber =	 {1462224 (99a:81034a)},
  Mrreviewer =	 {Nicolae Angelescu},
  Number =	 {3},
  Pages =	 {511--519},
  Title =	 {Exact ground state energy of the strong-coupling polaron},
  optt =	 {Exact ground state energy of the strong-coupling polaron},
  Url =		 {http://dx.doi.org/10.1007/s002200050040},
  Volume =	 {183},
  Year =	 {1997},
  OPTkey = 	 {},
  OPTmonth = 	 {},
  OPTnote = 	 {},
  OPTannote = 	 {}
}

@ARTICLE{lieb2019arxiv,
  author =	 {{Lieb}, Elliott H. and {Seiringer}, Robert},
  title =	 {Divergence of the effective mass of a polaron in the strong coupling limit},
  journal =	 {J. Stat. Phys.},
  year =	 {2020},
  optmonth =	 {Feb},
  opteid =		 {arXiv:1902.04025},
  pages =	 {23==33},
  archivePrefix ={arXiv},
  eprint =	 {1902.04025},
  primaryClass = {math-ph},
  opta =	 {E. H. Lieb, R. Seiringer},
  optt =	 {Divergence of the effective mass of a polaron in the strong coupling limit},
  OPTkey =	 {},
  volume =	 {180},
  number =	 {1},
  OPTnote =	 {},
  doi =	         {10.1007/s10955-019-02322-3},
  OPTannote =	 {}
}

@book{mastropietro2008npr,
  Author =	 {Mastropietro, Vieri},
  opta =	 {V. Mastropietro},
  Doi =		 {10.1142/9789812792402},
  Isbn =	 {978-981-279-239-6; 981-279-239-2},
  Mrclass =	 {81T16 (81-02 81T17 82B10 82B20)},
  Mrnumber =	 {2414874 (2009m:81156)},
  Mrreviewer =	 {Domingos H. U. Marchetti},
  Pages =	 {xii+290},
  Publisher =	 {World Scientific Publishing Co. Pte. Ltd., Hackensack, NJ},
  Title =	 {Non-perturbative renormalization},
  optt =	 {Non-perturbative renormalization},
  Url =		 {http://dx.doi.org/10.1142/9789812792402},
  Year =	 {2008},
  ALTeditor = 	 {},
  OPTkey = 	 {},
  OPTvolume = 	 {},
  OPTnumber = 	 {}, 
  OPTseries = 	 {},
  OPTaddress = 	 {},
  OPTedition = 	 {},
  OPTmonth = 	 {},
  OPTnote = 	 {},
  OPTannote = 	 {},
}

@article {merkli2007prl,
  AUTHOR =	 {Merkli, M. and Sigal, I. M. and Berman, G. P.},
  TITLE =	 {Decoherence and thermalization},
  JOURNAL =	 {Phys. Rev. Lett.},
  FJOURNAL =	 {Physical Review Letters},
  VOLUME =	 {98},
  YEAR =	 {2007},
  NUMBER =	 {13},
  PAGES =	 {130401, 4},
  ISSN =	 {0031-9007},
  MRCLASS =	 {82C10 (81V80)},
  MRNUMBER =	 {2295681},
  DOI =		 {10.1103/PhysRevLett.98.130401},
  URL =		 {https://doi.org/10.1103/PhysRevLett.98.130401},
  opta = 	 {M. Merkli, I.M. Sigal, G.P. Berman},
  optt = 	 {Decoherence and thermalization},
  OPTkey = 	 {},
  OPTmonth = 	 {},
  OPTnote = 	 {},
  OPTarchivePrefix = {},
  OPTeprint = 	 {},
  OPTprimaryclass = {},
  OPTannote = 	 {},
}

@article {miyatake1952osaka,
  AUTHOR =	 {Miyatake, Osamu},
  TITLE =	 {On the non-existence of solution of field equations in quantum mechanics},
  JOURNAL =	 {J. Inst. Polytech. Osaka City Univ. Ser. A},
  FJOURNAL =	 {Journal of the Institute of Polytechnics. Osaka City University. Series A. Mathematics},
  VOLUME =	 {2},
  YEAR =	 {1952},
  PAGES =	 {89--99},
  ISSN =	 {0388-0516},
  MRCLASS =	 {81.0X},
  MRNUMBER =	 {52990},
  MRREVIEWER =	 {Freeman J. Dyson},
  opta = 	 {O. Miyatake},
  optt = 	 {On the non-existence of solution of field equations in quantum mechanics},
  OPTkey = 	 {},
  OPTFile = 	 {},
  OPTnumber = 	 {},
  OPTmonth = 	 {},
  OPTnote = 	 {},
  OPTarchivePrefix = {},
  OPTeprint = 	 {},
  OPTprimaryclass = {},
  OPTdoi = 	 {},
  OPTannote = 	 {},
}

@article{nelson1964jmp,
  AUTHOR =	 {Nelson, Edward},
  opta =	 {E. Nelson},
  TITLE =	 {Interaction of nonrelativistic particles with a quantized scalar field},
  optt =	 {Interaction of nonrelativistic particles with a quantized scalar field},
  JOURNAL =	 {J. Math. Phys.},
  FJOURNAL =	 {Journal of Mathematical Physics},
  VOLUME =	 {5},
  YEAR =	 {1964},
  PAGES =	 {1190--1197},
  ISSN =	 {0022-2488},
  MRCLASS =	 {81.47},
  MRNUMBER =	 {0175537},
  MRREVIEWER =	 {O. W. Greenberg},
  URL =		 {https://doi.org/10.1063/1.1704225},
  OPTkey = 	 {},
  OPTnumber = 	 {},
  OPTmonth = 	 {},
  OPTnote = 	 {},
  OPTannote = 	 {},
}

@article{osterwalder1973cmp,
  Author =	 {Osterwalder, Konrad and Schrader, Robert},
  opta =	 {K. Osterwalder, R. Schrader},
  Fjournal =	 {Communications in Mathematical Physics},
  Issn =	 {0010-3616},
  Journal =	 {Comm. Math. Phys.},
  Mrclass =	 {81.46},
  Mrnumber =	 {0329492},
  Mrreviewer =	 {W. Guttinger},
  Pages =	 {83--112},
  Title =	 {Axioms for {E}uclidean {G}reen's functions},
  optt =	 {Axioms for Euclidean Green's functions},
  Url =		 {http://projecteuclid.org/euclid.cmp/1103858969},
  Volume =	 {31},
  Year =	 {1973},
  OPTkey = 	 {},
  OPTnumber = 	 {},
  OPTmonth = 	 {},
  OPTnote = 	 {},
  OPTannote = 	 {}
}

@article{osterwalder1975cmp,
  Author =	 {Osterwalder, Konrad and Schrader, Robert},
  opta =	 {K. Osterwalder, R. Schrader},
  Fjournal =	 {Communications in Mathematical Physics},
  Issn =	 {0010-3616},
  Journal =	 {Comm. Math. Phys.},
  Mrclass =	 {81.46},
  Mrnumber =	 {0376002},
  Mrreviewer =	 {W. Guttinger},
  Note =	 {With an appendix by Stephen Summers},
  Pages =	 {281--305},
  Title =	 {Axioms for {E}uclidean {G}reen's functions. {II}},
  optt =	 {Axioms for Euclidean Green's functions. II},
  Url =		 {http://projecteuclid.org/euclid.cmp/1103899050},
  Volume =	 {42},
  Year =	 {1975},
  OPTkey = 	 {},
  OPTnumber = 	 {},
  OPTmonth = 	 {},
  OPTannote = 	 {}
}

@article {posilicano2020mpag,
  AUTHOR =	 {Posilicano, Andrea},
  TITLE =	 {On the self-adjointness of {$H+A^*+A$}},
  JOURNAL =	 {Math. Phys. Anal. Geom.},
  FJOURNAL =	 {Mathematical Physics, Analysis and Geometry. An International Journal Devoted to the Theory and Applications of
                  Analysis and Geometry to Physics},
  VOLUME =	 {23},
  YEAR =	 {2020},
  NUMBER =	 {4},
  PAGES =	 {Paper No. 37, 31},
  ISSN =	 {1385-0172},
  MRCLASS =	 {47B25 (47A10 47A55 81Q10 81T16)},
  MRNUMBER =	 {4163511},
  MRREVIEWER =	 {Hatem Najar},
  DOI =		 {10.1007/s11040-020-09359-x},
  URL =		 {https://doi.org/10.1007/s11040-020-09359-x},
  opta = 	 {A. Posilicano},
  optt = 	 {On the self-adjointness of H+A^*+A},
  OPTkey = 	 {},
  OPTFile = 	 {},
  OPTmonth = 	 {},
  OPTnote = 	 {},
  OPTarchivePrefix = {},
  OPTeprint = 	 {},
  OPTprimaryclass = {},
  OPTannote = 	 {},
}

@book{reed1972I,
  Address =	 {New York},
  Author =	 {Reed, Michael and Simon, Barry},
  opta =	 {M. Reed, B. Simon},
  Mrclass =	 {47-02 (81.47)},
  Mrnumber =	 {0493419 (58 \#12429a)},
  Mrreviewer =	 {P. R. Chernoff},
  Pages =	 {xvii+325},
  Publisher =	 {Academic Press},
  Title =	 {Methods of modern mathematical physics. {I}. {F}unctional analysis},
  optt =	 {Methods of modern mathematical physics. I. Functional analysis},
  Year =	 {1972},
  ALTeditor = 	 {},
  OPTkey = 	 {},
  OPTvolume = 	 {},
  OPTnumber = 	 {},
  OPTseries = 	 {},
  OPTedition = 	 {},
  OPTmonth = 	 {},
  OPTnote = 	 {},
  OPTannote = 	 {}
}

@book{reed1979III,
  Address =	 {New York},
  Author =	 {Reed, Michael and Simon, Barry},
  opta =	 {M. Reed, B. Simon},
  Isbn =	 {0-12-585003-4},
  Mrclass =	 {81F99},
  Mrnumber =	 {529429 (80m:81085)},
  Mrreviewer =	 {P. R. Chernoff},
  Note =	 {Scattering theory},
  Pages =	 {xv+463},
  Publisher =	 {Academic Press},
  Title =	 {Methods of modern mathematical physics. {III}. {S}cattering {T}heory},
  optt =	 {Methods of modern mathematical physics. III. Scattering Theory},
  Year =	 {1979},
  ALTeditor = 	 {},
  OPTkey = 	 {},
  OPTvolume = 	 {},
  OPTnumber = 	 {},
  OPTseries = 	 {},
  OPTedition = 	 {},
  OPTmonth = 	 {},
  OPTannote = 	 {}
}

@article{segal1956am+tams,
  Author =	 {Segal, I. E.},
  opta =	 {I. E. Segal},
  Journal =	 {Trans. Amer. Math. Soc. 81 {\&} Ann. Math. (2) 63},
  OPTPages =	 {},
  Title =	 {Tensor algebras over {H}ilbert spaces. {I \& II}},
  optt =	 {Tensor algebras over Hilbert spaces. I&II},
  OPTVolume =	 {},
  Year =	 {1956},
  OPTFile = 	 {},
  OPTkey = 	 {},
  OPTnumber = 	 {},
  OPTmonth = 	 {},
  OPTnote = 	 {},
  OPTannote = 	 {}
}

@article{segal1959mfmdvs,
  Author =	 {Segal, I. E.},
  opta =	 {I. E. Segal},
  Fjournal =	 {Det Kongelige Danske Videnskabernes Selskab Matematisk-Fysiske Meddelelser},
  Journal =	 {Mat.-Fys. Medd. Danske Vid. Selsk.},
  Mrclass =	 {81.00 (46.00)},
  Mrnumber =	 {0112626 (22 \#3477)},
  Mrreviewer =	 {G. W. Mackey},
  Number =	 {12},
  Pages =	 {39 pp. (1959)},
  Title =	 {Foundations of the theory of dynamical systems of infinitely many degrees of freedom. {I}},
  optt =	 {Foundations of the theory of dynamical systems of infinitely many degrees of freedom. I},
  Volume =	 {31},
  Year =	 {1959},
  OPTkey = 	 {},
  OPTmonth = 	 {},
  OPTnote = 	 {},
  OPTannote = 	 {}
}

@article{segal1960jmp,
  Author =	 {Segal, I. E.},
  opta =	 {I. E. Segal},
  Fjournal =	 {Journal of Mathematical Physics},
  Issn =	 {0022-2488},
  Journal =	 {J. Math. Phys.},
  Mrclass =	 {81.46 (57.50)},
  Mrnumber =	 {0135093 (24 \#B1144)},
  Mrreviewer =	 {A. S. Wightman},
  Pages =	 {468--488},
  Title =	 {Quantization of nonlinear systems},
  optt =	 {Quantization of nonlinear systems},
  Volume =	 {1},
  Year =	 {1960},
  OPTkey = 	 {},
  OPTnumber = 	 {},
  OPTmonth = 	 {},
  OPTnote = 	 {},
  OPTannote = 	 {}
}

@article{segal1961cjm,
  Author =	 {Segal, I. E.},
  opta =	 {I. E. Segal},
  Fjournal =	 {Canadian Journal of Mathematics. Journal Canadien de Math\'ematiques},
  Issn =	 {0008-414X},
  Journal =	 {Canad. J. Math.},
  Mrclass =	 {81.00},
  Mrnumber =	 {0128839 (23 \#B1877)},
  Mrreviewer =	 {G. W. Mackey},
  Pages =	 {1--18},
  Title =	 {Foundations of the theory of dyamical systems of infinitely many degrees of freedom. {II}},
  optt =	 {Foundations of the theory of dyamical systems of infinitely many degrees of freedom. II},
  Volume =	 {13},
  Year =	 {1961},
  OPTkey =	 {},
  OPTnumber =	 {},
  OPTmonth =	 {},
  OPTnote =	 {},
  OPTannote =	 {}
}

@Article{sellke2022duke,
  author = 	 {Sellke, Mark},
  opta = 	 {M. Sellke},
  title = 	 {{Almost Quartic Lower Bound for the Fr\"{o}hlich Polaron's Effective Mass via Gaussian Domination}},
  optt = 	 {Almost Quartic Lower Bound for the Fröhlich Polaron's Effective Mass via Gaussian Domination},
  journal = 	 {Duke Math. J.},
  year = 	 {2024},
  OPTkey = 	 {},
  OPTFile = 	 {},
  OPTvolume = 	 {},
  OPTnumber = 	 {},
  OPTpages = 	 {},
  OPTmonth = 	 {},
  note = 	 {in press},
  archivePrefix = {arXiv},
  eprint = 	 {2212.14023},
  primaryclass = {math-ph},
  OPTdoi = 	 {},
  OPTannote = 	 {}
}

@article{simon1972jfa,
  Author =	 {Simon, Barry and H{\o}egh-Krohn, Raphael},
  opta =	 {B. Simon, R. Høegh-Krohn},
  Journal =	 {J. Funct. Anal.},
  Fjournal =	 {Journal of Functional Analysis},
  Mrclass =	 {47A55 (81.47)},
  Mrnumber =	 {0293451 (45 \#2528)},
  Mrreviewer =	 {S. Gudder},
  Pages =	 {121--180},
  Title =	 {Hypercontractive semigroups and two dimensional self-coupled {B}ose fields},
  optt =	 {Hypercontractive semigroups and two dimensional self-coupled Bose fields},
  Volume =	 {9},
  Year =	 {1972},
  OPTkey = 	 {},
  OPTnumber = 	 {},
  OPTmonth = 	 {},
  OPTnote = 	 {},
  OPTannote = 	 {}
}

@book{simon1974psp,
  Address =	 {Princeton, N.J.},
  Author =	 {Simon, Barry},
  opta =	 {B. Simon},
  Mrclass =	 {81.60},
  Mrnumber =	 {0489552 (58 \#8968)},
  Mrreviewer =	 {Raymond Streater},
  Note =	 {Princeton Series in Physics},
  Pages =	 {xx+392},
  Publisher =	 {Princeton University Press},
  Title =	 {The {$P(\phi )_{2}$} {E}uclidean (quantum) field theory},
  optt =	 {The P(φ)₂ Euclidean (quantum) field theory},
  Year =	 {1974},
  ALTeditor = 	 {},
  OPTkey = 	 {},
  OPTvolume = 	 {},
  OPTnumber = 	 {},
  OPTseries = 	 {},
  OPTedition = 	 {},
  OPTmonth = 	 {},
  OPTannote = 	 {}
}

@book{streater2000plp,
  Author =	 {Streater, R. F. and Wightman, A. S.},
  opta =	 {R. F. Streater, A. S. Wightman},
  Isbn =	 {0-691-07062-8},
  Mrclass =	 {81T08 (01A75 81-02)},
  Mrnumber =	 {1884336 (2003f:81154)},
  Note =	 {Corrected third printing of the 1978 edition},
  Pages =	 {x+207},
  Publisher =	 {Princeton University Press, Princeton, NJ},
  Series =	 {Princeton Landmarks in Physics},
  Title =	 {P{CT}, spin and statistics, and all that},
  optt =	 {PCT, spin and statistics, and all that},
  Year =	 {2000},
  ALTeditor = 	 {},
  OPTkey = 	 {},
  OPTvolume = 	 {},
  OPTnumber = 	 {},
  OPTaddress = 	 {},
  OPTedition = 	 {},
  OPTmonth = 	 {},
  OPTannote = 	 {}
}

@article{trubatch1968pr,
  title =	 {{Resolution of the Runaway Problem for the Polaron}},
  author =	 {Trubatch, S. L.},
  journal =	 {Phys. Rev.},
  volume =	 {174},
  issue =	 {5},
  pages =	 {1555--1558},
  OPTnumpages =	 {0},
  year =	 {1968},
  month =	 {Oct},
  publisher =	 {American Physical Society},
  doi =		 {10.1103/PhysRev.174.1555},
  url =		 {https://link.aps.org/doi/10.1103/PhysRev.174.1555},
  opta = 	 {S. L. Trubatch},
  optt = 	 {Resolution of the Runaway Problem for the Polaron},
  OPTkey = 	 {},
  OPTnumber = 	 {},
  OPTnote = 	 {},
  OPTarchivePrefix = {},
  OPTeprint = 	 {},
  OPTprimaryclass = {},
  OPTannote = 	 {}
}

@article {vanhove1952physica,
  AUTHOR =	 {Van Hove, L\'{e}on},
  TITLE =	 {Les difficult\'{e}s de divergences pour un modelle particulier de champ quantifi\'{e}},
  JOURNAL =	 {Physica},
  FJOURNAL =	 {Physica},
  VOLUME =	 {18},
  YEAR =	 {1952},
  PAGES =	 {145--159},
  ISSN =	 {0031-8914},
  MRCLASS =	 {81.0X},
  MRNUMBER =	 {49093},
  MRREVIEWER =	 {A. J. Coleman},
  opta = 	 {L. Van Hove},
  optt = 	 {Les difficultés de divergences pour un modelle particulier de champ quantifié},
  OPTkey = 	 {},
  OPTFile = 	 {},
  OPTnumber = 	 {},
  OPTmonth = 	 {},
  OPTnote = 	 {},
  OPTarchivePrefix = {},
  OPTeprint = 	 {},
  OPTprimaryclass = {},
  OPTdoi = 	 {},
  OPTannote = 	 {},
}

@book {weinberg1996I,
  AUTHOR =	 {Weinberg, Steven},
  TITLE =	 {The quantum theory of fields. {V}ol. {I}},
  NOTE =	 {Foundations, Corrected reprint of the 1995 original},
  PUBLISHER =	 {Cambridge University Press, Cambridge},
  YEAR =	 {1996},
  PAGES =	 {xxvi+609},
  ISBN =	 {0-521-55001-7},
  MRCLASS =	 {81Txx (81-01 81-02)},
  MRNUMBER =	 {1410064},
  MRREVIEWER =	 {R. Delbourgo},
  DOI =		 {10.1017/CBO9781139644174},
  opta =	 {S. Weinberg},
  ALTeditor =	 {},
  optt =	 {The quantum theory of fields. Vol. I},
  OPTkey =	 {},
  OPTvolume =	 {},
  OPTnumber =	 {},
  OPTseries =	 {},
  OPTaddress =	 {},
  OPTedition =	 {},
  OPTmonth =	 {},
  OPTarchivePrefix ={},
  OPTeprint =	 {},
  OPTprimaryclass ={},
  OPTannote =	 {},
}

@book {weinberg1996II,
  AUTHOR =	 {Weinberg, Steven},
  TITLE =	 {The quantum theory of fields. {V}ol. {II}},
  NOTE =	 {Modern applications},
  PUBLISHER =	 {Cambridge University Press, Cambridge},
  YEAR =	 {1996},
  PAGES =	 {xxii+489},
  ISBN =	 {0-521-55002-5},
  MRCLASS =	 {81Txx (81-01 81-02)},
  MRNUMBER =	 {1411911},
  MRREVIEWER =	 {R. Delbourgo},
  DOI =		 {10.1017/CBO9781139644174},
  opta = 	 {S. Weinberg},
  ALTeditor = 	 {},
  optt = 	 {The quantum theory of fields. Vol. II},
  OPTkey = 	 {},
  OPTvolume = 	 {},
  OPTnumber = 	 {},
  OPTseries = 	 {},
  OPTaddress = 	 {},
  OPTedition = 	 {},
  OPTmonth = 	 {},
  OPTarchivePrefix = {},
  OPTeprint = 	 {},
  OPTprimaryclass = {},
  OPTannote = 	 {},
}

@book {weinberg2000III,
  AUTHOR =	 {Weinberg, Steven},
  TITLE =	 {The quantum theory of fields. {V}ol. {III}},
  NOTE =	 {Supersymmetry},
  PUBLISHER =	 {Cambridge University Press, Cambridge},
  YEAR =	 {2000},
  PAGES =	 {xxii+419},
  ISBN =	 {0-521-66000-9},
  MRCLASS =	 {81T60 (81-02 81R40 81T10 81T18)},
  MRNUMBER =	 {1737296},
  MRREVIEWER =	 {R. Delbourgo},
  DOI =		 {10.1017/CBO9781139644198},
  opta =	 {S. Weinberg},
  ALTeditor =	 {},
  optt =	 {The quantum theory of fields. Vol. III},
  OPTkey =	 {},
  OPTvolume =	 {},
  OPTnumber =	 {},
  OPTseries =	 {},
  OPTaddress =	 {},
  OPTedition =	 {},
  OPTmonth =	 {},
  OPTarchivePrefix ={},
  OPTeprint =	 {},
  OPTprimaryclass ={},
  OPTannote =	 {},
}

@article{wightman1956pr,
  Author =	 {Wightman, A. S.},
  opta =	 {A. S. Wightman},
  Doi =		 {10.1103/PhysRev.101.860},
  Issue =	 {2},
  Journal =	 {Phys. Rev.},
  Fjournal =	 {Physical Review},
  Numpages =	 {0},
  Pages =	 {860--866},
  Publisher =	 {American Physical Society},
  Title =	 {{Quantum Field Theory in Terms of Vacuum Expectation Values}},
  optt =	 {Quantum Field Theory in Terms of Vacuum Expectation Values},
  Url =		 {http://link.aps.org/doi/10.1103/PhysRev.101.860},
  Volume =	 {101},
  Year =	 {1956},
  OPTkey = 	 {},
  OPTnumber = 	 {},
  OPTmonth = 	 {},
  OPTnote = 	 {},
  OPTannote = 	 {}
}

@article{wightman1965afksv,
  Author =	 {Wightman, A.S. and G{{\aa}}rding, L.},
  opta =	 {A. S. Wightman, L. Gårding},
  Journal =	 {Arkiv Fysik, Kungl. Svenska Vetenskapsak},
  Pages =	 {129-189},
  Title =	 {{Fields As Operator-Valued Distributions In Relativstic Quantum Theory}},
  optt =	 {Fields As Operator-Valued Distributions In Relativstic Quantum Theory},
  Volume =	 {28},
  Year =	 {1964},
  OPTkey = 	 {},
  OPTnumber = 	 {},
  OPTmonth = 	 {},
  OPTnote = 	 {},
  OPTannote = 	 {}
}

@article{wilson1975rmp,
  title =	 {The renormalization group: Critical phenomena and the Kondo problem},
  author =	 {Wilson, Kenneth G.},
  journal =	 {Rev. Mod. Phys.},
  volume =	 {47},
  issue =	 {4},
  pages =	 {773--840},
  numpages =	 {0},
  year =	 {1975},
  month =	 {Oct},
  publisher =	 {American Physical Society},
  doi =		 {10.1103/RevModPhys.47.773},
  url =		 {https://link.aps.org/doi/10.1103/RevModPhys.47.773},
  opta =	 {},
  optt =	 {},
  OPTkey =	 {},
  OPTFile =	 {},
  OPTnumber =	 {},
  OPTnote =	 {},
  OPTarchivePrefix ={},
  OPTeprint =	 {},
  OPTprimaryclass ={},
  OPTannote =	 {}
}

@book {weidmann2000Ide,
  AUTHOR =	 {Weidmann, Joachim},
  TITLE =	 {Lineare {O}peratoren in {H}ilbertr\"{a}umen. {T}eil 1},
  SERIES =	 {Mathematische Leitf\"{a}den. [Mathematical Textbooks]},
  NOTE =	 {Grundlagen. [Foundations]},
  PUBLISHER =	 {B. G. Teubner, Stuttgart},
  YEAR =	 {2000},
  PAGES =	 {475},
  ISBN =	 {3-519-02236-2},
  MRCLASS =	 {47-01 (46-01 47B25)},
  MRNUMBER =	 {1887367},
  DOI =		 {10.1007/978-3-322-80094-7},
  URL =		 {https://doi.org/10.1007/978-3-322-80094-7},
  opta = 	 {},
  ALTeditor = 	 {},
  optt = 	 {},
  OPTkey = 	 {},
  OPTFile = 	 {},
  OPTvolume = 	 {},
  OPTnumber = 	 {},
  OPTaddress = 	 {},
  OPTedition = 	 {},
  OPTmonth = 	 {},
  OPTarchivePrefix = {},
  OPTeprint = 	 {},
  OPTprimaryclass = {},
  OPTannote = 	 {},
}

@article {bogli2017ieot,
  AUTHOR =	 {Bögli, Sabine},
  TITLE =	 {Convergence of sequences of linear operators and their spectra},
  JOURNAL =	 {Integral Equations Operator Theory},
  FJOURNAL =	 {Integral Equations and Operator Theory},
  VOLUME =	 {88},
  YEAR =	 {2017},
  NUMBER =	 {4},
  PAGES =	 {559--599},
  ISSN =	 {0378-620X,1420-8989},
  MRCLASS =	 {47A10 (47A55 47A58 47B07)},
  MRNUMBER =	 {3694623},
  MRREVIEWER =	 {Florian\ Horia\ Vasilescu},
  DOI =		 {10.1007/s00020-017-2389-3},
  URL =		 {https://doi.org/10.1007/s00020-017-2389-3},
  opta = 	 {},
  optt = 	 {},
  OPTkey = 	 {},
  OPTFile = 	 {},
  OPTmonth = 	 {},
  OPTnote = 	 {},
  OPTarchivePrefix = {},
  OPTeprint = 	 {},
  OPTprimaryclass = {},
  OPTannote = 	 {},
}

@article {bogli2018jst,
  AUTHOR =	 {Bögli, Sabine},
  TITLE =	 {Local convergence of spectra and pseudospectra},
  JOURNAL =	 {J. Spectr. Theory},
  FJOURNAL =	 {Journal of Spectral Theory},
  VOLUME =	 {8},
  YEAR =	 {2018},
  NUMBER =	 {3},
  PAGES =	 {1051--1098},
  ISSN =	 {1664-039X,1664-0403},
  MRCLASS =	 {47A10},
  MRNUMBER =	 {3831156},
  DOI =		 {10.4171/JST/222},
  URL =		 {https://doi.org/10.4171/JST/222},
  opta = 	 {},
  optt = 	 {},
  OPTkey = 	 {},
  OPTFile = 	 {},
  OPTmonth = 	 {},
  OPTnote = 	 {},
  OPTarchivePrefix = {},
  OPTeprint = 	 {},
  OPTprimaryclass = {},
  OPTannote = 	 {},
}

@article {post2022jst,
  AUTHOR =	 {Post, Olaf and Zimmer, Sebastian},
  TITLE =	 {Generalised norm resolvent convergence: comparison of different concepts},
  JOURNAL =	 {J. Spectr. Theory},
  FJOURNAL =	 {Journal of Spectral Theory},
  VOLUME =	 {12},
  YEAR =	 {2022},
  NUMBER =	 {4},
  PAGES =	 {1459--1506},
  ISSN =	 {1664-039X,1664-0403},
  MRCLASS =	 {47A55 (47A58)},
  MRNUMBER =	 {4590010},
  MRREVIEWER =	 {Olga\ Yuriivna\ Dyuzhenkova},
  DOI =		 {10.4171/jst/442},
  URL =		 {https://doi.org/10.4171/jst/442},
  opta = 	 {},
  optt = 	 {},
  OPTkey = 	 {},
  OPTFile = 	 {},
  OPTmonth = 	 {},
  OPTnote = 	 {},
  OPTarchivePrefix = {},
  OPTeprint = 	 {},
  OPTprimaryclass = {},
  OPTannote = 	 {},
}

@book {bauerschmidt2019lnm,
  AUTHOR =	 {Bauerschmidt, Roland and Brydges, David C. and Slade, Gordon},
  TITLE =	 {Introduction to a renormalisation group method},
  SERIES =	 {Lecture Notes in Mathematics},
  VOLUME =	 {2242},
  PUBLISHER =	 {Springer, Singapore},
  YEAR =	 {2019},
  PAGES =	 {xii+281},
  ISBN =	 {978-981-32-9591-9; 978-981-32-9593-3},
  MRCLASS =	 {82B28 (81T17 82C28)},
  MRNUMBER =	 {3969983},
  DOI =		 {10.1007/978-981-32-9593-3},
  URL =		 {https://doi.org/10.1007/978-981-32-9593-3},
}

@article{benfatto2007cmp,
  author =	 {Benfatto, G. and Falco, P. and Mastropietro, V.},
  title =	 {Functional integral construction of the massive {Thirring} model: verification of axioms and massless limit},
  fjournal =	 {Communications in Mathematical Physics},
  journal =	 {Commun. Math. Phys.},
  issn =	 {0010-3616},
  volume =	 {273},
  number =	 {1},
  pages =	 {67--118},
  year =	 {2007},
  language =	 {English},
  doi =		 {10.1007/s00220-007-0254-y},
  zbMATH =	 {5199710},
  Zbl =		 {1124.81031},
  archivePrefix ={arXiv},
  eprint =	 {0606177},
  primaryclass = {hep-th},
  OPTannote =	 {}
}

@article {brydges1983cmp,
    AUTHOR = {Brydges, David C. and Fr\"ohlich, J\"urg and Sokal, Alan D.},
     TITLE = {A new proof of the existence and nontriviality of the
              continuum {$\varphi \sp{4}\sb{2}$}\ and {$\varphi
              \sp{4}\sb{3}$}\ quantum field theories},
   JOURNAL = {Comm. Math. Phys.},
  FJOURNAL = {Communications in Mathematical Physics},
    VOLUME = {91},
      YEAR = {1983},
    NUMBER = {2},
     PAGES = {141--186},
      ISSN = {0010-3616,1432-0916},
   MRCLASS = {81E08},
  MRNUMBER = {723546},
MRREVIEWER = {Gerhard\ C.\ Hegerfeldt},
       URL = {http://projecteuclid.org/euclid.cmp/1103940528},
}

@article {gawedzki1985cmp,
    AUTHOR = {Gaw\k{e}dzki, K. and Kupiainen, A.},
     TITLE = {Massless lattice {$\varphi^4_4$} theory: rigorous control of a
              renormalizable asymptotically free model},
   JOURNAL = {Comm. Math. Phys.},
  FJOURNAL = {Communications in Mathematical Physics},
    VOLUME = {99},
      YEAR = {1985},
    NUMBER = {2},
     PAGES = {197--252},
      ISSN = {0010-3616,1432-0916},
   MRCLASS = {81E25 (81E15)},
  MRNUMBER = {790736},
MRREVIEWER = {Giovanni\ Felder},
       URL = {http://projecteuclid.org/euclid.cmp/1103942678},
}

@incollection {brydges1994arg,
    AUTHOR = {Brydges, D. and Dimock, J. and Hurd, T. R.},
     TITLE = {Applications of the renormalization group},
 BOOKTITLE = {Mathematical quantum theory. {I}. {F}ield theory and many-body
              theory ({V}ancouver, {BC}, 1993)},
    SERIES = {CRM Proc. Lecture Notes},
    VOLUME = {7},
     PAGES = {171--189},
 PUBLISHER = {Amer. Math. Soc., Providence, RI},
      YEAR = {1994},
      ISBN = {0-8218-0365-4},
   MRCLASS = {81T17 (81T08 82B27 82B28)},
  MRNUMBER = {1311037},
MRREVIEWER = {Domingos\ H. U. Marchetti},
       DOI = {10.1090/crmp/007/07},
       URL = {https://doi.org/10.1090/crmp/007/07},
}

@article {chandra2024im,
    AUTHOR = {Chandra, Ajay and Chevyrev, Ilya and Hairer, Martin and Shen,
              Hao},
     TITLE = {Stochastic quantisation of {Y}ang-{M}ills-{H}iggs in 3{D}},
   JOURNAL = {Invent. Math.},
  FJOURNAL = {Inventiones Mathematicae},
    VOLUME = {237},
      YEAR = {2024},
    NUMBER = {2},
     PAGES = {541--696},
      ISSN = {0020-9910,1432-1297},
   MRCLASS = {60H15 (60L30 81S20 81T13)},
  MRNUMBER = {4768631},
MRREVIEWER = {Sergey\ V.\ Lototsky},
       DOI = {10.1007/s00222-024-01264-2},
       URL = {https://doi.org/10.1007/s00222-024-01264-2},
}

@article{goulko2025prl,
  title = {{Transient Dynamical Phase Diagram of the Spin-Boson Model}},
  author = {Goulko, Olga and Chen, Hsing-Ta and Goldstein, Moshe and Cohen, Guy},
  journal = {Phys. Rev. Lett.},
  volume = {134},
  issue = {5},
  pages = {056502},
  numpages = {6},
  year = {2025},
  month = {Feb},
  publisher = {American Physical Society},
  doi = {10.1103/PhysRevLett.134.056502},
  url = {https://link.aps.org/doi/10.1103/PhysRevLett.134.056502}
}

@article{magazzu2018nc,
  title = {Probing the strongly driven spin-boson model in a superconducting quantum circuit},
  author = {Magazz\`{u}, L. and Forn-Diaz, P. and Belyansky, R. and Orgiazzi, J.-L. and Yurtalan, M. A. and Otto, M. R. and Lupascu, A. and Wilson, C. M. and Grifoni, M.},
  journal = {Nat. Commun.},
  volume = {9},
  pages = {1403},
  year = {2018},
  doi = {10.1038/s41467-018-03626-w},
}

@article{guo2025pla,
  title =	 {Re-examining steady-state dynamics and heat transfer in a strongly coupled spin-boson model},
  journal =	 {Phys. Lett. A},
  volume =	 {560},
  pages =	 {130939},
  year =	 {2025},
  doi =		 {10.1016/j.physleta.2025.130939},
  author =	 {Guo, Bao-qing and Peng, Jia-long and Zhong, Cheng and Liu, Yu-qiang},
}

@article{sun2025nc,
  title = {Quantum simulation of spin-boson models with structured bath},
  author = {Ke Sun and Mingyu Kang and Hanggai Nuomin and George Schwartz and David N. Beratan and Kenneth R. Brown and Jungsang Kim},
  journal = {Nat. Commun.},
  volume = {16},
  pages = {4042},
  year = {2025},
  doi = {10.1038/s41467-025-59296-y},
}
 }

\end{document}